\newtheorem{defn}{Definition}[section]
\newtheorem{thm}{Theorem}[section]
\newtheorem{lem}[thm]{Lemma}
\newtheorem{prop}[thm]{Proposition}
\newtheorem{rem}{Remark}[thm]
\newtheorem{cor}{Corollary}[thm]
\algnewcommand\Input{\item[{\textbf{Input:}}]}
\algnewcommand\Output{\item[{\textbf{Output:}}]}
\algnewcommand\Initialize{\item[{\textbf{Initialize:}}]}
\algnewcommand{\return}[1]{
  \State \textbf{return:}
  \Statex \hspace*{\algorithmicindent}\parbox[t]{.8\linewidth}{\raggedright #1}
}
\newcommand{\rank}{\operatorname{rank}}
\newcommand{\krank}{\operatorname{krank}}
\newcommand{\supp}{\operatorname{supp}}
\newcommand{\nulli}{\operatorname{null}}
\begin{document}
\title{Greedy Subspace Pursuit for Joint Sparse Recovery}
\author{Kyung Su~Kim,~\IEEEmembership{Student Member,~IEEE,}
        Sae-Young Chung,~\IEEEmembership{Senior Member,~IEEE}
        
        \thanks{The material in this paper was in part submitted to the 2016 IEEE International Symposium on Information Theory.}
    }

\markboth{ }%
{Shell \MakeLowercase{\textit{et al.}}: Bare Demo of IEEEtran.cls for Journals}

\maketitle

\begin{abstract} 
In this paper, we address the sparse multiple measurement vector (MMV) problem where the objective is to recover a set of sparse nonzero row vectors or indices of a signal matrix from incomplete measurements. Ideally, regardless of the number of columns in the signal matrix, the sparsity ($k$) plus one measurements is sufficient for the uniform recovery of signal vectors for almost all signals, i.e., excluding a set of Lebesgue measure zero. To approach the ``$k+1$'' lower bound with computational efficiency even when the rank of signal matrix is smaller than $k$, we propose a greedy algorithm called Two-stage orthogonal Subspace Matching Pursuit (TSMP) whose theoretical results approach the lower bound with less restriction than the Orthogonal Subspace Matching Pursuit (OSMP) and Subspace-Augmented MUltiple SIgnal Classification (SA-MUSIC) algorithms. We provide non-asymptotical performance guarantees of OSMP and TSMP by covering both noiseless and noisy cases. Variants of restricted isometry property and mutual coherence are used to improve the performance guarantees. Numerical simulations demonstrate that the proposed scheme has low complexity and outperforms most existing greedy methods. This shows that the minimum number of measurements for the success of TSMP converges more rapidly to the lower bound than the existing methods as the number of columns of the signal matrix increases. 
\end{abstract}

\begin{IEEEkeywords}
Compressed sensing, joint sparse recovery, multiple
measurement vectors (MMV), restricted isometry property (RIP), mutual coherence.
\end{IEEEkeywords}

\IEEEpeerreviewmaketitle

\section{Introduction}
\label{intr}
In recent years, the compressive sensing (CS) theory \cite{donoho2006compressed}, \cite{candes2006compressive} and its extension has received much attention as means to solve the underdetermined inverse problem to estimate the sparse signal matrix given a multiple measurement matrix. The subject has been studied in many fields of science \cite{gorodnitsky1997sparse,gorodnitsky1995neuromagnetic,jeffs1998sparse,duarte2008single,wright2009robust,cabrera1991extrapolation,jin2010algorithms,chu2004speech,cotter2002sparse,bajwa2008compressed,duttweiler2000proportionate,rao2003adaptive,garudadri2010low,cotter2002sparse,guo2009neighbor}. 

The basic principle of CS is as follows: when the signal matrix is sparse (i.e., when most rows of the matrix  are zeros), the signal matrix can be uniquely determined through the identification of its support -- a set of indices extracted from rows of the signal matrix that include nonzero elements. Once the support is determined, the problem of estimating the signal matrix reduces to a standard overdetermined linear inverse problem, which can be easily solved.

\subsection{Multiple measurement vector problem}
CS can be formulated by the linear structure $Y =\Phi X_0 + W$ given a measurement matrix $Y=\Phi X_0 \in \mathbb{K}^{m \times l}$ and a sensing matrix $\Phi \in \mathbb{K}^{m \times n}$ where $X_0 \in \mathbb{K}^{n \times l}$ is a signal matrix and $W \in \mathbb{K}^{m \times n}$ is a measurement noise. Most compressive sensing theories were developed to address the single measurement vector (SMV) problem (i.e., the case when $l=1$). \cite{mallat1993matching,pati1993orthogonal,tibshirani1996regression,chen1998atomic,gorodnitsky1997sparse,candes2008enhancing,harikumar1996new,delaney1998globally,chartrand2008iteratively,tipping2001sparse,wipf2004sparse,vetterli2002sampling,needell2009cosamp,dai2009subspace,donoho2006compressed,donoho2006stable,bruckstein2009sparse,candes2005decoding,candes2006stable,tropp2004greed,tropp2007signal,donoho2012sparse,wainwright2009sharp ,zhao2006model,wainwright2009information,fletcher2009necessary,wang2010information,akccakaya2010shannon,jin2011limits}. Sparse signal recovery with multiple measurement vectors (MMV) refers to the case when $l\geq 1$, which is also known as the joint sparse recovery problem \cite{cotter2005sparse}, \cite{zdunek2008improved}. Joint sparse recovery has many important applications such as the sub-Nyquist sampling of multiband signals\cite{feng1996spectrum,bresler1996spectrum,feng1998universal,gastpar2000necessary,bresler2008spectrum,mishali2009blind,mishali2010theory},  magnetoencephalography (MEG) and electroencephalography (EEG) \cite{zdunek2008improved,wipf2009unified}, blind source separation \cite{cichocki2005blind}, multivariate regression \cite{obozinski2011support}, and source localization \cite{malioutov2005sparse}.

Compared to the SMV case, the MMV approach is known to have a greatly improved recovery rate \cite{cotter2005sparse,eldar2009robust,eldar2010average,jin2008insights,foucart2011recovering,zhang2011sparse} and yields computational advantages \cite{foucart2011recovering}, \cite{tropp2006algorithms} over the approach of running multiple independent instances of an SMV algorithm.

\subsection{$l_0$ bound}
In the noiseless case $W=0$, an ideal approach (\ref{l0minf1}) to recover $X_0 \in \mathbb{K}^{n \times l} $ in the MMV problem is to minimize the $l_0$ norm of $X \in \mathbb{K}^{n \times l} $ as follows:
\begin{align}\nonumber
\underset{X}{\arg \min}& \left \| X \right \|_0 \\\label{l0minf1} 
\textup{  subject to  }& Y=\Phi X
\end{align}
Davies and Eldar \cite{davies2012rank} extended the works of Chen, Huo \cite{chen2006theoretical} and Feng, Bresler \cite{feng1998universal} to show that the following $l_0$ bound (i.e., $2k-\rank(X_0)+1$) is the minimum $m$ for (\ref{l0minf1}) to ensure the exact recovery of $X_0$. To be more concrete, they showed that (\ref{l0c}) is a sufficient and necessary condition for the solution of (\ref{l0minf1}) to be unique and equal to $X_0$ for any $X_0$.
\begin{align}\label{l0c}
\krank(\Phi)&\geq 2k-r+1,
\end{align}
where $k$ is the number of nonzero rows in $X_0$ and $r$ is the rank of $X_0$.
In the worst case, the MMV problem is not any easier than the SMV problem since they become identical when $X_0$ comprises of a single repetitive vector \cite{eldar2010average}. (\ref{l0c}), however, informs us that depending on the ranks of matrices $Y$ or $X_0$, the required number of measurements can be reduced to less than $2k$, which is known to be the smallest required $m$ in the SMV problem even in the worst case. 
When $\rank(X_0)=k$, the right-hand side of (\ref{l0c}) has a minimum value of $k+1$. 

\subsection{``$k+1$'' bound}
Foucart and Rauhut \cite{foucart2013mathematical} motivated by Wakin's work \cite{wakin2006geometry} showed another condition on the minimum required $m$ for the ideal approach (\ref{l0minf1}) in the noiseless case to recover $X_0$. They showed that the following is a sufficient condition for the solution of (\ref{l0minf1}) to be unique and equal to $X_0$:
\begin{align}\label{l0k1}
\krank(\Phi)&\geq k+1 
\end{align}
and
\begin{align}\label{l0k12}
X_0 \notin  \underset{\underset{\textup{s.t. }|J|=k,J \neq \Omega}{J \in \{1,...,n\}}}{\bigcup} e(J),
\end{align}
where $\Omega$ is the support of $X_0$ and $e(J)$ is defined in (\ref{unq_3}). Since the right hand side of (\ref{l0k12}) has Lebesgue measure zero, it is sufficient that $m$ is simply $k+1$ irrespective of $l$ for almost all $X_0$. Therefore, in the practical case, $k+1$ measurements are ideally sufficient even for the SMV case (i.e., $l=1$). Based on the fact, this number is defined in the rest of paper as the ``$k+1$'' bound which is the minimum $m$ to ensure the exact recovery for almost all $X_0$. 

\subsection{Practical schemes for approaching the ``$k+1$'' bound}
Further work has to be done to determine whether there is a tractable way to achieve (or closely approach) the ``$k+1$'' bound. Before the advancement of compressive sensing, MUltiple SIgnal Classification (MUSIC) \cite{schmidt1986multiple} was proposed to solve the direction-of-arrival (DOA) or the bearing estimation problem with high computation efficiency \cite{malioutov2005sparse,krim1996two}. Bresler and Feng \cite{feng1996spectrum},\cite{feng1998universal} demonstrated that the application of MUSIC to the joint sparse recovery problem can achieve the ``$k+1$'' bound when $\rank(X_0)=k$. Based on the theoretical guarantee, this is one of the most popular and successful DOA estimation algorithms providing both high empirical performance and computational efficiency when $\rank(X_0)=k$. 

One of the main limitations of the MUSIC algorithm, however, is its failure when $\rank(X_0) < k$. This rank defective case is common in the field of CS since most problems in the field face situations where a correlation between signal vectors exists or the number of the common sparsity of signal vectors is larger than the number of measurement vectors. 

Inspired by the MUSIC algorithm, Davies and Eldar proposed a greedy method called the rank aware algorithm (RA-ORMP) \cite{davies2012rank} to overcome the limitations of MUSIC. They showed that the behavior of RA-ORMP is improved when the rank of $X_0$ is increased and proved that the ``$k+1$'' bound can be achieved when $\rank(X_0)=k$. Its empirical performance was significantly better than MUSIC in dealing with multiple measurements even when $\rank(X_0)<k$. 
Similarly to Kim, et al.'s work \cite{kim2012compressive}, Lee, et al. \cite{lee2012subspace} supplemented MUSIC and developed a Subspace-Augmented MUSIC (SA-MUSIC) algorithm which had better performance than MUSIC. They showed that it theorectially and empirically outperformed MUSIC and provided restrictive conditions in approaching the ``$k+1$'' bound in the $\rank(X_0)<k$ case. To recover the partial support before operating SA-MUSIC, Lee, et al. \cite{lee2012subspace} proposed a new greedy method called the Orthogonal Subspace Matching Pursuit (OSMP) which is an extended version of RA-ORMP and is robust to noise. By combining OSMP and SA-MUSIC, they proposed a greedy algorithm called SA-MUSIC+OSMP, which provided better empirical performances at all rank conditions of $X_0$ than most of the existing methods for the MMV problem especially when the number of measurement vectors is relatively large.

\subsection{Comparison to other methods for MMV problem}

Practical algorithms have been developed to address the new challenges in the joint sparse recovery problem. One class of algorithms for solving the MMV problem includes M-OMP \cite{cotter2005sparse}, \cite{gilbert2005simultaneous}, M-FOCUSS \cite{cotter2005sparse}, $l_1$ and $l_2$ minimization method \cite{eldar2009robust}, simultaneous recovery variants of NIHT, NHTP, CoSaMP \cite{blanchard2014greedy}, multivariate group Lasso \cite{cichocki2005blind}, and MSBL \cite{wipf2007empirical} where all can be viewed as direct extensions of their one dimensional counterparts. Another class of algorithms utilized the correlation, stochastic behavior and the subspace structure of $X$ to achieve better performance in sparse signal recovery. The improved M-FOCUSS algorithms \cite{zdunek2008improved}, variants of MSBL such as AR-SBL \cite{zhang2010sparse} or TMSBL \cite{zhang2011sparse}, the correlation-aware framework of LASSO \cite{pal2015pushing}, the approximate message passing scheme exploiting temporal correlation of $X$ \cite{ziniel2013efficient} and MUSIC-like subspace methods \cite{davies2012rank}, \cite{kim2012compressive}, \cite{lee2012subspace} can all be viewed as such examples. Methods other than MUSIC-like methods (i.e., MUSIC, SA-MUSIC, CS-MUSIC, RA-ORMP, OSMP, etc.) \cite{davies2012rank}, \cite{kim2012compressive}, \cite{lee2012subspace} and MSBLs \cite{wipf2006bayesian}, \cite{wipf2007empirical}, however, are not proved to approach the ``$k+1$'' bound even when $\rank(X_0)=k$. Comparison to MSBL and its variants are discussed in detail in Section \ref{sec6}.

\subsection{Our contributions}

The main contributions of this paper are summarized below.

\begin{itemize}
\item A sufficient condition for the success of OSMP (i.e., RA-ORMP in the noiseless case) is theoretically derived which is not stronger than that of SA-MUSIC.    
\item  An improved scheme of RA-ORMP and OSMP called a Two-stage orthogonal Subspace Matching Pursuit (TSMP) is proposed to enhance the efficiency of reconstructing sparse signals in MMV. TSMP requires less restrictive conditions in approaching the ``$k+1$'' bound than other methods. The TSMP consists of the following procedure: 1. Subspace estimation from a signal space $\mathcal{R}(Y)$ (subspace estimation), 2. Iterative selection of $m-1$ multiple candidate indices through OSMP's selection rule (identification), 3. Recovery of the signal matrix $X_0$ and its support from the set of candidate indices (support and signal matrix estimation). Since the last two steps are the main steps, we refer to TSMP as a ``two-stage'' process.

\item Sufficient conditions for $\Phi$ or the minimal requirements for $m$ in TSMP or OSMP to recover the true support and signal matrix $X_0$ are theoretically derived. The analysis is based on the worst-case scenario where the rank of the signal matrix is considered. Under the rank deficient case $\rank(X_0)\leq k$, it is shown in both theoretical and empirical perspectives that the performances of OSMP or the proposed scheme, TSMP, improve as $r$ (i.e., $l$ in most cases) increases. 
The performances are analyzed in terms of fundamental measures such as WRIP \cite{candes2011probabilistic}, a weaker version of the restricted isometry property (RIP) \cite{candes2005decoding}, and a variant of mutual coherence \cite{needell2009uniform} for a submatrix $\Phi$ to make the results more reliable and applicable to a wider class of sensing matrices for real applications.
A different measure expressed by a singular value of the submatrix in $\Phi$ is also introduced to mitigate the successful conditions in terms of WRIP. The results presented in this paper are valid for both noiseless and noisy cases and are non-asymptotic for parameters such as $(m,n,l,k,\rank(X_0))$.

\item In terms of empirical performance, TSMP mostly outperforms previous greedy algorithms and convex relaxation methods as SNR increased in both SMV and MMV cases. The minimum $m$ required for TSMP to recover the support decreases below the $l_0$ bound and more rapidly converges to the ``$k+1$'' bound than most of the existing algorithms as $l$ increased. More is discussed in detail in Section \ref{sec6}.

\end{itemize}

In the SMV case, there have been recent efforts to modify the popular OMP rule with an aim to enhance the recovery performance and computational efficiency by considering more than sparsity level of the $X_0$ indices in the process of estimating the true support. Special treatments such as thresholding, regularization, or pruning are used. Well known examples of such efforts include Stage wise OMP (StOMP) \cite{donoho2012sparse}, Regularized OMP (ROMP) \cite{needell2009uniform}, CoSaMP \cite{needell2009cosamp}, Subspace Pursuit (SP) \cite{dai2009subspace}, and Generalized OMP (GOMP) \cite{wang2012generalized}. Our approach lies on similar grounds with these approaches but extends to the MMV problem. Our proposed scheme provides better empirical and theoretical performances with low complexity and requires only milder conditions for support identification compared to most SMV or MMV algorithms.

\subsection{Organization of this paper}
The remainder of this paper is organized as follows. Notations, the problem statement, and some definitions are introduced in Sections \ref{sec_not}, \ref{sec_pf}, and IIII, respectively. Previous work on OSMP and TSMP for joint sparse recovery are described in Section \ref{sec_ad}. Conditions for joint sparse recovery using an ideal approach and its relation to OSMP and TSMP are discussed in Section \ref{sec_ocmmv}. The performances of OSMP and TSMP measured by variants of RIP and mutual coherence in noiseless and noisy cases are analyzed in Section \ref{sec_pg}. The empirical performances of OSMP and TSMP are compared to other methods in Section \ref{sec_ne} and their relations to relevant works are discussed in Section \ref{sec_d}. Appendices are dedicated to the proofs of our results. 

\section{Notation}\label{sec_not}
Symbol $\mathbb{N}$ denotes the set of natural numbers and $\Sigma$ denotes the set $\{1,...,n\}$ for $n \in \mathbb{N}$. $[i]$ denotes the subset $\{1,...,i\}$ of $\Sigma$. $\Delta_{s} \subseteq \Sigma$ denotes a subset of $\{1,...,n\}$ whose cardinality is $s \in \mathbb{N}$. Symbol $\mathbb{K}$ denotes a scalar field which is either the real field $\mathbb{R}$ or $\mathbb{C}$. The vector space of $d$-tuples over $\mathbb{K}$ is denoted as $\mathbb{K}^d$ for $d \in \mathbb{N}$. Similarly, for $d,n \in \mathbb{N}$, the vector space of $d \times n$ matrices over $\mathbb{K}$ is denoted by $\mathbb{K}^{d \times n}$. We will use some notations for the matrix $A:=[a_1,...,a_n] \in \mathbb{K}^{d \times n}$ whose $i$ th column is $a_i$. The range space spanned by the columns of $A$ is denoted by $\mathcal{R}(A)$. $\supp(A)$ is the support of $A$ and is defined as a set of nonzero row indices of $A$. The Hermitian transpose (transpose) of $A$ are denoted by $A^*$ ($A^\top$), respectively. $A^{\dagger}$ denotes the Moore-Penrose pseudoinverse of $A$. The $i$th column of $A$ is denoted by $a_i$ and the submatrix of $A$ with columns indexed by $J \subseteq \Sigma$ is denoted by $A_J$. The $i$th row of $A$ is denoted by $A^{\{i\}}$ and the submatrix of $A$ with rows indexed by $K \subseteq [m]$ is denoted by $A^{K}$.  The $i$th largest singular value of $A$ is denoted by $\sigma_i(A)$. The Frobenius norm and the spectral norm of $A$ are denoted by $\left \| A \right \|_F$ and $\left \| A \right \|_2$, respectively. For $p,q \in [1, \infty]$, the mixed $l_{p,q}$ norm of $A$ is defined by $\left \| A \right \|_{p,q}:=(\sum\limits^{m}_{k=1}\left \| a^k \right \|^q_p)^{\frac{1}{q}}$ for $q < \infty$ and $\left \| A \right \|_{p,\infty}:=\max\limits_{k \in [m]} \left \| a^k \right \|_p$. The inner product is denoted by $\langle \cdot,\cdot \rangle$. 
For a subspace $S$ of $\mathbb{K}^d$, $\dim(S)$ denotes the dimension of $S$. Matrices $P_{S} \in \mathbb{K}^{d \times d}$ and $P^{\perp}_{S} \in \mathbb{K}^{d \times d}$ denote the orthogonal projection onto $S$ and its orthogonal complement $S^{\perp}$, respectively. Symbols $\mathbb{P}$ and $\mathbb{E}$ denote the probability and the expectation with respect to a certain distribution. For a set $\Gamma \subseteq \Sigma$ and a subspace $\mathcal{R}(A_{\Gamma})$ of $\mathbb{K}^d$, $\dot a_i:= \frac{P^{\perp}_{\mathcal{R}(A_{\Gamma})}a_i}{\left \| P^{\perp}_{\mathcal{R}(A_{\Gamma})}a_i \right \|_2}$ and $\dot A:= [\dot a_1,...,\dot a_n]$ denote the scaled $a_i$ vector with the orthonormal projection onto $\mathcal{R}(A_{\Gamma})^{\perp}$ and the matrix whose $i$th column is $\dot a_i$, respectively. If the denominator of $\dot a_i$ is zero, $\dot a_i$ is defined as a zero vector. 

\section{Fomulation: MMV problem}\label{sec_pf}
A matrix $X_0 \in \mathbb{K}^{n \times l}$ is called row $k$-sparse if it has at most $k$ nonzero rows. $\Omega \subseteq \Sigma$ denotes the support of $X_0$, i.e., $\supp(X_0)$, and its sparsity level denotes the cardinality of $\Omega$. The joint sparse recovery problem is to find the support $\Omega$ and a row $k$-sparse signal matrix $X_0$ from the matrix $Y \in \mathbb{K}^{m \times l}$ using multiple measurement vectors (columns of $Y$) given by
\begin{align}\nonumber
Y = \Phi X_0 + W,
\end{align}
where $\Phi:=[\phi_1,...,\phi_n] \in \mathbb{K}^{m \times n}$ is a common and known sensing matrix whose $i$th column is $\phi_i$ and $W \in \mathbb{K}^{m \times l}$ is a perturbation. We will refer to the case when $\rank(X_0^\Omega)$ has its maximum value $k$ as the full row rank case. Otherwise, the case when $\rank(X_0^{\Omega}) < k$ will be called the rank-defective case \cite{lee2012subspace}.

\section{Some definitions of measure and their properties}
\subsection{Measures and their properties}
\subsubsection{Restricted Isometry Property}
One approximate way to specify which matrices ($\Phi$) the sparse recovery is applicable to is to use the restricted isometry property. The RIP provides upper and lower bounds on the singular values for all submatrices of $\Phi$ by retaining no more than $k$ columns of $\Phi$.
\begin{defn}[RIP, Restricted Isometry Property \cite{candes2005decoding}]
Matrix $A \in \mathbb{K}^{m \times n}$ satisfies the restricted isometry property with parameters $(c,\delta)$ where $c \in \mathbb{R^+}$ and $\delta \in (0,1)$ if there exist constants $(c,\delta)$ such that for $\forall x \in \mathbb{K}^n$, 
\begin{align}\nonumber
c(1-\delta)\left \|  x \right \|^2_2 \leq  \left \| A  x \right \|^2_2 \leq c(1+\delta)\left \|  x \right \|^2_2.
\end{align}
The RIP constant $\delta_{k} $ is defined as the smallest value of $\delta$ that satisfies the restricted isometry property with some positive constant $c$. 
\end{defn}
The RIP of order $k$ implies that all sets of $k$ columns in $\Phi$ are uniformly well conditioned. It, however, requires a strong condition on $\Phi$. A weaker version for the definition of RIP is therefore used:
\begin{defn}[WRIP, Week Restricted Isometry Property \cite{candes2011probabilistic}]
Matrix $A \in \mathbb{K}^{m \times n}$ satisfies the week restricted isometry property (WRIP) with parameters $(J,a,b,c,\delta)$ where $(a,b) \in \mathbb{N}^2$, $J \subseteq \Sigma$ with $|J|=a$, $c \in \mathbb{R^+}$ and $\delta \in (0,1)$ if there exist constants $(c,\delta)$ such that for $\forall x \in \mathbb{K}^{a+b}$ and $\forall K \supseteq J$ with $ |K|=a+b$,
\begin{align}\nonumber
c(1-\delta)\left \|  x \right \|^2_2 \leq  \left \| A_K  x \right \|^2_2 \leq c(1+\delta)\left \|  x \right \|^2_2.
\end{align}
The RIP constant $\delta_{a}(A_J;b) $ is defined as the smallest value of $\delta$ that satisfies the local restricted isometry property with some positive constant $c$. The corresponding WRIP constant is given by
\begin{align}\nonumber
\delta_{a}(A_J;b) =\max \limits_{\underset{|K|=a+b}{K \supseteq J}} \frac{\left \| A^*_KA_K -c I_{a+b} \right \|}{c}= \frac{1-\kappa(J)}{1+\kappa(J)},
\end{align}
where $\kappa(J):= \min \limits_{\underset{|K|=a+b}{K \supseteq J}} \sigma^2_{a+b}(A_K)  / \max \limits_{\underset{|K|=a+b}{K \supseteq J}} \sigma^2_{1}(A_K)$. 
\end{defn}
Since $\delta_{a}(A_J;b) \leq \delta_{a+b}$, having a more mild condition on $A$ is attainable with WRIP than with RIP having the same order. The special case of WRIP with $l_2$-normalized columns of $\Phi$ has been previously proposed  \cite{eldar2010average,candes2011probabilistic,lee2012subspace}. \cite{lee2012subspace} shows that compared to RIP, the required number of $l_2$-normalized columns to guarantee the success of sparse recovery is largely reduced when using WRIP.

\subsubsection{Coherence}
Another concept is used to specify which matrices ($\Phi$) the sparse recovery is applicable. The coherence is defined as follows:
\begin{defn}[LCP, Locally mutual Coherence with the orthogonal complemental Projection] \label{newcoh} Let $\Delta, \Gamma$ be proper subsets of $\Sigma$. The LCP with $(\Delta, \Gamma)$ is defined by  
\begin{align}
\mu(\Delta,\Gamma)= \underset{\{i,j\} \subseteq \Delta \setminus \Gamma}{\max} \,\left| \left \langle \frac{P^{\perp}_{\mathcal{R}(\Phi_{\Gamma})}\phi_i}{\left \| P^{\perp}_{\mathcal{R}(\Phi_{\Gamma})}\phi_i \right \|_2},\frac{P^{\perp}_{\mathcal{R}(\Phi_{\Gamma})}\phi_j}{\left \| P^{\perp}_{\mathcal{R}(\Phi_{\Gamma})}\phi_j \right \|_2} \right \rangle  \right|.
\end{align}
\end{defn}
The special case when $(\Delta,\Gamma)=(\Sigma,\o)$ corresponds to the worst-case coherence \cite{needell2009uniform} \cite{donoho2001uncertainty} \cite{bajwa2012two} which is computed with lower computational complexity than RIP. This has been widely used to analyze a sufficient number of measurements for the success of many CS algorithms.
\subsection{Some definitions}
The following definitions are used throughout this paper.
\begin{defn} \label{krank}
The Kruskal rank of a matrix $A$, denoted by $\krank(A)$, is the maximal number $q$ in which any $q$ columns of $A$ are independent.
\end{defn} 

\begin{defn} \label{rownond}
Matrix $X$ is row-nondegenerate if 
\begin{align}\label{rown}
\krank(X^*)=\rank(X).
\end{align}
\end{defn}
(\ref{rown}) implies that every $i$ row vectors of $X$ are linearly independent for
$i \leq \rank(X)$. This is satisfied by $X$ whose row vectors are in general position \cite{donoho2005sparse}. This is a property of the subspace of $\mathcal{R}(X)$ since it is equivalent to $\krank(B^*)=\rank(X)$ for any orthonormal basis $B$ of $\mathcal{R}(X)$ \cite{lee2012subspace}. This condition holds if each row of $X^{\Omega}_0$ is independently and identically sampled from any probability measure which is non-singular with respect to the Lebesgue measure. Since most probability distributions defined in continuous fields such as the random Gaussian matrix follow this property and the elements of a signal matrix are statistically assumed in continuous fields in most applications of joint sparse recovery, the above condition could therefore be satisfied without major restrictions.

\section{Algorithm describtion}
\label{sec_ad}
\subsection{Existing scheme: OSMP}

The OSMP algorithm is designated as Algorithm 1. The OSMP comprises of two steps which are described below.

\subsubsection{Subspace estimation from $\mathcal{R}(Y)$} Step 1 of OSMP is to estimate an $r$-dimensional subspace $\hat S$ from $\mathcal{R}(Y)$ via an arbitrary subspace estimator. The estimator is not usually needed, i.e., $\hat S = \mathcal{R}(Y)$, since $\rank(\Phi_{\Omega}X_0^{\Omega}) \geq l$ in the majority of cases for joint sparse recovery. The consideration of an estimator could provide a better solution in the noisy and $\rank(\Phi_{\Omega}X_0^{\Omega}) < l$ cases. We will discuss this in detail in Section \ref{sec_d}.

\subsubsection{Index selection method for estimating support} Step 2 of OSMP is to extract a set of $k$ indices $\hat \Omega$ to estimate the true support $\Omega$ through Algorithm 4 (submp). During each iteration step in submp, a set of indices selected before preceding steps denoted by $\Gamma$ is given and an index $i \in \Sigma \setminus \Gamma$ is selected such that the angle between spaces $P^{\perp}_{\mathcal{R}(\Gamma)}\hat S$ and $P^{\perp}_{\mathcal{R}(\Gamma)}\mathcal{R}(\phi_i)$ is minimized.  

\subsection{Proposed scheme: TSMP}
TSMP algorithms are designated as Algorithms 2 or 3 (TSMP$_1$ or TSMP$_2$) depending on whether the sparsity $k$ is known or unknown. Each algorithm consists of three steps:

\begin{algorithm}
  \caption{OSMP($k$)}
    \begin{algorithmic}[1]
    \Input{$Y \in \mathbb{K}^{m \times l}, \Phi \in \mathbb{K}^{m \times n}$, $k\in \mathbb{N}.$}
    \Output{$\hat \Omega\subseteq \Sigma$}
        \State $\hat S \in \mathbb{K}^{m \times r}\gets \textup{estimate a signal subspace from $\mathcal{R}(Y)$}$ 
		\State $\hat \Omega \gets $ submp($\hat S,\o,k$) 
      \State \Return{$\hat \Omega$}
  \end{algorithmic}
\end{algorithm}

\begin{algorithm}
  \caption{TSMP$_1$($k$): $k$ is known}
    \begin{algorithmic}[1]
    \Input{$Y \in \mathbb{K}^{m \times l}, \Phi \in \mathbb{K}^{m \times n}$, $k\in \mathbb{N}.$}
    \Output{$\Omega_c,\hat \Omega\subseteq \Sigma$, $\hat X \in  \mathbb{K}^{n \times l}$}
        \State $\hat S \in \mathbb{K}^{m \times r}\gets \textup{estimate a signal subspace from $\mathcal{R}(Y)$}$ 
		\State $\Omega_c \gets $ submp($\hat S,\o,m-1$) 
        \State $\hat \Omega,\hat X \gets$ ESMS$_1$($\Omega_c,k$) 
      \State \Return{$\Omega_c,\hat \Omega,\hat X$}
  \end{algorithmic}
\end{algorithm}

\begin{algorithm}
  \caption{TSMP$_2$($\kappa$): $k$ is unknown}
    \begin{algorithmic}[1]
    \Input{$Y \in \mathbb{K}^{m \times l}, \Phi \in \mathbb{K}^{m \times n}$, $\kappa \in \mathbb{R}.$}
    \Output{$\Omega_c,\hat \Omega\subseteq \Sigma$, $\hat X \in  \mathbb{K}^{n \times l}$}
        \State $\hat S \in \mathbb{K}^{m \times r}\gets \textup{estimate a signal subspace from $\mathcal{R}(Y)$}$ 
		\State $\Omega_c \gets $ submp($\hat S,\o,m-1$) 
        \State $\hat \Omega,\hat X \gets$ ESMS$_2$($\Omega_c,\kappa$) 
      \State \Return{$\Omega_c,\hat \Omega,\hat X$}
  \end{algorithmic}
\end{algorithm}
\subsubsection{Subspace estimation from $\mathcal{R}(Y)$} Step 1 of TSMP$_1$ or TSMP$_2$ is to estimate an $r$-dimensional subspace $\hat S$ from $\mathcal{R}(Y)$. It is the same as that of OSMP.
\subsubsection{Index selection for support candidates} Step 2 of Algorithms 2 or 3 is to extract a set $ \Omega_c$ consisting of $m-1$ indices for the support candidates through Algorithm 4 (submp, sub-algorithm of matching pursuit). The subroutine of submp($\hat S,\o,m-1$) in TSMP has the same structure with that of submp($\hat S,\o,k$) in OSMP (i.e., RA-ORMP in the noiseless case). Compared to OSMP which only selects indices with sparsity level $k$, TSMP selects indices with sparsity level $m-1$ using submp. Since $m$ is not related to the sparsity level, it is not necessary to know the sparsity level to operate submp.  Constructing the projection operators in OSMP or TSMP could be performed by QR decomposition which reduces the complexity \cite{lee2012subspace}. Algorithm 7 shows an example of TSMP$_1$ via QR decomposition. 

\begin{algorithm}
  \caption{submp($\hat S, \Gamma_0,s$)}
  \begin{algorithmic}[1]
    \Input{$\hat S \in \mathbb{K}^{m \times r}, \Phi \in \mathbb{K}^{m \times n}, \Gamma_{0} \subseteq \Sigma$, $s  \in \mathbb{N} $.} 
    \Output{$\Gamma \subseteq \Sigma$}
                     \State $\Gamma \gets  \Gamma_0$    
    \For{i = 1 to $s$}
                     \State $a_i \gets \underset{l \in \Sigma\setminus \Gamma}{\arg \max} \frac{\left \|(P_{\mathcal{R}(P^{\perp}_{\mathcal{R}(\Phi_\Gamma)}\hat S)})\phi_l\right \|_2} {\left \|P^{\perp}_{\mathcal{R}(\Phi_\Gamma)}\phi_l\right \|_2} $
                     \State $\Gamma \gets \Gamma \cup \{ a_i\}$
      \EndFor
      \State \Return{$\Gamma$}
  \end{algorithmic}
\end{algorithm}

\subsubsection{Estimation of signal matrix and support using support candidates} Step 3 of Algorithms 2 or 3 is to estimate $\Omega$ and $X_0$ as $\hat \Omega$ and $\hat X$, respectively through Algorithms 5 or 6 (ESMS, Estimation of Signal Matrix and Support) using the triplet $(\Omega_c,\Phi,Y)$ as their inputs. Algorithms 5 (ESMS$_2$) and 6 (ESMS$_3$) are used either when the sparsity $k$ is known or unknown, respectively. If the sparsity is unknown, ESMS$_2$ additionally detects the sparsity by thresholding using a fixed parameter $\kappa$. Conditions for $\Phi$, $X$, and $W$ such that each algorithm recovers $\Omega$ will be shown in Section \ref{sec_pg}. Though the explicit method on how to set up $\kappa$ in ESMS$_2$ when $k$ is unknown will not be discussed, an actual implementation might use the following methods: 1. Detect the largest gap between two $\zeta_l$'s and set up $\kappa$ to distinguish the two $\zeta_l$'s. 2. Set up $\kappa$ as an estimation of the weighted noise level (i.e., the expected value of $\left \| ( \Phi_{J}^{\dagger} W) ^{\{i\}}\right \|_2$ with respect to $W$ and $i \in \Sigma$).

\begin{algorithm}\label{algo1}
  \caption{ESMS$_1$($J,k$)}
  \begin{algorithmic}[1]
    \Input{$Y \in \mathbb{K}^{m \times l}, \Phi \in \mathbb{K}^{m \times n},  k \in \mathbb{N}, J \subseteq \Sigma$}
    \Output{$Q \subseteq \Sigma$, $\hat {X} \in \mathbb{K}^{n \times l}$}
        \State $\bar{X}^{J}  \gets ( \Phi_{J})^{\dagger} Y$
    \For{$l \in J $}   
    \State   $\zeta_l \gets  \left \| \bar{X}^{\{l\}}\right \|_2 $ 
    \EndFor
    \State $Q \gets \{ \textup{indices of the $k$-largest $\zeta_l$'s } \}$
    \State $\hat {X}^{Q}  \gets ( \Phi_{Q})^{\dagger} Y$
    \State \Return{${ Q,\hat X}$}
  \end{algorithmic}
\end{algorithm}

\begin{algorithm}\label{algo1}
  \caption{ESMS$_2$($J,\kappa$)}
  \begin{algorithmic}[1]
    \Input{$Y \in \mathbb{K}^{m \times l}, \Phi \in \mathbb{K}^{m \times n},  k \in \mathbb{N}, J \subseteq \Sigma$}
    \Output{$Q \subseteq \Sigma$, $\hat {X} \in \mathbb{K}^{n \times l}$}
         \State $\bar{X}^{J}  \gets ( \Phi_{J})^{\dagger} Y$
    \For{$l \in J $}  
    \State   $\zeta_l \gets  \left \| \bar{X}^{\{l\}}\right \|_2 $ 
    \EndFor
    \State $Q \gets \{ \textup{indices of $\zeta_l$'s which are larger than $\kappa$ } \}$
    \State $\hat {X}^{Q}  \gets ( \Phi_{Q})^{\dagger} Y$
    \State \Return{${ Q,\hat X}$}
    \end{algorithmic}
\end{algorithm}

\begin{algorithm}
  \caption{TSMP$_1$($k$) with QR decomposition}
    \begin{algorithmic}[1]
    \Input{$Y \in \mathbb{K}^{m \times l}, \Phi \in \mathbb{K}^{m \times n}$, $k\in \mathbb{N}.$}
    \Initialize{$T:=[t_{1},...,t_n]=0 \in \mathbb{K}^{m \times n}$, $\bar X, \hat X=0 \in \mathbb{K}^{n \times l}$, $A=I_m \in \mathbb{K}^{m \times m}$, $\Gamma = \o$.}
    \Output{$\Omega_c,\hat \Omega\subseteq \Sigma$, $\hat X \in  \mathbb{K}^{n \times l}$}
    \State Estimate an $r$-dimensional signal subspace as an orthnomal basis $U_y \in \mathbb{K}^{m \times r}$ from $\mathcal{R}(Y)$.
    \State Compute $\bar \Phi=[\bar \phi_{1},...,\bar \phi_{n}]$ s.t. $\bar \phi_{j} = \phi_{j} / \left \| \phi_{i} \right \|_2$ for $j \in \Sigma$.
	\For{i = 1 to $m-1$}
    \State Compute $QR$ decomposition of $U_y$ as $AU_y=Q_1R_1$.
    \State $ T_{\Sigma \setminus \Gamma} \gets A \bar \Phi_{\Sigma \setminus \Gamma}$
    \State $ a  \gets \underset{b \in \Sigma \setminus \Gamma}{\arg \max} \left \| Q^*_1 t_b \right \|_2 / \left \|  t_b \right \|_2$
    \State $\Gamma \gets \Gamma \cup \{a\}$   
    \State Compute $QR$ decomposition of $\bar \Phi_{\Gamma}$ as $\bar \Phi_{\Gamma}=Q_2R_2$.
    \State $A \gets I_m - Q_2Q_2^*$.
    \EndFor
            \State $\Omega_c \gets \Gamma$ 
            \State Compute $QR$ decomposition of $\Phi_{\Omega_c}$ as $\Phi_{\Omega_c}=Q_3R_3$.
      \State $\bar{X}^{\Omega_c}  \gets (R^*_3R_3)^{-1}\Phi^*_{\Omega_c} Y$
    \For{$l \in \Omega_c $}   
    \State   $\zeta_l \gets  \left \| \bar{X}^{\{l\}}\right \|_2 $ 
    \EndFor
    \State $\hat \Omega \gets \{ \textup{indices of the $k$-largest $\zeta_l$'s } \}$ 
    \State Compute $QR$ decomposition of $\Phi_{\hat \Omega}$ as $\Phi_{\hat \Omega}=Q_4R_4$.
    \State $\hat{X}^{\hat \Omega}  \gets (R^*_4R_4)^{-1}\Phi^*_{\hat \Omega} Y$
      \State \Return{$\Omega_c,\hat \Omega,\hat X$}
  \end{algorithmic}
\end{algorithm}

\section{Ideal condition for MMV}
\label{sec_ocmmv}
For the noiseless case in the ideal approach, sufficient and necessary conditions for the recovery of $X_0$ or its support are (\ref{unq_1}) or (\ref{unq_2}) due to the following results. 

\begin{thm}(\cite[Theorem 2]{davies2012rank})\label{thm_unq_1}
Either (\ref{unq_1}) or (\ref{unq_2}) is necessary and sufficient for the measurement matrix $Y=\Phi X_0$ to uniquely determine the true signal matrix $X_0$ from $\{ \forall X \in \mathbb{K}^{n \times l} | |\supp(X)|\leq k \}$ where $|\supp(X_0)|=k$.
\begin{align}\label{unq_1}
\krank(\Phi)>2k-\rank(Y)\\\label{unq_2}
\krank(\Phi)>2k-\rank(X_0)
\end{align}
\end{thm}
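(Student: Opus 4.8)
The plan is to first collapse the ``either/or'' into a single condition and then prove that condition is necessary and sufficient. I would begin by observing that each of (\ref{unq_1}) and (\ref{unq_2}) already forces $\krank(\Phi)\geq k+1$: since $\rank(Y)\leq\rank(X_0)\leq k$, both $2k-\rank(Y)$ and $2k-\rank(X_0)$ are at least $k$, so either strict inequality gives $\krank(\Phi)>k$. Consequently every set of $k$ columns of $\Phi$ is independent, so $\Phi_\Omega$ has full column rank and $\rank(Y)=\rank(\Phi_\Omega X_0^\Omega)=\rank(X_0^\Omega)=\rank(X_0)$. Hence (\ref{unq_1}) and (\ref{unq_2}) coincide whenever either holds, and it suffices to prove that (\ref{unq_2}), i.e.\ $\krank(\Phi)\geq 2k-r+1$ with $r:=\rank(X_0)$, is necessary and sufficient for uniqueness.

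For sufficiency, suppose $X$ is any competitor with $|\supp(X)|\leq k$ and $\Phi X=\Phi X_0=Y$, and set $D:=X-X_0$, so $\Phi D=0$ and $\supp(D)\subseteq T:=\Omega\cup\supp(X)$ with $|T|\leq 2k$. Exactly as above, $\Phi_{\supp(X)}$ has full column rank, so $\rank(X)=\rank(Y)=r=\rank(X_0)$. The core is then a two-sided estimate of $\rank(D)$. Writing $\Omega_1:=\Omega\cap\supp(X)$ and $\Omega_2:=\Omega\setminus\supp(X)$, the columns of $D^{T}$ lie in $\nulli(\Phi_T)$, so by rank--nullity together with $\rank(\Phi_T)\geq\krank(\Phi)$,
\begin{align}\nonumber
\rank(D)\leq |T|-\krank(\Phi)\leq \big(|\supp(X)|+|\Omega_2|\big)-(2k-r+1)\leq |\Omega_2|-k+r-1,
\end{align}
where I use the identity $|T|=|\supp(X)|+|\Omega_2|$ and $|\supp(X)|\leq k$ (the case $|T|\leq\krank(\Phi)$ forces $D=0$ at once). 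On the other hand, $X$ vanishes on $\Omega_2$, so $D^{\Omega_2}=-X_0^{\Omega_2}$, and deleting the $|\Omega_1|$ rows indexed by $\Omega_1$ from the rank-$r$ matrix $X_0^{\Omega}$ lowers the rank by at most $|\Omega_1|=k-|\Omega_2|$; hence
\begin{align}\nonumber
\rank(D)\geq \rank(D^{\Omega_2})=\rank(X_0^{\Omega_2})\geq r-|\Omega_1|=r-k+|\Omega_2|.
\end{align}
The two bounds are contradictory (they read $r-k+|\Omega_2|\leq r-k+|\Omega_2|-1$), so $D=0$ and $X=X_0$.

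For necessity I would argue by contraposition: assuming $\krank(\Phi)\leq 2k-r$, I would construct two distinct $k$-sparse matrices producing the same $Y$. By definition of Kruskal rank there is a nonzero $d$ with $\Phi d=0$ and $|\supp(d)|=\krank(\Phi)+1\leq 2k-r+1$. Taking a rank-one difference $D=dw^{*}$ for a generic $w\in\mathbb{K}^{l}$ and splitting $\supp(d)$ between an ``$\Omega$-only'' part and a ``$\supp(X)$-only'' part, I would then adjoin a common block of $r-1$ rows, shared by $X$ and $X_0$ (so that $D$ vanishes there) and carrying a rank-$(r-1)$ pattern whose row space is transversal to $w^{*}$; this makes $\rank(X_0)=r$ and $|\Omega|=k$ while keeping $|\supp(X)|\leq k$. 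The arithmetic of fitting the $\krank(\Phi)+1$ dependent indices together with the shared rank block into two index sets of size at most $k$ is precisely what $\krank(\Phi)\leq 2k-r$ permits.

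I expect the sufficiency sandwich to be essentially routine once the identity $|T|=|\supp(X)|+|\Omega_2|$ and the rank drop $\rank(X_0^{\Omega_2})\geq r-|\Omega_1|$ are in place. The main obstacle is the necessity construction, where the support-size bookkeeping must be handled carefully so that both matrices have support at most $k$, the difference is genuinely nonzero, and $X_0$ attains rank exactly $r$ for every admissible $(k,r)$ with $\krank(\Phi)\leq 2k-r$; one also has to check that the equivalence of (\ref{unq_1}) and (\ref{unq_2}) is used non-circularly, namely only on the branch where one of the conditions holds, while necessity is derived directly from the negation of (\ref{unq_2}).
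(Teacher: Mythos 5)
Two remarks before the main point: the paper itself contains no proof of this statement — it is imported verbatim as \cite[Theorem 2]{davies2012rank} — so your attempt can only be judged against that source's argument; and the first two thirds of your proposal are in fact correct. The observation that either condition forces $\krank(\Phi)\geq k+1$, hence $\rank(Y)=\rank(X_0)$ and the merging of (\ref{unq_1}) with (\ref{unq_2}), is right, and your sufficiency argument is sound: with $D:=X-X_0$, $T:=\Omega\cup\supp(X)$, $\Omega_1:=\Omega\cap\supp(X)$, $\Omega_2:=\Omega\setminus\supp(X)$, the chain $r-k+|\Omega_2|\leq\rank(X_0^{\Omega_2})=\rank(D^{\Omega_2})\leq\rank(D)\leq|T|-\krank(\Phi)\leq|\Omega_2|-k+r-1$ gives $0\leq-1$, and the degenerate case $|T|\leq\krank(\Phi)$ kills $D$ outright. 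This is essentially the rank-counting argument of Davies--Eldar (descending from Chen--Huo and Feng--Bresler).

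The genuine gap is in your necessity construction, and it is quantitative, not a matter of unfinished bookkeeping. You place the entire rank-$(r-1)$ shared block $C$ on rows where $D$ vanishes (i.e., $C\cap\supp(d)=\emptyset$) and split $\supp(d)$ into an ``$\Omega$-only'' part $A$ and a ``$\supp(X)$-only'' part $B$. Then $|\Omega|=|A|+(r-1)=k$ forces $|A|=k-r+1$, and $|\supp(X)|\leq|B|+(r-1)\leq k$ forces $|B|\leq k-r+1$, so your architecture can only absorb a null vector with $|\supp(d)|=\krank(\Phi)+1\leq 2k-2r+2$. The contrapositive hypothesis, however, only provides $|\supp(d)|=\krank(\Phi)+1\leq 2k-r+1$; for $r\geq2$ these bounds differ by $r-1$, so the construction is impossible whenever $2k-2r+1<\krank(\Phi)\leq 2k-r$ — exactly the regime where the rank bonus is at stake. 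Your closing claim that the arithmetic ``is precisely what $\krank(\Phi)\leq 2k-r$ permits'' is therefore false. The repair is to let the two supports overlap \emph{inside} $\supp(d)$: partition $\supp(d)=T_1\cup T_2\cup T_3$ with $|T_2|,|T_3|\leq k-r+1$, set $\Omega=T_1\cup T_2\cup C$ and $\supp(X)=T_1\cup T_3\cup C$ (with $C$ a disjoint, possibly empty, padding set), define $X_0^{T_2}:=-d^{T_2}w^*$, choose $X_0$ generically on $T_1\cup C$, and put $X:=X_0+dw^*$. The difference is still rank one, $X$ vanishes on $T_2$ while acquiring $T_3$, and the $r-1$ rank-carrying rows now sit in $T_1\subseteq\supp(d)$ where \emph{both} signals are nonzero; this accommodates $|\supp(d)|$ up to $(r-1)+2(k-r+1)=2k-r+1$ while giving $\rank(X_0)=r$ and $|\supp(X_0)|,|\supp(X)|\leq k$, which is the counterexample the necessity direction requires.
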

Theorem \ref{thm_unq_1} shows that the $l_0$ bound is equal to $2k+1-\rank(X_0)$ and no recovery algorithm can uniformly guarantee their success with smaller than the $l_0$ bound. This result also implies that the minimum value of the $l_0$ bound, $k+1$, can only be achieved when $\rank(X^{\Omega}_0)$ has full row rank (i.e., $\rank(X_0)=k$). Theorem \ref{thm_unq_3}, however, shows that if $\Phi$ or $X_0$ does not belong to a certain set with Lebesgue measure zero, the sufficient condition on required $m$ for the recovery of $X_0$ or its support reduces to $k+1$ irrespective to $\rank(X_0)$ or $\rank(Y)$ (i.e., $l$). Based on the result of Theorem \ref{thm_unq_3}, $m=k+1$ is defined as the ``$k+1$'' bound which provides a better lower bound for the minimum required $m$ for the successful recovery than the $l_0$ bound. This implies that a tractable algorithm whose minimum required $m$ is smaller than the $l_0$ bound (i.e., $2k+1-\rank(X_0)$) may exist irrespective of $l$.

\begin{thm}(\cite[Theorem 2.16]{foucart2013mathematical})\label{thm_unq_3} The measurement matrix $Y=\Phi X_0$ uniquely determines the true signal matrix $X_0$ from $\{ \forall X \in \mathbb{K}^{n \times l} | |\supp(X)|\leq k \}$ where $|\supp(X_0)|=k$  if and only if $\sigma_{k}(\Phi_{\Omega})>0$ and $X_0 \notin  \underset{J \in \Sigma, |J|=k,J \neq \Omega}{\bigcup} e(J)$ where
\begin{align}\label{unq_3}
e(J):=\{&X:=[x_1,...,x_l] \in \mathbb{K}^{n \times l} |\sigma_{k+1}([ \Phi_J, \Phi x_i]) = 0 \textup{ for all $i \in [l]$}\}.
\end{align}
\end{thm}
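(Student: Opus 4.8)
The plan is to prove the two implications of the equivalence separately, and in both directions the crucial step is to translate the singular-value condition $\sigma_{k+1}([\Phi_J,\Phi x_i])=0$ into the geometric statement that every column of $Y=\Phi X_0$ lies in $\mathcal{R}(\Phi_J)$; that is, $X_0\in e(J)$ is equivalent to $\mathcal{R}(Y)\subseteq\mathcal{R}(\Phi_J)$ whenever $\Phi_J$ has full column rank. The two hypotheses play complementary roles: $\sigma_k(\Phi_\Omega)>0$ makes $\Phi_\Omega$ injective, so the signal is pinned down once the correct support is known, while $X_0\notin e(J)$ for every $J\neq\Omega$ with $|J|=k$ rules out all competing supports of size $k$.

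For sufficiency I would assume both conditions and take an arbitrary $k$-sparse $X$ with $\Phi X=Y$, aiming to show $X=X_0$. First I would dispose of the case $\supp(X)\subseteq\Omega$: there $\Phi_\Omega(X^\Omega-X_0^\Omega)=0$, and injectivity of $\Phi_\Omega$ (from $\sigma_k(\Phi_\Omega)>0$) forces $X=X_0$. The substantive case is $\supp(X)\not\subseteq\Omega$, so there exists $j_0\in\supp(X)\setminus\Omega$. Writing $J=\supp(X)$ with $|J|\le k$, every column $y_i=\Phi x_i$ lies in $\mathcal{R}(\Phi_J)$. I would then extend $J$ to a set $J'$ of size exactly $k$ that still contains $j_0$, so that $J'\neq\Omega$; since $\mathcal{R}(\Phi_J)\subseteq\mathcal{R}(\Phi_{J'})$, each $[\Phi_{J'},y_i]$ has rank at most $k$, and because $y_i=\Phi(X_0)_i$ this says precisely $X_0\in e(J')$, contradicting the second hypothesis. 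The one delicate point here is handling supports of size strictly below $k$, which genuinely occur in the rank-deficient regime $\rank(X_0)<k$: the extension must deliberately retain the index $j_0$ outside $\Omega$, which certifies $J'\neq\Omega$.

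For necessity I would argue the contrapositive on each hypothesis. If $\sigma_k(\Phi_\Omega)=0$, then $\Phi_\Omega$ has a nonzero null vector $v$, and adding $v$ to a single column of $X_0^\Omega$ produces a distinct $k$-sparse matrix with the same measurements, breaking uniqueness. If instead $X_0\in e(J)$ for some $J\neq\Omega$ with $|J|=k$, then (with $\Phi_J$ of full column rank) every $y_i\in\mathcal{R}(\Phi_J)$, so $Y=\Phi_J C$ for some coefficient matrix $C$; placing the rows of $C$ on the indices in $J$ yields a $k$-sparse $X'$ with $\Phi X'=Y$ and $\supp(X')\subseteq J\neq\Omega$, hence $X'\neq X_0$, again contradicting uniqueness.

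I expect the main obstacle to be the clean passage between the algebraic condition defining $e(J)$ and the range-inclusion $\mathcal{R}(Y)\subseteq\mathcal{R}(\Phi_J)$, because the implication ``$\sigma_{k+1}([\Phi_J,\Phi x_i])=0$ for all $i$ $\Rightarrow$ $y_i\in\mathcal{R}(\Phi_J)$'' is transparent only when $\rank(\Phi_J)=k$; if $\Phi_J$ is itself rank-deficient the vanishing of $\sigma_{k+1}$ is automatic and carries no information. I would therefore run the necessity argument under the paper's standing non-degeneracy $\krank(\Phi)\ge k$, so that every $\Phi_J$ with $|J|=k$ has full column rank and the equivalence $X_0\in e(J)\Leftrightarrow\mathcal{R}(Y)\subseteq\mathcal{R}(\Phi_J)$ is exact. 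The sufficiency direction uses only the easy implication $\mathcal{R}(Y)\subseteq\mathcal{R}(\Phi_{J'})\Rightarrow X_0\in e(J')$ and so needs no such caveat.
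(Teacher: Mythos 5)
Your proof is correct, and it is both more self-contained and more careful than the paper's own argument. For necessity you follow essentially the same route as the paper: the contrapositive on each hypothesis, with a null vector of $\Phi_{\Omega}$ producing a competing solution supported inside $\Omega$, and membership $X_0\in e(J)$ producing a competing solution supported on $J$ (the paper phrases this step only as the existence of $Q\neq\Omega$ with $\mathcal{R}(Y)\subseteq\mathcal{R}(\Phi_Q)$, leaving the construction of the competitor implicit, and it contains a sign typo, writing $X_0\notin\bigcup e(J)$ where $X_0\in\bigcup e(J)$ is meant). For sufficiency, however, the paper gives no argument at all --- it simply cites Theorem 2.16 of Foucart and Rauhut --- whereas your case split ($\supp(X)\subseteq\Omega$ handled by injectivity of $\Phi_{\Omega}$; otherwise extend $\supp(X)$ to a $k$-set $J'$ retaining an index $j_0\notin\Omega$, so that $\mathcal{R}(Y)\subseteq\mathcal{R}(\Phi_{J'})$ forces $X_0\in e(J')$) is a complete and correct replacement, and it rightly handles supports of size strictly less than $k$. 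Your principal added value is flagging the rank-deficiency issue: the implication $X_0\in e(J)\Rightarrow\mathcal{R}(Y)\subseteq\mathcal{R}(\Phi_J)$ used in necessity is valid only when $\rank(\Phi_J)=k$, and if some $\Phi_J$ with $|J|=k$, $J\neq\Omega$ is rank-deficient then $e(J)=\mathbb{K}^{n\times l}$, so uniqueness can hold while $X_0\in e(J)$ and the ``only if'' direction genuinely fails; the paper's proof silently assumes full column rank at exactly this step. Two small cautions: first, $\krank(\Phi)\geq k$ is not actually a standing hypothesis of this theorem in the paper (it appears only in the surrounding discussion and in the remark following the theorem), so you should state it as an explicit added assumption for the necessity direction rather than ``inherit'' it; second, the paper's first necessity step asserts the stronger claim $\min_{\Delta}\sigma_{|\Omega\cup\Delta|}(\Phi_{\Omega\cup\Delta})>0$, which does not follow from uniqueness for the single matrix $X_0$, and your direct null-vector argument cleanly avoids that overstatement.
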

\begin{proof}[Proof of Theorem \ref{thm_unq_3}] Sufficiency follows from Theorem 2.16 in \cite{foucart2013mathematical}.
Next, neccessity will be proved by the following two steps. 
First, suppose that $\Phi (X_0 - X) \neq 0$ for $X (\neq X_0) \in \mathbb{K}^{n \times l}$ from $\{ \forall X \in \mathbb{K}^{n \times l} | |\supp(X):=\Delta|\leq k \}$. Then $\min\limits_{\underset{|\Delta|\leq k}{\Delta \subseteq \Sigma}}\sigma_{|\Omega \cup \Delta|}(\Phi_{\Omega \cup \Delta})>0$ is guaranteed so that $\sigma_{k}(\Phi_{\Omega})>0$ holds.  
Next, suppose that $X_0 \notin  \underset{J \in \Sigma, |J|=k,J \neq \Omega}{\bigcup} e(J)$. Then there exists a set of indices $Q \subseteq \Sigma$ such that $\mathcal{R}(Y) \subseteq \mathcal{R}(\Phi_Q)$, $|Q| = k$, and $Q\neq \Omega$. Therefore $X_0$ is not the unique solution of (\ref{l0minf1}).
\end{proof}
\begin{rem}
Suppose that $\krank(\Phi)\geq k+1$. Then $e(J)$ has Lebesgue measure zero and so does its finite union $\underset{J \in \Sigma, |J|=k,J \neq \Omega}{\bigcup} e(J)$.
\end{rem}

\section{Relationship between the ideal condition and OSMP/TSMP} 
\label{sec6}
In this section, tight sufficient conditions for the success of OSMP and TSMP are provided under certain constraints. These results are valid for both the noiseless and noisy cases and non-asymptotic for $(m,n,l,k,r)$.

\subsection{Measurement of noise magnitude}
It is assumed that there exists an estimator for extracting an $r$-dimensional subspace, $\hat S$, from $\mathcal{R}(Y)$ in Step 1 of the OSMP/TSMP algorithms. The following function is defined.
\begin{align}\label{precond1}
\rho(\hat S,\Phi_{\Omega}X_0^{\Omega}):=\min\limits_{\underset{\textup{s.t. }\dim(\bar S)=\dim(\hat S)}{\bar S \subseteq \mathcal{R}(\Phi_{\Omega}X_0^{\Omega})} } \left \| P_{\hat{S}}-P_{\bar{S}}\right \| 
\end{align}

$\rho(\hat S,\Phi_{\Omega}X_0^{\Omega})$ is simply denoted as $\rho(\hat S)$ in the rest of the paper.
$\rho(\hat S)$ increases as the noise power increases and $\rho(\hat S)$ is zero for any $r$-dimensional subspace $\hat{S}$ from $\mathcal{R}(Y)$ in the noiseless case. This means that in the noiseless case Step 1 of OSMP/TSMP is not needed, i.e., $(r,\hat S)$ is set to $(\rank(Y),Y)$. For these reasons, $\rho(\hat S)$ will be used as a measure for noise magnitude.
\subsection{Relationship between the optimality condition and OSMP/TSMP}
The following family of index subsets is defined:
\begin{align}\nonumber
t(a,b) &:= \{\forall J \subseteq \Sigma | \, |J \cap \Omega| \geq a, \,|J\cup \Omega| \leq b+|\Omega| \}
\end{align}

\begin{thm}\label{propmc_st3}
Let $\eta$ be a constant such that $\rho(\hat S)\leq \eta \leq 0.5$ where $\hat S$ is an $r$-dimensional space. Suppose that $X_0^{\Omega}$ is row-nondegenerate and $\krank(\Phi) \geq k+v_1$. Then, for any $\Gamma \in t(k-r,v_1)$, $\Omega \setminus \Gamma$ belongs to a set of indices selected by submp($\hat S,\Gamma,v_2$) such that $v_2\geq |\Omega \setminus \Gamma|$ if any of the following conditions hold:
\begin{align}\label{ex_msc1}
a_1(v_1)&<\frac{1- 4 \eta (1-\eta) }{1+4 \eta (1-\eta) } \\\label{ex_msc2}
a_2(v_1)&> 4 \eta (1-\eta)\\\label{ex_msc4}
a_3(v_1)&> 4 \eta (1-\eta),
\end{align}
where
\begin{align}\nonumber
a_1(x)&:= {\delta_{k}(\Phi_{\Omega};x+1)}\\\nonumber
a_2(x)&:= \underset{\Delta_{x} \subseteq \Sigma \setminus \Omega}{\min} \left[ \frac{\underset{i \in \Sigma \setminus \Omega \cup \Delta_{x}}{\min} \, \sigma^2_{k+x+1}(\Phi_{\Omega \cup \Delta_{x} \cup \{i\}  })}{\underset{j \in \Sigma \setminus \Omega \cup \Delta_{x}}{\max} \,\sigma^2_{1}(\Phi_{\Omega \cup \Delta_{x} \cup \{j\}  })}\right] \\ \nonumber
a_3(x)&:= \frac{\underset{\Delta_{(x+1)} \subseteq \Sigma \setminus \Omega}{\min}\, \sigma^2_{k+x+1}(\Phi_{\Omega \cup \Delta_{x+1}})}{\left \| \phi^{\max}_{\Sigma }\right \|^2_2}.
\end{align}
\end{thm}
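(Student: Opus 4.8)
The plan is to reduce the theorem to a single-step selection guarantee and then close by induction. First I would rewrite the submp score. Since $\mathcal{R}(P^{\perp}_{\mathcal{R}(\Phi_\Gamma)}\hat S)\subseteq \mathcal{R}(\Phi_\Gamma)^{\perp}$, the projection $P_{\mathcal{R}(P^{\perp}_{\mathcal{R}(\Phi_\Gamma)}\hat S)}\phi_l$ factors through $P^{\perp}_{\mathcal{R}(\Phi_\Gamma)}$, so the ratio maximized in Algorithm 4 equals $\|P_{\dot{\hat S}}\dot\phi_l\|_2$, where $\dot{\hat S}:=\mathcal{R}(P^{\perp}_{\mathcal{R}(\Phi_\Gamma)}\hat S)$ and $\dot\phi_l = P^{\perp}_{\mathcal{R}(\Phi_\Gamma)}\phi_l/\|P^{\perp}_{\mathcal{R}(\Phi_\Gamma)}\phi_l\|_2$. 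Thus each iteration picks the atom whose projected and normalized column is closest in angle to the projected signal subspace. The theorem then follows once I show that, whenever $\Gamma\in t(k-r,v_1)$ and $\Omega\setminus\Gamma\neq\o$, the maximizer lies in $\Omega\setminus\Gamma$: each such step adds a true index, so $|\Gamma\cap\Omega|$ strictly increases while $|\Gamma\cup\Omega|$ stays $\leq k+v_1$ and $|\Omega\setminus\Gamma|$ stays $\leq r$, keeping the invariant $\Gamma\in t(k-r,v_1)$ intact; after at most $v_2\geq|\Omega\setminus\Gamma|$ steps all of $\Omega\setminus\Gamma$ is absorbed.

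Next I would establish the noiseless dichotomy. Let $\bar S:=\mathcal{R}(\Phi_\Omega X_0^{\Omega})$, the subspace attaining the minimum in the definition of $\rho(\hat S)$; it is the unique $r$-dimensional such subspace since $\krank(\Phi)\geq k+v_1>k$ forces $\rank(\Phi_\Omega X_0^{\Omega})=\rank(X_0^{\Omega})=r$. Because $P^{\perp}_{\mathcal{R}(\Phi_\Gamma)}$ annihilates the columns $\phi_i$ with $i\in\Gamma\cap\Omega$, and the rows of $X_0^{\Omega}$ indexed by $\Omega\setminus\Gamma$ are linearly independent by row-nondegeneracy together with $|\Omega\setminus\Gamma|\leq r$, I obtain the key identity $\dot{\bar S}:=\mathcal{R}(P^{\perp}_{\mathcal{R}(\Phi_\Gamma)}\bar S)=\mathcal{R}(P^{\perp}_{\mathcal{R}(\Phi_\Gamma)}\Phi_{\Omega\setminus\Gamma})$, whose dimension is exactly $|\Omega\setminus\Gamma|$ (here $\krank(\Phi)\geq k+v_1$ guarantees $P^{\perp}_{\mathcal{R}(\Phi_\Gamma)}\Phi_{\Omega\setminus\Gamma}$ has full column rank). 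Consequently, for a correct atom $i\in\Omega\setminus\Gamma$ we have $\dot\phi_i\in\dot{\bar S}$ and the ideal score is $1$, while for an incorrect atom $j\notin\Omega$ the residual is $\|P^{\perp}_{\dot{\bar S}}\dot\phi_j\|_2=\|P^{\perp}_{\mathcal{R}(\Phi_{\Gamma\cup\Omega})}\phi_j\|_2/\|P^{\perp}_{\mathcal{R}(\Phi_\Gamma)}\phi_j\|_2>0$. This strictly positive separation is what the algorithm exploits.

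I would then show that each of the three quantities lower-bounds the ideal separation $s^{\star}:=\min_{j\notin\Omega}\|P^{\perp}_{\dot{\bar S}}\dot\phi_j\|_2^{2}$. Using $\|P^{\perp}_{\mathcal{R}(\Phi_{\Gamma\cup\Omega})}\phi_j\|_2\geq\sigma_{|\Gamma\cup\Omega|+1}(\Phi_{\Gamma\cup\Omega\cup\{j\}})$ together with the interlacing fact $K'\subseteq K\Rightarrow\sigma_{\min}(\Phi_{K'})\geq\sigma_{\min}(\Phi_{K})$ and $|\Gamma\setminus\Omega|\leq v_1$, the numerator is bounded below by $\sigma_{k+v_1+1}^{2}$ of a suitable $(k+v_1+1)$-column submatrix containing $\Omega$. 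Normalizing the denominator by $\max_i\|\phi_i\|_2=\|\phi^{\max}_{\Sigma}\|_2$ gives $s^{\star}\geq a_3(v_1)$; refining the column set to $\Omega\cup\Delta_{v_1}\cup\{i\}$ and dividing by the local $\max\sigma_1^{2}$ gives $s^{\star}\geq a_2(v_1)$; and bounding numerator and denominator uniformly through the WRIP constant $\kappa(\Omega)$ via the relation $a_1(v_1)=\frac{1-\kappa(\Omega)}{1+\kappa(\Omega)}$ gives $s^{\star}\geq\kappa(\Omega)$, so that (\ref{ex_msc1}) is exactly $\kappa(\Omega)>4\eta(1-\eta)$. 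Hence any one of (\ref{ex_msc1})--(\ref{ex_msc4}) yields $s^{\star}>4\eta(1-\eta)$, equivalently $\sqrt{1-s^{\star}}<1-2\eta$.

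Finally I would convert the ideal separation into the noisy guarantee. Writing any $s\in\hat S$ as $s=t+e$ with $t=P_{\bar S}s$ and $\|e\|_2\leq\eta\|s\|_2$ (from $\rho(\hat S)\leq\eta$), I would show the score of a correct atom stays at least $1-\textup{O}(\eta)$ and the score of an incorrect atom stays at most $\sqrt{1-s^{\star}}+\textup{O}(\eta)$, so that $\sqrt{1-s^{\star}}<1-2\eta$ forces the correct atom to win. The hard part is precisely this perturbation step: projecting $\hat S$ and $\bar S$ onto $\mathcal{R}(\Phi_\Gamma)^{\perp}$ need not preserve dimension — generically $\dim\dot{\hat S}=r>|\Omega\setminus\Gamma|=\dim\dot{\bar S}$ once more than $k-r$ true indices have been collected — so $\|P_{\dot{\hat S}}-P_{\dot{\bar S}}\|$ is \emph{not} controlled by $\eta$ and a uniform subspace-gap bound fails. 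The resolution I would pursue is to argue atomwise rather than through a global gap: for a correct index $i$ the direction $P^{\perp}_{\mathcal{R}(\Phi_\Gamma)}\phi_i$ has non-negligible norm, so the near-cancellation responsible for the spurious extra directions of $\dot{\hat S}$ does not affect it and its score is pinned near $1$; for an incorrect index the residual is read off against $\dot{\bar S}$ with only an $\textup{O}(\eta)$ correction. Carrying this estimate through and verifying that the error terms assemble into exactly the threshold $4\eta(1-\eta)$ is the main obstacle, after which the induction of the first paragraph completes the proof.
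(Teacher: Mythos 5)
Your noiseless skeleton is sound and runs parallel to the paper's argument: the rewriting of the submp score as $\left\| P_{\mathcal{R}(P^{\perp}_{\mathcal{R}(\Phi_\Gamma)}\hat S)}\dot\phi_l\right\|_2$, the identity $\mathcal{R}(P^{\perp}_{\mathcal{R}(\Phi_\Gamma)}\bar S)=\mathcal{R}(P^{\perp}_{\mathcal{R}(\Phi_\Gamma)}\Phi_{\Omega\setminus\Gamma})$ for $\Gamma\in t(k-r,v_1)$ (the paper's Lemmas \ref{subaug} and \ref{lem_tnlb2}), the score-one versus score-$\sqrt{1-s^{\star}}$ dichotomy, the derivation of (\ref{ex_msc1})--(\ref{ex_msc4}) as lower bounds on the separation, and the induction that keeps $\Gamma$ inside $t(k-r,v_1)$ all correspond to the paper's reduction of Theorem \ref{propmc_st3} (via the substitution scheme of Appendix A and Theorem \ref{new3prp_3}) to Corollary \ref{cor_music_c} and Proposition \ref{thm1n2}. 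But the proposal is not a proof, because the one step that makes this a statement about the noisy case --- converting $\rho(\hat S)\leq\eta$ into a comparison of noisy scores with threshold $4\eta(1-\eta)$ --- is exactly the step you leave open and yourself label ``the main obstacle.'' The paper closes that step concretely: Lemma \ref{proj_ineq} is invoked to assert $\left\| P_{\mathcal{R}(P^{\perp}_{\mathcal{R}(\Phi_\Gamma)}\hat S)}-P_{\mathcal{R}(P^{\perp}_{\mathcal{R}(\Phi_\Gamma)}\bar S)}\right\|_2\leq\left\| P_{\hat S}-P_{\bar S}\right\|_2\leq\eta$, whence every correct atom has noisy score at least $1-\eta$ (Lemma \ref{lem_tnlb2}) and every incorrect atom has noisy score at most $\sqrt{1-s}+\eta$ (Lemma \ref{lem_2ntnub}); the inequality $1-2\eta>\sqrt{1-s}$ is then exactly $s>4\eta(1-\eta)$, giving Proposition \ref{thm1n2} and, for all $\Gamma\in t(k-r,v_1)$ simultaneously, condition (\ref{mu_sup_c1}). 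The missing idea in your write-up is a substitute for that lemma.

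Moreover, the atomwise repair you sketch cannot close the gap. Your diagnosis of the difficulty is legitimate --- when $|\Omega\cap\Gamma|>k-r$ the projected ideal subspace has dimension $|\Omega\setminus\Gamma|<r$ while the projected estimate generically retains dimension $r$, and subspaces of unequal dimension have gap at least one, so this is precisely the delicate regime for the paper's Lemma \ref{proj_ineq} --- but naming an obstacle is not resolving it, and the resolution you propose fails at the incorrect atoms, not the correct ones. The spurious directions of $\mathcal{R}(P^{\perp}_{\mathcal{R}(\Phi_\Gamma)}\hat S)$ come from vectors of $\hat S$ that are $\eta$-close to $\bar S\cap\mathcal{R}(\Phi_\Gamma)$, and nothing prevents them from aligning with the projected column of a wrong index: take $k=r=2$, $\Omega=\{1,2\}$, $\Gamma=\{1\}\in t(0,v_1)$, $j\notin\Omega$, and $\hat S=\textup{span}\left(\phi_1+\epsilon P^{\perp}_{\mathcal{R}(\phi_1)}\phi_j,\ \phi_2+\epsilon' u\right)$ with $\epsilon,\epsilon'$ tiny and $u$ generic. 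Then $\rho(\hat S)$ is arbitrarily small, yet $\dot\phi_j\in\mathcal{R}(P^{\perp}_{\mathcal{R}(\Phi_\Gamma)}\hat S)$, so the incorrect atom $j$ scores exactly $1$ while the correct atom $2$ scores $1-\textup{O}(\epsilon'^2)<1$. Hence no bound of the form ``incorrect score $\leq\sqrt{1-s^{\star}}+\textup{O}(\eta)$'' read off against $\dot{\bar S}$ can hold, however small $\eta$ is, and your second ingredient (``the score of an incorrect atom stays at most $\sqrt{1-s^{\star}}+\textup{O}(\eta)$'') is false as stated. Any complete argument must control the projected estimated subspace as a whole, which is what the paper's Lemma \ref{proj_ineq} is designed to do and what your proposal neither supplies nor replaces.
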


\begin{proof}[Proof of Theorem \ref{propmc_st3}] 
See Appendix A.
\end{proof}

\begin{rem}
\label{rem1n2_4} 
Condition $(\ref{ex_msc4})$ when $(\Gamma,v_1,v_2,r) = (\o,0,k,k)$ and $(\Gamma,v_1,v_2) = (\o,0,r)$ covers the conditions in \cite[Theorem $7.1$]{lee2012subspace} and \cite[Theorem $7.7$]{lee2012subspace}. This fact implies that the theoretical guarantees of OSMP or TSMP are beyond that of SA-MUSIC$+$OSMP.
\end{rem}

The following are some definitions of some events. 
\begin{itemize}
\item $C_1(i)$: An event where OSMP succeeds to recover the first $i$ indices up to the $i$th step
\item $C_2(i,j)$: An event where TSMP succeeds to produce the first $i+j$ indices up to the $i+j$th step such that at least $i$ indices from the set of $i+j$ indices belong to the true support $\Omega$
\end{itemize}

\begin{cor}\label{corpropmc_st3}
Let $\eta$ be a constant such that $\rho(\hat S)\leq \eta \leq 0.5$ where $\hat S$ is an $r$-dimensional space. Suppose that $X_0^{\Omega}$ is row-nondegenerate. Then the following two statements hold.
\begin{itemize}
\item OSMP produces $\Omega$ as its output $\hat \Omega$ if $C_1(k-r)$ holds and any of the following conditions (\ref{corex_msc1})--(\ref{corex_msc4}) hold.
\begin{align}\label{corex_msc1}
a_1(0)&<\frac{1- 4 \eta (1-\eta) }{1+4 \eta (1-\eta) } \\\label{corex_msc2}
a_2(0)&> 4 \eta (1-\eta)\\\label{corex_msc4}
a_3(0)&> 4 \eta (1-\eta)
\end{align}
\item TSMP guarantees that $\Omega \subseteq \Omega_c$ where $\Omega_c$ is one of its outputs if $C_2(k-r, m-k-1)$ holds and any of the following conditions (\ref{cor2ex_msc1})--(\ref{cor2ex_msc4}) hold.
\begin{align}\label{cor2ex_msc1}
a_1(m-k-1)&<\frac{1- 4 \eta (1-\eta) }{1+4 \eta (1-\eta) } \\\label{cor2ex_msc2}
a_2(m-k-1)&> 4 \eta (1-\eta)\\\label{cor2ex_msc4}
a_3(m-k-1)&> 4 \eta (1-\eta)
\end{align}
\end{itemize} 
\end{cor}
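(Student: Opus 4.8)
The plan is to obtain both assertions as direct specializations of Theorem \ref{propmc_st3}, exploiting that submp is a purely greedy procedure: the index appended at each iteration depends only on $\hat S$, $\Phi$, and the current accumulated set $\Gamma$. Consequently, running submp($\hat S,\o,s$) and recording its state after $t<s$ iterations yields a head $\Gamma_t$ identical to what its first $t$ iterations produce, while the remaining $s-t$ iterations coincide verbatim with submp($\hat S,\Gamma_t,s-t$). This prefix/continuation property lets me split a full run of OSMP or TSMP into a first phase, governed by the event $C_1$ or $C_2$, followed by a submp call to which Theorem \ref{propmc_st3} applies for a suitable $(v_1,v_2)$.

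For the OSMP statement I would take $v_1=0$. OSMP runs submp($\hat S,\o,k$); conditioning on $C_1(k-r)$, the head $\Gamma$ of its first $k-r$ iterations satisfies $\Gamma\subseteq\Omega$ and $|\Gamma|=k-r$, so $|\Gamma\cap\Omega|=k-r$ and $|\Gamma\cup\Omega|=k$, whence $\Gamma\in t(k-r,0)$. The tail is submp($\hat S,\Gamma,r$), and since $|\Omega\setminus\Gamma|=r$ the requirement $v_2\geq|\Omega\setminus\Gamma|$ holds with $v_2=r$. Because conditions (\ref{corex_msc1})--(\ref{corex_msc4}) are precisely (\ref{ex_msc1})--(\ref{ex_msc4}) evaluated at $v_1=0$, Theorem \ref{propmc_st3} guarantees that these $r$ tail iterations select $\Omega\setminus\Gamma$; as exactly $r$ indices are appended and $|\Omega\setminus\Gamma|=r$, the output equals $\Gamma\cup(\Omega\setminus\Gamma)=\Omega$, and with $|\hat\Omega|=k=|\Omega|$ we conclude $\hat\Omega=\Omega$.

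For the TSMP statement I would take $v_1=m-k-1$. TSMP runs submp($\hat S,\o,m-1$); conditioning on $C_2(k-r,m-k-1)$, the head $\Gamma$ of its first $m-r-1$ iterations has $|\Gamma|=m-r-1$ and $|\Gamma\cap\Omega|\geq k-r$. The combinatorial check that $\Gamma\in t(k-r,m-k-1)$ is the one computation worth writing out: the intersection bound is immediate, while $|\Gamma\cup\Omega|=|\Gamma|+|\Omega|-|\Gamma\cap\Omega|\leq(m-r-1)+k-(k-r)=m-1=(m-k-1)+|\Omega|$, where $|\Gamma\cap\Omega|\geq k-r$ is consumed in the correct direction. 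The tail is submp($\hat S,\Gamma,r$), and $|\Omega\setminus\Gamma|=k-|\Gamma\cap\Omega|\leq r=v_2$, so Theorem \ref{propmc_st3} under the $v_1=m-k-1$ instances (\ref{cor2ex_msc1})--(\ref{cor2ex_msc4}) of the conditions forces $\Omega\setminus\Gamma$ into the $r$ tail selections; hence $\Omega_c\supseteq\Gamma\cup(\Omega\setminus\Gamma)\supseteq\Omega$.

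The only genuine care points are bookkeeping rather than new estimates, which is exactly what makes this a corollary. I must confirm the Kruskal-rank hypothesis $\krank(\Phi)\geq k+v_1$ of Theorem \ref{propmc_st3} is in force, namely $\krank(\Phi)\geq k$ for OSMP and $\krank(\Phi)\geq m-1$ for TSMP; I would treat this as the standing nondegeneracy assumption on $\Phi$ (it holds generically, e.g.\ $\krank(\Phi)=m$ for a generic $\Phi\in\mathbb{K}^{m\times n}$), observing that the stated conditions already force the submatrices containing $\Omega$ to carry the independence used inside Theorem \ref{propmc_st3}: a finite right-hand side with (\ref{corex_msc2})/(\ref{corex_msc4}) makes $\sigma_{k+v_1+1}$ of those submatrices strictly positive, and $a_1(v_1)<1$ in (\ref{corex_msc1})/(\ref{cor2ex_msc1}) keeps the WRIP constant well defined. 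Row-nondegeneracy of $X_0^{\Omega}$ is inherited verbatim from the corollary's hypothesis. With these in hand both assertions follow from the specialization above; I expect the sole place a direction-of-inequality slip could arise to be the identity $|\Gamma\cup\Omega|\leq m-1$ in the TSMP case.
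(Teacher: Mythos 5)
Your proposal is correct and takes essentially the same route as the paper: the paper's one-sentence proof simply invokes Theorem \ref{propmc_st3} (mis-cited there as Theorem \ref{cortm2}, which has no $(v_1,v_2)$ parameters) specialized at $(v_1,v_2)=(0,r)$ for OSMP and $(v_1,v_2)=(m-k-1,r)$ for TSMP, which is exactly your decomposition. Your write-up just makes explicit the bookkeeping the paper leaves implicit --- the greedy prefix/continuation property of submp, the verification that $\Gamma\in t(k-r,0)$ and $\Gamma\in t(k-r,m-k-1)$, and the observation that the corollary's conditions supply the local column independence standing in for the theorem's $\krank(\Phi)\geq k+v_1$ hypothesis, which the corollary's statement indeed omits.
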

\begin{proof}[Proof of Corollary \ref{corpropmc_st3}]
Theorem \ref{cortm2} where $(v_1,v_2)=(0,r)$ and $(v_1,v_2)=(m-k-1,r)$ guarantees (\ref{corex_msc1})--(\ref{corex_msc4}) and (\ref{cor2ex_msc1})--(\ref{cor2ex_msc4}), respectively.
\end{proof}

\begin{rem}
\label{rem1n2_3} 
The fact that the left-hand side of (\ref{corex_msc1}) is smaller than its uniform analog $\delta_{k+1}$ ($\delta_{k+1}<\frac{1- 4 \eta (1-\eta) }{1+4 \eta (1-\eta) }$) provides a uniform guarantee that OSMP recovers $\Omega$. In the noiseless case (i.e., $\eta=0$), the above condition reduces to $\delta_{k+1}<1$ so that $\krank(\Phi)>k$, which corresponds to (\ref{unq_3}) in Theorem \ref{thm1n2}. Therefore, for any $\Phi$ such that $\krank(\Phi)>k$, $k+1$ measurements are sufficient for OSMP to identify $\Omega$ in the noiseless case if $C_1(k-r)$ holds.

\end{rem}

\begin{rem}
If $C_2(k-r,m-k-1)$ holds and $\delta_{m}<\frac{1- 4 \eta (1-\eta) }{1+4 \eta (1-\eta) }$ for any $m >k$, TSMP ensures $\Omega \subseteq \Omega_c$ with $m$ measurements (Corollary \ref{corpropmc_st3}). Further, by Theorem \ref{sceosmp1}, TSMP identifies $\Omega$ if the following two conditions hold in the noiseless case (i.e., $\eta=0$); 1. $\Omega \subseteq \Omega_c$, 2. $\sigma_{|\Omega_c|}(\Omega_c)>0$.  Therefore, if $C_2(k-r,m-k-1)$ holds for any $m >k$, TSMP guarantees that $\hat \Omega = \Omega$ for any $\Phi$ such that $\delta_{m}<1$ or $\krank(\Phi)>m-1$ in the noiseless case.

\end{rem}

Since $\mathbb{P}(C_1(k-r))\leq \mathbb{P}(C_2(k-r,m-k-1))$ for $m$ $(>k)$ of OSMP and TSMP, the following relationship between the minimum value of $m$ for TSMP and OSMP to ensure the uniform recovery given a sparsity $k$ ($m_{TSMP}^*$ and $m_{OSMP}^*$) holds in the case of no noise and $\krank(\Phi)=m$:

\begin{align}\nonumber
&m_{OSMP}^*(=m_{SA-MUSIC}^*)\\\nonumber
&= \underset{\bar m}{\arg \min} \{ \bar m >k |\mathbb{P}(C_1(k-r))=1\} \\\nonumber
&\geq \underset{\hat m}{\arg \min} \{ \hat m >k |\mathbb{P}(C_2(k-r,\hat m-k-1))=1\} \\\nonumber
&= m_{TSMP}^*,
\end{align}
where $m_{SA-MUSIC}^*$ is the minimum $m$ to guarantee the uniform recovery of SA-MUSIC+OSMP given the same $k$.
This indicates that TSMP demands a smaller $m$ for the perfect recovery of $\Omega$ than OSMP and SA-MUSIC+OSMP at least in the high SNR region.

\section{Performance guarantee} 
\label{sec_pg}

In this section, we will analyze the non-asymptotical performances of OSMP and TSMP by considering both the noiseless and noisy cases. The following functions will be used. 

\begin{align}\nonumber
f(\eta,k,r)&:=\min\{f_1(\eta,k,r),f_2(\eta,k,r)\}\\\nonumber
f_1(\eta,k,r)&:=\left[(\frac{k}{k+r})(2\eta \sqrt{\frac{r}{k}}+\sqrt{\frac{k+r}{k}-4\eta^2})\right]^2  \\\nonumber
f_2(\eta,k,r)&:=\frac{1}{[\sqrt{{\frac{k}{r}}(\eta^2)+2}-\sqrt{\frac{k}{r}}\eta]^2}\\\nonumber
\lambda(x)&:=\frac{x + \sqrt{x^2+4x}}{1-2\sqrt{x}}
\end{align}

\subsection{Performance analysis for OSMP}
Two results of performance guarantees for OSMP will be shown through Theorems \ref{second_mainthm1} and \ref{cortm2}. 

\subsubsection{First approach}  Theorem \ref{second_mainthm1} provides a performance guarantee for OSMP by assuming that $\Phi$ follows a probability distribution $U(\Phi)$ which means that each $l_2$-normalized column vector of $\Phi \in \mathbb{R}^{m \times n}$ has a uniform distribution on an $m-1$ dimensional unit sphere. This assumption is valid for numerous probability distributions of $\Phi$ such as
\begin{itemize}
\item Gaussian model: each element of $\Phi$ is sampled independently from the standard normal distribution
\item Spherical model: each column of $\Phi$ is sampled independently and uniformly at random from the real sphere $\mathbb{S}^{m-1}$
\end{itemize}

\begin{thm}\label{second_mainthm1}
Suppose that $X_0^{\Omega}$ is row-nondegenerate. Let $\Phi = [\phi_1,...,\phi_n] \in \mathbb{R}^{m \times n}$ be a matrix where $\frac{\phi_i}{\left \|\phi_i \right \|_2}$ ($i \in [1,...,n]$) is independently and uniformly distributed on the $m-1$ dimensional unit sphere in $\mathbb{S}^{m-1}$. Let $\eta$ be a constant such that $\mathbb{P}(\rho(\hat S)\leq \eta \leq 0.5)=  1 $ for some $r$-dimensional space $\hat S$. Let $z$ be defined by
\begin{align}\nonumber
z:=\min\left\{\frac{1-f(\eta,k,r)}{k},\frac{1-4\eta(1-\eta)}{r}\right\}.  
\end{align}
Suppose that $m > k+\max \left\{4, \frac{1}{{\lambda^{-1}(z)}}\right\}\ln \, (4k^2 n/\epsilon)$. 
Then $\mathbb{P}_s$, the probability that submp($\hat S,\o,k$) recovers $\Omega$, exceeds $1-\epsilon$.
\end{thm}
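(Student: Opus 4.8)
The plan is to combine the probabilistic concentration of the random matrix $\Phi$ with the deterministic success criterion already established in Corollary~\ref{corpropmc_st3}. That corollary states that OSMP, i.e. submp$(\hat S,\o,k)$, returns exactly $\Omega$ whenever the event $C_1(k-r)$ holds and at least one of the three WRIP/coherence conditions $a_1(0),a_2(0),a_3(0)$ is met. Hence I would first write
$$\mathbb{P}(\textup{submp fails}) \le \mathbb{P}\!\left(C_1(k-r)^c\right) + \mathbb{P}\!\left(a_1(0),a_2(0),a_3(0)\textup{ all fail}\right),$$
and bound each summand by a suitable fraction of $\epsilon$ using the rotational invariance of the uniformly distributed normalized columns. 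The two arguments of the $\min$ defining $z$ correspond to these two failure events: $\tfrac{1-4\eta(1-\eta)}{r}$ is tied to the deterministic conditions (whose right-hand sides are literally $4\eta(1-\eta)$ and $\tfrac{1-4\eta(1-\eta)}{1+4\eta(1-\eta)}$), while $\tfrac{1-f(\eta,k,r)}{k}$ is tied to the first-phase greedy selection captured by $C_1(k-r)$.

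To bound $\mathbb{P}(a_1,a_2,a_3\textup{ all fail})$ it suffices to show one of them holds with high probability; I would work with $a_3(0)$, the ratio of $\sigma^2_{k+1}(\Phi_{\Omega\cup\Delta_1})$ to the largest squared column norm. For the spherical/Gaussian model the extreme singular values of any fixed $(k+1)$-column submatrix concentrate, so a standard RIP-type tail bound together with a union bound over the $\binom{n}{1}\le n$ choices of $\Delta_1$ forces $a_3(0)>4\eta(1-\eta)$ once $m-k$ exceeds a constant multiple of $\ln(n/\epsilon)$. Inverting $\lambda$ translates the target deviation $\tfrac{1-4\eta(1-\eta)}{r}$ on the singular values into the per-measurement exponential rate $\lambda^{-1}(\cdot)$, which is where the factor $1/\lambda^{-1}(z)$ in the hypothesis on $m$ originates.

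The crux is bounding $\mathbb{P}(C_1(k-r)^c)$, the probability that the first $k-r$ greedy selections are not all in $\Omega$. Conditioning on a current correct set $\Gamma\subseteq\Omega$, rotational invariance implies that the normalized residual columns $P^{\perp}_{\mathcal{R}(\Phi_\Gamma)}\phi_l/\|P^{\perp}_{\mathcal{R}(\Phi_\Gamma)}\phi_l\|_2$ of the wrong indices $l\notin\Omega$ are uniform on the unit sphere of $\mathcal{R}(\Phi_\Gamma)^{\perp}$, so their squared alignment with the $r$-dimensional target subspace $P^{\perp}_{\mathcal{R}(\Phi_\Gamma)}\hat S$ is Beta-distributed with small mean, whereas a correct index has alignment near $1$ up to the perturbation $\eta$ of $\hat S$ from the true signal subspace. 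A Beta tail bound produces the gap making the selection ratio of some correct index strictly exceed that of every wrong index, and $f(\eta,k,r)$ emerges as the worst-case threshold balancing this sphere-projection concentration against $\eta$; the $f_1$ versus $f_2$ split reflects the two regimes in which the gap is governed either by the correct-index deficit or by the wrong-index surplus. A union bound over the at most $k$ iterations, the at most $n$ wrong indices, and the pairwise comparison against the surviving correct indices accounts for the $4k^2n$ inside the logarithm.

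The main obstacle I anticipate is this third step: maintaining the conditional uniformity of the residual columns across all $k-r$ iterations while the projector $P^{\perp}_{\mathcal{R}(\Phi_\Gamma)}$ itself depends on previously selected random columns, and verifying that $P^{\perp}_{\mathcal{R}(\Phi_\Gamma)}\hat S$ retains full dimension $r$ (which uses the row-nondegeneracy of $X_0^{\Omega}$). Extracting the exact form of $f(\eta,k,r)$, rather than a loose constant, from the interplay of the Beta concentration and the subspace perturbation $\eta$ is the computation I expect to consume most of the effort, and the $\max\{4,\cdot\}$ in the bound on $m$ reflects an absolute floor on the over-measurement $m-k$ needed to activate the sphere-projection tail bound regardless of which phase dominates.
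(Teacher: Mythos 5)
Your decomposition misreads where the theorem's constants come from, and both halves of your plan break down. In the paper's proof, the success of \emph{every} greedy iteration---the first $k-r$ steps and the last $r$ steps alike---is reduced deterministically to the same quantity: the LCP coherence of the projected, normalized columns. Theorem \ref{new3prp_3_sp} handles iterations with $|\Omega\cap\Gamma|\leq k-r$ under the threshold $(1-f(\eta,k,r))/k$, and Theorem \ref{propmc_st3_sp} handles iterations with $|\Omega\cap\Gamma|\geq k-r$ under the threshold $(1-4\eta(1-\eta))/|\Omega\setminus\Gamma|\geq(1-4\eta(1-\eta))/r$; $z$ is simply the minimum of these two \emph{coherence} thresholds (the $r$ in the denominator of the second term is the phase-two value of $|\Omega\setminus\Gamma|$, not a pointer to the WRIP conditions). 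The only probabilistic ingredient is then the pairwise-coherence tail bound of Lemma \ref{cor_eachiter_low1} (via Corollary \ref{propcol_pcu}), applied once per iteration and union-bounded over the $k$ iterations; this is exactly where $1/\lambda^{-1}(z)$ and the factor $4k^2n$ originate. Your plan instead routes the last $r$ steps through the global conditions of Corollary \ref{corpropmc_st3} and proposes to verify, say, $a_3(0)>4\eta(1-\eta)$ by singular-value concentration. That cannot be closed under the stated hypothesis on $m$: for spherical columns $\sigma^2_{k+1}(\Phi_{\Omega\cup\{i\}})\approx\bigl(1-\sqrt{(k+1)/m}\bigr)^2$, so $a_3(0)>4\eta(1-\eta)$ forces $m\gtrsim (k+1)/\bigl(1-2\sqrt{\eta(1-\eta)}\bigr)^2$, a multiplicative inflation of $k$. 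In the regime where the second term of $z$ is active (e.g., $r$ small, $\eta$ close to $1/2$, and $r^2\ln(4k^2n/\epsilon)\ll k\ll 4\eta^2 r/(1-2\eta)^2$, using $1-4\eta(1-\eta)=(1-2\eta)^2$), the theorem only supplies $m>k+O\bigl(r^2(1-2\eta)^{-4}\ln(4k^2n/\epsilon)\bigr)$, which is far smaller; hence the event ``all three conditions fail'' cannot be driven below $O(\epsilon)$ under the stated $m$, and your first summand is irreparable as proposed.

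The treatment of $\mathbb{P}(C_1(k-r)^c)$ also rests on a false premise. During the first phase the projected signal subspace $P^{\perp}_{\mathcal{R}(\Phi_\Gamma)}\hat S$ has dimension $r<|\Omega\setminus\Gamma|$, so a correct column does \emph{not} have alignment near $1$ with it; all that holds is the aggregate bound of Lemma \ref{lem_tnlb}, namely that the \emph{best} correct index has alignment at least $\sqrt{r/|\Omega\setminus\Gamma|}\,\sigma_{|\Omega\setminus\Gamma|}(\dot\Phi_{\Omega\setminus\Gamma})$, typically far from $1$ (alignment exactly $1$ is available only in the second phase, Lemma \ref{lem_tnlb2}). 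Moreover $\hat S$ is an $\eta$-perturbation of a subspace of $\mathcal{R}(\Phi_\Omega X_0^{\Omega})$ and is therefore statistically dependent on the correct columns, so rotational invariance and Beta-tail arguments apply at best to the wrong columns, never to the comparison you actually need. Correspondingly, $f(\eta,k,r)$ is not the outcome of a Beta-versus-$\eta$ balance: $f_1$ is defined as the root of the deterministic singular-value-gap condition $s_1(\sqrt{x},\sqrt{x},\eta,k,r)=0$ behind Proposition \ref{thm_g3_3}, and $f_2$ as the root of the ERC-type condition behind Proposition \ref{erc_won1}; the randomness of $\Phi$ enters the paper's argument only through the coherence bound. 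To repair your proof you would have to replace both pillars by the single per-iteration coherence mechanism, at which point you have reproduced the paper's argument.
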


\begin{proof}[Proof of Theorem \ref{second_mainthm1}]
See Appendix B.
\end{proof}

\begin{rem}\label{rem_gs1} For the noiseless case (i.e., $\eta=0$), $z$ reduces to $\frac{1}{2k}$. If $\Phi$ follows $U(\Phi)$ and $m$ is larger than the following quantity (\ref{rem_gs1ex1}), Theorem \ref{second_mainthm1} guarantees that $\Omega$ is fully recovered by OSMP with a probability higher than  $1- \epsilon$.
\begin{align}\label{rem_gs1ex1}
\max \left\{ k+\frac{\ln \, (4k^2 n/\epsilon)}{ \lambda^{-1}(\frac{1}{2k})},k+  4 \ln \, (4k^2 n/\epsilon)  \right \}
\end{align}
Note that $\frac{1}{\lambda^{-1}(\frac{1}{2k})} \leq 70k$ when $k \leq 100$ since $\lambda(x)<35x$ for $x>0.005$. 
\end{rem}

\subsubsection{Second approach} Theorem \ref{cortm2} provides another performance guarantee for OSMP in terms of the singular value of $\Phi$'s submatrix.

\begin{thm}\label{cortm2}
Let $\eta$ be a constant such that $\rho(\hat S)\leq \eta \leq 0.5$ where $\hat S$ is an $r$-dimensional space. Suppose that $X_0^{\Omega}$ is row-nondegenerate and $\sigma_{k}(\Phi_{\Omega})>0$. Then OSMP recovers $\Omega$ if both conditions $(\ref{thcond1})$ and $(\ref{thcond2})$ hold. 
\begin{align}\label{thcond1}
\sqrt{\frac{r}{k}}\alpha - \sqrt{1-\beta^2}-2 \eta >0 
\end{align}
\begin{align}\label{thcond2}
\min\limits_{\underset{\textup{s.t. }|\Gamma|=k-r}{\Gamma \subseteq \Omega}} \,\underset{i \in \Sigma \setminus (\Omega \cup \Gamma) }{\min}\sigma^2_{|\Omega \setminus \Gamma|+1} (\dot\Phi_{\Omega \cup \{i\} \setminus \Gamma } )> 4 \eta (1-\eta),
\end{align}
where
\begin{align}\nonumber
&\alpha:= \min\limits_{\underset{\textup{s.t. }|\Gamma|<k-r}{\Gamma \subseteq \Omega}} \, \sigma_{|\Omega \setminus \Gamma|}(\dot\Phi_{{\Omega} \setminus \Gamma})  \\\nonumber
&\beta:=  \min\limits_{\underset{\textup{s.t. }|\Gamma|<k-r}{\Gamma \subseteq \Omega}} \,  \underset{i \in \Sigma \setminus ({\Omega} \cup \Gamma)}{\min}\,\sigma_{|\Omega \setminus \Gamma|+1}(\dot\Phi_{(\{i\} \cup {\Omega} ) \setminus \Gamma}).
\end{align}
Both of the conditions $(\ref{thcond1})$ and $(\ref{thcond2})$ are also implied by any of the following conditions $(a)$--$(c)$.
\begin{enumerate}[(a)]
\item $a_1(0)<\min\left\{\frac{1-f_1(\eta,k,r)}{1+f_1(\eta,k,r)},\frac{1- 4 \eta (1-\eta) }{1+4 \eta (1-\eta) }\right\}$
\item $a_2(0)> \max\{f_1(\eta,k,r),4 \eta (1-\eta)\}$
\item $a_3(0)> \max\{f_1(\eta,k,r),4 \eta (1-\eta)\}$
\end{enumerate}
\end{thm}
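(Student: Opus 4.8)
The plan is to prove the recovery claim by induction on the iterations of submp($\hat S,\o,k$), maintaining the invariant that the running index set $\Gamma$ stays inside $\Omega$; the auxiliary implications (a)--(c) are then obtained by translating the singular-value quantities $\alpha,\beta$ into the WRIP/coherence constants $a_1,a_2,a_3$. First I would rewrite the selection score. Since $\mathcal{R}(P^{\perp}_{\mathcal{R}(\Phi_\Gamma)}\hat S)$ lies inside $\mathcal{R}(\Phi_\Gamma)^{\perp}$, the ratio maximized in submp equals $\left\| P_{\tilde S}\dot\phi_l \right\|_2$, where $\tilde S:=\mathcal{R}(P^{\perp}_{\mathcal{R}(\Phi_\Gamma)}\hat S)$ and $\dot\phi_l$ is the normalized projection of $\phi_l$ onto $\mathcal{R}(\Phi_\Gamma)^{\perp}$. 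Thus correct selection at each step reduces to showing that the minimum of this score over good indices $g\in\Omega\setminus\Gamma$ strictly exceeds its maximum over bad indices $b\in\Sigma\setminus\Omega$. I would split the argument at the threshold $|\Gamma|=k-r$, because the dimension of the oracle projected signal space $\tilde S_0:=\mathcal{R}(P^{\perp}_{\mathcal{R}(\Phi_\Gamma)}\Phi_{\Omega}X_0^{\Omega})$ stays equal to $r$ while $|\Omega\setminus\Gamma|>r$ and then tracks $|\Omega\setminus\Gamma|$ once $|\Omega\setminus\Gamma|\le r$, using row-nondegeneracy of $X_0^{\Omega}$ together with $\sigma_{k}(\Phi_{\Omega})>0$.

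For the rank-excess phase ($|\Gamma|<k-r$) I would lower-bound the best good score. Because $\tilde S_0\subseteq\mathcal{R}(\dot\Phi_{\Omega\setminus\Gamma})$ has dimension $r$, a trace argument gives $\sum_{g\in\Omega\setminus\Gamma}\left\| P_{\tilde S_0}\dot\phi_g \right\|_2^2=\operatorname{tr}(P_{\tilde S_0}\dot\Phi_{\Omega\setminus\Gamma}\dot\Phi_{\Omega\setminus\Gamma}^{*})\ge r\,\sigma_{|\Omega\setminus\Gamma|}^2(\dot\Phi_{\Omega\setminus\Gamma})\ge r\alpha^2$, so that $\max_g\left\| P_{\tilde S_0}\dot\phi_g \right\|_2\ge\sqrt{r/k}\,\alpha$ after dividing by $|\Omega\setminus\Gamma|\le k$. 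For a bad index $b$ I would upper-bound $\left\| P_{\tilde S_0}\dot\phi_b \right\|_2\le\left\| P_{\mathcal{R}(\dot\Phi_{\Omega\setminus\Gamma})}\dot\phi_b \right\|_2\le\sqrt{1-\beta^2}$, where the last step expresses the rejected component of the unit vector $\dot\phi_b$ through the smallest singular value $\sigma_{|\Omega\setminus\Gamma|+1}(\dot\Phi_{(\{b\}\cup\Omega)\setminus\Gamma})\ge\beta$ of the augmented matrix. Replacing the oracle space $\tilde S_0$ by the estimate $\tilde S$ costs at most $\eta$ on each side through $\rho(\hat S)\le\eta$, so the strict gap $\sqrt{r/k}\,\alpha-\sqrt{1-\beta^2}-2\eta>0$ of (\ref{thcond1}) forces a correct selection.

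Once the induction reaches a set $\Gamma\subseteq\Omega$ with $|\Gamma|=k-r$, the remaining $r$ projected support columns exactly span $\tilde S_0$, so every good score equals $1$ in the noiseless oracle and a bad score is strictly smaller, with margin governed by $\sigma_{|\Omega\setminus\Gamma|+1}^2(\dot\Phi_{\Omega\cup\{i\}\setminus\Gamma})$; under the estimate $\tilde S$ the noiseless gap degrades by exactly $4\eta(1-\eta)=1-(1-2\eta)^2$, which is the content of (\ref{thcond2}). This is precisely Theorem \ref{propmc_st3} specialized to $v_1=0$, whose hypothesis set $t(k-r,0)$ is $\{\Gamma\subseteq\Omega:|\Gamma|\ge k-r\}$, so I would apply that theorem to finish the recovery of $\Omega\setminus\Gamma$ in a single invocation. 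Finally, to obtain (a)--(c) I would bound $\alpha$ from below and $\sqrt{1-\beta^2}$ from above by the WRIP constant $\delta_{k}(\Phi_\Omega;\cdot)$ and by the singular-value ratios defining $a_2,a_3$; the quantity $f_1(\eta,k,r)$ is exactly the threshold at which these bounds render (\ref{thcond1}) valid, while the $4\eta(1-\eta)$ branch reproduces (\ref{thcond2}), and each of (a), (b), (c) enforces both branches simultaneously.

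The hard part will be the perturbation bookkeeping rather than either clean linear-algebra estimate: the score is computed with the iterate-dependent estimated space $\tilde S=\mathcal{R}(P^{\perp}_{\mathcal{R}(\Phi_\Gamma)}\hat S)$, so I must show that $\rho(\hat S)\le\eta$ propagates to a uniform control on $\left\| P_{\tilde S}-P_{\tilde S_0} \right\|$ after each projection $P^{\perp}_{\mathcal{R}(\Phi_\Gamma)}$, and that this control survives the induction as $\Gamma$ grows. Securing the sharp constants---the single $\eta$ per side in (\ref{thcond1}) and the second-order $4\eta(1-\eta)$ in (\ref{thcond2})---requires the perturbation to be measured in operator norm and then combined with the unit-norm but non-orthogonal columns of $\dot\Phi$, which is where I expect the delicate estimates to concentrate.
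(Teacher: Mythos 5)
Your proposal is correct and follows essentially the same route as the paper's proof: the same two-phase split at $|\Gamma|=k-r$, with the trace/Frobenius lower bound for support indices (Lemma \ref{lem_tnlb}), the augmented-matrix smallest-singular-value upper bound for off-support indices (Lemma \ref{lem_2ntnub}), the $\eta$-per-side projection perturbation control (Lemma \ref{proj_ineq}), the MUSIC-type $4\eta(1-\eta)$ margin for the last $r$ selections, and the WRIP translation yielding conditions $(a)$--$(c)$. The only cosmetic difference is that the paper invokes the second phase through Corollary \ref{cor_music_c} (built on Proposition \ref{thm1n2}) rather than Theorem \ref{propmc_st3}, but the underlying argument is identical.
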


\begin{proof}[Proof of Theorem \ref{cortm2}]
$(\ref{thcond1})$ holds when (\ref{thm_g3_given3_cor}) (i.e., $s_1(\bar \alpha ,\bar \beta,\eta,k,r)>0$) is satisfied in Corollary \ref{thm_g3_3_cor}. $(\ref{thcond2})$ holds when (\ref{mu_sup_c1}) with $(v_1,v_2)=(0,r)$ is satisfied in Corollary \ref{cor_music_c}. From Corollary \ref{thm_g3_3_cor}, submp($\hat S,\o,k-r$) produces a set of $k-r$ indices $\Gamma$ such that $\Gamma \subseteq \Omega$ if $(\ref{thcond1})$ holds. From Corollary \ref{cor_music_c}, submp($\hat S,\Gamma,r$) produces the remained $r$ indices $\Omega \setminus \Gamma$ as its output if $(\ref{thcond2})$ holds. Thus, OSMP identifies $\Omega$ if both of the conditions $(\ref{thcond1})$ and $(\ref{thcond2})$ hold.

Since the proofs of Theorem \ref{new3prp_3} and Theorem \ref{propmc_st3} with $(v_1,v_2)=(0,r)$ show that any of the conditions $(a)$--$(c)$ is a sufficient condition for both of the conditions $(\ref{thcond1})$ and $(\ref{thcond2})$, satisfying any of the conditions $(a)$--$(c)$ implies that OSMP recovers $\Omega$. 

\end{proof}

\begin{rem}\label{rem1_cortm2} Note that $f_1(\eta,k,r)$ reduces to $\frac{k}{k+r}$ for the noiseless case (i.e., $\eta=0$). By conditions $(a)$ and $(c)$, Theorem \ref{cortm2} guarantees that $\Omega$ is fully recovered by OSMP if any of the following conditions hold.
\begin{itemize}
\item ${\delta_{k}(\Phi_{\Omega};1)}<\frac{r}{2k+r} $
\item ${\delta_{k}(\Phi_{\Omega};1)}<\frac{r}{k+r} $ when each of the column vector in $\Phi$ is $l_2$-normalized
\end{itemize}
\end{rem}

Theorem \ref{cortm2} and Remark \ref{rem1_cortm2} show theoretically that OSMP guarantees its success as well as SA-MUSIC+OSMP under non-asymptotical analysis. Better conditions can be expressed by the weak-1 asymmetric RIP \cite{lee2012subspace} derived from condition $(b)$ in Theorem \ref{cortm2}.

As corollaries from the result of Theorem \ref{cortm2}, the minimum $m$ for the success of OSMP with the statistical assumption that $\Phi$ is either a random Gaussian matrix with arbitrary variance or a random partial discrete Fourier matrix (DFT) is evaluated. 

\begin{cor}
\label{3cortt_thm_g3_3}
Suppose that $X_0^{\Omega}$ is row-nondegenerate. Let $\Phi \in \mathbb{R}^{m \times n}$ be a matrix whose entries are i.i.d. Gaussian following $\mathcal{N}(0,\sigma^2)$. Let $\eta$ be a constant such that $\mathbb{P}(\rho(\hat S)\leq \eta \leq 0.5)=  1 $ for some $r$-dimensional space $\hat S$. Let $\theta(\tau):= \frac{1-\tau}{1+\tau}$, $\tau:=\max\{f_1(\eta,k,r),4 \eta (1-\eta)\}$. Suppose that $m \geq \frac{2}{(\sqrt{1+\theta(\tau)}-1)^2}\left[k+2\ln \,\left(\frac{2(n-k)}{\epsilon}\right)\right]$. Then, $\Omega$ is fully recovered by OSMP with a probability higher than $1 - \epsilon$. 
\end{cor}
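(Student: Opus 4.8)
The plan is to reduce the probabilistic statement to a single WRIP estimate that I can already feed into Theorem~\ref{cortm2}, and then to control that WRIP constant by Gaussian singular-value concentration over a very small family of submatrices. First I would observe that condition $(a)$ of Theorem~\ref{cortm2} simplifies to a bound on one quantity: since $a_1(0)=\delta_{k}(\Phi_\Omega;1)$ and the map $x\mapsto\frac{1-x}{1+x}$ is strictly decreasing, we have $\min\left\{\frac{1-f_1(\eta,k,r)}{1+f_1(\eta,k,r)},\frac{1-4\eta(1-\eta)}{1+4\eta(1-\eta)}\right\}=\frac{1-\tau}{1+\tau}=\theta(\tau)$ with $\tau=\max\{f_1(\eta,k,r),4\eta(1-\eta)\}$. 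Hence condition $(a)$ is exactly $\delta_{k}(\Phi_\Omega;1)<\theta(\tau)$, and by Theorem~\ref{cortm2} it suffices to show that the Gaussian ensemble satisfies $\mathbb{P}\bigl(\delta_{k}(\Phi_\Omega;1)<\theta(\tau)\bigr)\ge 1-\epsilon$ under the stated lower bound on $m$; since $X_0^\Omega$ is row-nondegenerate and $\sigma_k(\Phi_\Omega)>0$ holds almost surely, the hypotheses of Theorem~\ref{cortm2} are met on that event.

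Next I would unfold the WRIP definition. The constant $\delta_{k}(\Phi_\Omega;1)=\frac{q-p}{q+p}$, where $q=\max_{K}\sigma_1^2(\Phi_K)$ and $p=\min_{K}\sigma_{k+1}^2(\Phi_K)$ range over the submatrices $K\supseteq\Omega$ with $|K|=k+1$; equivalently over $K=\Omega\cup\{i\}$ for $i\in\Sigma\setminus\Omega$, so the family contains only $n-k$ members rather than $\binom{n}{k+1}$, which is the source of the mild measurement count. I would also record here that $\delta_{k}(\Phi_\Omega;1)$ is invariant under the scaling $\Phi\mapsto\sigma\Phi$ because it depends only on ratios of squared singular values; this is precisely what lets the conclusion hold for \emph{arbitrary} variance $\sigma^2$, since I may work with the standard Gaussian matrix $G=\Phi/\sigma$.

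The probabilistic core is then Gaussian singular-value concentration. For each fixed $m\times(k+1)$ submatrix $G_K$, the Davidson--Szarek estimates give $\sigma_1(G_K)\le\sqrt{m}+\sqrt{k+1}+t$ and $\sigma_{k+1}(G_K)\ge\sqrt{m}-\sqrt{k+1}-t$, each failing with probability at most $e^{-t^2/2}$. A union bound over the $n-k$ submatrices and the two tail events gives a total failure probability at most $2(n-k)e^{-t^2/2}$; setting this equal to $\epsilon$ fixes $t=\sqrt{2\ln\bigl(2(n-k)/\epsilon\bigr)}$. On the complementary good event, writing $u=(\sqrt{k+1}+t)/\sqrt{m}$ and normalizing by $\sqrt{m}$, I get $\max_K\sigma_1^2\le(1+u)^2$ and $\min_K\sigma_{k+1}^2\ge(1-u)^2$ (for $u<1$), whence $\delta_{k}(\Phi_\Omega;1)\le(1+u)^2-1$ by taking the shared constant $c$ equal to the expected squared norm. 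The requirement $\delta_{k}(\Phi_\Omega;1)<\theta(\tau)$ is then implied by $(1+u)^2<1+\theta(\tau)$, i.e.\ $u<\sqrt{1+\theta(\tau)}-1$, i.e.\ $m>(\sqrt{k+1}+t)^2/\bigl(\sqrt{1+\theta(\tau)}-1\bigr)^2$. Using $(\sqrt{k+1}+t)^2\le 2\bigl((k+1)+t^2\bigr)$ and substituting $t^2=2\ln(2(n-k)/\epsilon)$ yields the stated threshold $m\ge\frac{2}{(\sqrt{1+\theta(\tau)}-1)^2}\bigl[k+2\ln(2(n-k)/\epsilon)\bigr]$, up to the harmless absorption of the additive constant from $k+1$.

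The step I expect to be the main obstacle is the passage from the per-submatrix singular-value bounds to the shared-constant WRIP constant: the WRIP definition optimizes a single $c$ simultaneously across the whole family $\{\Phi_K\}$, so care is needed to verify that choosing $c$ as the common normalization is legitimate and that the resulting bound $\delta_{k}(\Phi_\Omega;1)\le(1+u)^2-1$ is indeed what the optimal $c$ delivers (it can only be smaller). The remaining work — simplifying condition $(a)$ via monotonicity, the union bound calibration, and the algebra turning $u<\sqrt{1+\theta(\tau)}-1$ into the displayed bound on $m$ — is routine. I would also note that invoking condition $(c)$ instead would force an extra estimate of $\max_j\|\phi_j\|_2^2$ through the $\|\phi_{\Sigma}^{\max}\|_2^2$ denominator in $a_3(0)$, so condition $(a)$ is the cleaner route precisely because its scale invariance matches the ``arbitrary variance'' claim.
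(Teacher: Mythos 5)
Your proposal is correct and is essentially the paper's own argument: the paper proves this corollary in one line by combining condition $(b)$ of Theorem~\ref{cortm2} with Corollary~\ref{3cor_thm_g3_3}, whose proof is precisely the Gaussian (Davidson--Szarek, Lemma~\ref{thm_lot}) singular-value concentration plus union bound over the $n-k$ submatrices $\Phi_{\Omega\cup\{i\}}$ that you carry out inline, with the same calibration $u<\sqrt{1+\theta(\tau)}-1$ and even the same silent absorption of $k+1$ into $k$ in the final algebra. Your only deviation --- routing through condition $(a)$ instead of $(b)$ --- is cosmetic rather than a different proof, since the paper's WRIP formula gives $a_1(0)=\frac{1-a_2(0)}{1+a_2(0)}$, so the two conditions are equivalent.
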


\begin{proof}[Proof of Corollary \ref{3cortt_thm_g3_3}]
Combining the condition $(b)$ in Theorem \ref{cortm2} and Corollary \ref{3cor_thm_g3_3} with $A=\Phi$ completes the proof.
\end{proof}

\begin{cor}
\label{4cortt_thm_g3_3} 
Suppose that $X_0^{\Omega}$ is row-nondegenerate. Let $\eta$ be a constant such that $\mathbb{P}(\rho(\hat S)\leq \eta \leq 0.5)= 1 $ for some $r$-dimensional space $\hat S$. Let $\{c_1,...,c_m\} \subseteq \Sigma$ be a set of indices selected uniformly at ramdom. Let $\tau$ be $\min\left\{\frac{1-f_1(\eta,k,r)}{1+f_1(\eta,k,r)},\frac{1- 4 \eta (1-\eta) }{1+4 \eta (1-\eta) }\right\}$. For $j=1,...,m$, let the $j$th row of $\Phi$ be the $c_j$th row of the $n \times n$ DFT matrix divided by $\sqrt{m}$. Suppose that $m \geq \frac{2(3+\tau)(k+1)}{3\tau^2}\,\ln \, \left(\frac{2(k+1)(n-k)}{\epsilon} \right)$.  Then, $\Omega$ is fully recovered by OSMP with a probability higher than $1- \epsilon$.

\end{cor}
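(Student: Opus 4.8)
The plan is to deduce the claim from condition $(a)$ of Theorem \ref{cortm2} by establishing a weak RIP bound for the random partial DFT matrix. Observe first that $\tau$ is exactly the right-hand side of condition $(a)$ and that $a_1(0)=\delta_{k}(\Phi_{\Omega};1)$; hence, once I show $\delta_{k}(\Phi_{\Omega};1)<\tau$, Theorem \ref{cortm2} (together with the assumed row-nondegeneracy of $X_0^{\Omega}$) guarantees that OSMP recovers $\Omega$. By the definition of the WRIP constant, choosing the free constant to be $c=1$ gives
\begin{align}\nonumber
\delta_{k}(\Phi_{\Omega};1)\leq \max_{\underset{|K|=k+1}{K\supseteq\Omega}}\left\|\Phi_K^*\Phi_K-I_{k+1}\right\|.
\end{align}
Since $|\Omega|=k$, every admissible $K$ is obtained by adjoining a single index from $\Sigma\setminus\Omega$, so there are exactly $n-k$ matrices $\Phi_K$ to control. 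It therefore suffices to bound $\mathbb{P}(\|\Phi_K^*\Phi_K-I_{k+1}\|\geq\tau)$ for a fixed $K$ and then take a union bound.

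Second, I would control a single submatrix by matrix concentration. Fix $K$ with $|K|=k+1$ and write $\Phi_K^*\Phi_K=\sum_{i=1}^{m}r_ir_i^*$, where $r_i^*\in\mathbb{C}^{1\times(k+1)}$ is the $i$th row of $\Phi_K$. Because the indices $c_1,\dots,c_m$ are drawn independently and uniformly, the summands are i.i.d.; each entry of $r_i$ has modulus $1/\sqrt{m}$, and orthogonality of the DFT columns gives $\mathbb{E}[r_ir_i^*]=\tfrac{1}{m}I_{k+1}$, so $\mathbb{E}[\Phi_K^*\Phi_K]=I_{k+1}$. Applying the matrix Bernstein inequality to the zero-mean self-adjoint summands $Z_i:=r_ir_i^*-\tfrac{1}{m}I_{k+1}$, and using $\|r_i\|_2^2=(k+1)/m$ together with $(r_ir_i^*)^2=\|r_i\|_2^2\,r_ir_i^*$, one gets the uniform bound $\|Z_i\|\leq (k+1)/m=:L$ and the variance estimate $\|\sum_i\mathbb{E}[Z_i^2]\|\leq (k+1)/m=:\sigma^2$. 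With $t=\tau$ the Bernstein exponent $-\tfrac{t^2/2}{\sigma^2+Lt/3}$ simplifies to $-\tfrac{3\tau^2 m}{2(k+1)(3+\tau)}$, so that
\begin{align}\nonumber
\mathbb{P}\!\left(\left\|\Phi_K^*\Phi_K-I_{k+1}\right\|\geq\tau\right)\leq 2(k+1)\exp\!\left(-\frac{3\tau^2 m}{2(k+1)(3+\tau)}\right).
\end{align}

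Third, I would union-bound over the $n-k$ sets $K$. The dimension factor $2(k+1)$ from Bernstein together with the $n-k$ sets produces the factor $2(k+1)(n-k)$; requiring the total failure probability to be at most $\epsilon$ and solving for $m$ yields precisely $m\geq \tfrac{2(3+\tau)(k+1)}{3\tau^2}\ln\!\big(\tfrac{2(k+1)(n-k)}{\epsilon}\big)$. On the complementary event, $\max_K\|\Phi_K^*\Phi_K-I_{k+1}\|<\tau<1$ forces $\Phi_{\Omega}$ to have full column rank, i.e.\ $\sigma_k(\Phi_\Omega)>0$, so all hypotheses of Theorem \ref{cortm2} hold and OSMP recovers $\Omega$ with probability at least $1-\epsilon$.

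The hard part will be the second step: pinning down the Bernstein parameters so that the exponent matches the stated constants. The appearance of $3+\tau$ in the numerator and $3\tau^2$ in the denominator is dictated by the $\sigma^2+Lt/3$ form of Bernstein together with $L=\sigma^2=(k+1)/m$; obtaining the factor $k+1$ (rather than the exact value $k$) requires deliberately bounding both $\|Z_i\|$ and the matrix variance by $\|r_i\|_2^2=(k+1)/m$. A secondary technical point is the independence of the rows: the clean i.i.d.\ computation above is exact for sampling with replacement, and for sampling without replacement one must either invoke a negatively-dependent variant of the concentration bound or absorb the discrepancy into the constants.
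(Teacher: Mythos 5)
Your proposal is correct, and it reaches the result by the same top-level logic as the paper: invoke condition $(a)$ of Theorem \ref{cortm2} (noting $a_1(0)=\delta_k(\Phi_\Omega;1)$ and that $\tau$ is exactly the threshold in that condition), and verify the WRIP bound for the random partial DFT matrix with failure probability at most $\epsilon$. The difference is in how that probabilistic ingredient is obtained. The paper's proof is a one-line citation: it combines condition $(a)$ with Proposition \ref{prop_dft2}, which is itself imported verbatim as \cite[Proposition 6.9]{lee2012subspace} and treated as a black box. You instead re-derive that proposition from first principles: writing $\Phi_K^*\Phi_K=\sum_i r_ir_i^*$ with i.i.d.\ rows, bounding $\|Z_i\|$ and the matrix variance by $(k+1)/m$, and running matrix Bernstein with $t=\tau$ — and your bookkeeping is right: the exponent $-\tfrac{3\tau^2 m}{2(k+1)(3+\tau)}$, the dimension factor $2(k+1)$, and the union bound over the $n-k$ sets $K\supseteq\Omega$ reproduce exactly the sample-size threshold $\tfrac{2(3+\tau)(k+1)}{3\tau^2}\ln\bigl(\tfrac{2(k+1)(n-k)}{\epsilon}\bigr)$ and, since $\tau<1$, the event $\max_K\|\Phi_K^*\Phi_K-I_{k+1}\|<\tau$ also delivers the hypothesis $\sigma_k(\Phi_\Omega)>0$ of Theorem \ref{cortm2}. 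What each route buys: the paper's citation is shorter and delegates the sampling-model subtleties to Lee et al.; your derivation makes the origin of every constant transparent and self-contained, but, as you yourself flag, it is airtight only for i.i.d.\ (with-replacement) row sampling — the statement draws $\{c_1,\dots,c_m\}$ as a set, so a fully rigorous version of your argument must additionally invoke the standard reduction that sampling without replacement concentrates at least as well as sampling with replacement (Gross--Nesme/Tropp), or rerun Bernstein in its negatively-dependent form. That caveat is a loose end rather than a gap, since the needed reduction is standard and does not change the constants.
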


\begin{proof}[Proof of Corollary \ref{4cortt_thm_g3_3}]
Combining the condition $(a)$ in Theorem \ref{cortm2} and Proposition \ref{prop_dft2} with $A=\Phi$ completes the proof.
\end{proof}

\begin{rem}\label{rem1t1_cortm2}
Note that $\theta(\tau)$ in Corollary \ref{3cortt_thm_g3_3} or $\tau$ in Corollary \ref{4cortt_thm_g3_3} is equal to $\frac{r}{2k+r}$ in the noiseless case (i.e., $\eta=0$).
\end{rem}

\subsection{Performance analysis for TSMP} 
This section provides the performance guarantee of TSMP (i.e., TSMP$_1$ and TSMP$_2$) in two stages. For the output triplet ($\hat \Omega,\Omega_c,\hat X$) of TSMP, Theorem \ref{paeosmp} shows a sufficient condition for TSMP to guarantee $\Omega_c \supseteq \Omega$ in the first stage. For the next stage, Theorem \ref{sceosmp1} gives a sufficient condition for TSMP to produce $\Omega$ as its output $\hat \Omega$ if $\Omega_c \supseteq \Omega$ holds. Therefore, another main result of this paper is obtained, a guarantee for TSMP, by combining the conditions of Theorems \ref{paeosmp} and \ref{sceosmp1}.

\subsubsection{Sufficient condition that TSMP guarantees $\Omega \subseteq \Omega_c$}

\begin{thm}\label{paeosmp}
Suppose that $X_0^{\Omega}$ is row-nondegenerate. Let $\Phi \in \mathbb{R}^{m \times n}$ be a matrix whose elements are i.i.d. Gaussian following $\mathcal{N}(0,\sigma^2)$. Let $\eta$ be a constant such that $\mathbb{P}(\rho(\hat S)\leq \eta \leq 0.5)=  1$ for some $r$-dimensional space $\hat S$. Let $z$ be defined by $\min\left\{\frac{1-f(\eta,k,r)}{k},\frac{1-4\eta(1-\eta)}{r}\right\}$. Suppose that $m \geq k+t+\max \left\{4, \frac{1}{\lambda^{-1}(z)}\right\} \ln \, (4k n/\epsilon)$ for $t>k$. Then $\Omega_c$ (i.e., one of TSMP's outputs) includes $\Omega$ with a probability higher than $1-\sum_{i=t-k+1}^{t} {t\choose i} \epsilon^i$.
\end{thm}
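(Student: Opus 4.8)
The plan is to reduce the statement to a Bernoulli-counting argument over the greedy steps of submp and then bound the resulting failure count by a binomial tail. Since $\Omega_c$ is produced by submp($\hat S,\o,m-1$), it is built one index at a time over $m-1$ steps, and $\Omega\subseteq\Omega_c$ holds as soon as all $k$ indices of $\Omega$ have been selected. Because $m\geq k+t+\max\{4,1/\lambda^{-1}(z)\}\ln(4kn/\epsilon)$ forces $m-1\geq t$, it is enough to guarantee that $\Omega$ is collected within the first $t$ steps, so I would analyze only those $t$ steps.

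First I would isolate a single-step failure bound, reusing the Gaussian concentration estimates developed for Theorem \ref{second_mainthm1} (the same quantities $f$, $z$, and $\lambda$ recur). The claim is that at any step whose current index set $\Gamma$ still misses part of $\Omega$ (that is, $\Omega\setminus\Gamma\neq\o$) and satisfies $|\Gamma|<t$, the probability conditioned on everything selected so far that submp picks an index outside $\Omega$ is at most $\epsilon$, once $m\geq k+t+\max\{4,1/\lambda^{-1}(z)\}\ln(4kn/\epsilon)$. The additive $t$ is precisely what accounts for $\Gamma$ possibly containing up to $t-1$ previously chosen (possibly incorrect) indices: the relevant submatrices of $\Phi$ then have order up to $k+t$, and the WRIP/coherence quantities of Theorem \ref{propmc_st3} and Corollary \ref{corpropmc_st3} must be controlled at that order, which the extra rows provide. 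On the complementary event the step necessarily adds a new element of $\Omega$, lowering $|\Omega\setminus\Gamma|$ by one.

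Next I would wrap these single-step bounds in a stochastic-domination argument. Call a step a \emph{genuine trial} if it begins with $\Omega\not\subseteq\Gamma$, and over the first $t$ steps let $F$ count the genuine trials that select an index outside $\Omega$ (non-genuine steps contribute nothing). By the preceding paragraph each genuine trial fails with conditional probability at most $\epsilon$ irrespective of the past, so $F$ is stochastically dominated by a $\mathrm{Bin}(t,\epsilon)$ variable. Since every non-failing genuine trial adds one new element of $\Omega$ and there are exactly $k$ to collect, a short case analysis shows that $\Omega$ fails to be collected within $t$ steps if and only if $F\geq t-k+1$; collection within $t\leq m-1$ steps yields $\Omega\subseteq\Omega_c$. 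Consequently
\begin{align}\nonumber
\mathbb{P}(\Omega\not\subseteq\Omega_c)\leq \mathbb{P}(F\geq t-k+1)\leq \sum_{i=t-k+1}^{t}\binom{t}{i}\epsilon^{i},
\end{align}
which is exactly the asserted bound. The hypotheses that $X_0^{\Omega}$ is row-nondegenerate and that $\sigma_k(\Phi_\Omega)>0$ (almost sure for Gaussian $\Phi$) enter only to make the subspace comparison underlying the single-step bound well posed, as in Theorem \ref{second_mainthm1} and Corollary \ref{corpropmc_st3}.

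The main obstacle is the single-step bound in the regime where $\Gamma$ already contains incorrect indices: unlike the proof of Theorem \ref{second_mainthm1}, where all selected indices remain inside $\Omega$, here $\mathcal{R}(\Phi_\Gamma)$ depends on earlier random wrong columns, so the not-yet-selected columns are no longer manifestly independent of the projector $P^{\perp}_{\mathcal{R}(\Phi_\Gamma)}$. The delicate requirement is a genuinely \emph{conditional} per-step failure probability of $\epsilon$, rather than a cruder uniform union bound over all admissible $\Gamma$ (which would degrade the tail from $\mathrm{Bin}(t,\epsilon)$ to roughly $t\epsilon$). I would address this by conditioning on $\Phi_\Omega$ and $\hat S$ and invoking the rotational invariance of the Gaussian ensemble, so that the residual wrong columns stay isotropic Gaussian in $\mathcal{R}(\Phi_\Gamma)^{\perp}$, and then applying the order-$(k+t)$ concentration estimates of Theorem \ref{second_mainthm1} to the selection ratio defining submp; this keeps each estimate conditional on the past and delivers the binomial tail.
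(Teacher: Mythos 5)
Your overall architecture coincides with the paper's proof: both arguments reduce $\Omega\subseteq\Omega_c$ to collecting $\Omega$ within the first $t\leq m-1$ steps of submp, both identify failure with the occurrence of at least $t-k+1$ erroneous selections among those $t$ steps, both rest on a per-step error probability of $\epsilon$ at order $k+t$ (the additive $t$ in the measurement count compensating for up to $t$ previously selected, possibly wrong, columns), and both land on the tail $\sum_{i=t-k+1}^{t}\binom{t}{i}\epsilon^{i}$ --- the paper via a union bound over the $\binom{t}{i}$ subsets of error steps with chained conditional bounds (equation (\ref{examplenum2}) and its extension (\ref{examplenum2_t1})), you via stochastic domination by $\mathrm{Bin}(t,\epsilon)$, which is the same computation organized differently.

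The genuine gap is in the step you yourself flag as delicate: your proposed resolution does not work. Conditioning on the algorithm's trajectory and ``invoking rotational invariance'' does not leave the unselected columns isotropic, because the event that a column was not chosen at an earlier step is precisely the event that its selection ratio lost a comparison; this conditioning biases the unselected columns, and it is not a rotation-invariant event given $\Phi_\Gamma$ and $\hat S$, so no symmetry argument removes it. Consequently the conditional per-step bound --- and with it the domination by $\mathrm{Bin}(t,\epsilon)$ --- is not established as proposed. The paper's mechanism is different and avoids trajectory conditioning altogether: by the contrapositives of Theorems \ref{new3prp_3_sp} and \ref{propmc_st3_sp}, an erroneous step with current set $\Gamma$ deterministically implies the LCP event $A(\Omega,\Gamma;z)$, and the selection of a particular wrong index $v$ implies $B_{\{v\}}(\Omega,\Gamma;z)$; these coherence events are functions of specified column subsets only, not of selection history. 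The chaining then conditions on the column \emph{values} $\Phi_{\Omega\cup\Gamma}$ rather than on selection events, and applies Lemma \ref{cor_eachiter_low1_2new1}, in which the probability is computed over the genuinely fresh i.i.d. columns outside $\Omega\cup\Gamma$; summing over the identity of the first wrong index $v$ (each contributing $4k\cdot n^{-c}$) recovers exactly one factor of $n$, yielding $(4k\cdot n^{1-c})^{i}$ per subset of $i$ error steps and hence the stated tail after choosing $c$ so that $4k\cdot n^{1-c}=\epsilon$. To repair your proof, replace the isotropy-after-conditioning claim by this ``error implies coherence event'' domination together with conditioning on column subsets, as in the paper.
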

\begin{proof}[Proof of Theorem \ref{paeosmp}]
See Appendix C.
\end{proof}

\begin{rem}\label{rem_paesomp1} For the noiseless case (i.e., $\eta=0$), $z$ reduces to $\frac{1}{2k}$. For the TSMP's output $\Omega_c$, Theorem \ref{paeosmp} guarantees that $\Omega_c \supseteq \Omega$ with a probability higher than $1-\sum_{i=t-k+1}^{t} {t\choose i} \epsilon^i$ if $m$ is larger than the following quantity:
\begin{align}\nonumber
k+t+\max \left \{4, \frac{1}{\lambda^{-1}(\frac{1}{2k})}\right\} \ln \, (4k n/\epsilon)
\end{align}
Note that $\frac{1}{\lambda^{-1}(\frac{1}{2k})} \leq 70k$ when $k \leq 100$ since $\lambda(x)<35x$ for $x>0.005$. 
\end{rem}

\subsubsection{Sufficient conditions that TSMP guarantees $\Omega=\hat \Omega$ given $\Omega \subseteq \Omega_c$}

\begin{thm}\label{sceosmp1} The following two statements are satisfied.

\begin{itemize}
\item TSMP$_1$($|\Omega|$) identifies $\Omega$ as its output $\hat \Omega$ if $\Omega \subseteq \Omega_c$ is satisfied for $\Omega_c$ (one of the TSMP$_1$'s outputs) and the following condition holds.
\label{sse_eval} 
\begin{align}\label{sse_eval_eq00}
\min\limits_{a \in \Omega}   \left \| {X_0}^{\{a\}} \right \|_2 >\frac{2\left \| W^* \right \|_{2,\infty}}{\sigma_{m}(\Phi_{\Omega_c})}
\end{align}
\item TSMP$_2$($\kappa$) identifies $\Omega$ as its output $\hat \Omega$ if $\Omega \subseteq \Omega_c$ is satisfied for $\Omega_c$ (one of the TSMP$_2$'s outputs) and the following condition holds. 
\label{sse_eval} 
\begin{align}\label{sse_eval_eq001}
\frac{\left \| W^* \right \|_{2,\infty}}{\sigma_{m}(\Phi_{\Omega_c})} < \kappa \leq \min\limits_{a \in \Omega}   \left \| {X_0}^{\{a\}}\right \|_2-\frac{\left \| W^* \right \|_{2,\infty}}{\sigma_{m}(\Phi_{\Omega_c})}
\end{align}
\end{itemize}
\end{thm}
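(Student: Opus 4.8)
The plan is to analyze the third step of TSMP (Algorithm ESMS$_1$ or ESMS$_2$) directly, since once $\Omega \subseteq \Omega_c$ is granted the whole problem reduces to showing that the thresholding statistics $\zeta_l = \left\| \bar X^{\{l\}} \right\|_2$ separate the true support rows from the spurious ones. First I would observe that, because $\Omega \subseteq \Omega_c$ and the right-hand sides of (\ref{sse_eval_eq00})--(\ref{sse_eval_eq001}) are finite, $\sigma_m(\Phi_{\Omega_c})>0$, i.e. $\Phi_{\Omega_c}$ has full column rank and $(\Phi_{\Omega_c})^\dagger \Phi_{\Omega_c} = I$. Writing $Y = \Phi_{\Omega_c} X_0^{\Omega_c} + W$ (legitimate since $\supp(X_0) \subseteq \Omega \subseteq \Omega_c$) and substituting into the first line $\bar X^{\Omega_c} = (\Phi_{\Omega_c})^\dagger Y$ of ESMS gives the exact decomposition
\begin{align}\nonumber
\bar X^{\Omega_c} = X_0^{\Omega_c} + E, \qquad E := (\Phi_{\Omega_c})^\dagger W,
\end{align}
so that the entire analysis rests on controlling the rows of the perturbation matrix $E$.

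The key estimate, and the step I expect to be the main obstacle, is the per-row bound
\begin{align}\nonumber
\left\| E^{\{l\}} \right\|_2 = \left\| ((\Phi_{\Omega_c})^\dagger)^{\{l\}} W \right\|_2 \leq \frac{\left\| W^* \right\|_{2,\infty}}{\sigma_m(\Phi_{\Omega_c})} =: B
\end{align}
for every row index $l$. The factor $1/\sigma_m(\Phi_{\Omega_c})$ arises from bounding the Euclidean norm of the $l$th row of the pseudoinverse by $\left\| (\Phi_{\Omega_c})^\dagger \right\|_2$, the reciprocal of the least singular value of $\Phi_{\Omega_c}$. The delicate part is coupling the action of that row on $W$ with the mixed norm $\left\| W^* \right\|_{2,\infty}$ of the noise rather than with its spectral norm, which is precisely where the interplay between the conditioning of $\Phi_{\Omega_c}$ and the $\ell_{2,\infty}$ geometry of $W$ must be handled with care. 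I would isolate this as a separate lemma, since it is the only place the noise model enters.

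Granting $B$, the remaining steps are bookkeeping. For a true row $a \in \Omega$ the reverse triangle inequality gives $\zeta_a = \left\| X_0^{\{a\}} + E^{\{a\}} \right\|_2 \geq \left\| X_0^{\{a\}} \right\|_2 - B \geq \min_{a \in \Omega} \left\| X_0^{\{a\}} \right\|_2 - B$, while for a spurious row $l \in \Omega_c \setminus \Omega$ we have $X_0^{\{l\}} = 0$, hence $\zeta_l = \left\| E^{\{l\}} \right\|_2 \leq B$. For ESMS$_1$, condition (\ref{sse_eval_eq00}) reads $\min_{a \in \Omega} \left\| X_0^{\{a\}} \right\|_2 > 2B$, which forces $\min_{a \in \Omega} \zeta_a > B \geq \max_{l \in \Omega_c \setminus \Omega} \zeta_l$; since $|\Omega| = k$, the $k$ largest $\zeta_l$'s are exactly the indices of $\Omega$, so $Q = \Omega$. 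For ESMS$_2$, the two inequalities in (\ref{sse_eval_eq001}) give $\zeta_l \leq B < \kappa$ for $l \notin \Omega$ and $\zeta_a \geq \min_{a \in \Omega} \left\| X_0^{\{a\}} \right\|_2 - B \geq \kappa$ for $a \in \Omega$, so thresholding at $\kappa$ again returns exactly $\Omega$. In either case the final least-squares re-solve $\hat X^Q = (\Phi_Q)^\dagger Y$ with $Q = \Omega$ yields $\hat\Omega = \Omega$, completing the argument; the only genuine work, as noted, is the per-row perturbation lemma, while everything downstream is a one-line separation argument.
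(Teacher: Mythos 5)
Your overall route is exactly the paper's: the decomposition $\bar X^{\Omega_c}=(\Phi_{\Omega_c})^{\dagger}Y=X_0^{\Omega_c}+E$ with $E=(\Phi_{\Omega_c})^{\dagger}W$ is precisely what the paper gets from Lemma \ref{sceosmp0}, and your reverse-triangle-inequality separation of $\zeta_a$ ($a\in\Omega$) from $\zeta_l$ ($l\in\Omega_c\setminus\Omega$), followed by top-$k$ selection for ESMS$_1$ and thresholding at $\kappa$ for ESMS$_2$, reproduces the paper's chain (\ref{sse_eval_p6})--(\ref{sse_eval_p7}). That downstream bookkeeping is correct.

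The genuine gap is the step you yourself flagged and deferred to an unproven lemma: the per-row bound $\left\| E^{\{l\}} \right\|_2 \leq \left\| W^* \right\|_{2,\infty}/\sigma_{m}(\Phi_{\Omega_c})=:B$. The argument you sketch (bound the row of the pseudoinverse by $\left\|(\Phi_{\Omega_c})^{\dagger}\right\|_2=1/\sigma_{m}(\Phi_{\Omega_c})$, then couple with $W$ via Cauchy--Schwarz) only yields $\left\| E^{\{l\}}\right\|_2 \leq \left\| W\right\|_2/\sigma_{m}(\Phi_{\Omega_c})$ with the \emph{spectral} norm of $W$, and the passage from $\left\| W\right\|_2$ down to the mixed norm $\left\| W^*\right\|_{2,\infty}$ is not a technicality but a loss of up to a factor $\sqrt{l}$: take $\Phi_{\Omega_c}$ with orthonormal columns (so the relevant singular value is $1$ and $(\Phi_{\Omega_c})^{\dagger}=\Phi_{\Omega_c}^*$) and let every column of $W$ equal one fixed column $\phi_{j_0}$ with $j_0\in\Omega_c\setminus\Omega$; then $\left\| W^*\right\|_{2,\infty}=1$, yet the $j_0$-th row of $(\Phi_{\Omega_c})^{\dagger}W$ is $(1,1,\dots,1)$, of norm $\sqrt{l}$, so $B$ fails, and indeed $\zeta_{j_0}=\sqrt{l}$ while $\zeta_a=\left\|X_0^{\{a\}}\right\|_2$ for all $a\in\Omega$, so the separation conclusion itself collapses once $\sqrt{l}$ exceeds the signal margin. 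You should know that the paper's own proof disposes of this very step with the one-line inequality (\ref{sse_eval_a1}), which is the identical assertion and suffers from the identical defect (it even silently conflates rows and columns of $W$); so your instinct that this is the one delicate point where the noise model enters was exactly right, but neither your proposal nor the paper actually establishes it. Both arguments become airtight if $\left\| W^*\right\|_{2,\infty}$ is replaced by $\left\| W\right\|_2$ (or padded by $\sqrt{l}$) in (\ref{sse_eval_eq00}) and (\ref{sse_eval_eq001}); as literally stated, the key lemma you would need is false for $l\geq 2$.
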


\begin{proof}[Proof of Theorem \ref{sceosmp1}]
See Appendix D.
\end{proof}

\begin{cor}\label{sceosmp2} Suppose that each element of $W$ follows Gaussian distribution $N(0,\sigma^2)$.
Then TSMP$_1$($|\Omega|$) or TSMP$_2$($\kappa$) identifies $\Omega$ as its output $\hat \Omega$ with a probability higher than $ 1-m \cdot \exp(-\frac{c(\kappa,\sigma,X_0,\Omega_c)^2}{2})$, where 
\begin{align}\label{sse_eval_eq_cor2}
c(s,\bar \sigma,X,J):=\frac{s \cdot \sigma_{|J|}(\Phi_J)}{2\bar \sigma}-\sqrt{|J|}-1
\end{align}
if the followings three conditions hold: 1. $\Omega \subseteq \Omega_c$ for $\Omega_c$ (i.e., one of TSMP's outputs), 2. $c(\Phi,X_0,\Omega_c)>0$ and 3. (\ref{sse_eval_eq_cor1}).
\label{sse_eval} 
\begin{align}\label{sse_eval_eq_cor1}
0 < 2 \kappa &\leq \min\limits_{a \in \Omega}   \left \| {X_0}^{\{a\}}\right \|_2
\end{align}
\end{cor}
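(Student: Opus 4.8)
The plan is to read Corollary~\ref{sceosmp2} as the Gaussian specialization of the deterministic Theorem~\ref{sceosmp1}: once the three hypotheses are in force, I would show that the success of \emph{both} TSMP$_1$($|\Omega|$) and TSMP$_2$($\kappa$) is governed by a single scalar tail event on the noise, and then bound the probability of that event by Gaussian concentration together with a union bound. Throughout I treat $\Omega_c$ as given, so the submatrix $\Phi_{\Omega_c}$ and its smallest singular value $s:=\sigma_{|\Omega_c|}(\Phi_{\Omega_c})$ are fixed quantities; this is legitimate because the hypotheses condition on a realized output $\Omega_c$ satisfying $\Omega\subseteq\Omega_c$ (condition~1) and $c>0$ (condition~2), leaving only the randomness of $W$.

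First I would collapse the two-sided conditions of Theorem~\ref{sceosmp1} into one event. Writing $N:=\left\|W^{*}\right\|_{2,\infty}$, I claim the single event $\mathcal{E}:=\{\,N<\kappa\,s\,\}$ is sufficient for recovery under either algorithm. Indeed, on $\mathcal{E}$ we have $N/s<\kappa$, which is the left inequality of \eqref{sse_eval_eq001}; combining $N/s<\kappa$ with hypothesis \eqref{sse_eval_eq_cor1} (i.e.\ $2\kappa\le\min_{a\in\Omega}\left\|X_0^{\{a\}}\right\|_2$) yields $\kappa\le\min_{a\in\Omega}\left\|X_0^{\{a\}}\right\|_2-\kappa<\min_{a\in\Omega}\left\|X_0^{\{a\}}\right\|_2-N/s$, which is the right inequality of \eqref{sse_eval_eq001}; hence TSMP$_2$($\kappa$) recovers $\Omega$. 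For TSMP$_1$ the same two facts give $2N/s<2\kappa\le\min_{a\in\Omega}\left\|X_0^{\{a\}}\right\|_2$, which is precisely \eqref{sse_eval_eq00}. Therefore $\mathbb{P}(\hat\Omega=\Omega)\ge\mathbb{P}(\mathcal{E})$ in both cases, and it remains to lower bound $\mathbb{P}(\mathcal{E})=1-\mathbb{P}(N\ge\kappa s)$.

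Next I would estimate $\mathbb{P}(N\ge\kappa s)$. The deterministic step behind Theorem~\ref{sceosmp1} controls the thresholding error of the least-squares estimate $(\Phi_{\Omega_c})^{\dagger}Y$ by $N/s$, and $N/s$ is itself a maximum (over at most $m$ indices) of $\ell_2$-norms of Gaussian vectors with $\mathcal{N}(0,\sigma^2)$ entries. Since $v\mapsto\left\|v\right\|_2$ is $1$-Lipschitz, Gaussian concentration gives $\mathbb{P}(\left\|g\right\|_2\ge\mathbb{E}\left\|g\right\|_2+\sigma t)\le e^{-t^2/2}$ for each such vector, while the standard estimate $\mathbb{E}\left\|G\right\|_2\le\sigma(\sqrt{p}+\sqrt{q})$ for a $p\times q$ Gaussian matrix, specialized to a single column ($q=1$) with $p=|\Omega_c|$, supplies the $\mathbb{E}\left\|g\right\|_2\le\sigma(\sqrt{|\Omega_c|}+1)$ that accounts for the $\sqrt{|\Omega_c|}+1$ appearing in \eqref{sse_eval_eq_cor2}. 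A union bound over the $m$ indices produces the prefactor $m$, and choosing the deviation $t$ so that $\sigma(\sqrt{|\Omega_c|}+1+t)$ matches the threshold level (absorbing the factor $2$ carried by the deterministic conditions) identifies $t$ with $c(\kappa,\sigma,X_0,\Omega_c)$, which is positive by hypothesis~2. This yields $\mathbb{P}(N\ge\kappa s)\le m\,e^{-c^2/2}$ and hence the stated guarantee $\mathbb{P}(\hat\Omega=\Omega)\ge1-m\exp(-c^2/2)$.

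The main obstacle, and where I would spend the most care, is the concentration bookkeeping that turns the clean reduction into the exact form of $c$. Two points are delicate: establishing from the proof of Theorem~\ref{sceosmp1} the precise deterministic inequality that bounds each row error $\left\|\big((\Phi_{\Omega_c})^{\dagger}W\big)^{\{a\}}\right\|_2$ by $N/s$ so that the \emph{effective} Gaussian dimension comes out to $|\Omega_c|$ rather than the column count, and tracking how the factor $2$ in \eqref{sse_eval_eq00} and the symmetric window in \eqref{sse_eval_eq001} combine to produce the $\tfrac{1}{2\sigma}$ coefficient in \eqref{sse_eval_eq_cor2}. A secondary, minor point is that $\Omega_c$ is itself produced from the noisy data by \texttt{submp}; I sidestep this dependence by conditioning on the realized $\Omega_c$ (as the hypotheses do) and applying the tail bound to $W$ directly, which keeps $W$ Gaussian and $s$ deterministic in the concentration step.
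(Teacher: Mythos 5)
Your proposal is correct and takes essentially the same route as the paper's proof: both reduce the success of TSMP$_1$ and TSMP$_2$ to the single noise event $\left\{\left\|W^{*}\right\|_{2,\infty} < \kappa\,\sigma_{m}(\Phi_{\Omega_c})\right\}$ via Theorem \ref{sceosmp1}, and bound its complement by per-row Gaussian concentration (the paper uses Lemma \ref{thm_lot}) followed by a union bound over the $m$ rows, choosing the deviation parameter so that the factor $2$ and the $\sqrt{|\Omega_c|}+1$ term reproduce $c(\kappa,\sigma,X_0,\Omega_c)$. The only cosmetic difference is ordering — you verify the deterministic conditions on the event first and concentrate second, while the paper concentrates first (its (\ref{gauseval4})) and then checks (\ref{sse_eval_eq00}) and (\ref{sse_eval_eq001}) — which does not change the argument.
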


\begin{proof}[Proof of Corollary \ref{sceosmp2}]
See Appendix E.
\end{proof}

\begin{rem}\label{rem_sceosmp0_21}
Theorem \ref{sceosmp1} guarantees that if $\Omega_c \supseteq \Omega$ and $\sigma_{|\Omega_c|}(\Phi_{\Omega_c})>0$, TSMP$_1$($|\Omega|$) or TSMP$_2$($\kappa$) with any $\kappa$ satisfying (\ref{sse_eval_eq_cor1}) yields $\Omega$ as its output for the noiseless case $W=0$. Corollary \ref{sceosmp2} guarantees that if $\Omega_c \supseteq \Omega$ and $ \sigma_{|\Omega_c|}(\Phi_{\Omega_c})>0$, as $\sigma$ goes to zero, TSMP$_1$($|\Omega|$) or TSMP$_2$($\kappa$) such that $\kappa$ satisfies (\ref{sse_eval_eq_cor1}) produces $\Omega$ as its output with a probability increasing and converging to one.
\end{rem}

\begin{rem}\label{rem_sceosmp0_2} For the noiseless case, if $\Phi$ follows $U(\Phi)$ and $m$ is larger than the following quantity (\ref{fq1}), both Theorems \ref{paeosmp} and \ref{sceosmp1} (Remarks \ref{rem_paesomp1} and \ref{rem_sceosmp0_21}) guarantee that TSMP$_1$($|\Omega|$) or TSMP$_2$($\kappa$) where $\kappa$ satisfies (\ref{sse_eval_eq_cor1}) produces $\Omega$ as its output with a probability higher than  $1-\sum_{i=t-k+1}^{t} {t\choose i} \epsilon^i$.

\begin{align}\label{fq1}
k+t+\max \left \{4, \frac{1}{\lambda^{-1}(\frac{1}{2k})}\right\} \ln \, (4k n/\epsilon)
\end{align}
Note that $\frac{1}{\lambda^{-1}(\frac{1}{2k})} \leq 70k$ when $k \leq 100$ since $\lambda(x)<35x$ for $x>0.005$. 
\end{rem}

\begin{rem}\label{rem_sceosmp0_2signalm} For the noiseless case with $\sigma_{|\Omega|}(\Phi_{\Omega})>0$, TSMP$_1$($|\Omega|$) or TSMP$_2$($\kappa$) where $\kappa$ satisfies (\ref{sse_eval_eq_cor1}) recovers $X_0$ if $\hat \Omega=\Omega$.

\end{rem}

By comparing Remarks \ref{rem_gs1} and \ref{rem_sceosmp0_2}, it is shown that in the noiseless case, the probability of recovery failure for TSMP($\sum_{i=t-k+1}^{t} {t\choose i} \epsilon^i$) is considerably smaller than that of OSMP($\epsilon$). Since Theorems \ref{paeosmp} and \ref{sceosmp1} also cover the noisy case in terms of $\eta$, the minimum required number of measurements for the success of TSMP in the noisy case may be obtained for any set of finite values $(m,n,l,k,r)$.

\label{fig1}
\begin{figure}[t]
  \centering
  \footnotesize
  \subfigure[\scriptsize Noiseless case\label{fig:ctwo}]{\includegraphics[width=9cm, height=6.4cm]{./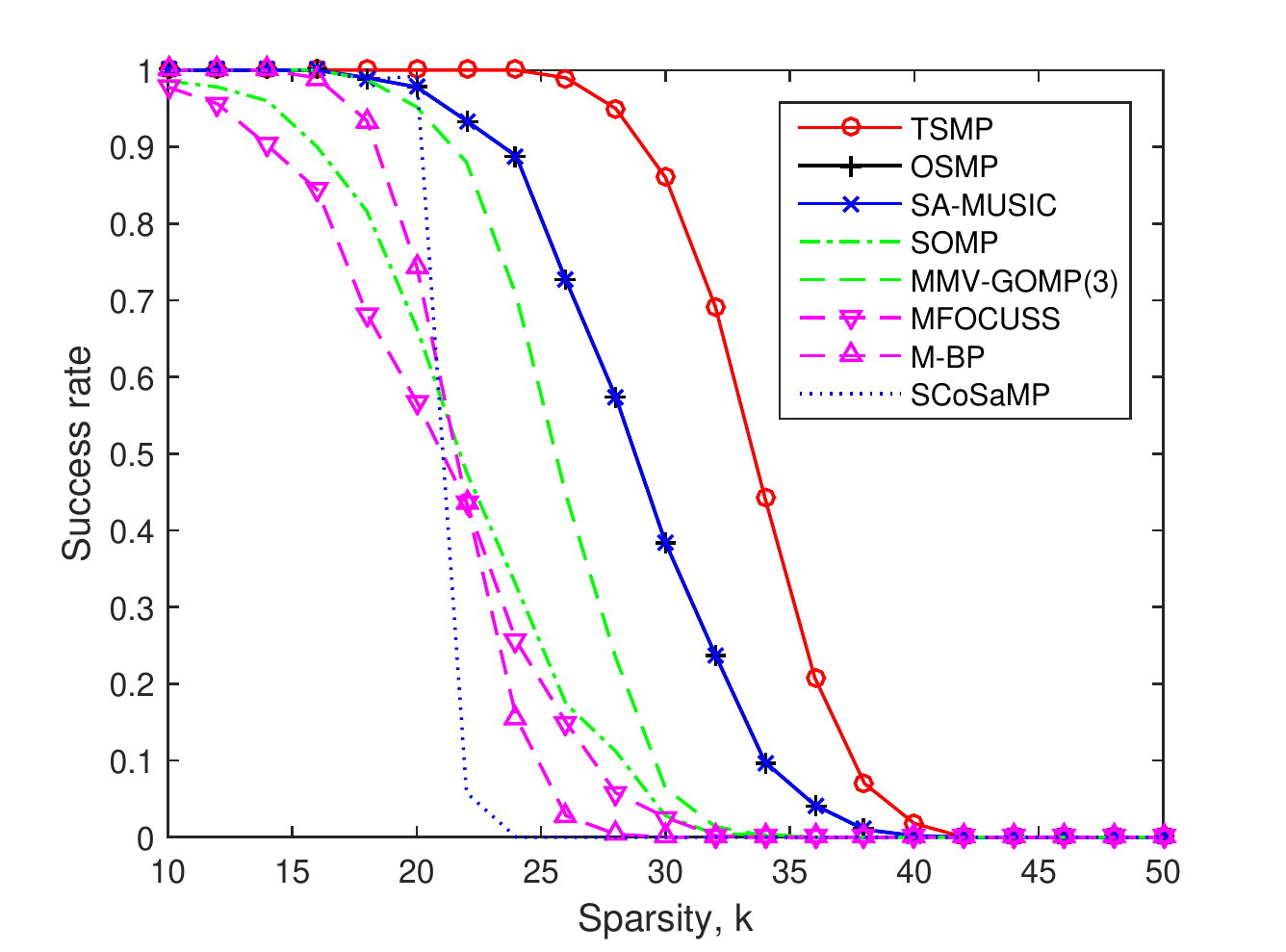}}
  \subfigure[\scriptsize SNR = $40$ dB\label{fig:ttwo}]{\includegraphics[width=9cm, height=6.4cm]{./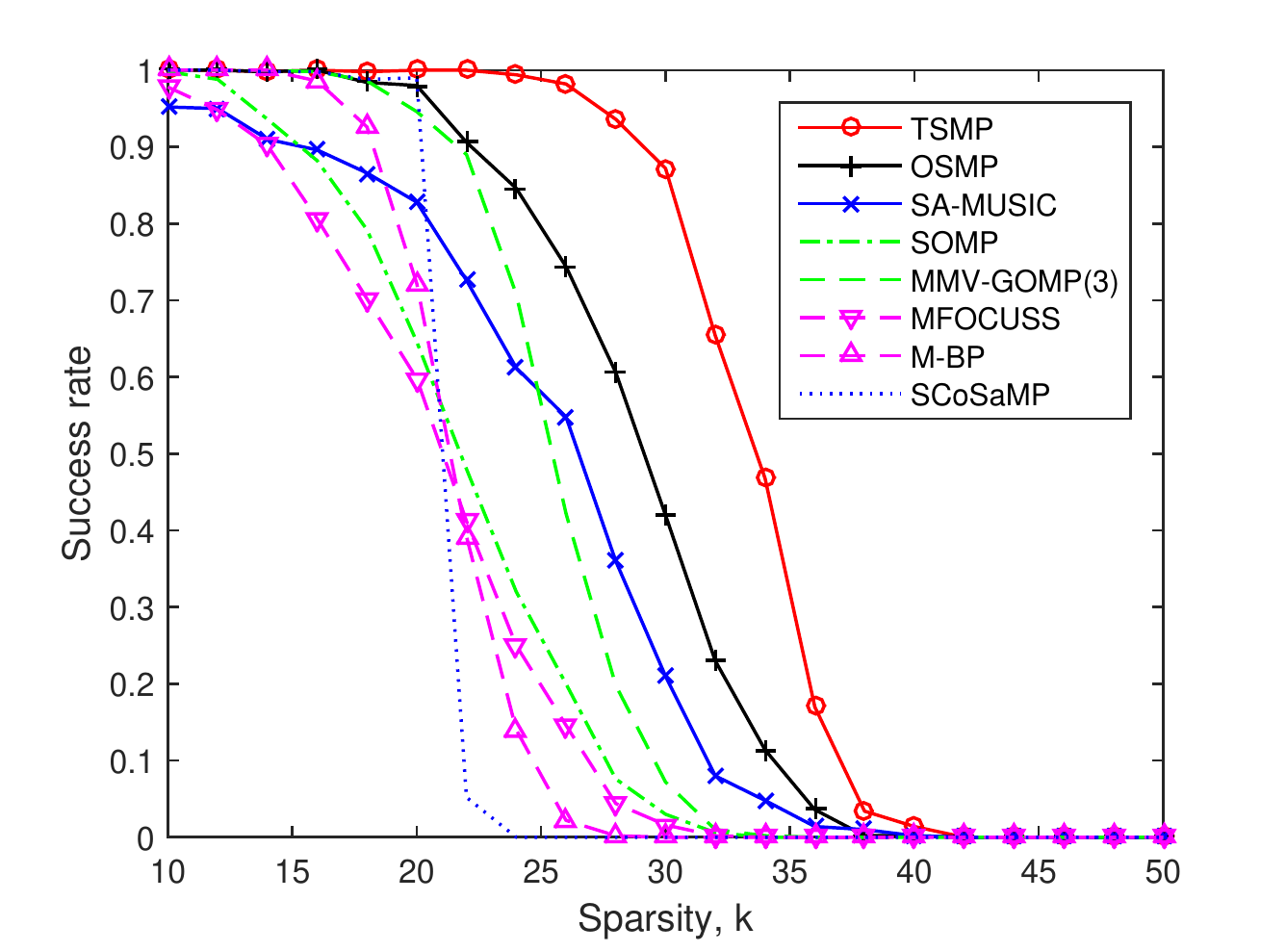}}
\caption{Success rate of true support recovery when $k$ is varied from 10 to 50 and $(m,n,l,r)=(64,512,3,3)$}
\end{figure}
\label{fig2}
\begin{figure}[t]
  \centering
  \footnotesize
  \subfigure[\scriptsize Noiseless case]{\includegraphics[width=9cm, height=6.4cm]{./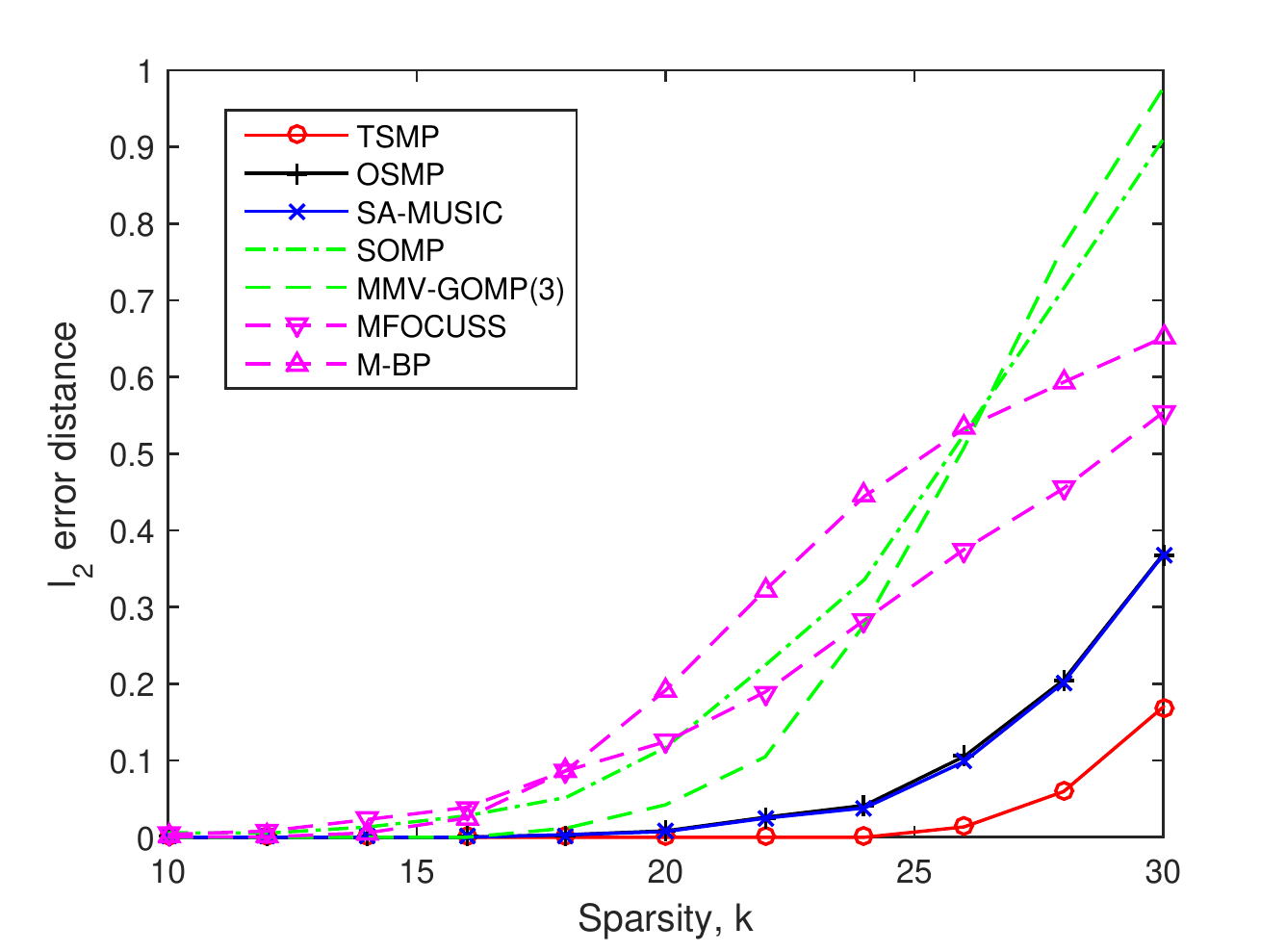}}
  \subfigure[\scriptsize SNR = $40$ dB]{\includegraphics[width=9cm, height=6.4cm]{./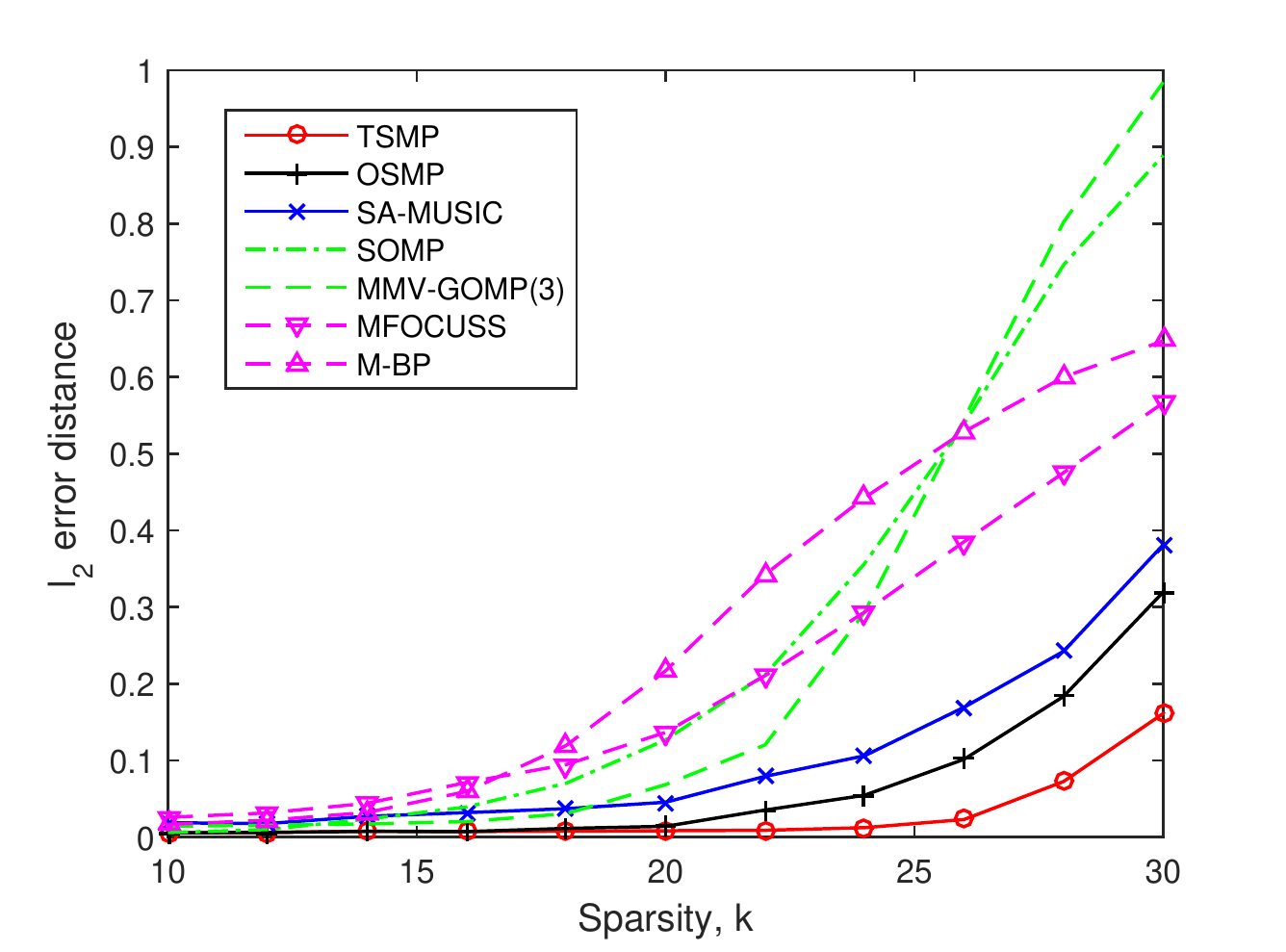}}
\caption{Empirical mean value of $l_2$ norm distance between $X_0$ and its estimated matrix when $k$ is varied from 10 to 30 and $(m,n,l,r)=(64,512,3,3)$}
\end{figure}
\section{Numerical experiments}
\label{sec_ne}
In this section, the performance of the proposed scheme TSMP$_1$ versus conventional MMV algorithms such as OSMP\cite{davies2012rank}\cite{lee2012subspace}, SA-MUSIC (i.e., SA-MUSIC+OSMP in this section) \cite{lee2012subspace},  MMV basis pursuit (M-BP, i.e., the $l_{2,1}$ norm minimization) \cite{chen2006theoretical,BergFriedlander:2008}\footnote{The standard deviation of the noise ($\sigma_w$) was used as a fixed input as an upper bound of the noise error in M-BP.}, MFOCUSS\cite{cotter2005sparse} and SOMP \cite{chen2006theoretical}, SCoSaMP\cite{blanchard2014greedy}\footnote{SCoSaMP was designed in consideration of the stopping criteria in \cite{blanchard2014greedy}.}, MMV-GOMP($t$)\footnote{MMV-GOMP($t$) is a direct extension of the GOMP algorithm \cite{wang2012generalized} for the MMV case such that t indices are selected per iteration.} are demonstrated. A probabilistic model, denoted by i.i.d. complex Gaussian model ($\textup{ICN}(a,b)$), was used for generating $\Phi$, $X^{\Omega}_0$, and $W$. If a matrix $A$ follows the i.i.d. complex Gaussian model $\textup{ICN}(a,b)$, the real and imaginary parts of each entry of $A$ are chosen independently according to Gaussian distribution with mean $a$ and variance $b$. The measurement matrix $\Phi \in \mathbb{K}^{m \times n}$ followed $\textup{ICN}(0,\sigma^2)$. All empirical results have similar behaviors irrespective of $\sigma$ despite its value is set to $1$ in this paper.\footnote{Each of the empirical results in this paper has the same trend as their corresponding results where $\Phi$ was generated from randomly selected $m$ rows from the $n \times n$ DFT matrix.} The support set $\Omega$ of the sparse coefficient matrix $X_0$ was uniformly generated at random such that $|\Omega|=k$ while the signal matrix $X_0^{\Omega}$ was established by the following model $X_0^{\Omega}= V_1 \Lambda V_2^*$: 
\begin{itemize}
\item $V_1 \in \mathbb{K}^{k \times r}$, $V_2 \in \mathbb{K}^{l \times r}$, and $\Lambda  \in \mathbb{K}^{r \times r}$ are independently set by $r$ uniformly random orthonormal columns of a same size matrix whose elements are independently generated by $\textup{ICN}(0,1)$, respectively. 
\end{itemize}
Since $\max\{r,l\}$ is equivalent to the rank of $X^{\Omega}_0$ in the above model, the effect of the rank defect (i.e., $\rank(\Phi X_0)<k$) can be observed by setting $\max\{r,l\}$ less than $k$. To compare the performance in the noisy case, the average per-sample signal-to-noise ratio (SNR) was defined as the ratio of the powers of the measured signal and noise.
\begin{align}\nonumber
\textup{SNR} := \frac{\mathbb{E}\left \| \Phi_{\Omega} X^{\Omega}_0 \right \|^2_{F}}{\mathbb{E}\left \|W \right \|^2_{F} }
\end{align}
In the noisy case, $W$ followed $\textup{ICN}(0,\sigma_w^2)$. In the noiseless case, $W=0$ so that $(\hat S,\eta,r) =(Y,0,\rank(Y))$. The performance was assessed by the following two metrics: first by the rate of successful support recovery and second by the $l_2$ norm distance between $X_0$ and its estimate $\hat X$ (i.e., $\left \| X_0 - \hat X\right \|_2 $). 
Under the above settings, we observed that in the majority of cases, TSMP exhibited the best empirical performance among the mentioned algorithms for the recovery of true signal matrix and its support irrespective of $(m,n,l,r)$ as long as the condition $m<n$ held and the SNR was larger than a certain level. Only the performance results in the $l \leq k$ case is shown in this paper since the condition $l \leq k$ is preferred than $l > k$ in many applications. This will be discussed in more detail in Section \ref{sec_d}. The case $\max\{r,l\} \leq k$ may be reduced to the following cases: $(a)$ $r=l\leq k$ and $(b)$ $r<l\leq k$. To compare the performances in case $(b)$, a common subspace estimator for the first process of TSMP, OSMP, and SA-MUSIC+OSMP was implemented to build up noise robustness. Cases when no subspace estimator is used at all and when the subspace estimator proposed in \cite{lee2012subspace} is used were both considered. Since our empirical results in case $(b)$ showed the same trends with results of case $(a)$, the description of case $(b)$ is omitted and only the results of case $(a)$ are shown. The results of various simulation runs were plotted in the following figures each produced with 500 iterations. In all the figures, $\Phi$, $X_0$, and $W$ were generated in the real field.\footnote{Though all of the figures covered the case where $\Phi$, $X_0$, and $W$ were generated in the real field, the empircal results had the same trend even when $\Phi$, $X_0$, and $W$ were sampled in the complex field.}
\label{fig3}
\begin{figure}[h]
  \centering
  \footnotesize
  \subfigure{\includegraphics[width=9cm, height=6.4cm]{./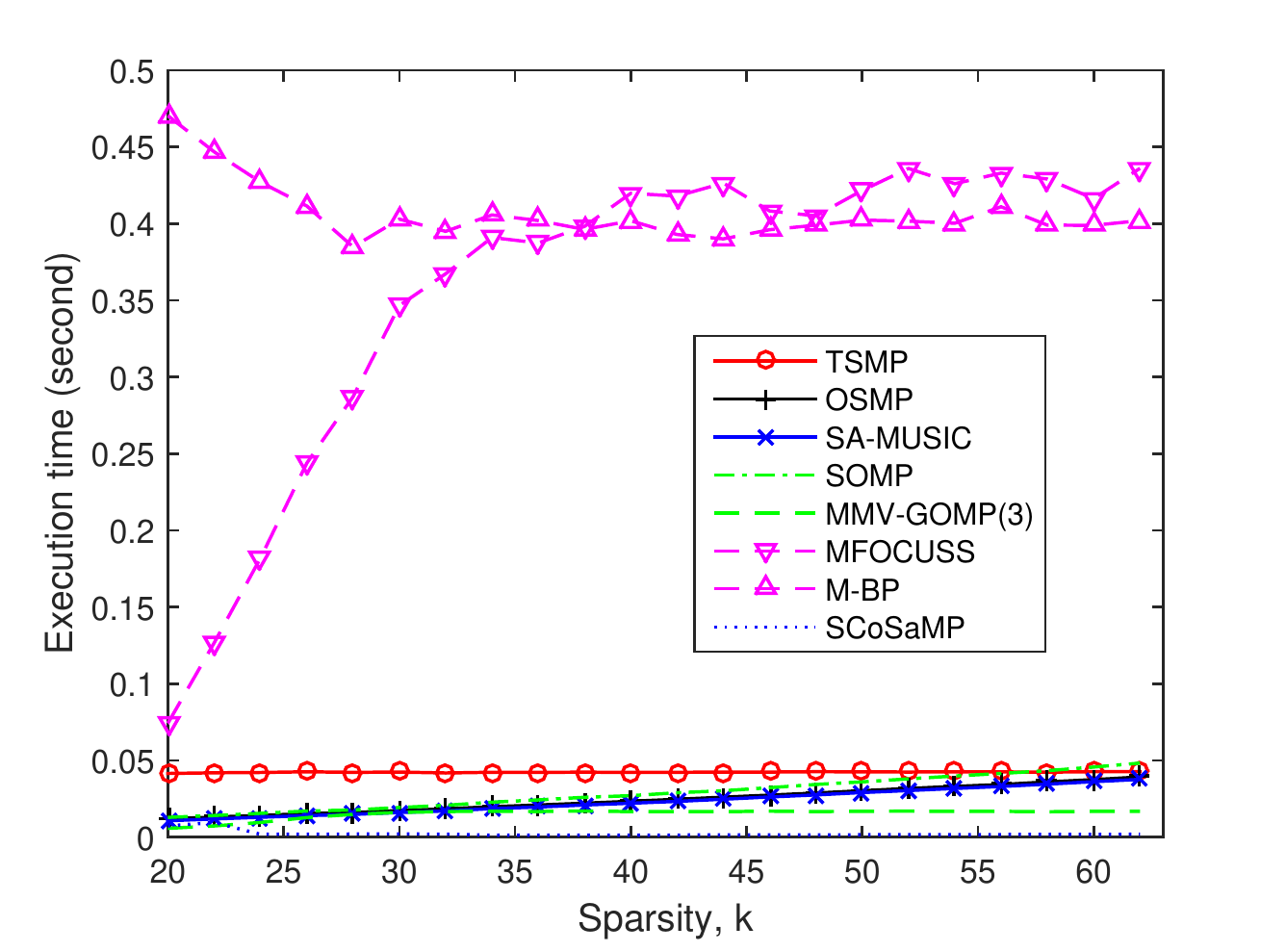}}
\caption{Execution time when $k$ varied from 20 to 63 and $(m,n,l,r)=(64,512,3,3)$ with noiseless data}
\end{figure}
Fig. 1(a) and 1(b) compare the performance of algorithms in terms of successful support recovery rates with varying number of sparsity $k$ and fixed triplet $(m,n,l)=(64,512,3)$ for the noiseless and noisy cases (i.e., SNR = $40$ dB), respectively. In order to estimate the support with algorithms that only provide the estimated signal matrix, a subroutine identifying the index set of the rows with the largest row $l_2$ norms of the estimated signal matrix is additionally implemented. Our simulation results show that TSMP exhibits the best recovery performance. Another fact worth noticing is that OSMP outperforms SA-MUSIC+OSMP and this tendency continues in most cases when $r=l<k$. This looks contradictory since according to the empirical result in \cite{lee2012subspace}, SA-MUSIC+OSMP exhibits better performance than OSMP. Discussion on this relationship will be provided in detail in Section \ref{sec_d}. 

\label{fig4}
\begin{figure}[t]
  \centering
  \footnotesize
  \subfigure[\scriptsize]{\includegraphics[width=9cm, height=6.4cm]{./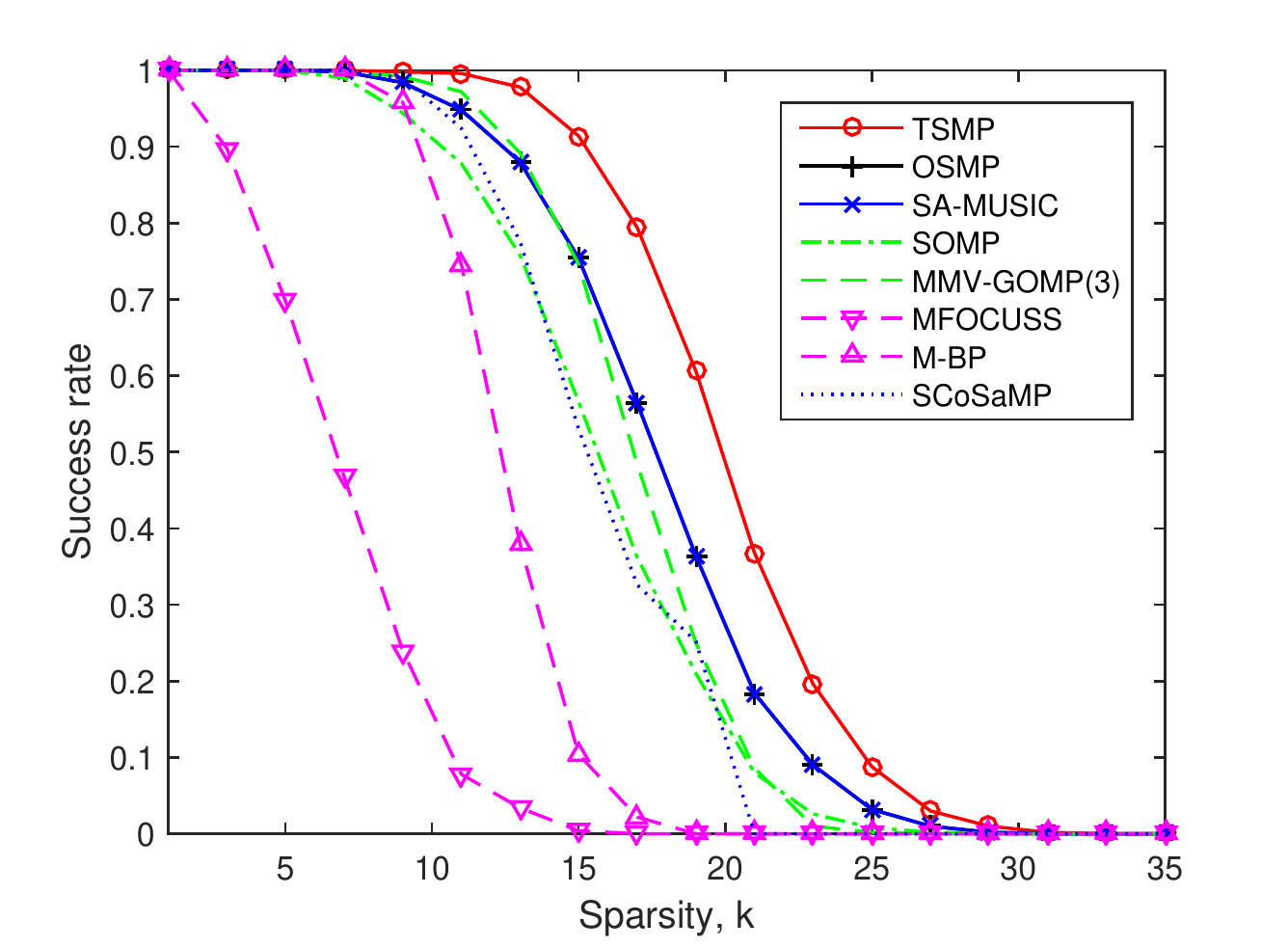}}
  \subfigure[\scriptsize]{\includegraphics[width=9cm, height=6.4cm]{./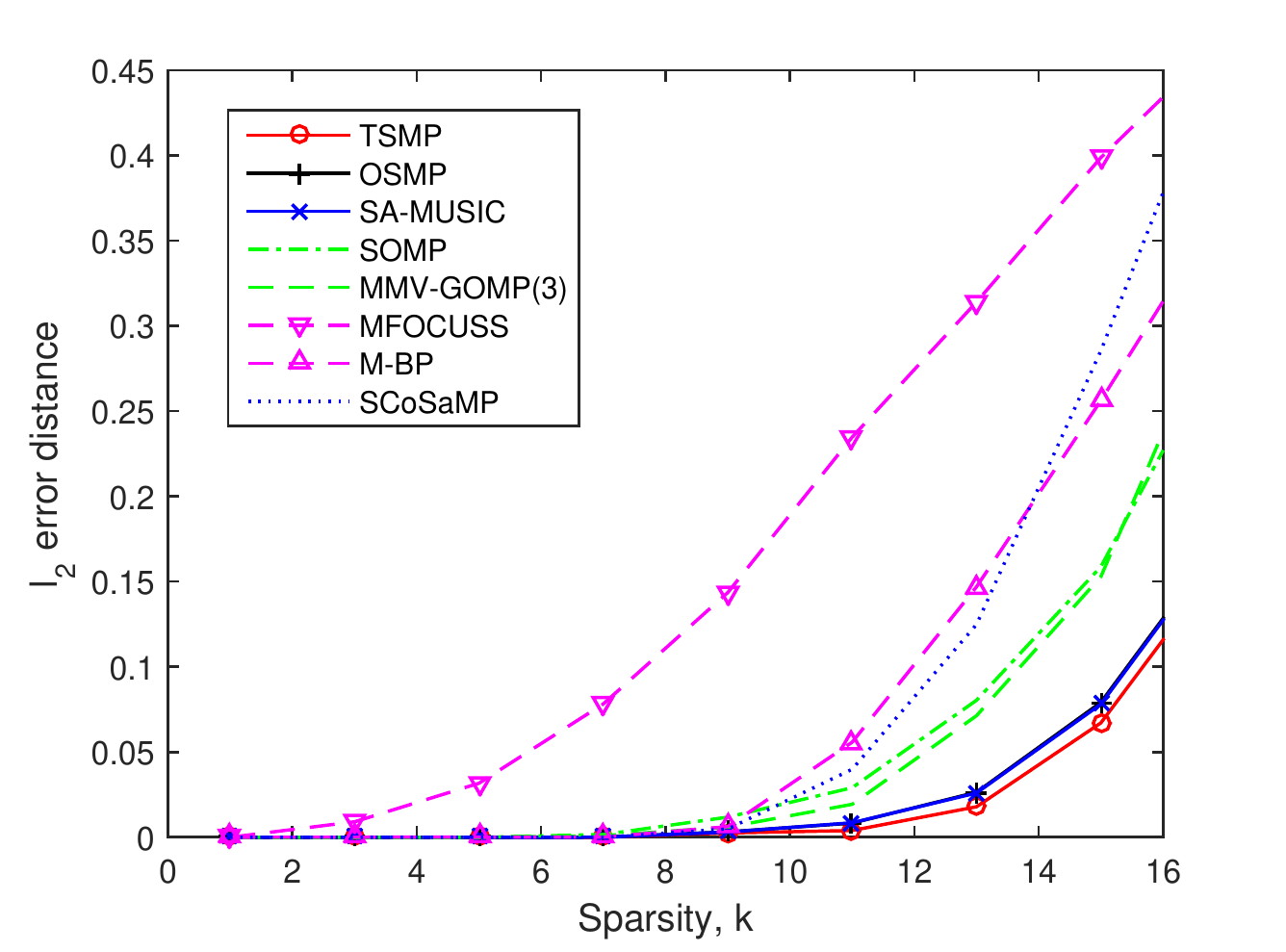}}
    \caption{Performance of various algorithms with noiseless data when $(m,n,l,r)=(64,512,1,1)$ $(a)$ Success rate of true support recovery $(b)$ Empirical mean value of $l_2$ norm distance between $X_0$ and its estimated matrix}
\end{figure}

Fig. 2 corresponds to the same scenario as in Fig. 1, but uses a different metric, i.e., the $l_2$ distance between $X_0$ and the estimated $X_0$ of each algorithm.\footnote{In order to estimate the signal matrix with algorithms that only provide the estimated support ($\hat \Omega$), an additional subroutine is implemented to provide an estimated signal matrix by calculating the inverse of $\Phi_{\hat \Omega}$ (i.e., $\hat X = \Phi^{-1}_{\hat \Omega}Y$).} TSMP still outperformed the other algorithms.  

Fig. 3 illustrates the execution time of each algorithm using the same parameters as in Fig. 1(a). While the execution time of TSMP remained almost unchanged, that of other greedy algorithms such as OSMP or SA-MUSIC+OSMP increased as the sparsity level approaches $m$. This is favorable to TSMP since the main focus of compressive sensing is in the case when the sparsity level is relatively large and close to $m$. Most greedy algorithms including TSMP have relatively fast running times than optimization-based schemes such as M-BP and its faster version, MFOCUSS.  

Though only the performance of the above algorithms are compared by using fixed values of $(m,n,l,r)=(64,512,3,3)$, TSMP most likely exhibited better performance for the recovery of the true support and signal matrix than the existing algorithms in most of the parameter space if SNR exceeded a certain level. A comparison in performance in the conventional SMV case is shown as an example in Fig. 4 by fixing $(m,n,l,r,$SNR$)=(64,512,1,1,\infty)$. TSMP still exhibited the best performance for the recovery of $\Omega$ and $X_0$.

\label{fig5}
\begin{figure}[h]
  \centering
  \footnotesize
  \subfigure{\includegraphics[width=9cm, height=6.4cm]{./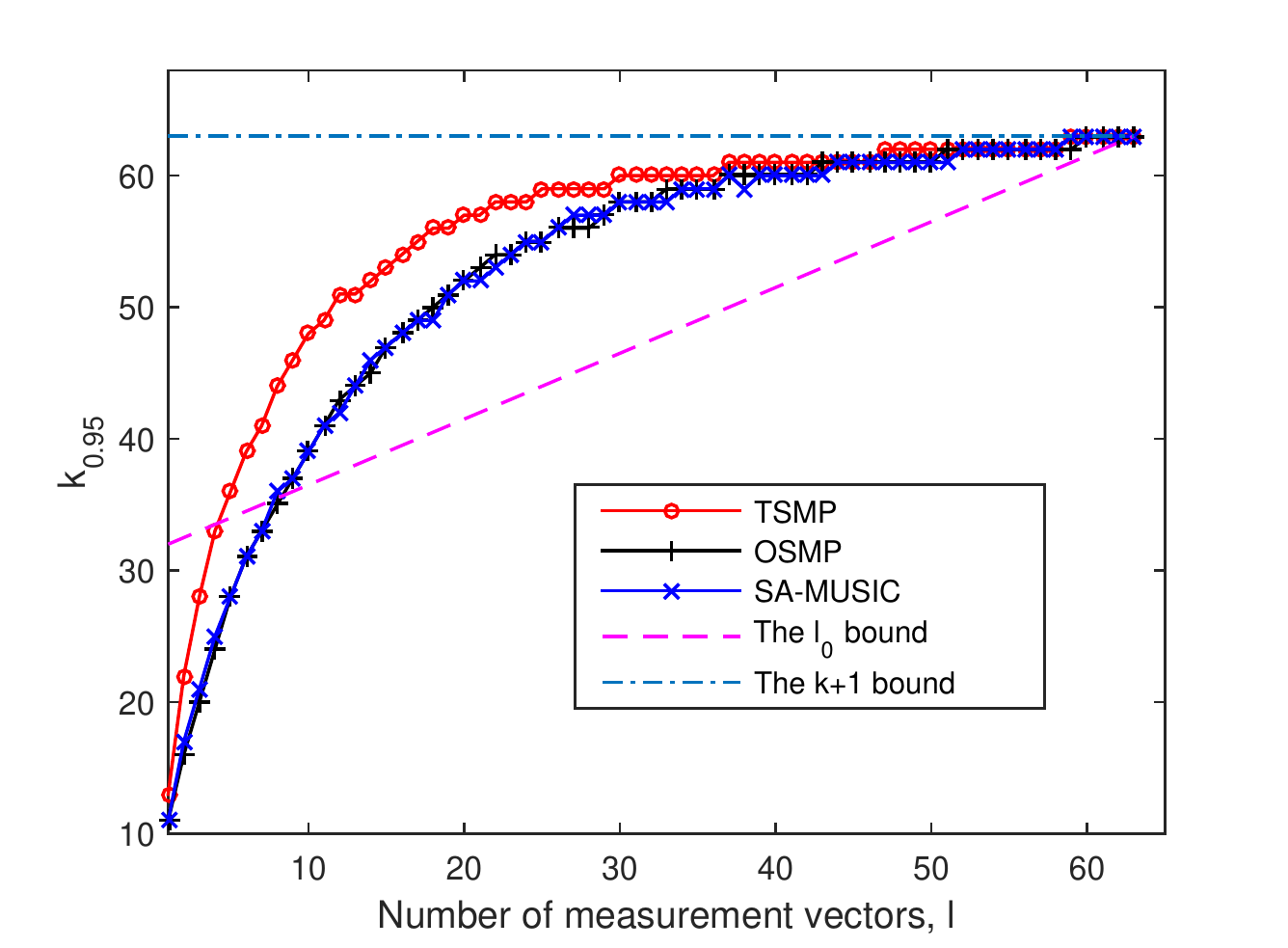}}
\caption{$k_{0.95}$ of TSMP/OSMP/SA-MUSIC versus $l$ (from $l=1$ to $l=63$) in the noiseless and $(m,n)=(64,128)$ case}
\end{figure}

Fig. 5 shows a plot of $k_{0.95}$ (i.e., the maximum sparsity $k$ such that the probability TSMP identifies $\Omega$ is larger than $0.95$) versus $l$ $(=r)$ given fixed parameters $(m,n,$SNR$)=(64,128,\infty)$. Fig. 5 shows that the maximal sparsity level for TSMP to recover $\Omega$ in every case surpasses the $l_0$ bound (i.e., $(m+l-1)/2$) and converges to the ``$k+1$'' bound as the rank of $X_0$ (i.e., the number of columns of $X_0$) increases. This implies that even a few number of measurement vectors ($l$) is significantly helpful to improve the performance of TSMP. This verifies one of the main advantages of the MMV setting over the SMV case.\footnote{Similar arguments were suggested by Tang, Eldar, et al. \cite{eldar2010average}, \cite{tang2010performance}. They theoretically proved that the recovery rate increases exponentially with the number of measurement vectors under certain mild conditions. Though their focus was on different algorithms, these results also support the advantage of the MMV problem.}

\section{Discussion}
\label{sec_d}
\subsection{Comparison between two cases; $l \leq k$ and $l > k$.}
As Zhang, et al. suggested \cite{zhang2011sparse}, the sparsity assumption is valid only for a small $l$ such that $l \leq k$ in most applications regarding the MMV problem (i.e., EEG/MEG source localization, DOA estimation, etc.) since the support profile of practical signal vectors (i.e., columns of $X_0$) is time-varying. While the case $l < k$ always belongs to the rank defective case, the full row rank case $\rank(X_0^{\Omega}) = k$ could easily occur if $l >k$. For instance if each column of $X^{\Omega}_0$ does not belong to a certain set with Lebesgue measure zero (i.e., range of the other columns), $\rank(X_0^{\Omega}) = k$ is satisfied when $l> k$. This full row rank case is not the focus of this study since the performances of MUSIC-like algorithms such as OSMP, SA-MUSIC+OSMP, TSMP, etc. are the same as MUSIC which has the lowest complexity. Our empirical results thus focus on the case when $l \leq k$.

\subsection{Comparison between OSMP and SA-MUSIC+OSMP}
The only difference between OSMP and SA-MUSIC+OSMP \cite{davies2012rank,lee2012subspace} is the selection rule for the last $r$ indices to estimate the true support. Lee, et al. showed in \cite{lee2012subspace} that in the case of $l>k\geq r$, there exists a region in the parameter space $(m,n,l,r,k,$SNR$)$ such that SA-MUSIC+OSMP outperforms OSMP by setting a common and specific subspace estimator to extract an $r$-dimensional subspace $\hat S$ from $\mathcal{R}(Y)$. According to our empirical results presented in Section \ref{sec_ne}, there exists another region such that OSMP outperforms SA-MUSIC+OSMP. This performance advantage of OSMP was observed in most cases when $l = r \leq k$ and the performance gap increased as $l$ decreased. Small $l$, on the other hand, is preferred in many applications as discussed earlier. Remark \ref{rem1n2_4} shows that the theoretical performance guarantee of OSMP is no worse than that of SA-MUSIC+OSMP.
It is expected that OSMP will provide more practical solutions than SA-MUSIC+OSMP in recovering the sparse signal in MMV problems.

\subsection{Comparison to M-SBL and T-SBL}
M-SBL \cite{wipf2007empirical} (i.e., T-SBL \cite{zhang2011sparse} when $X$ is uncorrelated) is known to be a scheme with theoretical guarantee that achieves the ``$k+1$'' bound just as MUSIC (or SA-MUSIC or OSMP) when $\rank(X_0^{\Omega})=k$ and columns of $X$ are orthogonal. Since the orthogonality condition is more restrictive in M-SBL or T-SBL than MUSIC, fundamental analysis on M-SBL or T-SBL has been limited despite its good empirical performance. Furthermore, M-SBL or T-SBL is likely to be more computationally expensive than other subspace greedy algorithms. 

\subsection{Selecting a method for subspace estimation when $\rank(\Phi_{\Omega}X_0^{\Omega}) < l$}
In the noisy case with $\rank(\Phi_{\Omega}X_0^{\Omega}) < l$, it is common to face the situation where $\rank(Y)>\rank(\Phi_{\Omega}X_0^{\Omega})$ due to random noise. When $\Phi_{\Omega}X_0^{\Omega}$ is ill-conditioned (i.e., the last few singular values of $\Phi_{\Omega}X_0^{\Omega}$ are relatively small), estimating $\hat S$ as the space for $r$ $(<l)$ largest singular vectors of $Y$ can be used to improve the robustness against noise. Based on this principle, Lee, et al. \cite{lee2012subspace} proposed an eigenvalue decomposition-based scheme, SSE($\tau$), to estimate an $r$-dimensional signal subspace $\hat S$ and showed that $\eta$ is arbitrarily bounded with finite $l$ through SSE($\tau$) for a mixed multi-channel model. The subspace estimation scheme, however, is not restricted to the specific method. The selection of a good estimator to reduce noise depends on the conditions of each individual case. For example, the robust principal component analysis \cite{candes2011robust} will provide a better estimate than the usual singular value decomposition in case of sparse noise \cite{lee2012subspace}.

\section{Conclusion}
Sparsity ($k$) plus one measurements (the ``$k+1$'' bound) are sufficient ideally to recover almost all sparse signals irrespective of the number of measurement vectors $l$. To better approach the ``$k+1$'' bound with low computational complexity, an improved scheme of the OSMP called TSMP was proposed as a greedy subspace method for joint sparse recovery which provides both better empirical performance and less restrictive theoretical guarantees approaching the bound than most existing algorithms. The empirical results showed that the minimum $m$ required for the uniform recovery of TSMP decreases below the $l_0$ bound as $l$ increases and more rapidly approaches the ``$k+1$'' bound than most existing algorithms with a small $l$. Furthermore, performance guarantees for OSMP and TSMP were derived with regard to the sensing matrix properties such as the weaker version of RIP or the new variant of mutual coherence to improve results.
The theoretical results are non-asymptotic for $(m,n,l,k,\rank(X_0))$, valid for the noisy case, and applicable to a widely used class of sensing matrices for real applications.
 
Though the proposed greedy algorithm with low computational complexity outperformed most of existing greedy methods, there might be a new algorithm beyond the scheme since $k+1$ measurements are ideally sufficient even for the SMV case (i.e., $l=1$). Therefore, the new algorithm could guarantee the success of joint sparse recovery even when $(m,l)$ are jointly much closer to $(k+1,1)$. The case where $(m,l)=(k+1,1)$ is the fundamental limit beyond the conventional bottleneck of SMV, $(2k,1)$. This direction of research might provide new understandings on joint sparse recovery when $l$ is small. 
\appendices

\section{Proof of Theorem \ref{propmc_st3}}

Define  
\begin{align}\nonumber
g(y)&:= 4y(1-y)\\\nonumber
s_1&:=\min\limits_{\Gamma \in t(k-v_2,v_1)} \, \underset{i \in \Sigma \setminus (\Omega \cup \Gamma) }{\min}\sigma^2_{|\Omega \setminus \Gamma|+1} (\dot\Phi_{\Omega \cup \{i\} \setminus \Gamma } )\\\nonumber
s_2&:=\frac{\overset{c_1}{\overbrace{\min\limits_{\Delta_{(k+v_1+1)} \subseteq \Sigma}\,\sigma^2_{k+v_1+1}(\Phi_{\Delta_{(k+v_1+1)} })}}}{\underset{c_2}{\underbrace{\max\limits_{\Delta_{(k+v_1+1)} \subseteq \Sigma}\,\sigma^2_{1}(\Phi_{\Delta_{(k+v_1+1)}  })}}} \\\nonumber
s_3&:=\frac{\min\limits_{\Gamma \in t(k-v_2,v_1)} \,\,\underset{i \in \Sigma \setminus {\Omega} \cup \Gamma}{\min} \,\sigma^2_{|{\Omega} \cup \Gamma \cup \{i\}|}(\Phi_{{\Omega} \cup \Gamma \cup \{i\}  })}{ \underset{l \in {\Omega} \cup \{i\} \setminus \Gamma }{\max} \left \| P^{\perp}_{\mathcal{R}(\Phi_{\Gamma})}\phi_l \right \|^2_2}\\\nonumber
s_4&:=\frac{\min\limits_{\Gamma \in t(k-v_2,v_1)} \,\, \underset{i \in \Sigma \setminus {\Omega} \cup \Gamma}{\min} \,\sigma^2_{|({\Omega} \cup \{i\}) \setminus \Gamma|}(P^{\perp}_{\mathcal{R}(\Phi_{\Gamma})}\Phi_{({\Omega} \cup \{i\}) \setminus \Gamma})}{ \underset{l \in ({\Omega} \cup \{i\}) \setminus \Gamma }{\max} \left \| P^{\perp}_{\mathcal{R}(\Phi_{\Gamma})}\phi_l \right \|^2_2}.
\end{align}
The proof is similar to the proof of Theorem \ref{new3prp_3}. $\{h_1,...,h_7\}$, $f_1(\eta,k,r)$, $s_1(x_1,...,x_5)$, and (\ref{thm_g3_given3_cor}) of Corollary \ref{thm_g3_3_cor} in the proof of Theorem \ref{new3prp_3} are substituted by $\{s_1,s_1,a_1(v_1),s_2,a_3(v_1),s_3,s_4\}$, $g(\eta)$, $q(x_1,...,x_5):=x_1 \cdot x_2 - g(x_3)$, and (\ref{mu_sup_c1}) of Corollary \ref{cor_music_c}, respectively. The fact that $s_2 \leq a_2(v_1) \leq a_3(v_1)$ is used. 

\section{Proof of Theorem \ref{second_mainthm1}}
For $i \in \{0,...,k-1\}$, let $A_i(\Gamma_i)$ denot an event where submp($\hat S,\Gamma_i,1$) produces an arbitrary index $a_i$ in $\Omega \setminus \Gamma_i$ where $| \Gamma_0|=0$ and $\Gamma_i=\{a_i\} \cup \Gamma_{i-1}$.   
Suppose that 
\begin{align}\label{meacond1}
m > k+\max \left\{4, \frac{1}{{\lambda^{-1}(z)}}\right\}\ln \, (n^c).
\end{align}
Then it follows that
\begin{align}\nonumber
\mathbb{P}_s &= \mathbb{P}(\cap_{i=0}^{i=k-1}A_{i}(\Gamma_{i}))\\\nonumber
&\geq 1-\sum_{i=0}^{k-1}\mathbb{P}(A^c_i(\Gamma_{i}))\\\nonumber
&\overset{(a)}\geq 1-\sum_{i=0}^{k-1}\mathbb{P}(\max\limits_{j \in \Sigma \setminus ( \Omega \cup \Gamma_i)}\mu(\Omega \cup \{j\},\Gamma_i) \geq z| \Phi_{\Gamma_i})\\\nonumber
&\overset{(b)} \geq 1- 4k^2 \cdot n^{1-c},
\end{align}
where $(a)$ follows from Theorem \ref{new3prp_3_sp}, Theorem \ref{propmc_st3_sp}, and condition $\rho(\hat S)\leq \eta \leq 0.5$, $(b)$ follows from (\ref{meacond1}) and (\ref{cor_eachiter_low1_2}) in Lemma \ref{cor_eachiter_low1}. 
Setting $c$ as $\frac{- \ln (4k^2 n /\epsilon)}{\ln \, n}$ completes the proof.

\section{Proof of Theorem \ref{paeosmp}}
Define $\mathbb{P}_{s}(x)$ as the probability that submp($\hat S,\o,x$) produces a set $\Gamma$ of $x$ indices such that $\Gamma \supseteq \Omega$. 
For $i \in \{1,...,g\}$, $\mathbb{P}_{e}(g;i|\Phi_{\Omega})$ denotes the conditional probability for a given $\Phi_{\Omega}$ such that submp($\hat S,\o,g$) produces a set $\Gamma$ of $g$ indices where $|\Gamma \setminus \Omega| \geq i$ (i.e., $\Gamma$ includes at least $i$ false indices), and ${E}_i$ denotes an event where submp($\hat S,\Gamma,1$) such that $|\Gamma|=i-1$ produces an index in $\Sigma \setminus (\Omega \cup \Gamma)$. 
The following events for $\Gamma \subseteq \Sigma$, $v \in \Sigma$, and a constant $x$ are defined:
\begin{align}\nonumber
&A(\Omega,\Gamma;x):=\{\max\limits_{i \in \Sigma \setminus ( \Omega \cup \Gamma)}\mu(\Omega \cup \{i\},\Gamma) \geq x \}\\\nonumber
&B_{\{v\}}(\Omega,\Gamma;x):=\{\mu(\Omega \cup {\{v\}},\Gamma) \geq x \}
\end{align}
If $\rho(\hat S)\leq \eta \leq 0.5$ and $m \geq k+g+\max \left\{4, \frac{1}{{\lambda^{-1}(z)}}\right\}\ln \, (n^c)$ hold for $g \in \mathbb{N}$ $(>2)$ and a constant $c \in \mathbb{R}$, it follows that
\begin{align}\nonumber
& \mathbb{P}((E_j)_{j \in \Delta},  \Delta \subseteq [g], |\Delta|=2| \Phi_{\Omega})\\\nonumber
&\overset{(a)} \leq \sum\limits_{\underset{\textup{ s.t. $i < j$}}{\{i,j\} \subseteq \{1,...,g\}}} \sum_{v \in \Sigma \setminus ( \Omega \cup {\bar\Gamma})}[\mathbb{P}(B_{\{v\}}(\Omega,{\bar\Gamma};z)|\Phi_{\Omega},|{\bar\Gamma}|=i) \cdot\mathbb{P}(A(\Omega,{\hat\Gamma};z)|\Phi_{\Omega}, |{\hat\Gamma}|=j,{\hat\Gamma}  \supseteq {\bar\Gamma} \cup \{v\}, B_{\{v\}}(\Omega,{\bar\Gamma};z))] \\\nonumber
&\overset{(b)} \leq \sum\limits_{\underset{\textup{ s.t. $i < j$}}{\{i,j\} \subseteq \{1,...,g\}}} \sum_{v \in \Sigma \setminus ( \Omega \cup {\bar\Gamma})}[\mathbb{P}(B_{\{v\}}(\Omega,{\bar\Gamma};z)|\Phi_{\Omega},|{\bar\Gamma}|=i)] \cdot(4k \cdot n^{1-c})]\\\nonumber
&\overset{(c)} \leq \sum\limits_{\underset{\textup{ s.t. $i < j$}}{\{i,j\} \subseteq \{1,...,g\}}} \sum_{\,v \in \Sigma \setminus ( \Omega \cup {\bar\Gamma})}(4k \cdot n^{1-c})\cdot(4k \cdot n^{-c})\\\label{examplenum2}
&\leq {g\choose 2} (4k \cdot n^{1-c})^2,
\end{align}
where $(a)$ follows from Theorem \ref{new3prp_3_sp}, Theorem \ref{propmc_st3_sp}, and $\rho(\hat S)\leq \eta \leq 0.5$, $(b)$ follows from Lemma \ref{cor_eachiter_low1_2new1} that shows $\mathbb{P}(A(\Omega,{\hat\Gamma};z)| \Phi_{\Omega \cup {\hat\Gamma}}) \leq 4k \cdot n^{1-c}$ for a given $\Phi_{\Omega \cup {\hat\Gamma}}$, and $(c)$ follows from Lemma \ref{cor_eachiter_low1_2new1} that shows $\mathbb{P}(B_{\{v\}}(\Omega,{\bar\Gamma};z)| \Phi_{\Omega \cup {\bar\Gamma}}) \leq 4k \cdot n^{-c}$ for a given $\Phi_{\Omega \cup {\bar\Gamma}}$. 
Similarly with (\ref{examplenum2}), if $\rho(\hat S)\leq \eta \leq 0.5$ and the following condition holds for a constant $c \in \mathbb{R}$
\begin{align}\label{measurecond2}
m \geq k+t+\max \left\{4, \frac{1}{{\lambda^{-1}(z)}}\right\}\ln \, (n^c),
\end{align}
then it follows that for $t \in \mathbb{N}$ $(>k)$ and $i \in \{t-k+1,...,t\}$,
\begin{align}\label{examplenum2_t1}
& \mathbb{P}((E_j)_{j \in \Delta}, \Delta \subseteq [t], |\Delta|=i| \Phi_{\Omega})\leq {t\choose i} (4k \cdot n^{1-c})^i.
\end{align}
For $t\in \mathbb{N}$ such that $k<t\leq m-1$, the following holds
\begin{align}\nonumber
1-\mathbb{P}_{s}&=1-\mathbb{P}_{s}(m-1) \\\nonumber
&\leq 1-\mathbb{P}_{s}(t) \\\label{defpe}
&\leq  \int_{\Phi_{\Omega},\Omega}\mathbb{P}_{e}(t;t-k+1|\Phi_{\Omega})\mathbb{P}({\Phi_{\Omega},\Omega})\,d(\Phi_{\Omega},\Omega).
\end{align}
If $\rho(\hat S)\leq \eta \leq 0.5$ and (\ref{measurecond2}) are satisfied, then 
\begin{align}\nonumber
&\mathbb{P}_{e}(t;t-k+1|\Phi_{\Omega})\\\nonumber
& =\sum_{i=t-k+1}^{t}\mathbb{P}((E_j)_{j \in \Delta},(E^c_u)_{u \in [t] \setminus \Delta}, \Delta \subseteq [t],|\Delta|=i|\Phi_{\Omega})\\\nonumber
& \leq \sum_{i=t-k+1}^{t}\mathbb{P}((E_j)_{j \in \Delta}, \Delta \subseteq [t], |\Delta|=i|\Phi_{\Omega})\\\label{examplenum2tt1}
&\overset{(a)} \leq\sum_{i=t-k+1}^{t} {t\choose i} (4k \cdot n^{1-c})^i,
\end{align}
where $(a)$ follows from (\ref{examplenum2_t1}) and $\rho(\hat S)\leq \eta \leq 0.5$. 
Therefore, the following equality is obtained by applying (\ref{examplenum2tt1}) to (\ref{defpe}).
\begin{align}\label{examplenum2tt222}
\mathbb{P}_{s} \geq 1-\sum_{i=t-k+1}^{t} {t\choose i} (4k \cdot n^{1-c})^i
\end{align}
Setting $c$ as $\frac{- \ln (4k n /\epsilon)}{\ln \, n}$ in (\ref{examplenum2tt222}) yields 
\begin{align}\nonumber
\mathbb{P}_{s} \geq 1- \sum_{i=t-k+1}^{t} {t\choose i} \epsilon^i.
\end{align}

\section{Proof of Theorem \ref{sceosmp1}}
It suffices to show that the following two statements hold.
\begin{itemize}
\item Suppose that $\Omega \subseteq {\Omega_c}$. Then ESMS$_1$(${\Omega_c},k$) yields $\Omega$ as its output $Q$ if (\ref{sse_eval_eq00}) holds.
\item Suppose that $\Omega \subseteq {\Omega_c}$. Then ESMS$_2$(${\Omega_c},\kappa$) yields $\Omega$ as its output $Q$ if (\ref{sse_eval_eq001}) holds.
\end{itemize}

First, it will be proven that (\ref{sse_eval_eq00}) is a sufficient condtion for ESMS$_1$(${\Omega_c},k$) to recover $\Omega$. Without loss of generality, we assume that $\sigma_{m}(\Phi_{\Omega_c})>0$.
Since 
\begin{align}\nonumber
\frac{\left \| W^* \right \|_{2,\infty}}{\sigma_{m}(\Phi_{\Omega_c})}&=\left \| \Phi^{\dagger}_{\Omega_c}\right \|_2 \cdot \max\limits_{c \in [m]}\left \| W^{\{c\}}\right \|_2 \\\label{sse_eval_a1}
&\geq \max\limits_{c \in [m]}\left \| (\Phi^{\dagger}_{\Omega_c}W)^{\{c\}}\right \|_2,
\end{align}
(\ref{sse_eval_eq00}) implies that
\begin{align}\label{sse_eval_p6}
\min\limits_{a \in \Omega}   \left \| {X_0}^{\{a\}} \right \|_2 > 2 \max\limits_{c \in [m]}\left \| (\Phi^{\dagger}_{\Omega_c}W)^{\{c\}}\right \|_2.
\end{align}
From the triangle inequality, we get
\begin{align}\label{sse_eval_p6_11}
&\min\limits_{a \in \Omega}   \left \| {X_0}^{\{a\}} \right \|_2-\max\limits_{b \in [m]}\left \| (\Phi^{\dagger}_{\Omega_c}W)^{\{b\}}\right \|_2\leq \min\limits_{a \in \Omega}   \left \| ({ X_0}+ \Phi^{\dagger}_{\Omega_c}W)^{\{a\}}\right \|_2. 
\end{align}
Then (\ref{sse_eval_p6}) guarantees 
\begin{align}\label{sse_eval_p62}
\min\limits_{a \in \Omega}   \left \| ({ X_0}+ \Phi^{\dagger}_{\Omega_c}W)^{\{a\}}\right \|_2 > \max\limits_{b \in [m]}\left \| (\Phi^{\dagger}_{\Omega_c}W)^{\{b\}}\right \|_2.
\end{align}
Finally, from Lemma \ref{sceosmp0} with $\bar Y=\Phi_{\Omega}X^{\Omega}_0$ and $\sigma_{|{\Omega_c}|}(\Phi_{\Omega_c})>0$, (\ref{sse_eval_p62}) implies  
\begin{align}\label{sse_eval_p7temp}
\min\limits_{a \in \Omega}   \left \| {(\Phi^{\dagger}_{\Omega_c} Y)}^{\{a\}} \right \|_2 > \max\limits_{b \notin \Omega}   \left \| {(\Phi^{\dagger}_{\Omega_c} Y)}^{\{b\}} \right \|_2.
\end{align}
That is, (\ref{sse_eval_eq00}) is a sufficient condition for (\ref{sse_eval_p7temp}) which guarantees that ESMS$_{1}$(${\Omega_c},k$) recovers $\Omega$.

Next, it will be proven that (\ref{sse_eval_eq001}) is a sufficient condtion for ESMS$_2$(${\Omega_c},\kappa$) to recover $\Omega$. 
From (\ref{sse_eval_a1}), (\ref{sse_eval_p6_11}), and (\ref{sse_eval_p62}), (\ref{sse_eval_eq001}) implies
\begin{align}\label{sse_eval_p9}
\min\limits_{a \in \Omega}   \left \| ({ X_0}+ \Phi^{\dagger}_{\Omega_c}W)^{\{a\}}\right \|_2 \geq \kappa > \max\limits_{b \in [m]}\left \| (\Phi^{\dagger}_{\Omega_c}W)^{\{b\}}\right \|_2.
\end{align}
From Lemma \ref{sceosmp0} with $\bar Y=\Phi_{\Omega}X_0$, (\ref{sse_eval_p9}) implies 
\begin{align}\label{sse_eval_p7}
\min\limits_{a \in \Omega}   \left \| {(\Phi^{\dagger}_{\Omega_c} Y)}^{\{a\}} \right \|_2 \geq \kappa > \max\limits_{b \notin \Omega}   \left \| {(\Phi^{\dagger}_{\Omega_c} Y)}^{\{b\}} \right \|_2.  
\end{align}
That is, (\ref{sse_eval_eq001}) is a sufficient condition for (\ref{sse_eval_p7}), which guarantees that ESMS$_{2}$(${\Omega_c},\kappa$) recovers $\Omega$. 

\section{Proof of Corollary \ref{sceosmp2}}
From Lemma \ref{thm_lot}, it follows that for any $a \in [m]$ and $t>0$, 
\begin{align}\label{gauseval1}
&\mathbb{P} \left(\left \| W^{\{a\}}\right \|_2 \geq \sigma(\sqrt{m}+1+t)\right) \leq \exp \left(-\frac{t^2}{2}\right).
\end{align}
By applying the union bound to (\ref{gauseval1}), it follows that
\begin{align}\label{gauseval2}
&\mathbb{P} \left(\max\limits_{a \in [m]} \left \| W^{\{a\}}\right \|_2< \sigma(\sqrt{m}+1+t)\right) > 1-m \cdot \exp \left(-\frac{t^2}{2}\right).
\end{align}
Set $t$ as $c(\kappa,\sigma,X_0,\Omega_c)$ defined as
\begin{align}\label{gauseval3}
c(s,\sigma,X,J):=\frac{s \cdot \sigma_{|J|}(\Phi_J)}{2\sigma}-\sqrt{|J|}-1.
\end{align}
Then 
\begin{align}\label{gauseval4}
&\mathbb{P} \left(\frac{\left \| W^* \right \|_{2,\infty}}{\sigma_{m}(\Phi_{\Omega_c})} < \kappa \right) > 1-m \cdot \exp \left(-\frac{c(\kappa,\sigma,X_0,\Omega_c)^2}{2} \right).
\end{align}
Applying (\ref{sse_eval_eq_cor1}) and (\ref{gauseval4}) to  (\ref{sse_eval_eq00}) or (\ref{sse_eval_eq001}) shows that a probability satisfying (\ref{sse_eval_eq00}) or (\ref{sse_eval_eq001}) is more than $ 1-m \cdot \exp(-\frac{c(\kappa,\sigma,X_0,\Omega_c)^2}{2})$ so that the proof is completed by Theorem \ref{sceosmp1}.

\section{Performance guarantee} 

\subsection{Performance analysis with LCP}

\begin{thm}\label{new3prp_3_sp}
Let $\eta$ be a constant such that $\rho(\hat S)\leq \eta \leq 0.5$ with an $r$-dimensional space $\hat S$. Suppose that $X_0^{\Omega}$ is row-nondegenerate. 
Then, given $\Gamma$ such that $|\Omega \cap \Gamma| \leq k-r$, submp($\hat S,\Gamma,1$) produces an index in $\Omega \setminus \Gamma$ if $\sigma_{|\Omega \cup \Gamma|}(\Phi_{\Omega \cup \Gamma})>0$ and 
\begin{align}\label{ex1_new3prp_3_1_sp}
\max\limits_{i \in \Sigma \setminus ( \Omega \cup \Gamma)}\mu(\Omega \cup \{i\},\Gamma)   <\frac{1-f(\eta,k,r)}{k},
\end{align}
where
\begin{align}\nonumber
f(\eta,k,r)&\,:=\min\{f_1(\eta,k,r),f_2(\eta,k,r)\}\\\nonumber
f_1(\eta,k,r)&\,:=\left[(\frac{k}{k+r})(2\eta \sqrt{\frac{r}{k}}+\sqrt{\frac{k+r}{k}-4\eta^2})\right]^2  \\\nonumber
f_2(\eta,k,r)&\,:=\frac{1}{[\sqrt{{\frac{k}{r}}(\eta^2)+2}-\sqrt{\frac{k}{r}}\eta]^2}.
\end{align}
\end{thm}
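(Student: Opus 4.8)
The plan is to recast a single greedy step as a subspace-projection comparison, then separate correct from incorrect atoms using the local coherence $\mu$, with the noise entering only through a perturbation of the working subspace.

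First I would simplify the selection score. Writing $\dot S := \mathcal{R}(P^{\perp}_{\mathcal{R}(\Phi_\Gamma)}\hat S)$, note that $\dot S \subseteq \mathcal{R}(\Phi_\Gamma)^{\perp}$, so $P_{\dot S} = P_{\dot S}P^{\perp}_{\mathcal{R}(\Phi_\Gamma)}$ and the ratio maximized in submp equals $\|P_{\dot S}\dot\phi_l\|_2$, the cosine of the angle between the residual atom $\dot\phi_l$ and $\dot S$. Hence it suffices to prove
\[
\max_{a \in \Omega\setminus\Gamma}\|P_{\dot S}\dot\phi_a\|_2 \;>\; \max_{i \in \Sigma\setminus(\Omega\cup\Gamma)}\|P_{\dot S}\dot\phi_i\|_2 ,
\]
since then the argmax index lies in $\Omega\setminus\Gamma$. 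I would introduce the true projected signal subspace $\tilde S := P^{\perp}_{\mathcal{R}(\Phi_\Gamma)}\,\mathcal{R}(\Phi_\Omega X_0^\Omega)$ and carry out the comparison for $\tilde S$ first, deferring the replacement of $\tilde S$ by the estimated $\dot S$ to the noise step.

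Second, I would establish the structural facts that make the step ``rank aware.'' Because $\sigma_{|\Omega\cup\Gamma|}(\Phi_{\Omega\cup\Gamma})>0$, the map $P^{\perp}_{\mathcal{R}(\Phi_\Gamma)}$ is injective on $\mathcal{R}(\Phi_{\Omega\setminus\Gamma})$, so $\dot\Phi_{\Omega\setminus\Gamma}$ has full column rank and $\tilde S = \mathcal{R}(P^{\perp}_{\mathcal{R}(\Phi_\Gamma)}\Phi_{\Omega\setminus\Gamma}X_0^{\Omega\setminus\Gamma})$. Since $|\Omega\cap\Gamma|\le k-r$ gives $|\Omega\setminus\Gamma|\ge r$, row-nondegeneracy of $X_0^\Omega$ forces $\rank(X_0^{\Omega\setminus\Gamma})=r$, whence $\dim\tilde S = r$ and $\tilde S\subseteq\mathcal{R}(\dot\Phi_{\Omega\setminus\Gamma})$. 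For the correct atoms I would then use a trace/averaging identity,
\[
\max_{a\in\Omega\setminus\Gamma}\|P_{\tilde S}\dot\phi_a\|_2^2 \;\ge\; \frac{1}{|\Omega\setminus\Gamma|}\operatorname{tr}\!\big(P_{\tilde S}\dot\Phi_{\Omega\setminus\Gamma}\dot\Phi_{\Omega\setminus\Gamma}^{*}\big)\;\ge\; \frac{r}{|\Omega\setminus\Gamma|}\,\sigma^2_{\min}(\dot\Phi_{\Omega\setminus\Gamma}),
\]
the last step using $\tilde S\subseteq\mathcal{R}(\dot\Phi_{\Omega\setminus\Gamma})$ with $\dim\tilde S=r$; this is the source of the $\sqrt{r/k}$ factor. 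The Gram matrix $\dot\Phi_{\Omega\setminus\Gamma}^{*}\dot\Phi_{\Omega\setminus\Gamma}=I+E$ has unit diagonal and off-diagonal entries bounded by $\mu(\Omega\cup\{i\},\Gamma)$, so by Gershgorin $\sigma^2_{\min}\ge 1-(k-1)\mu$. For the incorrect atoms, $\tilde S\subseteq\mathcal{R}(\dot\Phi_{\Omega\setminus\Gamma})$ gives $\|P_{\tilde S}\dot\phi_i\|_2\le\|P_{\mathcal{R}(\dot\Phi_{\Omega\setminus\Gamma})}\dot\phi_i\|_2$, which I would bound through $\dot\Phi_{\Omega\setminus\Gamma}^{*}\dot\phi_i$ (entries $\le\mu$) together with $\|(I+E)^{-1}\|\le(1-(k-1)\mu)^{-1}$.

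Third, and where I expect the real work, I would account for the noise by replacing $\tilde S$ with the estimated $\dot S$. Since $\dim\mathcal{R}(\Phi_\Omega X_0^\Omega)=r=\dim\hat S$, the defining minimization gives $\rho(\hat S)=\|P_{\hat S}-P_{\mathcal{R}(\Phi_\Omega X_0^\Omega)}\|\le\eta$; I then need to propagate this into a bound on $\|P_{\dot S}-P_{\tilde S}\|$ after the complementary projection $P^{\perp}_{\mathcal{R}(\Phi_\Gamma)}$ and renormalization, and feed it into both sides via $\big|\,\|P_{\dot S}v\|-\|P_{\tilde S}v\|\,\big|\le\|P_{\dot S}-P_{\tilde S}\|$. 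The main obstacle is precisely the control of this amplification under the projection, and the tight packaging of the perturbed correct-atom lower bound: a crude triangle inequality loses constants, and I expect two competing refinements — a direct angle-addition estimate versus a Pythagorean/worst-case estimate — to be exactly what produce $f_1$ and $f_2$, with $f=\min\{f_1,f_2\}$. Combining the perturbed lower bound on the best correct score with the perturbed upper bound on the best incorrect score and solving the resulting inequality for $\mu$ yields the threshold $\mu<\frac{1-f(\eta,k,r)}{k}$; setting $\eta=0$ collapses $f$ to $\tfrac12$ (for $r\le k$) and recovers the clean noiseless bound $\mu<\frac{1}{2k}$ as a consistency check.
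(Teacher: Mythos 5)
Your skeleton matches the paper's in its broad strokes: reduce the submp selection rule to comparing $\left\| P_{\dot S}\dot\phi_a \right\|_2$ over $a\in\Omega\setminus\Gamma$ against $\left\| P_{\dot S}\dot\phi_i \right\|_2$ over $i\notin\Omega\cup\Gamma$; obtain the correct-atom lower bound $\sqrt{r/k}\,\sigma_{\min}(\dot\Phi_{\Omega\setminus\Gamma})$ by trace averaging together with the rank facts from row-nondegeneracy (the paper's Lemmas \ref{rankbound_rownond} and \ref{lem_tnlb}); and convert coherence to singular values by a Gershgorin-type bound (Lemma \ref{gdt}). But the step you explicitly defer --- ``control of this amplification under the projection'' --- is exactly where the theorem's noise robustness lives, and it is a genuine gap: you need the fact that the projector distance does \emph{not} grow under complementary projection, i.e., $\left\| P_{\mathcal{R}(P^{\perp}_{\mathcal{R}(C)}A)}-P_{\mathcal{R}(P^{\perp}_{\mathcal{R}(C)}B)}\right\|_2 \le \left\| P_{\mathcal{R}(A)}-P_{\mathcal{R}(B)}\right\|_2$, which the paper proves as Lemma \ref{proj_ineq} via the sup--inf characterization of projector distance and the update rule $P_{\mathcal{R}([V,C])}=P_{\mathcal{R}(P^{\perp}_{\mathcal{R}(C)}V)}+P_{\mathcal{R}(C)}$. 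With this lemma there is no amplification and no renormalization issue at all: since each $\dot\phi_l$ is a unit vector, every score moves by at most $\eta$, giving exactly the $2\eta$ slack in the comparisons. A further (fixable) imprecision: you identify $\rho(\hat S)$ with $\| P_{\hat S}-P_{\mathcal{R}(\Phi_\Omega X_0^\Omega)}\|$, which presumes $\rank(\Phi_\Omega X_0^\Omega)=r$; in general one must take $\bar S$ to be an $r$-dimensional subspace of the signal space attaining the minimum in the definition of $\rho$ (if the signal space has dimension larger than $r$, the projector distance to the full space equals one and your identification fails).

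The second gap is quantitative and cannot be waved away, because the theorem asserts the \emph{specific} threshold $f=\min\{f_1,f_2\}$: you never derive either function, only conjecture that ``two competing refinements'' produce them. In the paper they come from two genuinely different arguments, both of which must be executed. The $f_1$ branch is the direct singular-value comparison $\sqrt{r/k}\,\alpha_1-\sqrt{1-\beta_1^2}-2\eta>0$ (Proposition \ref{thm_g3_3}), whose wrong-atom bound is $\| P_{\bar S}\dot\phi_i\|_2\le\sqrt{1-\sigma^2_{|\Omega\setminus\Gamma|+1}(\dot\Phi_{(\{i\}\cup\Omega)\setminus\Gamma})}$ (Lemma \ref{lem_2ntnub}), not a Gram-inverse bound. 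The $f_2$ branch is a Tropp-style exact-recovery-condition argument: $\max_i \| \dot\Phi^{\dagger}_{\Omega\setminus\Gamma}\dot\phi_i\|_1 < 1-\sqrt{k/r}\cdot 2\eta/\sigma_{\min}(\dot\Phi_{\Omega\setminus\Gamma})$ (Proposition \ref{erc_won1}), fed by the Neumann-series bound of Lemma \ref{piub} and by Lemma \ref{lem_ntnub}, which converts the $\ell_1$ pseudo-inverse quantity into a multiplicative comparison of scores. Your single hybrid estimate (trace lower bound for correct atoms, $\ell_2$ Gram-inverse upper bound for wrong atoms) is a legitimate third route and plausibly yields a comparable constant at $\eta=0$, but it produces a \emph{different} threshold function; without carrying out both of the paper's branches, or else proving that your threshold dominates $\min\{f_1,f_2\}$ uniformly over all admissible $(\eta,k,r)$ with $\eta\le 0.5$, the statement with these particular $f_1,f_2$ is not established. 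Your consistency check at $\eta=0$ is correct ($f_2=\tfrac12$ is the active branch for $r\le k$), but note that this very fact shows the ERC branch is the binding one in the main regime of interest, so it cannot be left as a conjecture.
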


\begin{proof}[Proof of Theorem \ref{new3prp_3_sp}]
It suffices to show that any of the following two conditions with $\sigma_{|\Omega \cup \Gamma|}(\Phi_{\Omega \cup \Gamma})>0$ is sufficient to ensure that submp($\hat S,\Gamma,1$) produces an index in $\Omega \setminus \Gamma$.
\begin{align}\label{ex1_new3prp_3_1_sp}
\max\limits_{i \in \Sigma \setminus ( \Omega \cup \Gamma)}\mu(\Omega \cup \{i\},\Gamma)   &<\frac{1-f_1(\eta,k,r)}{k}\\\label{thm_g2_giveno}
\max\limits_{i \in \Sigma \setminus ( \Omega \cup \Gamma)}\mu(\Omega \cup \{i\},\Gamma)&<\frac{1-f_2(\eta,k,r)}{k}
\end{align}

First, it will be shown that (\ref{ex1_new3prp_3_1_sp}) is a sufficient condtion that submp($\hat S,\Gamma,1$) produces an index in $\Omega \setminus \Gamma$. We assume that 
\begin{align}\label{g2_asp2_new1_1_1}
\sigma_{|\Omega \cup \Gamma|}(\Phi_{\Omega \cup \Gamma})>0.
\end{align}
Let $f_1(\eta,k,r)$ denote the unique solution $x$ of $s_{1}(\sqrt{x},\sqrt{x},\eta,k,r)=0$, where
\begin{align}\nonumber
&s_1(x_1,...,x_5):=\sqrt{\frac{x_5}{x_4}}x_1 - \sqrt{1-x_2^2}-2x_3.
\end{align}
The following parameters are defined:
\begin{align}\nonumber
&h_1:=\max\limits_{i \in \Sigma \setminus ( \Omega \cup \Gamma)}\mu(\Omega \cup \{i\},\Gamma) \\\nonumber
&h_2:=\underset{i \in \Sigma \setminus ({\Omega} \cup \Gamma)}{\max}\, \underset{a \in S:=(\{i\} \cup {\Omega} ) \setminus \Gamma}{\max}\,  \sum\limits_{b \in S, a \neq b}|\left \langle \dot\phi_{a},\dot\phi_{b}\right \rangle|\\\nonumber
&h_3:= \underset{i \in \Sigma \setminus ({\Omega} \cup \Gamma)}{\min}\,\sigma^2_{|\Omega \setminus \Gamma|+1}(\dot\Phi_{(\{i\} \cup {\Omega} ) \setminus \Gamma}) \\\nonumber
&h_4:=\sigma^2_{|\Omega \setminus \Gamma|}(\dot\Phi_{{\Omega} \setminus \Gamma})
\end{align}
Since $h_2 \leq k \cdot h_1$,
\begin{align}\nonumber
h_1<\frac{1-f_1(\eta,k,r)}{k}
\end{align}
implies 
\begin{align}\label{g3pn_2}
h_2<1-f_1(\eta,k,r).
\end{align}
From Lemma \ref{gdt}, (\ref{g3pn_2}) implies
\begin{align}\label{g3pn_2_1}
h_3>f_1(\eta,k,r).
\end{align}
Since $s_{1}(x_1,...,x_5)$ is monotonically non-decreasing for $(x_1,x_2)$ and $ \min\{h_3,h_4\} = h_3$, (\ref{g3pn_2_1}) implies 
\begin{align}\label{g3pn_4}
s_{1}(\sqrt{h_3},\sqrt{h_4},\eta,k,r)>0.
\end{align}
From (\ref{g2_asp2_new1_1_1}), (\ref{g3pn_4}) and Proposition \ref{thm_g3_3}, it is guaranteed that (\ref{ex1_new3prp_3_1_sp}) is a sufficient condition for submp($\hat S,\Gamma,1$) to recover a true index.

Next, it will be shown that (\ref{thm_g2_giveno}) is a sufficient condtion that submp($\hat S,\Gamma,1$) produces an index in $\Omega \setminus \Gamma$. The condition $\rho(\hat S)\leq \eta$ implies that there exists an $r$-dimensional subspace of $\mathcal{R}(\Phi_{\Omega}X_0^{\Omega})$, denoted by $\bar{S}$, satisfying ${\left \| P_{ \bar S}- P_{ \hat S} \right \|_2}\leq \eta$. Set a constant $d:=\mathcal{R}(P^{\perp}_{\mathcal{R}(\Phi_{\Gamma})} \bar S)$. 
Let $f_2(\eta,k,r)$ denote $1-x^*$ where $x^*$ is the unique solution $x$ of $s_2( x, x,\eta,k,r)=0$, where
\begin{align}\nonumber
s_2(x_1,...,x_5)&:=1-\frac{ x_1}{1-x_2}- \sqrt{\frac{x_4}{x_5}}\frac{2x_3}{\sqrt{1-x_2}}.
\end{align}
Then (\ref{thm_g2_giveno}) implies 
\begin{align}\label{n_g1new_gi4_11}
s_2( \alpha_1, \alpha_1,\eta,k,r)>0,
\end{align}
where 
\begin{align}\nonumber
\alpha_1&:= k \cdot \max\limits_{i \in \Sigma \setminus ( \Omega \cup \Gamma)}\mu(\Omega \cup \{i\},\Gamma),
\end{align}
since $\alpha_1= k \cdot h_1$ and $s_2(x_1,...,x_5)$ in (\ref{n_g1new_gi4_11}) is monotonically non-increasing for $(x_1,x_2,x_4)$. 
Note that $d$ is equal to $r$ from Lemma \ref{rankbound_rownond} with (\ref{g2_asp2_new1_1_1}).
From the above results and conditions $\alpha_1 \geq \alpha_{2}$ and $\alpha_1 \geq \beta_{2}$, (\ref{n_g1new_gi4_11}) implies
\begin{align}\label{g2_gw1}
s_2( \alpha_{2}, \beta_{2},\eta,|\Omega \setminus \Gamma|,d)>0,
\end{align}
where
\begin{align}\nonumber
\alpha_{2}&:=  { \underset{a \in \Sigma-\Omega-\Gamma}{\max} \left(\underset{b \in \Omega \setminus \Gamma}{\sum} \left| \left \langle \dot\phi_{a},\dot\phi_{b} \right \rangle \right| \right)} \\\nonumber
\beta_{2}&:= {\max\limits_{a \in S=\Omega \setminus \Gamma} \left( \sum\limits_{b \in S,  \textup{ s.t. } b \neq a} \left |\left \langle \dot\phi_{a},\dot\phi_{b} \right \rangle \right| \right)}.
\end{align}
(\ref{g2_gw1}) implies 
\begin{align}\label{g2_gw4}
\frac{\alpha_{2}}{1-\beta_{2}}<1- \sqrt{\frac{|\Omega \setminus \Gamma|}{d}}\frac{2\eta}{\sigma_{|\Omega \setminus \Gamma|}(\hat{\Phi}_{\Omega \setminus \Gamma})},
\end{align}
since the following inequality holds by Lemma \ref{gdt} and $\beta_2 \leq 1$
\begin{align}\nonumber
&\sqrt{1-\beta_{2}} \leq \sigma_{|\Omega \setminus \Gamma|}(\dot{\Phi}_{\Omega \setminus \Gamma}).
\end{align}
By Lemma \ref{piub} with $\beta_2 \leq 1$, it follows that  
\begin{align}\label{g2_gw5}
&\underset{i \in \Sigma - \Omega -\Gamma}{\max}\left \| \dot\Phi^{\dagger}_{\Omega \setminus \Gamma}\dot\phi_{i} \right \|_1 \leq \frac{\alpha_{2}}{1-\beta_{2}}.
\end{align}
By applying (\ref{g2_gw5}) to (\ref{g2_gw4}), (\ref{g2_gw4}) implies 
\begin{align}\label{g2_gw6}
&\underset{i \in \Sigma - \Omega -\Gamma}{\max}\left \| \dot\Phi^{\dagger}_{\Omega \setminus \Gamma}\dot\phi_{i} \right \|_1 <1- \sqrt{\frac{|\Omega \setminus \Gamma|}{d}}\frac{2\eta}{\sigma_{|\Omega \setminus \Gamma|}(\dot{\Phi}_{\Omega \setminus \Gamma})}.
\end{align}
Therefore, by Proposition \ref{erc_won1} with (\ref{g2_asp2_new1_1_1}) and (\ref{g2_gw6}), (\ref{thm_g2_giveno}) is guaranteed as another sufficient condition for submp($\hat S,\Gamma,1$) to recover the true index.
\end{proof}

\begin{thm}\label{propmc_st3_sp}
Let $\eta$ be a constant such that $\rho(\hat S)\leq \eta \leq 0.5$ with an $r$-dimensional space $\hat S$. Suppose that $X_0^{\Omega}$ is row-nondegenerate, and $\sigma_{|\Omega\cup \Gamma|}(\Phi_{\Omega\cup \Gamma})>0$. Then, given $\Gamma$ such that $|\Omega \cap \Gamma| \geq k-r$, submp($\hat S,\Gamma,1$) produces an index in $\Omega \setminus \Gamma$ if 
\begin{align}\label{ex_msc1_sp}
\max\limits_{i \in \Sigma \setminus ( \Omega \cup \Gamma)}\mu(\Omega \cup \{i\},\Gamma) &<\frac{1-4\eta(1-\eta)}{|\Omega \setminus \Gamma|}.
\end{align}
\end{thm}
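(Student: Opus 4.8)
The plan is to mirror the first half of the proof of Theorem \ref{new3prp_3_sp}, taking advantage of the fact that the hypothesis $|\Omega \cap \Gamma| \geq k-r$ forces $|\Omega \setminus \Gamma| \leq r$, i.e.\ at most $r$ true indices remain unselected. This is the favorable ``MUSIC regime'' in which the projected signal subspace has just enough dimension to pin down the remaining atoms, and it is exactly why the separating threshold collapses from $f(\eta,k,r)$ down to the clean quantity $4\eta(1-\eta)$. The target is to show that the single step submp($\hat S,\Gamma,1$) ranks some index of $\Omega \setminus \Gamma$ strictly above every index of $\Sigma \setminus (\Omega \cup \Gamma)$, which is the assertion.

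First I would convert the coherence hypothesis (\ref{ex_msc1_sp}) into a singular-value lower bound. Put $h_1 := \max_{i \in \Sigma \setminus (\Omega \cup \Gamma)} \mu(\Omega \cup \{i\},\Gamma)$ and $h_3 := \min_{i \in \Sigma \setminus (\Omega \cup \Gamma)} \sigma^2_{|\Omega \setminus \Gamma|+1}(\dot\Phi_{(\{i\} \cup \Omega) \setminus \Gamma})$, exactly as in the proof of Theorem \ref{new3prp_3_sp}. For each false $i$, the Gram matrix of the $|\Omega \setminus \Gamma|+1$ columns $\dot\phi_a$, $a \in (\{i\}\cup\Omega)\setminus\Gamma$, has unit diagonal and off-diagonal entries bounded in modulus by $h_1$, so every row carries exactly $|\Omega \setminus \Gamma|$ off-diagonal terms. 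Hence the relevant row sum is at most $|\Omega \setminus \Gamma|\cdot h_1$; this is sharper than the bound $k\cdot h_1$ used in Theorem \ref{new3prp_3_sp}, and it is precisely what lets the denominator in (\ref{ex_msc1_sp}) be $|\Omega \setminus \Gamma|$ rather than $k$. Passing (\ref{ex_msc1_sp}) through this estimate gives a row sum strictly below $1-4\eta(1-\eta)$, whereupon Lemma \ref{gdt} (the Gershgorin-type bound) yields $h_3 > 4\eta(1-\eta)$.

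Second, I would invoke the MUSIC-type selection guarantee, namely the late-stage analogue of Proposition \ref{thm_g3_3} in which the separating function $s_1$ is replaced by $q(x_1,\dots,x_5) := x_1 x_2 - g(x_3)$ with $g(y) := 4y(1-y)$, the substitution recorded in the proof of Theorem \ref{propmc_st3}. Here the dimension count is decisive: since $|\Omega \setminus \Gamma| \leq r$, the row-nondegeneracy of $X_0^{\Omega}$ together with $\sigma_{|\Omega \cup \Gamma|}(\Phi_{\Omega \cup \Gamma})>0$ gives, via Lemma \ref{rankbound_rownond}, that the projected signal subspace $P^{\perp}_{\mathcal{R}(\Phi_\Gamma)}\bar S$ has dimension exactly $d=|\Omega \setminus \Gamma|$ (contrast $d=r$ in the early-stage case), so the remaining true atoms span it. In this configuration the selection score of a true index is depressed from its noiseless value $1$ only by an amount controlled by $\eta$, while the score of any false $i$ is governed by $h_3$; the bound $h_3 > 4\eta(1-\eta)=g(\eta)$ is exactly where $q(\sqrt{h_3},\sqrt{h_3},\eta,\cdot,\cdot)$ turns positive and the two scores separate. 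Combined with $\rho(\hat S)\leq\eta$, this forces submp($\hat S,\Gamma,1$) to select an index of $\Omega \setminus \Gamma$.

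The hard part will be the perturbation analysis buried inside that MUSIC-type selection statement. One must track how the $\eta$-level discrepancy between $\hat S$ and $\bar S$ propagates through the renormalized, $\mathcal{R}(\Phi_\Gamma)^{\perp}$-projected criterion $\|P_{\mathcal{R}(P^{\perp}_{\mathcal{R}(\Phi_\Gamma)}\hat S)}\dot\phi_l\|_2$, and verify that the worst-case depression of the true-index score and inflation of the false-index score still leave a strict gap precisely when $h_3 > 4\eta(1-\eta)$. Establishing that the critical threshold is $g(\eta)$, with no residual dependence on $k$ or $r$ once $|\Omega\setminus\Gamma|\le r$ and $d=|\Omega\setminus\Gamma|$, is the delicate quantitative step; by comparison the Gershgorin conversion and the dimension count are routine.
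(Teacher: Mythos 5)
Your proposal is correct and follows essentially the same route as the paper: the Gershgorin conversion (row sum bounded by $|\Omega\setminus\Gamma|\cdot h_1$, then Lemma \ref{gdt}) is exactly the paper's first step, and your second step reconstructs precisely the content of Proposition \ref{thm1n2} — which the paper simply cites — including the key dimension count $d=|\Omega\setminus\Gamma|$ (the paper packages this as Lemmas \ref{lem_tnlb2} and \ref{lem_2ntnub}) and the score separation $1-\eta > \sqrt{1-h_3}+\eta$ that yields the threshold $4\eta(1-\eta)$ for $\eta\leq 0.5$.
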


\begin{proof}[Proof of Theorem \ref{propmc_st3_sp}]
Define 
\begin{align}\nonumber
&h_1:= \max\limits_{i \in \Sigma \setminus ( \Omega \cup \Gamma)}\mu(\Omega \cup \{i\},\Gamma)  \\\nonumber
&h_2:=\underset{i \in \Sigma \setminus ({\Omega} \cup \Gamma) }{\max} \, \underset{a \in S:={\Omega} \cup \{i\} \setminus \Gamma }{\max}\, \left (\sum_{b \in S, b\neq a}\left|\left \langle \dot\phi_{a}, \dot\phi_{b} \right \rangle \right| \right).
\end{align}
Since $h_2 \leq |\Omega \setminus \Gamma| \cdot h_1$,
\begin{align}\nonumber
h_1<\frac{1-4\eta(1-\eta)}{|\Omega \setminus \Gamma|}
\end{align}
implies 
\begin{align}\label{propmc_u1_sp}
h_2<1-4\eta(1-\eta).
\end{align}
By Lemma \ref{gdt}, (\ref{propmc_u1_sp}) implies 
\begin{align}\label{propmc_u3_sp}
\underset{i \in \Sigma \setminus ({\Omega} \cup \Gamma) }{\min}\sigma^2_{|\Omega \setminus \Gamma|+1} (\dot\Phi_{{\Omega} \cup \{i\} \setminus \Gamma } )>4\eta(1-\eta).
\end{align}
Then, by Proposition \ref{thm1n2} with (\ref{propmc_u3_sp}), it is guaranteed that (\ref{thm_g2_giveno}) is a sufficient condition that submp($\hat S,\Gamma,1$) produces an index in $\Omega \setminus \Gamma$. 
\end{proof}

\subsection{Performance analysis with singular value}

\begin{thm}\label{new3prp_3}
Let $\eta$ be a constant such that $\rho(\hat S)\leq \eta \leq 0.5$ is satisfied. Suppose that $X_0^{\Omega}$ is row-nondegenerate. 
Then submp($\hat S,\o,k-r$) produces a set of $k-r$ indices in $\Omega$ if $\sigma_{k}(\Phi_{\Omega})>0$ and any of the following conditions are satisfied:
\begin{align}
\label{ex1_new3prp_3_3}
&a_1(0)<\frac{1-f_1(\eta,k,r)}{1+f_1(\eta,k,r)} \\\label{ex1_new3prp_3_2}
&a_2(0)> f_1(\eta,k,r) \\\label{ex1_new3prp_3_2temp1}
&a_3(0)> f_1(\eta,k,r),
\end{align}
where
\begin{align}\nonumber
f_1(\eta,k,r)&\,:= \left[(\frac{k}{k+r})(2\eta \sqrt{\frac{r}{k}}+\sqrt{\frac{k+r}{k}-4\eta^2})\right]^2  \\\nonumber
a_1(x)&:= {\delta_{k}(\Phi_{\Omega};x+1)}\\\nonumber
a_2(x)&:= \underset{\Delta_{x} \subseteq \Sigma \setminus \Omega}{\min} \left[ \frac{\underset{i \in \Sigma \setminus \Omega \cup \Delta_{x}}{\min} \, \sigma^2_{k+x+1}(\Phi_{\Omega \cup \Delta_{x} \cup \{i\}  })}{\underset{j \in \Sigma \setminus \Omega \cup \Delta_{x}}{\max} \,\sigma^2_{1}(\Phi_{\Omega \cup \Delta_{x} \cup \{j\}  })} \right] \\ \nonumber
a_3(x)&:= \frac{\underset{\Delta_{(x+1)} \subseteq \Sigma \setminus \Omega}{\min}\, \sigma^2_{k+x+1}(\Phi_{\Omega \cup \Delta_{x+1}})}{\left \| \phi^{\max}_{\Sigma }\right \|^2_2}.
\end{align}
\end{thm}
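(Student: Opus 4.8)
The plan is to reduce the $(k-r)$-step claim to a single-step greedy guarantee applied inductively, and then to show that each of the three hypotheses pins the relevant projected-normalized singular values above $\sqrt{f_1(\eta,k,r)}$. First I would argue inductively over the iterations of submp: assuming the first $j$ selected indices ($0\le j\le k-r-1$) all lie in $\Omega$, the running index set $\Gamma$ obeys $\Gamma\subseteq\Omega$ and $|\Gamma|=j<k-r$, so $\Omega\cup\Gamma=\Omega$ and $\sigma_{|\Omega\cup\Gamma|}(\Phi_{\Omega\cup\Gamma})=\sigma_k(\Phi_\Omega)>0$ by hypothesis. Thus the nondegeneracy prerequisite of the per-step selection result (Proposition \ref{thm_g3_3}, assembled over $k-r$ steps by Corollary \ref{thm_g3_3_cor}) holds, and the only thing left to verify is its selection inequality $s_1(\alpha,\beta,\eta,k,r)>0$ in the notation of Theorem \ref{cortm2}, where $\alpha$ and $\beta$ are the worst-case (over admissible $\Gamma\subseteq\Omega$, $|\Gamma|<k-r$, and $i\in\Sigma\setminus(\Omega\cup\Gamma)$) smallest singular values of $\dot\Phi_{\Omega\setminus\Gamma}$ and $\dot\Phi_{(\{i\}\cup\Omega)\setminus\Gamma}$.

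Next I would collapse the three alternatives to condition (c). Writing $a_2(0)$ for the WRIP ratio $\kappa(\Omega)$ of order $(k,1)$, the definition of the WRIP constant gives $a_1(0)=\delta_k(\Phi_\Omega;1)=(1-a_2(0))/(1+a_2(0))$; since $t\mapsto(1-t)/(1+t)$ is strictly decreasing, (a) is exactly equivalent to $a_2(0)>f_1(\eta,k,r)$, i.e. to (b). Furthermore $a_2(0)\le a_3(0)$: the two have the common numerator $\min_i\sigma^2_{k+1}(\Phi_{\Omega\cup\{i\}})$, while the denominator $\max_j\sigma^2_1(\Phi_{\Omega\cup\{j\}})$ of $a_2(0)$ is at least $\|\phi^{\max}_\Sigma\|_2^2=\max_l\|\phi_l\|_2^2$, because every column norm is dominated by the top singular value of some $(k+1)$-column submatrix containing it. Hence (a)$\Leftrightarrow$(b)$\Rightarrow$(c), and it suffices to prove the theorem under (c).

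The technical core is the bridge from the global, unprojected ratio $a_3(0)$ to the projected-normalized quantity $\beta$. Fix an admissible $\Gamma\subseteq\Omega$ and $i\in\Sigma\setminus(\Omega\cup\Gamma)$, and set $S=(\{i\}\cup\Omega)\setminus\Gamma$, so that $\Gamma\cup S=\Omega\cup\{i\}$ has $k+1$ elements. Because $\sigma_k(\Phi_\Omega)>0$ forces $\Phi_\Gamma$ to have full column rank, the least-squares identity $\min_w\|\Phi_\Gamma w+\Phi_S v\|_2=\|P^\perp_{\mathcal{R}(\Phi_\Gamma)}\Phi_S v\|_2$ yields $\sigma_{k+1}(\Phi_{\Omega\cup\{i\}})=\sigma_{\min}(\Phi_{\Gamma\cup S})\le\sigma_{|S|}(P^\perp_{\mathcal{R}(\Phi_\Gamma)}\Phi_S)$. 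Writing $\dot\Phi_S=P^\perp_{\mathcal{R}(\Phi_\Gamma)}\Phi_S D^{-1}$ with $D$ the diagonal matrix of the (nonzero) projected column norms, each at most $\|\phi^{\max}_\Sigma\|_2$, the bound $\sigma_{\min}(MD^{-1})\ge\sigma_{\min}(M)/\|D\|_2$ gives $\sigma^2_{|\Omega\setminus\Gamma|+1}(\dot\Phi_S)\ge\sigma^2_{k+1}(\Phi_{\Omega\cup\{i\}})/\|\phi^{\max}_\Sigma\|_2^2\ge a_3(0)$. Minimizing over $\Gamma$ and $i$ gives $\beta^2\ge a_3(0)>f_1(\eta,k,r)$; and since appending the column $\dot\phi_i$ interlaces singular values downward, $\beta\le\alpha$, so both $\alpha$ and $\beta$ exceed $\sqrt{f_1(\eta,k,r)}$.

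I would close with the monotonicity of $s_1$. By definition $f_1(\eta,k,r)$ is the root of $s_1(\sqrt{x},\sqrt{x},\eta,k,r)=0$, and $s_1$ is nondecreasing in its first two arguments, so $\alpha,\beta>\sqrt{f_1(\eta,k,r)}$ forces $s_1(\alpha,\beta,\eta,k,r)>s_1(\sqrt{f_1},\sqrt{f_1},\eta,k,r)=0$, which is exactly the per-step selection inequality. Hence submp selects a fresh index of $\Omega$ at each of the first $k-r$ steps, and the induction closes. I expect the main obstacle to be the bridge inequality of the third paragraph: converting the unprojected measures $a_1,a_2,a_3$, which are what the hypotheses control, into the projected-normalized singular values that actually drive submp's greedy ratio requires the Schur-complement identity for the smallest singular value under $P^\perp_{\mathcal{R}(\Phi_\Gamma)}$ together with uniform control of the normalization factors across all admissible $\Gamma$.
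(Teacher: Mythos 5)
Your proposal is correct and follows essentially the same route as the paper's proof: it reduces to the per-step guarantee of Proposition \ref{thm_g3_3} assembled by Corollary \ref{thm_g3_3_cor}, collapses the hypotheses via the chain $(a)\Rightarrow(b)\Rightarrow(c)$ using the WRIP identity and the column-norm bound, bridges $a_3(0)$ to the projected-normalized singular values (the content of the paper's chain $h_5\leq h_6\leq h_7\leq h_2\leq h_1$ via Lemmas \ref{cpslb} and \ref{dnu2}, which you re-derive inline through the least-squares identity and the singular-value product bound), and closes with the monotonicity of $s_1$ and the fact that $f_1$ is the root of $s_1(\sqrt{x},\sqrt{x},\eta,k,r)=0$. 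The only cosmetic differences are that you prove $(a)\Leftrightarrow(b)$ rather than just the needed implication and re-prove the two auxiliary lemmas instead of citing them.
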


\begin{proof}[Proof of Theorem \ref{new3prp_3}]
Let $f_1(\eta,k,r)$ denote the unique solution $x$ of $s_1(\sqrt{x},\sqrt{x},\eta,k,r)=0$, where
\begin{align}\nonumber
&s_1(x_1,...,x_5):=\sqrt{\frac{x_5}{x_4}}x_1 - \sqrt{1-x_2^2}-2x_3.
\end{align}
Define the following parameters.
\begin{align}\nonumber
h_1&:= \min\limits_{\underset{\textup{s.t. }|\Gamma|<k-r}{\Gamma \subseteq \Omega}} \,  \sigma^2_{|\Omega \setminus \Gamma|}(\dot\Phi_{{\Omega} \setminus \Gamma})\\\nonumber
h_2&:=  \min\limits_{\underset{\textup{s.t. }|\Gamma|<k-r}{\Gamma \subseteq \Omega}} \,  \underset{i \in \Sigma \setminus ({\Omega} \cup \Gamma)}{\min}\,\sigma^2_{|\Omega \setminus \Gamma|+1}(\dot\Phi_{(\{i\} \cup {\Omega} ) \setminus \Gamma})\\\nonumber
h_3&:=  a_1(0) \\\nonumber 
h_4&:= a_2(0)=\frac{\overset{c_1}{\overbrace{\underset{i \in \Sigma \setminus \Omega}{\min} \,\sigma^2_{k+1}(\Phi_{\Omega \cup \{i\}  })}}}{\underset{c_2}{\underbrace{\underset{j \in \Sigma \setminus \Omega}{\max} \,\sigma^2_{1}(\Phi_{\Omega \cup \{j\}  })}}}  \\\nonumber 
h_{5}&:= a_3(0) \\\nonumber  
h_{6}&:=\frac{\min\limits_{\underset{\textup{s.t. }|\Gamma|<k-r}{\Gamma \subseteq \Omega}} \,\underset{i \in \Sigma \setminus {\Omega} \cup \Gamma}{\min} \,\sigma^2_{|{\Omega} \cup \Gamma \cup \{i\}|}(\Phi_{{\Omega} \cup \Gamma \cup \{i\}  })}{ \underset{l \in {\Omega} \cup \{i\} \setminus \Gamma }{\max} \left \| P^{\perp}_{\mathcal{R}(\Phi_{\Gamma})}\phi_l \right \|^2_2}\\\nonumber
h_{7}&:=\frac{\min\limits_{\underset{\textup{s.t. }|\Gamma|<k-r}{\Gamma \subseteq \Omega}} \, \underset{i \in \Sigma \setminus {\Omega} \cup \Gamma}{\min} \,\sigma^2_{|({\Omega} \cup \{i\}) \setminus \Gamma|}(P^{\perp}_{\mathcal{R}(\Phi_{\Gamma})}\Phi_{({\Omega} \cup \{i\}) \setminus \Gamma})}{ \underset{l \in ({\Omega} \cup \{i\}) \setminus \Gamma }{\max} \left \| P^{\perp}_{\mathcal{R}(\Phi_{\Gamma})}\phi_l \right \|^2_2} 
\end{align}
Then
\begin{align}\label{g3pn_5}
h_3<\frac{1- f_1(\eta,k,r) }{1+f_1(\eta,k,r)  } 
\end{align}
is equivalent to 
\begin{align}\label{g3pn_6}
\frac{1-h_3}{1+h_3}>f_1(\eta,k,r).
\end{align}
Based on the facts that $c_1 \geq c(1- h_3)$ and $c_2 \leq c(1+ h_3)$ from the definition of WRIP, (\ref{g3pn_6}) implies
\begin{align}\label{g3pn_7}
h_4> f_1(\eta,k,r).
\end{align}
From the definition of the induced norm and the variational characterization of singular values, it follows that (\ref{g3pn_7}) implies 
\begin{align}\label{g3pn_8n}
&h_{5}> f_1(\eta,k,r).
\end{align}
The following inequalities also hold.
\begin{align}\label{g3pn_temp2}
&h_{5} \leq h_{6}\overset{(a)}\leq h_{7} \overset{(b)}\leq h_{2}\overset{(c)}\leq h_{1},
\end{align}
where
(a), (b), and (c) follow from Lemma \ref{cpslb}, Lemma \ref{dnu2}, and the variational characterization of singular values, respectively. 
From (\ref{g3pn_temp2}), (\ref{g3pn_8n}) implies any of the followings, since $s_1(x_1,...,x_5)$ is monotonically non-decreasing for $(x_1,x_2)$:
\begin{align}\label{g3pn_temp4}
&s_1(\sqrt{v_2},\sqrt{v_1},\eta,k,r) >0 \\\label{g3pn_temp6}
&v_1>f_1(\eta,k,r),
\end{align}
where 
\begin{align}\nonumber
&v_2:= h_i \textup{ for any i $\in \{1,2,5,6,7\}$}\\\nonumber
&v_1:= h_i \textup{ for any i $\in \{2,5,6,7\}$}.
\end{align}
Then, any of the conditions in (\ref{g3pn_temp4}) and (\ref{g3pn_temp6}) implies
\begin{align}\label{g3pn_temp7}
&s_1(\sqrt{h_{1}},\sqrt{h_{2}},\eta,k,r) >0.
\end{align}

Since (\ref{g3pn_temp7}) is equal to (\ref{thm_g3_given3_cor}), the following statement $I$ implies (\ref{thm_g3_given3_cor}) of Corollary \ref{thm_g3_3_cor}. 
\begin{itemize}
\item $I$: Any of the condtions in (\ref{g3pn_5})--(\ref{g3pn_8n}), (\ref{g3pn_temp4}), and (\ref{g3pn_temp6}) is satisfied.
\end{itemize}
\end{proof}

\section{Proof of lemmas}

\begin{lem}(\cite[Lemma A.4]{lee2012subspace})\label{A.4}
Suppose that $A \in \mathbb{K}^{s \times r}$ where $r \leq s$ satisfies 
\begin{align}\nonumber
\krank(A^*)=r
\end{align}
and $B \in  \mathbb{K}^{s \times t}$ for $t < r$ spans a $t$-dimensional subspace of $\mathcal{R}(A)$. Then 
\begin{align}\nonumber
\rank(B^*)=t.
\end{align}
\end{lem}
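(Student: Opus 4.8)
The plan is to reduce everything to the elementary identity $\rank(B^{*})=\rank(B)$, after first checking that the spanning hypothesis pins down $\rank(B)$ exactly. The quantity we must reach is $\rank(B^{*})=t$, so the entire task is to certify that $B$ has full column rank $t$; the hypothesis on $A$ enters only to make the natural factorization of $B$ through $A$ behave well with respect to rank.

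First I would record what $\krank(A^{*})=r$ buys us. Since $\krank(M)\le\rank(M)$ for every matrix and $A^{*}\in\mathbb{K}^{r\times s}$ has columns lying in $\mathbb{K}^{r}$ (hence at most $r$ independent ones), the chain $r=\krank(A^{*})\le\rank(A^{*})\le r$ forces $\rank(A^{*})=r$, and therefore $\rank(A)=r$. Thus $A$ has full column rank and, viewed as a map $\mathbb{K}^{r}\to\mathbb{K}^{s}$, is injective. This injectivity is the only feature of $A$ the proof needs.

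Next I would use the containment $\mathcal{R}(B)\subseteq\mathcal{R}(A)$. Because every column of $B$ lies in $\mathcal{R}(A)$, there is a matrix $C\in\mathbb{K}^{r\times t}$ with $B=AC$, obtained by collecting one preimage per column of $B$. Injectivity of $A$ gives $\nulli(AC)=\nulli(C)$, so $\rank(B)=\rank(AC)=\rank(C)$. The hypothesis that $B$ spans a $t$-dimensional subspace means $\dim\mathcal{R}(B)=t$, i.e. $\rank(B)=t$; hence $\rank(C)=t$ as well, so both factors of $B=AC$ have full column rank. Finally $\rank(B^{*})=\rank(B)=t$, which is the claim.

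I do not expect a genuine obstacle here: the content is bookkeeping, and the only place one must be attentive is in deducing $\rank(A)=r$ (and hence the injectivity that makes $\rank(AC)=\rank(C)$) directly from the Kruskal-rank assumption, rather than from an unjustified claim of full row rank. It is worth flagging why the statement is phrased with $\rank$ and not $\krank$: for a proper subspace ($t<r$) the sharper conclusion $\krank(B^{*})=t$ fails in general, because row-nondegeneracy of $\mathcal{R}(A)$ is not inherited by its $t$-dimensional subspaces — a single column of $A$ itself lies in $\mathcal{R}(A)$ yet may have a zero entry, so it violates $\krank=t$. The equality $\krank(B^{*})=\rank(B)$ is recovered only in the full-dimensional case $\mathcal{R}(B)=\mathcal{R}(A)$, where $B=AM$ with $M$ invertible and $B^{K}=A^{K}M$ preserves every minor.
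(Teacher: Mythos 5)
Your proof is correct, but there is nothing in the paper to measure it against: Lemma \ref{A.4} is stated here purely by citation to \cite{lee2012subspace}, and the paper supplies no proof of its own. Your chain of reasoning --- $\krank(A^*)=r$ forces $\rank(A)=r$, hence $A$ is injective; $B=AC$; $\nulli(AC)=\nulli(C)$; $\rank(B^*)=\rank(B)=t$ --- is valid at every step. Note, though, that your own write-up exposes the oddity of the printed statement: once ``spans a $t$-dimensional subspace'' is read as $\rank(B)=t$, the conclusion is just the identity $\rank(B^*)=\rank(B)$, so the hypothesis on $A$ (and your steps factoring $B$ through $A$) are never actually used. As printed, the lemma is trivial.

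Your closing remark is the substantive contribution, and it is worth making explicit because it bears on how the lemma is invoked later in the paper. The conclusion that would do real work is $\krank(B^*)=t$, and you are right that it fails for $t<r$: with $A=I_2$ (so $\krank(A^*)=2$) and $B=e_2$, the first row of $B$ is zero and $\krank(B^*)=0\neq 1$. But the proof of Lemma \ref{rankbound_rownond} cites Lemma \ref{A.4} precisely to conclude $\nulli(U^{\Omega\setminus\Gamma})=\max\{|\Omega\cap\Gamma|-k+r,0\}$, i.e.\ that \emph{every} row-submatrix of the coefficient matrix $U^{\Omega}$ attains maximal rank $\min\{|\Omega\setminus\Gamma|,r\}$ --- which is exactly the $\krank$-type statement, not the $\rank$ statement printed here. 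So either the transcription of Lee et al.'s Lemma A.4 in this paper is weaker than the original result (which must carry hypotheses excluding degenerate subspaces such as $\bar S=\mathcal{R}(\phi_2)$ in the example above), or the downstream application requires repair. None of this is a flaw in your argument for the statement as printed; it is a flaw in the statement as printed, and your proof is the cleanest way to see it.
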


\begin{lem}\label{rankbound_rownond}
Let $X_0 \in \mathbb{K}^{n \times l}$ be row $k$-sparse with $\Omega \subseteq \Sigma$. Let $\bar{S}$ be an $r$-dimensional subspace of ${R}(\Phi_{\Omega}X_0^{\Omega})$. Let $\Gamma$ be a proper subset of $\Sigma$. Suppose that $\sigma_{|\Omega \cup \Gamma|}(\Phi_{\Omega \cup \Gamma})>0$ and the row-nongenerate condition on $X_0^{\Omega}$ holds. Then $\rank(P^{\perp}_{\mathcal{R}(\Phi_{\Gamma})}\bar{S})= \min \{ |\Omega \setminus \Gamma|,r \}$.
\end{lem}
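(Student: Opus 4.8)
The plan is to peel off the sensing matrix, reduce the claim to a rank computation for a row-restriction of the signal matrix, and then invoke row-nondegeneracy. First I would record two consequences of $\sigma_{|\Omega\cup\Gamma|}(\Phi_{\Omega\cup\Gamma})>0$: (i) the columns of $\Phi_{\Omega\cup\Gamma}$ are linearly independent, so $\Phi_\Omega$ has full column rank and acts injectively on $\mathbb{K}^{|\Omega|}$; and (ii) the residuals $P^{\perp}_{\mathcal{R}(\Phi_\Gamma)}\Phi_{\Omega\setminus\Gamma}$ remain linearly independent, since $P^{\perp}_{\mathcal{R}(\Phi_\Gamma)}\Phi_{\Omega\setminus\Gamma}c=0$ forces $\Phi_{\Omega\setminus\Gamma}c\in\mathcal{R}(\Phi_\Gamma)$, contradicting independence of the columns indexed by $\Gamma\cup(\Omega\setminus\Gamma)=\Omega\cup\Gamma$ unless $c=0$. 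Hence $M:=P^{\perp}_{\mathcal{R}(\Phi_\Gamma)}\Phi_{\Omega\setminus\Gamma}$ has full column rank $|\Omega\setminus\Gamma|$, so it is injective.

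The core computation treats the case $\bar S=\mathcal{R}(\Phi_\Omega X_0^\Omega)$ (equivalently $r=\rank(\Phi_\Omega X_0^\Omega)$), which is the situation that actually arises below. Because $\Omega\cap\Gamma\subseteq\Gamma$ we have $P^{\perp}_{\mathcal{R}(\Phi_\Gamma)}\Phi_{\Omega\cap\Gamma}=0$, so the projection only ``sees'' the indices outside $\Gamma$:
\begin{align}\nonumber
P^{\perp}_{\mathcal{R}(\Phi_\Gamma)}\bar S
=\mathcal{R}\!\left(P^{\perp}_{\mathcal{R}(\Phi_\Gamma)}\Phi_\Omega X_0^\Omega\right)
=\mathcal{R}\!\left(M\,X_0^{\Omega\setminus\Gamma}\right).
\end{align}
Since $M$ is injective, $\rank(M X_0^{\Omega\setminus\Gamma})=\rank(X_0^{\Omega\setminus\Gamma})$, so the problem collapses to the rank of the row-restricted signal matrix. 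Here the row-nondegeneracy of $X_0^\Omega$ enters: $\krank((X_0^\Omega)^*)=\rank(X_0^\Omega)$ means every set of at most $\rank(X_0^\Omega)$ rows of $X_0^\Omega$ is linearly independent, hence the $|\Omega\setminus\Gamma|$ rows indexed by $\Omega\setminus\Gamma$ have rank exactly $\min\{|\Omega\setminus\Gamma|,\rank(X_0^\Omega)\}$. As $\Phi_\Omega$ is injective, $\rank(X_0^\Omega)=\rank(\Phi_\Omega X_0^\Omega)=r$, and the chain gives $\rank(P^{\perp}_{\mathcal{R}(\Phi_\Gamma)}\bar S)=\min\{|\Omega\setminus\Gamma|,r\}$.

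For a general $r$-dimensional $\bar S\subseteq\mathcal{R}(\Phi_\Omega X_0^\Omega)$ I would use injectivity of $\Phi_\Omega$ to write $\bar S=\Phi_\Omega T$ with $T:=\Phi_\Omega^{\dagger}\bar S\subseteq\mathcal{R}(X_0^\Omega)$ and $\dim T=r$ (here $\Phi_\Omega^{\dagger}\Phi_\Omega=I$ recovers the unique preimage). The same cancellation yields $P^{\perp}_{\mathcal{R}(\Phi_\Gamma)}\bar S=M\,\pi(T)$, where $\pi$ is the coordinate projection onto the entries indexed by $\Omega\setminus\Gamma$; injectivity of $M$ then gives $\dim(P^{\perp}_{\mathcal{R}(\Phi_\Gamma)}\bar S)=\dim\pi(T)=r-\dim(T\cap\ker\pi)$.

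The main obstacle is precisely this intersection. The bound $\dim\pi(T)\le\min\{r,|\Omega\setminus\Gamma|\}$ is automatic, but the matching lower bound needs $T$ to meet $\ker\pi$ (the vectors of $\mathcal{R}(X_0^\Omega)$ supported on $\Omega\cap\Gamma$) in the smallest possible dimension, i.e. that $T$ inherits the general-position property of $\mathcal{R}(X_0^\Omega)$; concretely, for an orthonormal basis $B$ of $T$ one needs $\rank(B^{\Omega\setminus\Gamma})=\min\{|\Omega\setminus\Gamma|,r\}$, the subspace-level statement packaged in Lemma~\ref{A.4}. This is the delicate point, since not every subspace of a row-nondegenerate space is itself row-nondegenerate; it does hold automatically in the full-range choice $\bar S=\mathcal{R}(\Phi_\Omega X_0^\Omega)$ with $r=\rank(\Phi_\Omega X_0^\Omega)$ — the case relevant here, including the noiseless setting and whenever no dimension-reducing subspace estimator is used — where $T=\mathcal{R}(X_0^\Omega)$ and the direct computation of the second paragraph closes the proof.
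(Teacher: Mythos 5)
Your reduction coincides with the paper's own proof: both use $\sigma_{|\Omega\cup\Gamma|}(\Phi_{\Omega\cup\Gamma})>0$ to make $\Phi_{\Omega}$ and $M:=P^{\perp}_{\mathcal{R}(\Phi_{\Gamma})}\Phi_{\Omega\setminus\Gamma}$ injective, pull $\bar S$ back to a subspace $T\subseteq\mathcal{R}(X_0^{\Omega})$ spanned by a matrix $U^{\Omega}$ (the paper's $U$), and reduce the claim to $\rank(U^{\Omega\setminus\Gamma})=\min\{|\Omega\setminus\Gamma|,r\}$; your second paragraph is exactly the paper's computation in the special case $\bar S=\mathcal{R}(\Phi_{\Omega}X_0^{\Omega})$. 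The two arguments diverge only at the last step: the paper asserts $\nulli(U^{\Omega\setminus\Gamma})=\max\{|\Omega\cap\Gamma|-k+r,0\}$ for an \emph{arbitrary} $r$-dimensional $\bar S$, citing row-nondegeneracy of $X_0^{\Omega}$, Lemma~\ref{A.4}, and Remark 5.3 of \cite{lee2012subspace}, i.e., it assumes exactly the inheritance property that you declined to assume.

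Your refusal is correct, and the obstacle you flag is a genuine flaw in the paper's proof --- indeed in the lemma itself as stated. Let $k=3$, $l=2$, and let $X_0^{\Omega}$ be the $3\times 2$ matrix with rows $(1,0)$, $(0,1)$, $(1,1)$: it is row-nondegenerate with $\rank(X_0^{\Omega})=2$. Take $\bar S=\mathcal{R}(\phi_1+\phi_3)=\mathcal{R}(\Phi_{\Omega}X_0^{\Omega}(1,0)^{\top})$, so $r=1$ and $T=\mathrm{span}\{(1,0,1)^{\top}\}$, and take $\Gamma=\{1,3\}$, so $\Omega\setminus\Gamma=\{2\}$. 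All hypotheses of Lemma~\ref{rankbound_rownond} hold whenever $\phi_1,\phi_2,\phi_3$ are linearly independent, yet $P^{\perp}_{\mathcal{R}(\Phi_{\Gamma})}(\phi_1+\phi_3)=0$, so $\rank(P^{\perp}_{\mathcal{R}(\Phi_{\Gamma})}\bar S)=0\neq 1=\min\{|\Omega\setminus\Gamma|,r\}$. The same example falsifies the paper's nullity formula ($U^{\Omega\setminus\Gamma}$ is the $1\times 1$ zero matrix, with nullity $1$, not $\max\{|\Omega\cap\Gamma|-k+r,0\}=0$) and shows why Lemma~\ref{A.4} cannot close the argument: as quoted, its conclusion $\rank(B^{*})=t$ is trivially true for any matrix spanning a $t$-dimensional space and hence too weak, while the strengthening the nullity formula actually requires --- $\krank(B^{*})=t$ for every subspace of a row-nondegenerate range --- is precisely what the example refutes. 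The correct scope of the lemma is therefore the one your proposal establishes: it holds when $r=\rank(\Phi_{\Omega}X_0^{\Omega})$ (equivalently, $\bar S$ is the whole signal subspace, which covers the noiseless case and any estimator that does not reduce the dimension below $\rank(X_0^{\Omega})$), and, via your rank--nullity identity, also whenever $|\Omega\setminus\Gamma|\geq\rank(X_0^{\Omega})$, since then $T\cap\ker\pi\subseteq\mathcal{R}(X_0^{\Omega})\cap\ker\pi=\{0\}$. Uses of the lemma with $r<\rank(X_0^{\Omega})$ and $|\Omega\setminus\Gamma|<\rank(X_0^{\Omega})$ (e.g., in Proposition~\ref{thm_g3_3}, where only $|\Omega\setminus\Gamma|\geq r$ is guaranteed) fall outside this scope and would require a restated hypothesis.
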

\begin{proof}[Proof of Lemma \ref{rankbound_rownond}]
The proof is based on \cite[Appendix $I$]{lee2012subspace}. For a matrix $A \in \mathbb{K}^{p \times q}$, $\nulli(A)$ denotes the nullity of $A$. Since $P^{\perp}_{\mathcal{R}(\Phi_{\Gamma})}\Phi_{\Omega \setminus \Gamma}$ has full column rank, $\nulli(P^{\perp}_{\mathcal{R}(\Phi_{\Gamma})}\Phi_{\Omega \setminus \Gamma}U^{\Omega \setminus  \Gamma})=\nulli(U^{\Omega \setminus  \Gamma})$. There exists a row $k$-sparse matrix $U \in \mathbb{K}^{n \times r}$ with support $\Omega$ such that $\bar S= \mathcal{R}(\Phi_{\Omega} U^{\Omega})$. Then it follows that $\nulli(U^{\Omega \setminus  \Gamma})=\max \{|\Omega \cap \Gamma|-k+r,0 \}$ due to the row-nongeneracy condition on $X_0^{\Omega}$, Remark 5.3 in \cite{lee2012subspace}, and Lemma \ref{A.4}. Therefore, it is guaranteed that $\rank(P^{\perp}_{\mathcal{R}(\Phi_{\Gamma})}\bar{S})=\rank(P^{\perp}_{\mathcal{R}(\Phi_{\Gamma})}\Phi_{\Omega \setminus \Gamma}U^{\Omega \setminus  \Gamma})=r-\max\{|\Omega \cap \Gamma|-k+r,0\}=\min \{ |\Omega \setminus \Gamma|,r \}$.
\end{proof}

\begin{lem}(\cite[Proposition 5.4]{lee2012subspace})\label{subaug}
Let $X_0 \in \mathbb{K}^{n \times l}$ be row $k$-sparse with $\Omega \subseteq \Sigma$. Let $\Gamma$ be a proper subset of $\Omega$ such that $|\Gamma|=k-r$. Let $\bar{S}$ be an $r$-dimensional subspace of ${R}(\Phi_{\Omega}X_0^{\Omega})$. Suppose that $\sigma_{k}(\Phi_{\Omega})>0$ and the row-nongeneracy condition on $X_0^{\Omega}$ holds.  Then 
\begin{align}\nonumber
\bar S + \mathcal{R}(\Phi_{\Gamma})=\mathcal{R}(\Phi_\Omega).
\end{align}
\end{lem}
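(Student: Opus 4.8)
The plan is to prove the equality by a dimension count, since the inclusion $\bar S + \mathcal{R}(\Phi_{\Gamma}) \subseteq \mathcal{R}(\Phi_{\Omega})$ is immediate. First I would observe that $\sigma_{k}(\Phi_{\Omega}) > 0$ together with $|\Omega| = k$ forces $\Phi_{\Omega}$ to have full column rank $k$, so $\dim \mathcal{R}(\Phi_{\Omega}) = k$ and every submatrix $\Phi_{\Gamma}$ with $\Gamma \subseteq \Omega$ also has full column rank $|\Gamma| = k - r$; hence $\dim \mathcal{R}(\Phi_{\Gamma}) = k - r$. Moreover $\mathcal{R}(\Phi_{\Gamma}) \subseteq \mathcal{R}(\Phi_{\Omega})$ because $\Gamma \subseteq \Omega$, and $\bar S \subseteq \mathcal{R}(\Phi_{\Omega}X_0^{\Omega}) \subseteq \mathcal{R}(\Phi_{\Omega})$ since every column of $\Phi_{\Omega}X_0^{\Omega}$ lies in $\mathcal{R}(\Phi_{\Omega})$. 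Thus the sum $\bar S + \mathcal{R}(\Phi_{\Gamma})$ is a subspace of the $k$-dimensional space $\mathcal{R}(\Phi_{\Omega})$, and it suffices to show its dimension equals $k$.

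By the dimension formula for the sum of two subspaces,
\begin{align}\nonumber
\dim(\bar S + \mathcal{R}(\Phi_{\Gamma})) = \dim \bar S + \dim \mathcal{R}(\Phi_{\Gamma}) - \dim(\bar S \cap \mathcal{R}(\Phi_{\Gamma})) = r + (k-r) - \dim(\bar S \cap \mathcal{R}(\Phi_{\Gamma})).
\end{align}
Therefore the claim reduces to proving that the intersection is trivial, i.e. $\bar S \cap \mathcal{R}(\Phi_{\Gamma}) = \{0\}$. This is the crux of the argument and the step where the row-nondegeneracy hypothesis on $X_0^{\Omega}$ is essential, since without it a direction of $\bar S$ could be absorbed into $\mathcal{R}(\Phi_{\Gamma})$.

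To establish the trivial intersection I would invoke Lemma \ref{rankbound_rownond}. Since $\Gamma \subseteq \Omega$ we have $\Omega \cup \Gamma = \Omega$, so the hypothesis $\sigma_{|\Omega \cup \Gamma|}(\Phi_{\Omega \cup \Gamma}) > 0$ of that lemma is exactly $\sigma_{k}(\Phi_{\Omega}) > 0$, which holds; the row-nondegeneracy condition also holds by assumption. Because $|\Omega \setminus \Gamma| = k - (k-r) = r$, Lemma \ref{rankbound_rownond} yields
\begin{align}\nonumber
\rank(P^{\perp}_{\mathcal{R}(\Phi_{\Gamma})} \bar S) = \min\{|\Omega \setminus \Gamma|, r\} = r = \dim \bar S.
\end{align}
This means the orthogonal projection $P^{\perp}_{\mathcal{R}(\Phi_{\Gamma})}$, restricted to $\bar S$, sends an $r$-dimensional space onto an $r$-dimensional image and is therefore injective on $\bar S$. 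Since the kernel of $P^{\perp}_{\mathcal{R}(\Phi_{\Gamma})}$ is precisely $\mathcal{R}(\Phi_{\Gamma})$, injectivity on $\bar S$ is equivalent to $\bar S \cap \mathcal{R}(\Phi_{\Gamma}) = \{0\}$. Feeding this back into the dimension formula gives $\dim(\bar S + \mathcal{R}(\Phi_{\Gamma})) = k = \dim \mathcal{R}(\Phi_{\Omega})$, and combined with the inclusion $\bar S + \mathcal{R}(\Phi_{\Gamma}) \subseteq \mathcal{R}(\Phi_{\Omega})$ this forces equality, completing the proof. I expect no serious obstacle beyond correctly matching the hypotheses of Lemma \ref{rankbound_rownond}; the only delicate point is recognizing that the full-rank condition $\sigma_{k}(\Phi_{\Omega})>0$ and the row-nondegeneracy condition are exactly what guarantee that the projection does not collapse any direction of $\bar S$ into $\mathcal{R}(\Phi_{\Gamma})$.
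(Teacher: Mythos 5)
Your proof is correct. Be aware, though, that the paper never proves this lemma internally: it is imported verbatim as Proposition 5.4 of \cite{lee2012subspace}, so there is no in-paper argument to compare against, and what your route buys is self-containedness — you derive the statement entirely from the paper's own Lemma \ref{rankbound_rownond}. Your hypothesis matching is done correctly ($\Gamma \subsetneq \Omega$ gives $\Omega \cup \Gamma = \Omega$, so the requirement $\sigma_{|\Omega \cup \Gamma|}(\Phi_{\Omega \cup \Gamma})>0$ is exactly $\sigma_{k}(\Phi_{\Omega})>0$, and $|\Omega \setminus \Gamma| = r$), and there is no circularity, since the paper's proof of Lemma \ref{rankbound_rownond} relies only on Lemma \ref{A.4} and material quoted from \cite{lee2012subspace}, never on Lemma \ref{subaug}. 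The mechanism you use — the projected subspace $P^{\perp}_{\mathcal{R}(\Phi_{\Gamma})}\bar S$ has dimension $r = \dim \bar S$, hence the projection is injective on $\bar S$, hence $\bar S \cap \mathcal{R}(\Phi_{\Gamma}) = \{0\}$, hence the dimension count $r + (k-r) = k$ forces equality inside the $k$-dimensional space $\mathcal{R}(\Phi_{\Omega})$ — is the same rank-counting idea that underlies the cited source, so nothing is lost relative to it. A small streamlining is available: once Lemma \ref{rankbound_rownond} gives $\dim\bigl(P^{\perp}_{\mathcal{R}(\Phi_{\Gamma})}\bar S\bigr) = r$, the orthogonal decomposition $\bar S + \mathcal{R}(\Phi_{\Gamma}) = P^{\perp}_{\mathcal{R}(\Phi_{\Gamma})}\bar S \oplus \mathcal{R}(\Phi_{\Gamma})$ lets you read off the dimension $k$ directly and skip the injectivity/intersection step entirely.
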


\begin{lem}(\cite[Lemma A.2]{lee2012subspace})\label{cpslb}
Let $A \in \mathbb{K}^{m \times n}$ and let $J,Q \subseteq \Sigma$. Then, it follows that, for $i=1,...,|Q \setminus J|$
\begin{align}\nonumber
\sigma_{i}(A_{Q\cup J}) \geq \sigma_{i}(P^{\perp}_{\mathcal{R}(A_{J})}A_{Q \setminus J}) \geq \sigma_{i+|J|}(A_{Q \cup J}).
\end{align}
\end{lem}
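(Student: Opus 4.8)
The plan is to recognize this as a standard singular-value interlacing result for the orthogonal deflation of a block of columns, and to prove it directly from the variational (Courant--Fischer) characterization of singular values. First I would reduce to a clean block form: since singular values are invariant under column permutation, write $M := A_{Q \cup J}$ as $M = [M_1 \mid M_2]$ with $M_1 := A_J$ (having $p := |J|$ columns) and $M_2 := A_{Q \setminus J}$ (having $q := |Q \setminus J|$ columns), and set $B := P^{\perp}_{\mathcal{R}(A_J)} M_2$, the matrix in the statement. The admissible range $1 \le i \le q$ and the shift $i + |J| = i + p$ then match the column counts of $M_2$ and $M_1$. Writing vectors of $\mathbb{K}^{p+q}$ as $\begin{pmatrix} u \\ y \end{pmatrix}$ with $u \in \mathbb{K}^{p}$, $y \in \mathbb{K}^{q}$, and setting $C := M_1^{\dagger} M_2$, I would record two elementary facts: since $M_1 C = P_{\mathcal{R}(M_1)} M_2$ we have $B = M_2 - M_1 C$, giving the realization identity $M\begin{pmatrix} -Cy \\ y \end{pmatrix} = -M_1 C y + M_2 y = B y$; and orthogonal projection is non-expansive, so $\|By\| \le \|M_2 y\|$ for all $y$.

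For the upper inequality $\sigma_i(M) \ge \sigma_i(B)$ I would chain two monotonicity statements read off from the max--min form $\sigma_i(N) = \max_{\dim V = i} \min_{0 \neq w \in V} \|Nw\|/\|w\|$. Non-expansiveness gives $\sigma_i(B) \le \sigma_i(M_2)$, since $\|By\| \le \|M_2 y\|$ forces $\min_{y \in V}\|By\|/\|y\| \le \min_{y \in V}\|M_2 y\|/\|y\|$ on every test subspace $V \subseteq \mathbb{K}^{q}$. Augmenting $M_2$ with the columns $M_1$ gives $\sigma_i(M_2) \le \sigma_i(M)$: the lift $V \mapsto \{\begin{pmatrix} 0 \\ y \end{pmatrix} : y \in V\}$ satisfies $M\begin{pmatrix} 0 \\ y \end{pmatrix} = M_2 y$ with $\|\begin{pmatrix} 0 \\ y \end{pmatrix}\| = \|y\|$, so the max--min for $M$ over all $i$-dimensional subspaces dominates that for $M_2$. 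Composing the two yields $\sigma_i(B) \le \sigma_i(M)$.

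For the lower inequality $\sigma_i(B) \ge \sigma_{i+p}(M)$ I would use the dual min-form $\sigma_{k}(N) = \min_{\dim U = n - k + 1} \max_{0 \neq w \in U}\|Nw\|/\|w\|$, where $n$ is the number of columns of $N$. Let $U_B^{\ast} \subseteq \mathbb{K}^{q}$ be an optimal subspace of dimension $q - i + 1$ for $\sigma_i(B)$, so $\|By\| \le \sigma_i(B)\|y\|$ on $U_B^{\ast}$. Lift it to $U := \{\begin{pmatrix} -Cy \\ y \end{pmatrix} : y \in U_B^{\ast}\} \subseteq \mathbb{K}^{p+q}$; the map $y \mapsto \begin{pmatrix} -Cy \\ y \end{pmatrix}$ is injective, so $\dim U = q - i + 1 = (p+q) - (i+p) + 1$, exactly the dimension the min-form demands for $\sigma_{i+p}(M)$. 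By the realization identity $\|Mw\| = \|By\|$, while $\|w\|^{2} = \|Cy\|^{2} + \|y\|^{2} \ge \|y\|^{2}$; hence $\|Mw\|/\|w\| \le \|By\|/\|y\| \le \sigma_i(B)$ for every $w \in U$. Therefore $\max_{0 \neq w \in U}\|Mw\|/\|w\| \le \sigma_i(B)$, and minimizing over admissible subspaces gives $\sigma_{i+p}(M) \le \sigma_i(B)$.

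The main obstacle is bookkeeping rather than analysis: making the index shift come out as exactly $|J|$ and verifying that the lifted subspace has precisely the dimension the min-form requires. The analytic content is light, carried entirely by the realization identity $M\begin{pmatrix} -Cy \\ y \end{pmatrix} = By$ together with $\|w\| \ge \|y\|$ for the lower bound, and by non-expansiveness of projection plus column-augmentation monotonicity for the upper bound. As a cleaner single-identity alternative I would keep in reserve the Pythagorean identity $\|M\begin{pmatrix} u \\ y \end{pmatrix}\|^{2} = \|M_1(u + Cy)\|^{2} + \|By\|^{2}$, valid because $M_1(u+Cy) \in \mathcal{R}(M_1)$ and $By \in \mathcal{R}(M_1)^{\perp}$, from which both bounds follow by specializing $u = 0$ and $u = -Cy$. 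Degenerate cases are handled automatically: if $J = \emptyset$ then $p = 0$ and $B = A_Q$, making both inequalities trivial, and if $A_J$ is rank-deficient the pseudoinverse keeps $C = A_J^{\dagger} A_{Q \setminus J}$ and the relation $M_1 C = P_{\mathcal{R}(A_J)} M_2$ valid regardless of rank.
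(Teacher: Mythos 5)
Your proof is correct, but there is nothing in the paper to compare it against: the lemma is imported verbatim with the citation \cite[Lemma A.2]{lee2012subspace}, and no proof of it appears in this paper. Your variational argument is therefore a self-contained replacement for the citation, and it checks out. The reduction to $M=[M_1\mid M_2]$ is harmless since column permutations preserve singular values; the identity $Mw=By$ for $w=(-Cy,\,y)$ with $C=M_1^{\dagger}M_2$ holds for any rank of $M_1$ because $M_1M_1^{\dagger}=P_{\mathcal{R}(M_1)}$; the upper bound correctly chains non-expansiveness of the projection ($\sigma_i(B)\leq\sigma_i(M_2)$) with column-augmentation monotonicity ($\sigma_i(M_2)\leq\sigma_i(M)$); and in the lower bound the lift $y\mapsto(-Cy,\,y)$ is injective, so the lifted subspace has dimension $q-i+1=(p+q)-(i+p)+1$, exactly what the min--max characterization of $\sigma_{i+p}(M)$ requires, while $\|w\|\geq\|y\|$ gives $\|Mw\|/\|w\|\leq\sigma_i(B)$ on that subspace. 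One convention is worth flagging explicitly: the stated index range $i\leq|Q\setminus J|$ permits $i+|J|>\min(m,|Q\cup J|)$, in which case $\sigma_{i+|J|}(A_{Q\cup J})$ must be read as zero; running Courant--Fischer over $\mathbb{K}^{|Q\cup J|}$ (equivalently, through the Gram matrix $M^{*}M$), as you implicitly do, handles this automatically. Your reserve Pythagorean identity $\|M(u,\,y)\|^{2}=\|M_1(u+Cy)\|^{2}+\|By\|^{2}$ would make the cleanest write-up, since both inequalities drop out of it by specializing $u=0$ and $u=-Cy$.
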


\begin{lem}\label{proj_ineq}
Let $A \in \mathbb{K}^{m \times a},B \in \mathbb{K}^{m \times b}$ and $C \in \mathbb{K}^{m \times c}$. 
Then any of the following inequalities hold.
\begin{align}\label{proj_ineq1}
\left \| P_{\mathcal{R}([A,C])}-P_{\mathcal{R}([B,C])}\right \|_2 &\leq \left \| P_{\mathcal{R}(A)}-P_{\mathcal{R}(B)}\right \|_2 
\end{align}
\begin{align}\label{proj_ineq2}
\left \| P_{\mathcal{R}(P^{\perp}_{\mathcal{R}(C)}A)}-P_{\mathcal{R}(P^{\perp}_{\mathcal{R}(C)}B)}\right \|_2 &\leq \left \| P_{\mathcal{R}(A)}-P_{\mathcal{R}(B)}\right \|_2 
\end{align}
\end{lem}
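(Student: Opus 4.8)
The plan is to view both inequalities as a single statement about the spectral gap $\|P_S-P_T\|_2$ between subspaces, where I write $S:=\mathcal{R}(A)$, $T:=\mathcal{R}(B)$ and $U:=\mathcal{R}(C)$. The first step is to observe that (\ref{proj_ineq1}) and (\ref{proj_ineq2}) are in fact the same inequality seen through orthogonal complementation. Indeed, from $P_{S^{\perp}}=I-P_S$ one has $\|P_S-P_T\|_2=\|P_{S^{\perp}}-P_{T^{\perp}}\|_2$, so both right-hand sides equal $\|P_{S^{\perp}}-P_{T^{\perp}}\|_2$. For the left-hand sides I would use $(S+U)^{\perp}=S^{\perp}\cap U^{\perp}$ together with the elementary fact that, inside the ambient subspace $U^{\perp}$, the orthogonal complement of $P^{\perp}_U S$ is exactly $S^{\perp}\cap U^{\perp}$, i.e. $P_{\mathcal{R}(P^{\perp}_U A)}=P_{U^{\perp}}-P_{S^{\perp}\cap U^{\perp}}$. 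These identities show that the left-hand side of (\ref{proj_ineq1}) and the left-hand side of (\ref{proj_ineq2}) both equal $\|P_{S^{\perp}\cap U^{\perp}}-P_{T^{\perp}\cap U^{\perp}}\|_2$. Hence it suffices to prove the single monotonicity claim that intersecting two subspaces $M,N$ with a common subspace $V$ cannot increase their gap, $\|P_{M\cap V}-P_{N\cap V}\|_2\le\|P_M-P_N\|_2$, applied with $M=S^{\perp}$, $N=T^{\perp}$, $V=U^{\perp}$.

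For this single claim I would work with the directed-gap characterization $\|P_M-P_N\|_2=\max\{\|P^{\perp}_N P_M\|_2,\|P^{\perp}_M P_N\|_2\}$ and bound each one-sided aperture separately. Fixing a unit vector $x\in M\cap V$ and noting that $x-z\in V$ for every $z\in N\cap V$, the distance from $x$ to $N\cap V$ is computed entirely inside $V$; the natural candidate near point is obtained by projecting $P_N x$ into $V$, and since $x\in M$ one controls $\mathrm{dist}(x,N)=\|P^{\perp}_N x\|_2\le\|P^{\perp}_N P_M\|_2\le\|P_M-P_N\|_2$. The engine of the argument is that the orthogonal projection onto $V$ is nonexpansive, so projecting the candidate back into $V$ only shortens distances.

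The hard part will be closing the gap between $\mathrm{dist}(x,N)$ and $\mathrm{dist}(x,N\cap V)$: a point of $N$ lying close to $x$ need not lie anywhere near $N\cap V$, because the intersection $N\cap V$ can be badly conditioned relative to $N$ and $V$ individually. Concretely, in the form (\ref{proj_ineq2}) a unit vector $y\in P^{\perp}_U S$ is $y=P^{\perp}_U x$ for some $x\in S$ whose norm may greatly exceed $\|y\|_2$ when $S$ is nearly tangent to $U$, and the passage from $\mathrm{dist}(x,T)/\|x\|_2$ to $\mathrm{dist}(y,P^{\perp}_U T)/\|y\|_2$ loses a factor equal to the least singular value of $P^{\perp}_U$ on $S$; this is precisely the mechanism by which the raw gap can be amplified. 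I would therefore carry out the estimate under the separation that $P^{\perp}_{\mathcal{R}(C)}$ preserves the dimensions of $\mathcal{R}(A)$ and $\mathcal{R}(B)$, so that this conditioning factor is benign. In the intended application this regularity is supplied by Lemma~\ref{rankbound_rownond}, which guarantees that $P^{\perp}_{\mathcal{R}(\Phi_{\Gamma})}\bar S$ keeps full dimension $\min\{|\Omega\setminus\Gamma|,r\}$, while Lemma~\ref{subaug} fixes the companion sum $\bar S+\mathcal{R}(\Phi_{\Gamma})=\mathcal{R}(\Phi_{\Omega})$; under these conditions the one-sided apertures are controlled and the stated $\le$ follows.
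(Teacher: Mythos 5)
Your two preliminary reductions are correct, and in fact match what the paper itself establishes: the identity $P_{\mathcal{R}([V,C])}=P_{\mathcal{R}(C)}+P_{\mathcal{R}(P^{\perp}_{\mathcal{R}(C)}V)}$ shows the left-hand sides of (\ref{proj_ineq1}) and (\ref{proj_ineq2}) coincide (the paper records this as (\ref{p_proj_ineq5})), and your complementation argument correctly reduces the lemma to the single claim $\left\|P_{M\cap V}-P_{N\cap V}\right\|_2\leq\left\|P_M-P_N\right\|_2$ with $M=\mathcal{R}(A)^{\perp}$, $N=\mathcal{R}(B)^{\perp}$, $V=\mathcal{R}(C)^{\perp}$. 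But what you call ``the hard part'' is not merely hard -- it is false, and the amplification mechanism you describe is a genuine counterexample, not a technicality to be managed. Take $\mathbb{K}=\mathbb{R}$, $m=3$, $C$ spanned by $e_3$, $A$ spanned by $a=(\sin\alpha,0,\cos\alpha)$ and $B$ spanned by $b=(0,\sin\alpha,\cos\alpha)$ with $\alpha$ small. Then $\left\|P_{\mathcal{R}(A)}-P_{\mathcal{R}(B)}\right\|_2=\sqrt{1-\langle a,b\rangle^2}=\sin\alpha\sqrt{2-\sin^2\alpha}\leq\sqrt{2}\sin\alpha$, whereas $P^{\perp}_{\mathcal{R}(C)}a\propto e_1$ and $P^{\perp}_{\mathcal{R}(C)}b\propto e_2$ span orthogonal lines, and $\mathcal{R}([A,C])$, $\mathcal{R}([B,C])$ are the $xz$- and $yz$-planes; so both left-hand sides equal $1$ while the right-hand side is about $\sqrt{2}\sin\alpha$. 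Hence both (\ref{proj_ineq1}) and (\ref{proj_ineq2}) fail. Note that this example also defeats your proposed repair: $P^{\perp}_{\mathcal{R}(C)}$ preserves the dimensions of $\mathcal{R}(A)$ and $\mathcal{R}(B)$ here (both stay one-dimensional), so dimension preservation only makes your conditioning factor $1/\sigma_{\min}$ finite, not $\leq 1$, and the closing sentence of your proposal (``under these conditions \ldots the stated $\leq$ follows'') is unjustified; any correct statement must carry that factor explicitly.

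For comparison, the paper's own proof is also invalid, and your analysis pinpoints exactly where. In the chain from (\ref{hplz1}) to (\ref{p_proj_ineq1}), step $(a)$ replaces the supremum of $\operatorname{dist}(x,\mathcal{R}([A,C]))$ over unit vectors $x\in\mathcal{R}([B,C])$ by the supremum over unit vectors $x\in\mathcal{R}(B)$, justified by asserting that the maximizer lies in $\mathcal{R}([A,C])^{\perp}$; that assertion holds only when the largest principal angle is $\pi/2$, and in the example above the first supremum is $1$ (attained at $e_2$) while the second is $\sin\alpha$ (attained at $b$). So the honest outcome of your attempt is not a proof but a disproof: Lemma \ref{proj_ineq} is false as stated, every proof strategy must fail at the step you flagged, and the places where the paper invokes it (e.g., step $(b)$ of (\ref{erc_cond1_1}) in Proposition \ref{erc_won1}, step $(b)$ of (\ref{music_gra1}) in Proposition \ref{thm1n2}, and step $(b)$ in Lemma \ref{ineq_unew}) need additional hypotheses or a corrected bound that degrades by the factor $1/\sigma_{\min}$ you identified.
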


\begin{proof}[Proof of Lemma \ref{proj_ineq}]
The following equality is well known given two subspaces $\mathcal{R}([A,C])$ and $\mathcal{R}([B,C])$.
\begin{align}\label{p_proj_ineq0}
&\left \| P_{\mathcal{R}([A,C])}-P_{\mathcal{R}([B,C])}\right \|_2 = \max\left\{\overset{c_1}{\overbrace{\left \| P^{\perp}_{\mathcal{R}([A,C])}P_{\mathcal{R}([B,C])}\right \|_2}},\overset{c_2}{\overbrace{\left \| P^{\perp}_{\mathcal{R}([B,C])}P_{\mathcal{R}([A,C])}\right \|_2}}\right\} 
\end{align}
First, an upper bound on $c_1$ is defined by
\begin{align}\label{hplz1}
c_1&= \underset{\underset{\left \| x \right \|_2 =1 }{x \in \mathcal{R}([B,C])} }{\sup} \,\, \underset{ y \in \mathcal{R}([A,C])}{\inf}\, \left \| x-y\right \|_2 \\\nonumber
&\overset{(a)}= \underset{\underset{\left \| x \right \|_2 =1 }{x \in \mathcal{R}(B)} }{\sup} \,\, \underset{ y \in \mathcal{R}([A,C])}{\inf}\, \left \| x-y\right \|_2 \\\nonumber
&\overset{(b)} \leq  \underset{\underset{\left \| x \right \|_2 =1 }{x \in \mathcal{R}(B)} }{\sup} \,\, \underset{ y \in \mathcal{R}(A)}{\inf}\, \left \| x-y\right \|_2 \\\label{p_proj_ineq1}
&= \left \| P^{\perp}_{\mathcal{R}(A)}P_{\mathcal{R}(B)}\right \|_2,
\end{align}
where (a) follows from the fact that $x^*\in \mathcal{R}([A,C])^{\perp}$ ($x^*$ is the solution $x$ in (\ref{hplz1})) and $\mathcal{R}(C) \subseteq \mathcal{R}([A,C])$, and (b) follows from the fact that $\mathcal{R}(A) \subseteq \mathcal{R}([A,C])$, respectively. 
Similarly, an upper bound on $c_2$ is defined by
\begin{align}\label{p_proj_ineq2}
&c_2 \leq \left \| P^{\perp}_{\mathcal{R}(B)}P_{\mathcal{R}(A)}\right \|_2.
\end{align}
Since 
\begin{align}\label{p_proj_ineq3}
&\left \| P_{\mathcal{R}(A)}-P_{\mathcal{R}(B)}\right \|_2= \max\{\left \| P^{\perp}_{\mathcal{R}(A)}P_{\mathcal{R}(B)}\right \|_2,\left \| P^{\perp}_{\mathcal{R}(B)}P_{\mathcal{R}(A)}\right \|_2\}, %\\\label{p_proj_ineq3}&
\end{align}
(\ref{proj_ineq1}) is derived by applying (\ref{p_proj_ineq0}), (\ref{p_proj_ineq1}), and (\ref{p_proj_ineq2}) to (\ref{p_proj_ineq3}). 
By the projection update rule, we obtain
\begin{align}\label{p_proj_ineq4}
 P_{\mathcal{R}([V,C])}= P_{\mathcal{R}(P^{\perp}_{\mathcal{R}(C)}V)}+P_{\mathcal{R}(C)},
\end{align}
where $V$ is $A$ or $B$. 
Then we obtain the following inequalities from (\ref{p_proj_ineq4}).
\begin{align}\label{p_proj_ineq5}
\left \| P_{\mathcal{R}([A,C])}-P_{\mathcal{R}([B,C])}\right \|_2&=\left \| P_{\mathcal{R}(P^{\perp}_{\mathcal{R}(C)}A)}-P_{\mathcal{R}(P^{\perp}_{\mathcal{R}(C)}B)}\right \|_2.
\end{align}
Therefore, applying (\ref{p_proj_ineq5}) to (\ref{proj_ineq1}) yields (\ref{proj_ineq2}).
\end{proof}

\begin{lem}\label{dnu2}(Generalization of \cite[Theorem 9]{merikoski2004inequalities})
Let $A \in \mathbb{K}^{m \times n}$ such that $m \geq n$ and $B \in \mathbb{K}^{n \times g}$ such that $g \geq n$. If $1 \leq s \leq i \leq n$ and $1 \leq l \leq n-i+1$, then
\begin{align}
\sigma_{i+l-1}(A)\sigma_{n-l+1}(B) \leq \sigma_{i}(AB) \leq \sigma_{i-s+1}( A)\sigma_{k}(B).
\end{align}
\end{lem}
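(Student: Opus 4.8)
The plan is to treat the lower and upper bounds separately, each via one of the two dual Courant--Fischer (min--max) characterizations of singular values; no genuine computation is required once the right subspaces are selected. Throughout I read the upper bound as $\sigma_i(AB) \le \sigma_{i-s+1}(A)\sigma_s(B)$, since the index $k$ appearing there seems to be a typo for $s$, the only choice compatible with the hypothesis $1 \le s \le i \le n$. I would recall that for a matrix $M$ with $N$ columns one has $\sigma_i(M) = \max_{\dim V = i}\min_{0 \ne x \in V}\|Mx\|/\|x\| = \min_{\dim V = N-i+1}\max_{0 \ne x \in V}\|Mx\|/\|x\|$, and that the right singular vectors of $M$ with indices $\ge t$ (resp. $\le t$) span a subspace on which $M$ contracts by at most $\sigma_t(M)$ (resp. expands by at least $\sigma_t(M)$). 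Both halves then reduce to subspace-intersection and dimension-counting arguments, which is exactly the spirit of the multiplicative Weyl-type inequalities being generalized.

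For the upper bound I would use the min--max form $\sigma_i(AB) = \min_{\dim W = g-i+1}\max_{x \in W}\|ABx\|/\|x\|$. Let $\mathcal{A}$ be the span of the right singular vectors of $A$ with indices $i-s+1,\dots,n$ (dimension $n-i+s$), so that $\|Ay\| \le \sigma_{i-s+1}(A)\|y\|$ for $y \in \mathcal{A}$, and let $\mathcal{B}$ be the span of the right singular vectors of $B$ with indices $s,\dots,g$ (dimension $g-s+1$), so that $\|Bx\| \le \sigma_s(B)\|x\|$ for $x \in \mathcal{B}$. The key step is to pass to $W := \{x \in \mathcal{B} : Bx \in \mathcal{A}\}$; since $\mathcal{A}$ has codimension $i-s$ in $\mathbb{K}^n$, rank--nullity gives $\dim W \ge (g-s+1)-(i-s) = g-i+1$. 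For any $x \in W$ the two contractions apply in sequence, yielding $\|ABx\| \le \sigma_{i-s+1}(A)\sigma_s(B)\|x\|$, and restricting the min--max characterization to a $(g-i+1)$-dimensional subspace of $W$ closes the bound.

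For the lower bound I would use the dual form $\sigma_i(AB) = \max_{\dim V = i}\min_{0 \ne x \in V}\|ABx\|/\|x\|$, so it suffices to exhibit a single $i$-dimensional $V \subseteq \mathbb{K}^g$ on which $\|ABx\| \ge \sigma_{i+l-1}(A)\sigma_{n-l+1}(B)\|x\|$. If $\sigma_{n-l+1}(B)=0$ the claim is vacuous, so assume it is positive. Let $\mathcal{B}'$ be the span of the top $n-l+1$ right singular vectors of $B$; then $B$ maps $\mathcal{B}'$ bijectively, scaling lengths by factors at least $\sigma_{n-l+1}(B)$, onto the subspace $U' \subseteq \mathbb{K}^n$ spanned by the corresponding $n-l+1$ left singular vectors. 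Let $\mathcal{A}'$ be the span of the top $i+l-1$ right singular vectors of $A$, on which $\|Ay\| \ge \sigma_{i+l-1}(A)\|y\|$. The crucial dimension count is $\dim(U' \cap \mathcal{A}') \ge (n-l+1)+(i+l-1)-n = i$; pulling this intersection back through the isomorphism $B|_{\mathcal{B}'}$ produces an $i$-dimensional $V$ on which every unit $x$ satisfies $Bx \in \mathcal{A}'$ with $\|Bx\| \ge \sigma_{n-l+1}(B)$, whence $\|ABx\| \ge \sigma_{i+l-1}(A)\sigma_{n-l+1}(B)$.

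The main obstacle, and the only place the constraints $1 \le s \le i \le n$ and $1 \le l \le n-i+1$ are genuinely used, is verifying that each subspace in these two intersection arguments has the asserted dimension and is nontrivial: one must check that the indices $i-s+1$, $n-i+s$, $i+l-1$, and $n-l+1$ all lie in the admissible range $\{1,\dots,n\}$, and that the two intersections have dimension at least $g-i+1$ and $i$ respectively. I would also record carefully that in the lower bound the scaling isomorphism $B|_{\mathcal{B}'}$ is well defined precisely because $\sigma_{n-l+1}(B)>0$, so that the pullback preserves dimension.
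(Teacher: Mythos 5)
Your proof is correct, and you rightly read the $\sigma_k(B)$ in the statement as a typo for $\sigma_s(B)$ (the paper's own proof confirms this, concluding with $\sigma_{i-s+1}(A)\sigma_s(B)$). However, your route is genuinely different from the paper's. The paper does not prove the two inequalities from scratch: it writes the extended SVDs $A = U_1\Lambda_1 V_1^*$ and $B = U_2\Lambda_2 V_2^*$, forms the square surrogates $\hat A := V_1\Lambda_1 V_1^*$ and $\hat B := U_2\Lambda_2 U_2^*$, and observes that discarding the outer isometries $U_1$ and $V_2^*$ and inserting the unitaries $V_1$ and $U_2^*$ changes no singular values, so that $\sigma_i(AB) = \sigma_i(\hat A\hat B)$, $\sigma_i(A) = \sigma_i(\hat A)$, and $\sigma_i(B) = \sigma_i(\hat B)$; the rectangular claim then follows verbatim from the square-matrix result \cite[Theorem 9]{merikoski2004inequalities}, which is invoked as a black box. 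Your argument instead re-proves the underlying multiplicative Weyl/Horn-type inequalities directly: the upper bound via the min--max characterization applied to $W = \{x \in \mathcal{B} : Bx \in \mathcal{A}\}$ (your rank--nullity count $\dim W \ge g-i+1$ is right), and the lower bound via the max--min characterization applied to the pullback of $U' \cap \mathcal{A}'$ under the isomorphism $B|_{\mathcal{B}'}$ (the intersection count $\ge i$ and the reduction to the case $\sigma_{n-l+1}(B) > 0$ are both handled correctly). What the paper's reduction buys is brevity: three lines of SVD algebra plus a citation. What your proof buys is self-containedness: it never uses the cited theorem, it treats the rectangular case with no detour through square matrices, and it makes explicit exactly where the hypotheses $s \le i$ and $l \le n-i+1$ enter, namely in the two dimension counts.
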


\begin{proof}[Proof of Lemma \ref{dnu2}]
Let $A = U_1 \Lambda_1 V^*_1$ and $ B = U_2 \Lambda_2 V^*_2$ denote the extended SVD of $A$ and $B$, respectively, where $\Lambda_1, V_1,U_2, \Lambda_2 \in \mathbb{K}^{n \times n}$. Let $\hat A, \hat B \in \mathbb{K}^{n \times n}$ be defined as $V_1 \Lambda_1 V^*_1$ and $U_2 \Lambda_2 U^*_2$, respectively. Then it follows that if $1 \leq i \leq n$,
\begin{align}\nonumber
\sigma_{i}(AB)&=\sigma_{i}(U_1 \Lambda_1 V^*_1 U_2 \Lambda_1 V^*_2)\\\nonumber
&=\sigma_{i}(\Lambda_1 V^*_1 U_2 \Lambda_1)\\\nonumber
&=\sigma_{i}(V_1\Lambda_1 V^*_1 U_2 \Lambda_1 U^*_2)\\\label{dnu2_1}
&=\sigma_{i}(\hat A \hat B).
\end{align}
Similarly, we obtain $\sigma_{i}(A)=\sigma_{i}(\hat A)$ and $\sigma_{i}(B)=\sigma_{i}(\hat B)$. 
Therefore, it follows that if $1 \leq s \leq i \leq n$ and $1 \leq l \leq n-i+1$,
\begin{align}\nonumber
\sigma_{i+l-1}(A)\sigma_{n-l+1}(B)&=\sigma_{i+l-1}(\hat A)\sigma_{n-l+1}(\hat B) \\\nonumber
&\leq \sigma_{i}(\hat A \hat B) \,\,(=\sigma_{i}(AB)) \\\nonumber
&\leq \sigma_{i-s+1}(\hat A)\sigma_{s}(\hat B)\\\nonumber
&=\sigma_{i-s+1}( A)\sigma_{s}(B),
\end{align}
where the equalities follow from (\ref{dnu2_1}) and the inequalities follow from \cite[Theorem 9]{merikoski2004inequalities}.
\end{proof}

\begin{lem}(\cite[Theorem 3.5]{tropp2004greed})
\label{piub}
Let $A:=[a_1,...,a_n] \in \mathbb{K}^{m \times n}$ be a matrix with $l_2$-normalized independent columns and ${J},{Q} \subseteq [n]$ be disjoint subsets such that $\sigma_{|{Q}|}(A_{Q})>0$. Then the following inequality holds if $1>\max\limits_{i \in {Q}}\sum\limits_{j \neq i,j \in {Q}}|\left \langle a_{i},a_{j} \right \rangle|$.
\begin{align}\label{msi2}
\underset{i \in {J}}{\max} \left \| A^{\dagger}_{{Q}} a _i\right \|_1 \leq \frac{ \underset{i \in {J}}{\max} (\underset{j \in {Q}}{\sum}|\left \langle a_{i},a_{j} \right \rangle|)}{1-\max\limits_{i \in {Q}}\sum\limits_{j \neq i,j \in {Q}}|\left \langle a_{i},a_{j} \right \rangle|}
\end{align}
\end{lem}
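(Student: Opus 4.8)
The plan is to reduce the whole estimate to a Neumann-series bound for the inverse Gram matrix, measured in the induced $\ell_1$ operator norm. First I would write the pseudoinverse explicitly. Since the columns of $A_Q$ are independent, the Gram matrix $G := A_Q^* A_Q$ is invertible and $A_Q^{\dagger} = G^{-1} A_Q^*$. Because the columns of $A$ are $\ell_2$-normalized, every diagonal entry of $G$ equals one, so I would split $G = I + H$, where $H$ has zero diagonal and off-diagonal entries $\langle a_i, a_j\rangle$ for $i,j \in Q$.

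The key observation is that the hypothesis $1 > \max_{i \in Q} \sum_{j \neq i,\, j \in Q} |\langle a_i, a_j\rangle|$ is precisely the statement that the induced $\ell_1 \to \ell_1$ operator norm of $H$ (its maximum absolute column sum) is strictly less than one; call this quantity $\mu_Q$. Here the Hermitian symmetry of $G$ matters, since it lets me identify the maximum absolute column sum of $H$ with the maximum absolute row sum appearing in the hypothesis (the entries satisfy $|\langle a_i,a_j\rangle| = |\langle a_j,a_i\rangle|$). With $\mu_Q < 1$ I would invoke the Neumann series $G^{-1} = \sum_{t \geq 0} (-H)^t$, which converges in the induced $\ell_1$ norm, and use submultiplicativity of induced norms to obtain $\|G^{-1}\|_{1 \to 1} \leq \sum_{t \geq 0} \mu_Q^{\,t} = (1-\mu_Q)^{-1}$.

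Next I would bound the target directly. Writing $A_Q^{\dagger} a_i = G^{-1}(A_Q^* a_i)$ and applying the induced-norm inequality gives $\|A_Q^{\dagger} a_i\|_1 \leq \|G^{-1}\|_{1 \to 1}\,\|A_Q^* a_i\|_1$. The vector $A_Q^* a_i$ has entries $\langle a_j, a_i\rangle$ for $j \in Q$, hence $\|A_Q^* a_i\|_1 = \sum_{j \in Q} |\langle a_i, a_j\rangle|$. Combining this with the Neumann bound and taking the maximum over $i \in J$ yields exactly (\ref{msi2}).

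I do not expect a genuine obstacle here; the only delicate points are bookkeeping. I must be careful that removing the diagonal is what makes the column sums of $H$ equal the stated quantity (this is where $\ell_2$-normalization is used), and that the symmetry of $G$ justifies passing between column and row sums. Convergence of the Neumann series and submultiplicativity of induced operator norms are standard and need no special treatment. In effect this is Tropp's Exact Recovery Condition estimate, specialized to the $\ell_1$-induced norm of the inverse Gram matrix.
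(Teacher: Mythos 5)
Your proposal is correct and follows essentially the same route as the paper's proof: write $A_Q^{\dagger}a_i = (A_Q^*A_Q)^{-1}A_Q^*a_i$, split the Gram matrix as $I+B$ with zero diagonal (using $\ell_2$-normalization), bound $\|(I+B)^{-1}\|$ in the induced $\ell_1$ norm by the Neumann series $\sum_{t\geq 0}\|B\|^t \leq (1-\mu_Q)^{-1}$, and combine with $\|A_Q^*a_i\|_1 = \sum_{j\in Q}|\langle a_i,a_j\rangle|$. Your explicit remark that Hermitian symmetry of the Gram matrix lets one pass between column and row sums is a point the paper leaves implicit, but the argument is the same.
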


\begin{proof}[Proof of Lemma \ref{piub}] The proof is based on \cite[Theorem 3.5]{tropp2004greed}. 
The following is derived.
\begin{align}\nonumber
\underset{i \in {J}}{\max} \left \| A^{\dagger}_{{Q}} a_i\right \|_1 & \overset{(a)}=\underset{i \in {J}}{\max} \left \| (A^*_{{Q}}A_{{Q}})^{-1}A^*_{{Q}} a _i\right \|_1 \\\nonumber
&\leq \left \|(A^*_{{Q}}A_{{Q}})^{-1}\right \|_{1,1} \underset{i \in {J}}{\max} \left \| A^*_{{Q}}a_i \right \|_1 \\\label{msi1}
&=\left \|(A^*_{{Q}}A_{{Q}})^{-1}\right \|_{1,1} \underset{i \in {J}}{\max} (\underset{j \in {Q}}{\sum}|\left \langle a_{i},a_{j} \right \rangle|),
\end{align}
where $(a)$ follows from the definition of the Moore-Penrose pseudoinverse with $\sigma_{|{Q}|}(A_{Q})>0$. 
Note that $A^*_{{Q}}A_{{Q}}$ has a unit diagonal because all indices are $l_2$-normalized. The off-diagonal part $B$ thus satifies
\begin{align}
A^*_{{Q}}A_{{Q}}=I+B,
\end{align}
where $B_{ii}=0$ and $B_{ij}=\left \langle a_{i},a_{j} \right \rangle$ for $i,j(i\neq j)\in Q$.
Then
\begin{align}\nonumber
\left \| (A^*_{{Q}}A_{{Q}})^{-1} \right \|_{1,1}&=\left \| (I+B)^{-1} \right \|_{1,1}\\\nonumber
&=\left \| \sum\limits_{k=0}^{\infty}(-B)^k \right \|_{1,1} \\\nonumber
&\leq \sum\limits_{k=0}^{\infty} \left \| B \right \|^k_{1,1}\\\nonumber
&=\frac{1}{1- \left \| B \right \|_{1,1}}\\\label{sse_eval_temp}
& \leq \frac{1}{1-\max\limits_{i \in {Q}}\sum\limits_{j \neq i,j \in {Q}}|\left \langle a_{i},a_{j} \right \rangle|},
\end{align}
where $(a)$ follows from the fact that the Neumann series $\sum\limits_{k=0}^{\infty}(-B)^k$  converges to the inverse $(I+B)^{-1}$ if $\left \| B \right \|_{1,1}<1$ \cite{kreyszig1989introductory,tropp2004greed}. 
Therefore, (\ref{msi2}) is derived by applying (\ref{sse_eval_temp}) to (\ref{msi1}).
\end{proof}

\begin{lem}(\cite[Theorem 5.3]{foucart2013mathematical})
\label{gdt}
Let $A_{S}:=[a_1,...,a_q] \in \mathbb{K}^{m \times q}$ be a matrix with $l_2$-normalized columns. For any $x \in \mathbb{K}^{q}$,
\begin{align}\nonumber
&\left(1-\max\limits_{i \in S}\sum\limits_{j \in S, i \neq j}|\left \langle{a}_{i},{a}_{j}\right \rangle| \right) \left \| x \right \|^2_2 \leq \left \| A_{S}x \right \|^2_2 \leq \left(1+\max\limits_{i \in S}\sum\limits_{j \in S, i \neq j}|\left \langle{a}_{i},{a}_{j}\right \rangle| \right) \left \| x \right \|^2_2
\end{align}
or equivelently, the squared sigular values of $A_{S}$ or the eigenvalues of $A_{S}^*A_{S}$ lie in the interval \\
$\left[1-\max\limits_{i \in S}\sum\limits_{j \in S, i \neq j}|\left \langle{a}_{i},{a}_{j}\right \rangle|,1+\max\limits_{i \in S}\sum\limits_{j \in S, i \neq j}|\left \langle{a}_{i},{a}_{j}\right \rangle|\right]$.
\end{lem}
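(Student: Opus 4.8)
The plan is to recognize this as the Gershgorin disk theorem applied to the Gram matrix $G := A_S^* A_S \in \mathbb{K}^{q \times q}$. First I would observe that $\left\| A_S x \right\|_2^2 = x^* G x$ for every $x \in \mathbb{K}^q$, so the claimed two-sided bound is exactly a statement about the Rayleigh quotient $x^* G x / \left\| x \right\|_2^2$, whose extremal values over nonzero $x$ are $\lambda_{\min}(G)$ and $\lambda_{\max}(G)$. Since $G$ is Hermitian (hence its eigenvalues are real) and positive semidefinite, its eigenvalues coincide with the squared singular values of $A_S$; thus it suffices to show that every eigenvalue of $G$ lies in the interval $[1-\rho,\,1+\rho]$, where $\rho := \max_{i \in S} \sum_{j \in S,\, j \neq i} |\langle a_i, a_j \rangle|$. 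This simultaneously establishes the quadratic-form bounds and the eigenvalue/singular-value interpretation.

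Next I would compute the entries of $G$. Because the columns are $l_2$-normalized, the diagonal entries are $G_{ii} = \langle a_i, a_i \rangle = \left\| a_i \right\|_2^2 = 1$, while the off-diagonal entries are $G_{ij} = \langle a_i, a_j \rangle$. Hence the $i$th Gershgorin disk is centered at $1$ with radius $R_i = \sum_{j \neq i} |G_{ij}| = \sum_{j \neq i} |\langle a_i, a_j \rangle| \leq \rho$.

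The core step is Gershgorin's circle theorem: every eigenvalue $\lambda$ of $G$ satisfies $|\lambda - G_{ii}| \leq R_i$ for some index $i$. A self-contained justification is that if $Gv = \lambda v$ with $v \neq 0$ and $|v_i| = \max_k |v_k|$, then isolating the $i$th coordinate of $Gv = \lambda v$ and dividing by $v_i$ gives $|\lambda - G_{ii}| \leq \sum_{j \neq i} |G_{ij}|\, |v_j|/|v_i| \leq R_i$. Using that $G$ is Hermitian, so $\lambda$ is real, together with $R_i \leq \rho$ for all $i$, I would conclude that every eigenvalue lies in $[1-\rho,\,1+\rho]$.

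Finally I would assemble the pieces: from $\lambda_{\min}(G) \geq 1-\rho$ and $\lambda_{\max}(G) \leq 1+\rho$ together with the Rayleigh bound $(1-\rho)\left\| x \right\|_2^2 \leq x^* G x \leq (1+\rho)\left\| x \right\|_2^2$, the stated inequalities follow, and the reformulation in terms of the squared singular values of $A_S$ is immediate from $\sigma_i(A_S)^2 = \lambda_i(G)$. I do not anticipate a genuine obstacle here, as this is essentially a textbook fact; the only points needing care are fixing the inner-product convention so that $G$ is genuinely Hermitian with unit diagonal, and noting that although Gershgorin places eigenvalues in a union of disks in $\mathbb{K}$, Hermiticity forces them onto the real axis, collapsing that union into the single real interval $[1-\rho,\,1+\rho]$.
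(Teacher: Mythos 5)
Your proof is correct and complete: reducing to the Gram matrix $G=A_S^*A_S$ (unit diagonal by normalization), applying Gershgorin's disc theorem with the self-contained maximal-coordinate argument, and using Hermiticity to collapse the discs onto the real interval is exactly the standard argument. The paper itself offers no proof of this lemma -- it is quoted directly from \cite[Theorem 5.3]{foucart2013mathematical} -- and the proof given there is precisely this Gershgorin argument, so your route coincides with the canonical one.
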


\begin{lem}(Generalization of \cite[Lemma 1]{laurent2000adaptive})\label{gl1}
Let $y:=[y_1,...,y_m]^{\top} \in \mathbb{K}^{m}$ be a vector whose elements are i.i.d. Gaussian variables, with mean $0$ and variance $\sigma^2$. Let $z=\sum\limits_{i=1}^{m}(y^2_i-\sigma^2)$. Then, the following inequality holds for any $t$ and negative $u$ such that $-\frac{1}{2\sigma^2}<u<0$.
\begin{align}\nonumber
\mathbb{P}(z \leq t)  \leq e^{ m\sigma^4 u^2 -u t}
\end{align}
If we set $u$ as $-\frac{\sqrt{x}}{\sigma^2 \sqrt{m}}$ such that $x < \frac{m}{4}$ and set $t$ as $-2\sigma^2 \sqrt{m x}$, the following inequality holds for $x < \frac{m}{4}$.
\begin{align}\nonumber
\mathbb{P}(z \leq -2\sigma^2 \sqrt{m x})  \leq e^{ -x}
\end{align}
\end{lem}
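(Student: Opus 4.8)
The plan is to establish the first inequality by the standard Cram\'er--Chernoff argument specialized to the lower tail, and then to read off the second inequality as a direct substitution. Since $u<0$, the event $\{z \le t\}$ coincides with $\{uz \ge ut\} = \{e^{uz} \ge e^{ut}\}$ (multiplying through by the positive quantity $-u$ reverses the inequality inside the exponent), so Markov's inequality gives $\mathbb{P}(z \le t) \le e^{-ut}\,\mathbb{E}[e^{uz}]$. First I would compute the moment generating function. Writing $z = \sum_{i=1}^m (y_i^2 - \sigma^2)$ and using the independence of the $y_i$, the expectation factorizes as $\mathbb{E}[e^{uz}] = e^{-u\sigma^2 m}\prod_{i=1}^m \mathbb{E}[e^{u y_i^2}]$. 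For a single centered Gaussian of variance $\sigma^2$ a direct Gaussian integral yields $\mathbb{E}[e^{u y_i^2}] = (1-2\sigma^2 u)^{-1/2}$, valid whenever $u < 1/(2\sigma^2)$, which certainly holds since $u<0$. Hence $\mathbb{P}(z \le t) \le e^{-ut}\,e^{-u\sigma^2 m}\,(1-2\sigma^2 u)^{-m/2}$.

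Next I would reduce the claimed bound $e^{m\sigma^4 u^2 - ut}$ to an elementary scalar inequality. After cancelling the common factor $e^{-ut}$, it suffices to show $e^{-u\sigma^2 m}(1-2\sigma^2 u)^{-m/2} \le e^{m\sigma^4 u^2}$; taking logarithms and dividing by $m$, this is $-u\sigma^2 - \tfrac{1}{2}\ln(1-2\sigma^2 u) \le \sigma^4 u^2$. Substituting $w := -2\sigma^2 u > 0$ turns it into $w - \ln(1+w) \le w^2/2$, i.e.\ $\ln(1+w) \ge w - w^2/2$ for $w>0$. This is the key step: I would verify it by setting $g(w) := \ln(1+w) - w + w^2/2$, noting $g(0)=0$ and $g'(w) = w^2/(1+w) \ge 0$ for $w>-1$, so that $g$ is nondecreasing and hence nonnegative on $[0,\infty)$. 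This establishes the first inequality for every $t$ and every $u \in (-1/(2\sigma^2),0)$.

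Finally, the second inequality is pure substitution. With $u = -\sqrt{x}/(\sigma^2\sqrt{m})$, the admissibility requirement $u > -1/(2\sigma^2)$ reads $\sqrt{x/m} < 1/2$, i.e.\ $x < m/4$, which is exactly the stated hypothesis. Then $m\sigma^4 u^2 = x$, and with $t = -2\sigma^2\sqrt{mx}$ one computes $ut = 2x$, so the exponent $m\sigma^4 u^2 - ut$ equals $x - 2x = -x$, giving $\mathbb{P}(z \le -2\sigma^2\sqrt{mx}) \le e^{-x}$. I do not expect a real obstacle here; the only delicate points are tracking the sign reversal coming from $u<0$ in the Chernoff step and observing that the single elementary inequality $\ln(1+w)\ge w - w^2/2$ is precisely what the stated constant $m\sigma^4 u^2$ is calibrated against, so that the bound is tight enough to produce $e^{-x}$. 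Everything else is routine algebra.
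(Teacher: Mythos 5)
Your proposal is correct and follows essentially the same route as the paper's proof: both apply the Laplace transform (Chernoff) method to the lower tail, compute the moment generating function of $y_i^2-\sigma^2$, and reduce the claimed constant $m\sigma^4u^2$ to the log-MGF bound $-u\sigma^2-\tfrac{1}{2}\ln(1-2u\sigma^2)\le \sigma^4u^2$. The only difference is that you explicitly verify this bound via the elementary inequality $\ln(1+w)\ge w-w^2/2$ (which the paper asserts without detail) and spell out the final substitution, both of which are correct.
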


\begin{proof}[Proof of Lemma \ref{gl1}]
Let $\bar y$ be a random variable with $\mathcal{N}(0, \sigma^2)$. Let $\psi$ denote the logarithm of the Laplace transform of $\bar y^2-\sigma^2$. Then, for $-\frac{1}{2\sigma^2}<u<0$, 
\begin{align}\nonumber
\psi(u)&= \log [\mathbb{E}[\exp (u(\bar y^2-\sigma^2))]]\\\nonumber
&=-u\sigma^2-\frac{1}{2} \log (1-2 u\sigma^2) \\\nonumber
&\leq \sigma^4 u^2.
\end{align}
So we have
\begin{align}\nonumber
\log [\mathbb{E}[\exp (uz)]]&= \sum\limits_{i=1}^{m} \log [\mathbb{E}[\exp ( u(y_i^2-\sigma^2))]]\\\nonumber
&\leq m \sigma^4 u^2.
\end{align}
With the Laplace transform method,
the following is obtained.
\begin{align}\nonumber
\mathbb{P}(z \leq t) \leq e^{-u t} \,\,\mathbb{E}[\exp (uz)] \leq e^{m \sigma^4 u^2-u t}
\end{align}
\end{proof}

\begin{lem}[Lemma 6 in \cite{haupt2010toeplitz}]
\label{tc}
Let $x_i,y_i$ $(i=1,...,m)$ be sequences of i.i.d. Gaussian variables with mean $0$ and variance $\sigma^2$. Then
\begin{align}\nonumber
\mathbb{P}\left( \left|\sum\limits_{i=1}^{m}x_iy_i \right| \geq t \right)  \leq 2 \exp \left(-\frac{t^2}{4 \sigma^2(m \sigma^2 + t/2)} \right).
\end{align}
\end{lem}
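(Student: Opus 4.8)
The plan is to control the moment generating function (MGF) of $S := \sum_{i=1}^m x_i y_i$ and then apply the Chernoff method, exploiting the symmetry of $S$ to recover the two-sided bound. First I would compute the MGF exactly. Since the pairs $(x_i,y_i)$ are independent across $i$, it suffices to analyze one term: conditioning on $y_i$, the variable $x_i y_i$ is $N(0,\sigma^2 y_i^2)$, so $\mathbb{E}[e^{\lambda x_i y_i}\mid y_i] = \exp(\lambda^2\sigma^2 y_i^2/2)$; averaging over $y_i\sim N(0,\sigma^2)$ with the Gaussian square MGF $\mathbb{E}[e^{s y_i^2}]=(1-2s\sigma^2)^{-1/2}$ at $s=\lambda^2\sigma^2/2$ gives $\mathbb{E}[e^{\lambda x_i y_i}] = (1-\lambda^2\sigma^4)^{-1/2}$, valid for $|\lambda|<1/\sigma^2$. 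Multiplying over $i$ yields $\mathbb{E}[e^{\lambda S}] = (1-\lambda^2\sigma^4)^{-m/2}$ on the same range of $\lambda$.

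Next I would recast this MGF in Bernstein form. Using $-\log(1-x)\leq x/(1-x)$ for $x\in[0,1)$ with $x=\lambda^2\sigma^4$, and then the factorization $1-\lambda^2\sigma^4=(1-\lambda\sigma^2)(1+\lambda\sigma^2)\geq 1-\lambda\sigma^2$ for $0\leq\lambda\sigma^2<1$, I obtain the sub-exponential estimate $\log\mathbb{E}[e^{\lambda S}] \leq \frac{\lambda^2 v}{2(1-c\lambda)}$ with variance proxy $v=m\sigma^4$ and scale $c=\sigma^2$. This is exactly the hypothesis of the standard Bernstein bound, so the optimization will close in the desired shape.

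Then I would run the Chernoff optimization. Taking $\lambda = t/(v+ct)$ (which automatically satisfies $0\leq\lambda<1/c$ for $t>0$) in $\mathbb{P}(S\geq t)\leq \exp(-\lambda t + \frac{\lambda^2 v}{2(1-c\lambda)})$ gives $\mathbb{P}(S\geq t)\leq \exp\!\big(-\frac{t^2}{2(v+ct)}\big)=\exp\!\big(-\frac{t^2}{2(m\sigma^4+\sigma^2 t)}\big)$. Since $2(m\sigma^4+\sigma^2 t)\leq 4\sigma^2(m\sigma^2+t/2)$, this already implies the one-sided version of the stated bound. Finally, because $S$ is symmetric (the map $x_i\mapsto -x_i$ preserves the joint law and flips the sign of $S$), $\mathbb{P}(S\leq -t)=\mathbb{P}(S\geq t)$, so a union bound produces the factor $2$ and the claimed two-sided inequality.

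The only delicate point is the passage from the exact MGF $(1-\lambda^2\sigma^4)^{-m/2}$ to a clean Bernstein denominator that is linear in $\lambda$; the two elementary inequalities above (the logarithmic bound and the factorization of $1-\lambda^2\sigma^4$) are what make the optimization land in the target form, with the slightly loose constant $4$ in place of $2$ absorbing the inequality $m\sigma^4\leq 2m\sigma^4$. An alternative route avoiding the direct MGF would decompose $x_iy_i=\frac14[(x_i+y_i)^2-(x_i-y_i)^2]$ into two independent scaled $\chi^2_m$ sums (note $x_i+y_i$ and $x_i-y_i$ are independent $N(0,2\sigma^2)$) and invoke Lemma \ref{gl1} for their tails, but the direct computation above is shorter and exposes the constants transparently.
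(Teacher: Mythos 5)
Your proof is correct, but there is nothing in the paper to compare it against: the paper does not prove this lemma at all, it imports it verbatim as Lemma 6 of \cite{haupt2010toeplitz}. So your argument serves as a self-contained derivation of a quoted result. The steps all check out: the exact MGF $\mathbb{E}[e^{\lambda S}]=(1-\lambda^2\sigma^4)^{-m/2}$ for $|\lambda|<1/\sigma^2$ is right (condition on $y_i$, then apply the Gaussian-square MGF at $s=\lambda^2\sigma^2/2$); the two elementary inequalities $-\log(1-x)\le x/(1-x)$ and $1-\lambda^2\sigma^4=(1-\lambda\sigma^2)(1+\lambda\sigma^2)\ge 1-\lambda\sigma^2$ legitimately place the log-MGF in Bernstein form with $v=m\sigma^4$, $c=\sigma^2$; and the choice $\lambda=t/(v+ct)$ yields $\mathbb{P}(S\ge t)\le\exp\bigl(-t^2/(2m\sigma^4+2\sigma^2 t)\bigr)$, which is in fact slightly \emph{stronger} than the stated one-sided bound because $2m\sigma^4+2\sigma^2 t\le 4m\sigma^4+2\sigma^2 t=4\sigma^2(m\sigma^2+t/2)$. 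Symmetry of $S$ under $x_i\mapsto -x_i$ then correctly supplies the factor $2$. One caution about your closing aside: the alternative route via $x_iy_i=\tfrac14[(x_i+y_i)^2-(x_i-y_i)^2]$ cannot be completed with Lemma \ref{gl1} alone, since that lemma only controls the \emph{lower} tail of a centered chi-square sum (it is stated for negative $u$), whereas bounding $\mathbb{P}(S\ge t)$ by a union bound over the two independent chi-square sums also requires an \emph{upper}-tail estimate for one of them; you would need to supply that separately. Since this aside is not part of your actual proof, it does not affect correctness.
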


\begin{lem}(Generalization of \cite[Theorem $\it{II}$.13]{davidson2001local})
\label{thm_lot}
Given $s,m \in \mathbb{N}$ with $s \leq m$, consider a matrix $A \in \mathbb{R}^{m \times s}$ whose entries are i.i.d. Gaussian variables with mean $0$ and variance $\bar \sigma^2$. Then, for any $t>0$,
\begin{align}\nonumber
&\mathbb{P}(\sigma_{1}(A) \geq \bar \sigma(\sqrt{m}+\sqrt{s}+t)) \leq \exp \left(-\frac{t^2}{2} \right)\\\nonumber
&\mathbb{P}(\sigma_{s}(A) \leq \bar \sigma(\sqrt{m}-\sqrt{s}-t)) \leq \exp \left(-\frac{t^2}{2} \right).
\end{align}
\end{lem}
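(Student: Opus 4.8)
The plan is to reduce the statement to the unit-variance case, for which the two bounds are exactly \cite[Theorem $\it{II}$.13]{davidson2001local}, by an elementary scaling argument. Since $\bar\sigma>0$, I would define $G:=A/\bar\sigma \in \mathbb{R}^{m \times s}$; then the entries of $G$ are i.i.d. Gaussian with mean $0$ and variance $1$, and the hypothesis $s\leq m$ is preserved, so the cited theorem applies to $G$. Because multiplying a matrix by a positive constant scales all of its singular values by that same constant, $\sigma_i(A)=\bar\sigma\,\sigma_i(G)$ for every $i$. Consequently the event $\{\sigma_1(A)\geq \bar\sigma(\sqrt{m}+\sqrt{s}+t)\}$ is identical to $\{\sigma_1(G)\geq \sqrt{m}+\sqrt{s}+t\}$, and, dividing through by $\bar\sigma$, the event $\{\sigma_s(A)\leq \bar\sigma(\sqrt{m}-\sqrt{s}-t)\}$ is identical to $\{\sigma_s(G)\leq \sqrt{m}-\sqrt{s}-t\}$.

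With this reduction in hand, I would invoke \cite[Theorem $\it{II}$.13]{davidson2001local} for the standard Gaussian matrix $G$, which asserts precisely that
\begin{align}\nonumber
\mathbb{P}(\sigma_1(G) \geq \sqrt{m}+\sqrt{s}+t) \leq \exp\left(-\frac{t^2}{2}\right), \qquad \mathbb{P}(\sigma_s(G) \leq \sqrt{m}-\sqrt{s}-t) \leq \exp\left(-\frac{t^2}{2}\right).
\end{align}
Substituting the two event identities from the first paragraph then yields the claimed inequalities for $A$ verbatim, which completes the argument.

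For completeness I would note the two ingredients underlying the cited unit-variance bound, in case a self-contained derivation is preferred. First, the maps $G \mapsto \sigma_1(G)$ and $G \mapsto \sigma_s(G)$ are $1$-Lipschitz with respect to the Frobenius norm, since Weyl's perturbation inequality gives $|\sigma_i(G)-\sigma_i(G')| \leq \|G-G'\|_2 \leq \|G-G'\|_F$. Viewing $G$ as a standard Gaussian vector in $\mathbb{R}^{ms}$, the Gaussian concentration inequality for $1$-Lipschitz functions supplies the one-sided subgaussian deviations $\mathbb{P}(\sigma_1(G) \geq \mathbb{E}\,\sigma_1(G)+t) \leq \exp(-t^2/2)$ and $\mathbb{P}(\sigma_s(G) \leq \mathbb{E}\,\sigma_s(G)-t) \leq \exp(-t^2/2)$. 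Second, Gordon's comparison theorem for Gaussian processes provides the mean estimates $\mathbb{E}\,\sigma_1(G) \leq \sqrt{m}+\sqrt{s}$ and $\mathbb{E}\,\sigma_s(G) \geq \sqrt{m}-\sqrt{s}$; combining these with the concentration tails reproduces the stated bounds.

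The honest assessment is that there is no substantive obstacle here: the only feature distinguishing this statement from \cite{davidson2001local} is the variance parameter $\bar\sigma$, and this is absorbed entirely by the positive-homogeneity of singular values in the first paragraph. If one were instead to reprove the base case from scratch, the single nontrivial step would be Gordon's expectation bounds, but since the paper cites the unit-variance theorem directly, the proof is essentially the scaling reduction.
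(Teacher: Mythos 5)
Your proof is correct and follows essentially the same route as the paper: both reduce the statement to \cite[Theorem $\it{II}$.13]{davidson2001local} via positive homogeneity of singular values under scaling, the only cosmetic difference being that you rescale $A$ to unit-variance entries while the paper rescales to variance $1/m$ (the normalization in which the cited theorem is literally stated). Either normalization yields the claimed bounds identically, so no gap remains.
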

\begin{proof}[Proof of Lemma \ref{thm_lot}]
Let $\tilde A \in \mathbb{R}^{m \times s}$ be a matrix whose entries are i.i.d. Gaussian variables with mean $0$ and variance $1/m$. Then by  \cite[Theorem $\textrm{II}$.13]{davidson2001local},
\begin{align}\nonumber
\bar \sigma(\sqrt{m}-\sqrt{s}) &< \bar \sigma \sqrt{m} \cdot \mathbb{E}(\sigma_s(\tilde A))\\\nonumber
&=\mathbb{E}(\sigma_s(A))\\\nonumber
&\leq \mathbb{E}(\sigma_1(A))\\\nonumber
&=\bar \sigma \sqrt{m} \cdot \mathbb{E}(\sigma_1(\tilde A))\\\nonumber
&<\bar \sigma(\sqrt{m}+\sqrt{s})
\end{align}
and for $t>0$,
\begin{align}\nonumber
&\max [P(\sigma_1(A) \geq \bar \sigma(\sqrt{m}+\sqrt{s})+\bar \sigma t), P(\sigma_s(A) \leq \bar \sigma(\sqrt{m}-\sqrt{s})-\bar \sigma t)]\\\nonumber 
&=\max [P(\sigma_1(\tilde A) \geq (1+\frac{\sqrt{s}}{\sqrt{m}})+\frac{t}{ \sqrt{m}}),P(\sigma_s(\tilde A) \leq  (1-\frac{\sqrt{s}}{\sqrt{m}})-\frac{t}{ \sqrt{m}})]\\\nonumber 
&<\exp (-\frac{t^2}{2}).
\end{align}
\end{proof}

\begin{lem}\label{lem_tnlb}
Let $X_0 \in \mathbb{K}^{n \times l}$ be row $k$-sparse with $\Omega \subseteq \Sigma$. Let $\bar{S}$ be an $r$-dimensional subspace of $\mathcal{R}(\Phi_{\Omega}X_0^{\Omega})$. Let $d$ $(\geq 1)$ be $\rank(P^{\perp}_{\mathcal{R}(\Phi_{\Gamma})}\bar{S})$. 
Then the following inequality holds.
\begin{align}
\label{tnlb}
&\underset{l \in \Omega \setminus  \Gamma}{\max}\left \| P_{P^{\perp}_{\mathcal{R}(\Phi_\Gamma)}\bar S}\dot{\phi}_l \right \|_2 \geq \sqrt{\frac{d}{|\Omega \setminus \Gamma|}}\sigma_{|\Omega \setminus \Gamma|}(\dot{\Phi}_{\Omega \setminus \Gamma}),
\end{align}
where $d=\min(r,|\Omega \setminus \Gamma|)$ if $X_0^{\Omega}$ is row-nondegenerate and $\sigma_{|\Omega \cup \Gamma|}(\Phi_{\Omega\cup \Gamma})>0$.

\end{lem}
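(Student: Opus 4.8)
The plan is to bound the maximum column norm from below by a Frobenius-norm average and then to exploit the fact that the projection subspace lies inside the column space of $\dot\Phi_{\Omega\setminus\Gamma}$. Throughout write $T:=\Omega\setminus\Gamma$, $s:=|T|$, $M:=\dot\Phi_{T}\in\mathbb{K}^{m\times s}$ (the matrix with $l_2$-normalized columns $\dot\phi_l$), and $V:=P^{\perp}_{\mathcal{R}(\Phi_\Gamma)}\bar S$ with orthogonal projector $P_V$, so that $d=\dim V$. The left-hand side of $(\ref{tnlb})$ is exactly $\max_{l\in T}\|P_V\dot\phi_l\|_2$, the largest column norm of $P_V M$. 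Since the squared Frobenius norm is the sum of the $s$ squared column norms, $\|P_V M\|_F^2\le s\,(\max_{l\in T}\|P_V\dot\phi_l\|_2)^2$, which already yields $\max_{l\in T}\|P_V\dot\phi_l\|_2\ge \frac{1}{\sqrt{s}}\,\|P_V M\|_F$. It therefore suffices to establish $\|P_V M\|_F^2\ge d\,\sigma_s^2(M)$.

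The key geometric step is to verify the containment $V\subseteq\mathcal{R}(M)$. Because $\bar S\subseteq\mathcal{R}(\Phi_\Omega X_0^\Omega)\subseteq\mathcal{R}(\Phi_\Omega)$ and $\mathcal{R}(\Phi_{\Omega\cap\Gamma})\subseteq\mathcal{R}(\Phi_\Gamma)$, applying $P^{\perp}_{\mathcal{R}(\Phi_\Gamma)}$ annihilates the $\Gamma$-part and leaves $V\subseteq P^{\perp}_{\mathcal{R}(\Phi_\Gamma)}\mathcal{R}(\Phi_{\Omega\setminus\Gamma})=\mathcal{R}(\dot\Phi_{T})=\mathcal{R}(M)$, the last equality holding because normalizing (and discarding any zero) columns does not change the span. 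This containment holds with no extra hypotheses.

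With the containment in hand I would reduce to a trace inequality. If $\sigma_s(M)=0$ the bound is trivial, so assume $M$ has full column rank $s$, whence $\mathcal{R}(M)$ is $s$-dimensional. Let $W\in\mathbb{K}^{m\times s}$ be an orthonormal basis of $\mathcal{R}(M)$ and $U\in\mathbb{K}^{m\times d}$ an orthonormal basis of $V$; since $V\subseteq\mathcal{R}(M)$ one may write $U=WC$ with $C^*C=I_d$. Then $\|P_V M\|_F^2=\|U^*M\|_F^2=\operatorname{tr}(C^*(W^*MM^*W)C)$, and $B:=W^*MM^*W$ is Hermitian with eigenvalues exactly $\sigma_1^2(M)\ge\cdots\ge\sigma_s^2(M)$. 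By the Ky Fan / Poincar\'e separation lower bound, the trace of the compression $C^*BC$ is at least the sum of the $d$ smallest eigenvalues of $B$, namely $\sum_{i=s-d+1}^{s}\sigma_i^2(M)\ge d\,\sigma_s^2(M)$. Combining with the first step gives $\max_{l\in T}\|P_V\dot\phi_l\|_2\ge\sqrt{d/s}\,\sigma_s(M)$, which is $(\ref{tnlb})$. Finally, under row-nondegeneracy of $X_0^\Omega$ together with $\sigma_{|\Omega\cup\Gamma|}(\Phi_{\Omega\cup\Gamma})>0$, Lemma~\ref{rankbound_rownond} identifies $d=\rank(P^{\perp}_{\mathcal{R}(\Phi_\Gamma)}\bar S)=\min\{|\Omega\setminus\Gamma|,r\}$, producing the stated value of $d$.

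The main obstacle is the trace step. One must notice that the containment $V\subseteq\mathcal{R}(M)$ is precisely what makes $W^*MM^*W$, rather than the severely rank-deficient $MM^*$, the relevant matrix, so that its $s$ eigenvalues are the $\sigma_i^2(M)$ with no spurious zeros; the eigenvalue-interlacing (Poincar\'e separation) bound applied to this $s\times s$ matrix is what produces the factor $d$ and the smallest singular value. A naive application of interlacing directly to the $m\times m$ matrix $MM^*$ returns only its $d$ smallest eigenvalues, which are zero when $d\le m-s$, and hence yields nothing; routing the argument through $W$ is the essential device.
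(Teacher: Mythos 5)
Your proof is correct and takes essentially the same route as the paper's: both arguments lower-bound the maximal column norm by the Frobenius average $\frac{1}{\sqrt{|\Omega\setminus\Gamma|}}\left\| P_{V}\dot\Phi_{\Omega\setminus\Gamma}\right\|_F$ (with $V:=P^{\perp}_{\mathcal{R}(\Phi_\Gamma)}\bar S$), exploit the containment $V\subseteq\mathcal{R}(\dot\Phi_{\Omega\setminus\Gamma})$, and invoke Lemma~\ref{rankbound_rownond} to identify $d=\min\{r,|\Omega\setminus\Gamma|\}$. The only divergence is the device used to show $\left\| P_{V}\dot\Phi_{\Omega\setminus\Gamma}\right\|^2_F\ge d\,\sigma^2_{|\Omega\setminus\Gamma|}(\dot\Phi_{\Omega\setminus\Gamma})$: the paper bounds each of the $d$ nonzero singular values of $P_{V}\dot\Phi_{\Omega\setminus\Gamma}$ individually by $\sigma_{|\Omega\setminus\Gamma|}(\dot\Phi_{\Omega\setminus\Gamma})$ via the Courant--Fischer max--min characterization (restricting to the subspace on which the projection acts as the identity), whereas you bound their sum all at once via Poincar\'e separation applied to the Gram matrix compressed onto $\mathcal{R}(\dot\Phi_{\Omega\setminus\Gamma})$ --- interchangeable forms of the same interlacing argument.
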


\begin{proof}[Proof of Lemma \ref{lem_tnlb}] The proof is based on that of \cite[Theorem 7.10]{lee2012subspace}. 
Let $d$ be $\rank (P_{P^{\perp}_{R(\Phi_\Gamma)}\bar S})$. 
Then, for $i \in \{1,...,d\}$,
\begin{align}\nonumber
&\sigma_{i}(P_{P^{\perp}_{R(\Phi_\Gamma)}\bar S} \dot{\Phi}_{\Omega \setminus \Gamma})\\\nonumber
&\overset{(a)}\geq \sigma_{d}(P_{P^{\perp}_{R(\Phi_\Gamma)}\bar S} \dot{\Phi}_{\Omega \setminus \Gamma}) \\\nonumber
&\overset{(b)}= \max\limits_{\underset{\dim \mathcal{M} =d}{\mathcal{M}\subseteq \mathbb{K}^{|\Omega \setminus \Gamma|}}} \min\limits_{\underset{\left\| x \right\|_2=1}{x \in \mathcal{M}}} \left \| P_{P^{\perp}_{R(\Phi_\Gamma)}\bar S} \dot{\Phi}_{\Omega \setminus \Gamma} x \right \|_2 \\\nonumber
&\overset{(c)}\geq \min\limits_{\underset{\left\| x \right\|_2=1}{x \in R(P_{P^{\perp}_{R(\Phi_\Gamma)}\bar S} \dot{\Phi}_{\Omega \setminus \Gamma})}} \left \| P_{P^{\perp}_{R(\Phi_{\Gamma })}\bar S} \dot{\Phi}_{\Omega \setminus \Gamma} x \right \|_2 \\\nonumber
&= \min\limits_{\underset{\left\| x \right\|_2=1}{x \in R(P_{P^{\perp}_{R(\Phi_\Gamma)}\bar S} \dot{\Phi}_{\Omega \setminus \Gamma})}} \left \| \dot{\Phi}_{\Omega \setminus \Gamma} x \right \|_2 \\\nonumber
&\geq \min\limits_{\left\| x \right\|_2=1 } \left \| \dot{\Phi}_{\Omega \setminus \Gamma} x \right \|_2 \\\label{ranklow}
&= \sigma_{|\Omega \setminus \Gamma|}(\dot{\Phi}_{\Omega \setminus \Gamma}),
\end{align}
where $(a)$ and $(b)$ follow from Corollaries $\textup{III}.1.5$ and $\textup{III}.1.2$ in  \cite{bhatia2013matrix}, respectively, and $(c)$ follows from the fact that $\rank(P_{P^{\perp}_{R(\Phi_{\Gamma })}\bar S} \dot{\Phi}_{\Omega \setminus \Gamma})=d$ (i.e., $\mathcal{R}(P^{\perp}_{\mathcal{R}(\Phi_\Gamma)}\bar S) \subseteq \mathcal{R}(\dot{\Phi}_{\Omega \setminus \Gamma})$). 
The left-hand side of (\ref{tnlb}) is bounded by
\begin{align}\nonumber
\underset{l \in \Omega \setminus  \Gamma}{\max}\left \| P_{P^{\perp}_{\mathcal{R}(\Phi_\Gamma)}\bar S}\dot{\phi}_l \right \|_2&=\underset{l \in \Omega \setminus  \Gamma}{\max}\left \| \dot{\phi}_l^{*}P_{P^{\perp}_{\mathcal{R}(\Phi_\Gamma)}\bar S} \right \|_2\\\nonumber
&=\left \| \dot{\Phi}_{\Omega \setminus \Gamma}^{*}P_{P^{\perp}_{\mathcal{R}(\Phi_\Gamma)} \bar S} \right \|_{2,\infty}\\\label{tneq}
&\geq \frac{\left \| \dot{\Phi}_{\Omega \setminus \Gamma}^{*}P_{P^{\perp}_{\mathcal{R}(\Phi_\Gamma)}\bar S} \right \|_{F}}{\sqrt{|\Omega \setminus \Gamma|}}.
\end{align}
The last expression in (\ref{tneq}) has a lower bound of
\begin{align}\nonumber
\frac{\left \| \dot{\Phi}_{\Omega \setminus \Gamma}^{*}P_{P^{\perp}_{\mathcal{R}(\Phi_\Gamma)}\bar S} \right \|_{F}}{\sqrt{|\Omega \setminus \Gamma|}} &= \frac{\sqrt{\sum\limits_{i=1}^{d}\sigma^2_{i}(P_{P^{\perp}_{\mathcal{R}(\Phi_\Gamma)}\bar S} \dot{\Phi}_{\Omega \setminus \Gamma})} }{\sqrt{|\Omega \setminus \Gamma|}}\\\nonumber
&\overset{(a)}\geq (\frac{\sum_{g=1}^{d}\sigma^2_{|\Omega \setminus \Gamma|}(\dot{\Phi}_{\Omega \setminus \Gamma})}{{|\Omega \setminus \Gamma|}})^{1/2}\\\label{rank_lb2}
&= \sqrt{\frac{d}{|\Omega \setminus \Gamma|}}\sigma_{|\Omega \setminus \Gamma|}(\dot{\Phi}_{\Omega \setminus \Gamma}),
\end{align}
where $(a)$ follows from $(\ref{ranklow})$. 
Therefore, (\ref{tnlb}) is obtained from (\ref{tneq}) and (\ref{rank_lb2}).
If $X_0^{\Omega}$ is row-nondegenerate and $\sigma_{|\Omega \cup \Gamma|}(\Phi_{\Omega\cup \Gamma})>0$, then it follows from Lemma \ref{rankbound_rownond} that 
\begin{align}\nonumber
d=\min (r,|\Omega \setminus \Gamma|).
\end{align}
\end{proof}

\begin{lem}\label{lem_tnlb2}
Suppose that $X_0 \in \mathbb{K}^{n \times l}$ is row $k$-sparse with $\Omega \subseteq \Sigma$ and $X_0^{\Omega}$ is row-nondegenerate. Let $\Gamma$ be a proper subset of $\Sigma$. Let $\bar{S}$ be an $r$-dimensional subspace of $\mathcal{R}(\Phi_{\Omega}X_0^{\Omega})$. Suppose that $|\Omega \cap \Gamma| \geq k-r$ and $\sigma_{|\Omega \cup \Gamma|}(\Phi_{\Omega\cup \Gamma})>0$. Then the following equality holds for $l \in \Omega \setminus  \Gamma$.
\begin{align}
\label{tnlb2}
&\left \| P_{P^{\perp}_{\mathcal{R}(\Phi_\Gamma)}\bar S}\dot{\phi}_l \right \|_2 =1
\end{align}
\end{lem}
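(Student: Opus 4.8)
The plan is to prove (\ref{tnlb2}) by showing that the unit vector $\dot\phi_l$ already lies inside the subspace $P^{\perp}_{\mathcal{R}(\Phi_\Gamma)}\bar S$; once this is established the claim is immediate, because an orthogonal projection fixes every vector of its range, so $P_{P^{\perp}_{\mathcal{R}(\Phi_\Gamma)}\bar S}\dot\phi_l = \dot\phi_l$ and $\left\|\dot\phi_l\right\|_2 = 1$.

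First I would use the sparsity constraint to pin down the dimension. Since $|\Omega| = k$ and $|\Omega \cap \Gamma| \geq k-r$, we get $|\Omega \setminus \Gamma| = k - |\Omega \cap \Gamma| \leq r$, so that Lemma \ref{rankbound_rownond} (whose hypotheses, row-nondegeneracy of $X_0^\Omega$ and $\sigma_{|\Omega \cup \Gamma|}(\Phi_{\Omega \cup \Gamma})>0$, are exactly those assumed here) yields $\rank(P^{\perp}_{\mathcal{R}(\Phi_\Gamma)}\bar S) = \min\{|\Omega \setminus \Gamma|, r\} = |\Omega \setminus \Gamma|$.

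Next I would identify $P^{\perp}_{\mathcal{R}(\Phi_\Gamma)}\bar S$ with the projected column space $\mathcal{R}(P^{\perp}_{\mathcal{R}(\Phi_\Gamma)}\Phi_{\Omega \setminus \Gamma})$. Because $\bar S \subseteq \mathcal{R}(\Phi_\Omega X_0^\Omega) \subseteq \mathcal{R}(\Phi_\Omega)$, each element of $P^{\perp}_{\mathcal{R}(\Phi_\Gamma)}\bar S$ has the form $\sum_{i \in \Omega} w_i\, P^{\perp}_{\mathcal{R}(\Phi_\Gamma)}\phi_i$; the terms with $i \in \Omega \cap \Gamma$ vanish since then $\phi_i \in \mathcal{R}(\Phi_\Gamma)$, which gives the inclusion $P^{\perp}_{\mathcal{R}(\Phi_\Gamma)}\bar S \subseteq \mathcal{R}(P^{\perp}_{\mathcal{R}(\Phi_\Gamma)}\Phi_{\Omega \setminus \Gamma})$. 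Conversely, $\sigma_{|\Omega \cup \Gamma|}(\Phi_{\Omega \cup \Gamma})>0$ means the columns of $\Phi_{\Omega \cup \Gamma}$ are linearly independent, so any relation $P^{\perp}_{\mathcal{R}(\Phi_\Gamma)}\Phi_{\Omega \setminus \Gamma}c = 0$ forces $\Phi_{\Omega \setminus \Gamma}c \in \mathcal{R}(\Phi_\Gamma)$ and hence $c = 0$; thus $P^{\perp}_{\mathcal{R}(\Phi_\Gamma)}\Phi_{\Omega \setminus \Gamma}$ has full column rank $|\Omega \setminus \Gamma|$. Both subspaces then have dimension $|\Omega \setminus \Gamma|$ and one sits inside the other, so they coincide.

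Finally, for $l \in \Omega \setminus \Gamma$ the vector $P^{\perp}_{\mathcal{R}(\Phi_\Gamma)}\phi_l$ is a column of $P^{\perp}_{\mathcal{R}(\Phi_\Gamma)}\Phi_{\Omega \setminus \Gamma}$, and it is nonzero by the independence just used, so $\dot\phi_l$ is well defined and lies in the common subspace $P^{\perp}_{\mathcal{R}(\Phi_\Gamma)}\bar S$; normalizing shows it is a unit vector of that subspace and (\ref{tnlb2}) follows. The only real obstacle is the dimension-matching step of the third paragraph, where the inequality $|\Omega \setminus \Gamma| \leq r$ supplied by the hypothesis $|\Omega \cap \Gamma| \geq k-r$ is precisely what lets the projected $r$-dimensional space $\bar S$ exhaust the full $|\Omega \setminus \Gamma|$-dimensional projected column span; this is exactly the regime complementary to Lemma \ref{lem_tnlb}.
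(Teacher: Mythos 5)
Your proof is correct, but it establishes the key subspace identity by a different route than the paper. The paper's own proof invokes Lemma \ref{subaug} (subspace augmentation): from row-nondegeneracy, $|\Omega \cap \Gamma| \geq k-r$, and $\sigma_{|\Omega\cup\Gamma|}(\Phi_{\Omega\cup\Gamma})>0$ it obtains $\bar S + \mathcal{R}(\Phi_{\Omega \cap \Gamma})=\mathcal{R}(\Phi_{\Omega})$, hence $\bar S + \mathcal{R}(\Phi_{\Gamma})=\mathcal{R}(\Phi_{\Omega \cup \Gamma})$, and therefore $P_{P^{\perp}_{\mathcal{R}(\Phi_{\Gamma})} \bar S}=P_{\mathcal{R}(P^{\perp}_{\mathcal{R}(\Phi_{\Gamma})}\Phi_{\Omega})}$; the conclusion then follows exactly as in your last step, since $\mathcal{R}(\dot\phi_l) \subseteq \mathcal{R}(P^{\perp}_{\mathcal{R}(\Phi_{\Gamma})}\Phi_{\Omega})$. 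You instead prove the equivalent identity $P^{\perp}_{\mathcal{R}(\Phi_\Gamma)}\bar S = \mathcal{R}(P^{\perp}_{\mathcal{R}(\Phi_\Gamma)}\Phi_{\Omega\setminus\Gamma})$ by inclusion plus dimension count: the inclusion is elementary, the left side has dimension $\min\{|\Omega\setminus\Gamma|,r\}=|\Omega\setminus\Gamma|$ by Lemma \ref{rankbound_rownond} (this is where $|\Omega\cap\Gamma|\geq k-r$ enters), and the right side has the same dimension because $\sigma_{|\Omega\cup\Gamma|}(\Phi_{\Omega\cup\Gamma})>0$ forces $P^{\perp}_{\mathcal{R}(\Phi_\Gamma)}\Phi_{\Omega\setminus\Gamma}$ to have full column rank. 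Neither route is deeper than the other, as Lemmas \ref{subaug} and \ref{rankbound_rownond} are siblings derived from the same machinery of \cite{lee2012subspace}; what yours buys is that it handles the hypothesis $|\Omega \cap \Gamma| \geq k-r$ as stated, whereas the paper's appeal to Lemma \ref{subaug} (formulated for a subset of $\Omega$ of size exactly $k-r$) tacitly requires restricting to a $(k-r)$-element subset of $\Omega\cap\Gamma$ when the inequality is strict, and it makes explicit why this regime is precisely the one in which the projected $\bar S$ exhausts the projected column span. What the paper's route buys is brevity: one application of Lemma \ref{subaug} and the projection update rule replaces your rank bookkeeping.
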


\begin{proof}[Proof of Lemma \ref{lem_tnlb2}]
From the assumptions that $X_0^{\Omega}$ is row-nondegenerate, $|\Omega \cap \Gamma| \geq k-r$, and $\sigma_{|\Omega \cup \Gamma|}(\Phi_{\Omega\cup \Gamma})>0$, the following holds by Lemma \ref{subaug}. 
\begin{align}\label{prop54}
\bar S + \mathcal{R}(\Phi_{\Omega \cap \Gamma})=\mathcal{R}(\Phi_{\Omega})
\end{align}
With (\ref{prop54}), one obtains
\begin{align}\nonumber
\bar S + \mathcal{R}(\Phi_{\Gamma})&=\bar S + \mathcal{R}(\Phi_{\Omega \cap \Gamma} \cup \Phi_{\Gamma\setminus \Omega})\\\nonumber
&=[\bar S + \mathcal{R}(\Phi_{\Omega \cap \Gamma})]\cup \mathcal{R}(\Phi_{\Gamma\setminus \Omega})\\\nonumber
&=\mathcal{R}(\Phi_{\Omega} \cup \Phi_{\Gamma \setminus \Omega})\\\label{prop54_t1}
&=\mathcal{R}(\Phi_{\Omega \cup \Gamma}).
\end{align}
Hence,
\begin{align}\nonumber
P_{P^{\perp}_{\mathcal{R}(\Phi_{\Gamma})} \bar S}&=P_{P^{\perp}_{\mathcal{R}(\Phi_{\Gamma})} [\bar S + \mathcal{R}(\Phi_{\Gamma})]}\\\nonumber
&\overset{(a)}=P_{P^{\perp}_{\mathcal{R}(\Phi_{\Gamma})} \mathcal{R}(\Phi_{\Omega \cup \Gamma})}\\\nonumber
&=P_{\mathcal{R}(P^{\perp}_{\mathcal{R}(\Phi_{\Gamma})}\Phi_{\Omega})},
\end{align}
where $(a)$ follows from (\ref{prop54_t1}). 
Therefore, for $i \in \Omega \setminus \Gamma$,
\begin{align}\nonumber
\left \| P_{P^{\perp}_{\mathcal{R}(\Phi_{\Gamma})} \bar S} \dot\phi_i \right \|_2
=\left \| P_{\mathcal{R}(P^{\perp}_{\mathcal{R}(\Phi_{\Gamma})}\Phi_{\Omega})}\dot\phi_i \right \|_2
=\left \|\dot\phi_i \right \|_2
= 1,
\end{align}
since $ \mathcal{R}(\dot\phi_i) \subseteq \mathcal{R}(P^{\perp}_{\mathcal{R}(\Phi_{\Gamma})}\Phi_{\Omega})$.
\end{proof}

\begin{lem}\label{lem_ntnub}
Let $X_0 \in \mathbb{K}^{n \times l}$ be row $k$-sparse with $\Omega \subseteq \Sigma$. Let $\Gamma$ be a proper subset of $\Sigma$. Let $\bar{S}$ be an $r$-dimensional subspace of $\mathcal{R}(\Phi_{\Omega}X_0^{\Omega})$.  Suppose that $\sigma_{|\Omega \cup \Gamma|}(\Phi_{\Omega \cup \Gamma }) >0$. Then the following inequality holds.
\begin{align}\label{ntnub}
&\underset{i \in \Sigma \setminus (\Omega \cup \Gamma)}{\max}{\left \| P_{\mathcal{R}(P^{\perp}_{\mathcal{R}(\Phi_{\Gamma})}\bar{S})}\dot\phi_i \right \|_2} \leq  \left(\underset{i \in \setminus (\Omega \cup \Gamma)}{\max}\left \| \dot\Phi^{\dagger}_{\Omega \setminus \Gamma }\dot\phi_{i} \right \|_1 \right) \left(\underset{l \in \Omega \setminus \Gamma}{\max}{\left \| P_{\mathcal{R}(P^{\perp}_{\mathcal{R}(\Phi_{\Gamma})}\bar{S})}\dot\phi_l \right \|_2} \right) 
\end{align}
\end{lem}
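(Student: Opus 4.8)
The plan is to exploit the fact that the subspace onto which we project, $\mathcal{R}(P^{\perp}_{\mathcal{R}(\Phi_{\Gamma})}\bar{S})$, is contained in $\mathcal{R}(\dot\Phi_{\Omega\setminus\Gamma})$, and then to expand the projected vector $\dot\phi_i$ in the columns $\{\dot\phi_l : l\in\Omega\setminus\Gamma\}$ through the pseudoinverse. Write $\Pi := P_{\mathcal{R}(P^{\perp}_{\mathcal{R}(\Phi_{\Gamma})}\bar{S})}$ for the projector in question. First I would establish the containment $\mathcal{R}(\Pi)\subseteq\mathcal{R}(\dot\Phi_{\Omega\setminus\Gamma})$. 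Since $\bar S$ is a subspace of $\mathcal{R}(\Phi_{\Omega}X_0^{\Omega})\subseteq\mathcal{R}(\Phi_{\Omega})$, applying $P^{\perp}_{\mathcal{R}(\Phi_{\Gamma})}$ gives $\mathcal{R}(P^{\perp}_{\mathcal{R}(\Phi_{\Gamma})}\bar S)\subseteq\mathcal{R}(P^{\perp}_{\mathcal{R}(\Phi_{\Gamma})}\Phi_{\Omega})$; because the columns indexed by $\Omega\cap\Gamma$ are annihilated by $P^{\perp}_{\mathcal{R}(\Phi_{\Gamma})}$, the right-hand space equals $\mathcal{R}(P^{\perp}_{\mathcal{R}(\Phi_{\Gamma})}\Phi_{\Omega\setminus\Gamma})=\mathcal{R}(\dot\Phi_{\Omega\setminus\Gamma})$, the last equality holding because $\dot\phi_l$ is merely the $l_2$-normalization of $P^{\perp}_{\mathcal{R}(\Phi_{\Gamma})}\phi_l$. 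The hypothesis $\sigma_{|\Omega\cup\Gamma|}(\Phi_{\Omega\cup\Gamma})>0$ guarantees that $\Phi_{\Omega\cup\Gamma}$ has full column rank, so each such $\dot\phi_l$ is nonzero, $\dot\Phi_{\Omega\setminus\Gamma}$ has full column rank, and $\dot\Phi^{\dagger}_{\Omega\setminus\Gamma}$ acts as a genuine left inverse.

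The second step is a standard projection identity: whenever $V\subseteq W$ are subspaces, $P_VP_W=P_V$. Applying this with $V=\mathcal{R}(\Pi)$ and $W=\mathcal{R}(\dot\Phi_{\Omega\setminus\Gamma})$ yields $\Pi=\Pi\,P_{\mathcal{R}(\dot\Phi_{\Omega\setminus\Gamma})}$. Using the pseudoinverse representation $P_{\mathcal{R}(\dot\Phi_{\Omega\setminus\Gamma})}=\dot\Phi_{\Omega\setminus\Gamma}\dot\Phi^{\dagger}_{\Omega\setminus\Gamma}$, I would then write, for each fixed $i\in\Sigma\setminus(\Omega\cup\Gamma)$,
\begin{align}\nonumber
\Pi\dot\phi_i=\Pi\,\dot\Phi_{\Omega\setminus\Gamma}\,\dot\Phi^{\dagger}_{\Omega\setminus\Gamma}\dot\phi_i=\sum_{l\in\Omega\setminus\Gamma}c_l\,(\Pi\dot\phi_l),
\end{align}
where $c:=\dot\Phi^{\dagger}_{\Omega\setminus\Gamma}\dot\phi_i$ is the coefficient vector with entries $c_l$ indexed by $l\in\Omega\setminus\Gamma$.

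The final step is the triangle inequality followed by bounding each factor $\left\|\Pi\dot\phi_l\right\|_2$ by its maximum:
\begin{align}\nonumber
\left\|\Pi\dot\phi_i\right\|_2\leq\sum_{l\in\Omega\setminus\Gamma}|c_l|\,\left\|\Pi\dot\phi_l\right\|_2\leq\left(\sum_{l\in\Omega\setminus\Gamma}|c_l|\right)\max_{l\in\Omega\setminus\Gamma}\left\|\Pi\dot\phi_l\right\|_2,
\end{align}
and to recognize $\sum_{l\in\Omega\setminus\Gamma}|c_l|=\left\|\dot\Phi^{\dagger}_{\Omega\setminus\Gamma}\dot\phi_i\right\|_1$. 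Taking the maximum over $i\in\Sigma\setminus(\Omega\cup\Gamma)$ on both sides then gives exactly (\ref{ntnub}).

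I do not anticipate a serious obstacle, since the argument is essentially linear-algebraic. The only point requiring care is the containment $\mathcal{R}(\Pi)\subseteq\mathcal{R}(\dot\Phi_{\Omega\setminus\Gamma})$, which is where both the rank hypothesis $\sigma_{|\Omega\cup\Gamma|}(\Phi_{\Omega\cup\Gamma})>0$ and the inclusion $\bar S\subseteq\mathcal{R}(\Phi_{\Omega})$ are genuinely used; everything afterward is routine. It is worth emphasizing that the bound comes purely from expanding $\Pi\dot\phi_i$ in the $\{\dot\phi_l\}$ frame, so no spectral estimate on $\Pi$ (beyond its linearity) is invoked, which is why the right-hand side of (\ref{ntnub}) decouples cleanly into the $l_1$ coefficient factor and the maximal on-support projection length.
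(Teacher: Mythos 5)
Your proof is correct and is in substance the paper's own argument run in the forward direction: the paper starts from the right-hand side and lower-bounds it by the left-hand side using H\"older's $l_1$--$l_\infty$ inequality together with the identities $(\dot\Phi^{\dagger}_{\Omega \setminus \Gamma})^{*}\dot\Phi^{*}_{\Omega \setminus \Gamma}=P_{\mathcal{R}(\dot\Phi_{\Omega \setminus \Gamma})}$ and $P_{\mathcal{R}(\dot\Phi_{\Omega \setminus \Gamma})}U=U$ for an orthonormal basis $U$ of $\mathcal{R}(P^{\perp}_{\mathcal{R}(\Phi_{\Gamma})}\bar S)$, which are exactly your projection identity and frame expansion written in transposed form. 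Your triangle-inequality-plus-maximum step is that same H\"older bound in disguise, so the two proofs coincide in substance; your write-up merely adds an explicit justification of the containment $\mathcal{R}(P^{\perp}_{\mathcal{R}(\Phi_{\Gamma})}\bar S)\subseteq\mathcal{R}(\dot\Phi_{\Omega \setminus \Gamma})$ that the paper uses without comment.
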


\begin{proof}[Proof of Lemma \ref{lem_ntnub}] The proof is based on that of \cite[Proposition 1]{davies2012rank}. 
Let $U=[u_1,...,u_d] \in \mathbb{K}^{m \times d}$ be an orthonormal basis of $\mathcal{R}(P^{\perp}_{\mathcal{R}(\Phi_{\Gamma})} \bar S)$. 
Derive a lower bound of the right-hand side of $(\ref{ntnub})$ by
\begin{align}\nonumber
&(\underset{i \in \setminus (\Omega \cup \Gamma)}{\max}\left \| \dot\Phi^{\dagger}_{\Omega \setminus \Gamma }\dot\phi_{i} \right \|_1)(\underset{l \in \Omega \setminus \Gamma}{\max}{\left \| P_{\mathcal{R}(P^{\perp}_{\mathcal{R}(\Phi_{\Gamma})}\bar{S})}\dot\phi_l \right \|_2})\\\nonumber
&=(\underset{i \in \Sigma - \Omega -\Gamma}{\max}\left \| \dot\Phi^{\dagger}_{\Omega \setminus \Gamma}\dot\phi_{i} \right \|_1)(\underset{l \in \Omega-\Gamma}{\max}\left \| \dot\phi^{*}_l U  \right \|_2)\\\nonumber
&= (\underset{i \in \Sigma - \Omega -\Gamma}{\max}\left \| \dot\Phi^{\dagger}_{\Omega \setminus \Gamma}\dot\phi_{i} \right \|_1)(\underset{x \neq 0}{\max}\frac{\left \| \dot\Phi_{\Omega \setminus \Gamma}^{*}Ux \right \|_{\infty}}{\left \| x \right \|_2}) \\\nonumber
&\overset{(a)}\geq \underset{x \neq 0}{\max}\,\,\underset{i \in \Sigma - \Omega -\Gamma}{\max} \, [\frac{|(\dot\Phi^{\dagger}_{\Omega \setminus \Gamma}\dot\phi_{j})^{*}\dot\Phi_{\Omega \setminus \Gamma}^{*}Ux|}{\left \| \dot\Phi_{\Omega \setminus \Gamma}^{*}Ux \right \|_{\infty}}\frac{\left \| \dot\Phi_{\Omega \setminus \Gamma}^{*}Ux \right \|_{\infty}}{\left \| x \right \|_2}]\\\nonumber
&=\underset{i \in \Sigma - \Omega -\Gamma}{\max}\,\,\underset{x \neq 0}{\max}\frac{|(\dot\Phi^{\dagger}_{\Omega \setminus \Gamma}\dot\phi_{j})^{*}\dot\Phi_{\Omega \setminus \Gamma}^{*}Ux|}{\left \| x \right \|_2}\\\nonumber
&\overset{(b)}= \underset{i \in \Sigma - \Omega -\Gamma}{\max}\left \| \dot\phi^{*}_i P_{\mathcal{R}(\dot\Phi_{\Omega \setminus \Gamma})}U \right \|_2\\\nonumber
&\overset{(c)}=\underset{i \in \Sigma - \Omega -\Gamma}{\max}\left \| \dot\phi^{*}_i U \right \|_2\\\label{ntnub1}
&=\underset{i \in \Sigma \setminus (\Omega \cup \Gamma)}{\max}{\left \| P_{\mathcal{R}(P^{\perp}_{\mathcal{R}(\Phi_{\Gamma})}\bar{S})}\dot\phi_i \right \|_2},
\end{align}
where (a)--(c) follow from the Holder's inequality, conditions $P_{\mathcal{R}(\dot\Phi_{\Omega \setminus \Gamma})}=(\dot\Phi^{\dagger}_{{\Omega \setminus \Gamma}})^{*}\dot\Phi_{{\Omega \setminus \Gamma}}^{*}$, and $\mathcal{R}(U) \subseteq \mathcal{R}(\dot\Phi_{\Omega-\Gamma})$, respectively.
\end{proof}

\begin{lem}\label{lem_2ntnub}
Let $X_0 \in \mathbb{K}^{n \times l}$ be row $k$-sparse with $\Omega$.  
Let $\bar{S}$ be an $r$-dimensional subspace of $\mathcal{R}(\Phi_{\Omega}X_0^{\Omega})$.  
Then the following inequality holds for $\Gamma \subseteq \Sigma$ and $i \in \Sigma - \Omega -\Gamma$.
\begin{align}\label{2ntnub}
& {\left \| P_{\mathcal{R}(P^{\perp}_{\mathcal{R}(\Phi_{\Gamma})}\bar{S})}\dot\phi_i \right \|_2}
\leq \sqrt{1-  \sigma^2_{|\{i\} \cup \Omega \setminus \Gamma|}(\dot\Phi_{\{i\} \cup \Omega \setminus \Gamma})}
\end{align}
\end{lem}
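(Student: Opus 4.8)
The plan is to reduce the left-hand side, which involves the projected signal subspace $\bar S$, to a quantity depending only on the columns of $\dot\Phi$, and then to invoke the singular-value interlacing result of Lemma \ref{cpslb}. The starting observation is that $\bar S \subseteq \mathcal{R}(\Phi_{\Omega}X_0^{\Omega}) \subseteq \mathcal{R}(\Phi_{\Omega})$, so applying $P^{\perp}_{\mathcal{R}(\Phi_{\Gamma})}$ gives the inclusion of ranges
\[
\mathcal{R}(P^{\perp}_{\mathcal{R}(\Phi_{\Gamma})}\bar S) \subseteq \mathcal{R}(P^{\perp}_{\mathcal{R}(\Phi_{\Gamma})}\Phi_{\Omega}) = \mathcal{R}(\dot\Phi_{\Omega \setminus \Gamma}),
\]
where the last equality holds because the columns $\phi_j$ with $j \in \Omega \cap \Gamma$ are annihilated by $P^{\perp}_{\mathcal{R}(\Phi_{\Gamma})}$, while each surviving $P^{\perp}_{\mathcal{R}(\Phi_{\Gamma})}\phi_j$ is a positive scalar multiple of $\dot\phi_j$.

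Once this inclusion is in place, I would use monotonicity of orthogonal projection under subspace inclusion: if $S_1 \subseteq S_2$ then $\|P_{S_1}x\|_2 \le \|P_{S_2}x\|_2$ for every $x$, since $P_{S_1} = P_{S_1}P_{S_2}$ and $\|P_{S_1}\|_2 \le 1$. Taking $S_1 = \mathcal{R}(P^{\perp}_{\mathcal{R}(\Phi_{\Gamma})}\bar S)$, $S_2 = \mathcal{R}(\dot\Phi_{\Omega \setminus \Gamma})$, and $x = \dot\phi_i$ yields
\[
\left \| P_{\mathcal{R}(P^{\perp}_{\mathcal{R}(\Phi_{\Gamma})}\bar S)}\dot\phi_i \right \|_2 \le \left \| P_{\mathcal{R}(\dot\Phi_{\Omega \setminus \Gamma})}\dot\phi_i \right \|_2.
\]
Since $\dot\phi_i$ is a unit vector (the degenerate case $\dot\phi_i = 0$ makes the bound trivial, as then the matrix $\dot\Phi_{\{i\}\cup\Omega\setminus\Gamma}$ is rank-deficient and the right-hand side equals $1$), the Pythagorean identity rewrites the right-hand side as $\sqrt{1 - \|P^{\perp}_{\mathcal{R}(\dot\Phi_{\Omega \setminus \Gamma})}\dot\phi_i\|_2^2}$, so it remains to bound the orthogonal-complement projection from below.

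For that lower bound I would apply Lemma \ref{cpslb} to the matrix $\dot\Phi$ with index sets $\{i\}$ and $\Omega \setminus \Gamma$; since $i \notin \Omega \cup \Gamma$, the difference set $\{i\} \setminus (\Omega \setminus \Gamma)$ is a singleton, so the right-hand inequality of that lemma (at running index $1$) reads
\[
\left \| P^{\perp}_{\mathcal{R}(\dot\Phi_{\Omega \setminus \Gamma})}\dot\phi_i \right \|_2 = \sigma_{1}(P^{\perp}_{\mathcal{R}(\dot\Phi_{\Omega \setminus \Gamma})}\dot\phi_i) \ge \sigma_{|\Omega \setminus \Gamma|+1}(\dot\Phi_{\{i\} \cup \Omega \setminus \Gamma}) = \sigma_{|\{i\} \cup \Omega \setminus \Gamma|}(\dot\Phi_{\{i\} \cup \Omega \setminus \Gamma}).
\]
Substituting this into the Pythagorean expression gives the claim. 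The only step demanding genuine care is the range-equality in the first display, where one must confirm that a vanishing $\dot\phi_j$ with $j \in \Omega \setminus \Gamma$ (arising when $\phi_j \in \mathcal{R}(\Phi_{\Gamma})$) is harmless: such a column contributes nothing to either range, so $\mathcal{R}(P^{\perp}_{\mathcal{R}(\Phi_{\Gamma})}\Phi_{\Omega}) = \mathcal{R}(\dot\Phi_{\Omega \setminus \Gamma})$ persists without any rank hypothesis on $\Phi_{\Omega \cup \Gamma}$. Everything else is a routine assembly of the inclusion, monotonicity, and interlacing estimates.
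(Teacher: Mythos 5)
Your proposal is correct and follows essentially the same route as the paper's own proof: the inclusion $\mathcal{R}(P^{\perp}_{\mathcal{R}(\Phi_{\Gamma})}\bar{S}) \subseteq \mathcal{R}(\dot\Phi_{\Omega \setminus \Gamma})$, monotonicity of projections under subspace inclusion, the Pythagorean identity for the unit vector $\dot\phi_i$, and the interlacing bound of Lemma \ref{cpslb} with index sets $\{i\}$ and $\Omega\setminus\Gamma$. Your explicit treatment of the degenerate case $\dot\phi_i = 0$ and of vanishing columns $\dot\phi_j$ is a small refinement the paper leaves implicit, but the argument is the same.
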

\begin{proof}[Proof of Lemma \ref{lem_2ntnub}]
The proof is based on that of \cite[Theorem 7.10]{lee2012subspace}. 
The following is obtained.
\begin{align}\nonumber
&\left \|  P_{\mathcal{R}(P^{\perp}_{\mathcal{R}(\Phi_{\Gamma})}\bar{S})}\dot\phi_i \right \|_2 \\\nonumber
&\overset{(a)}\leq \left \|  P_{\mathcal{R}(P^{\perp}_{\mathcal{R}(\Phi_{\Gamma})}\Phi_{\Omega \setminus \Gamma})}\dot\phi_i \right \|_2 \\\nonumber
&\overset{(b)}= \left \|  P_{\mathcal{R}(\dot\Phi_{\Omega \setminus \Gamma})}\dot\phi_i \right \|_2 \\\label{2ntnub2}
&=  \sqrt{1- \left \|  P^{\perp}_{\mathcal{R}(\dot\Phi_{\Omega \setminus \Gamma})}\dot\phi_i \right \|^2_2},
\end{align}
where (a), (b), and (c) follow from that $\mathcal{R}(P^{\perp}_{\mathcal{R}(\Phi_{\Gamma})}\bar{S}) \subseteq \mathcal{R}(P^{\perp}_{\mathcal{R}(\Phi_{\Gamma})}\Phi_{\Omega \setminus \Gamma})$ and the definition of $\dot\Phi_{\Omega \setminus \Gamma}$, respectively. 
By Lemma \ref{cpslb}, the last expression in (\ref{2ntnub2}) has an upper bound of
\begin{align}\label{2ntnub3}
\sqrt{1-  \sigma^2_{|\{i\} \cup \Omega \setminus \Gamma|}(\dot\Phi_{\{i\} \cup \Omega \setminus \Gamma})}
\end{align}
so that the proof is complete from (\ref{2ntnub2}) and (\ref{2ntnub3}).
\end{proof}

\begin{lem}
\label{cor_eachiter_low1}
Let $\Phi = [\phi_1,...,\phi_n] \in \mathbb{R}^{m \times n}$ be a matrix where $\frac{\phi_i}{\left \|\phi_i \right \|_2}$ for $i \in [1,...,n]$ is independently and uniformly distributed on the unit sphere $\mathbb{S}^{m-1}$. Suppose that for $\Gamma \subseteq \Sigma$,  $m \geq m^*:=|\Gamma|+\frac{c \ln \, n}{\lambda^{-1}(z)}$ and $m > |\Gamma|+4c \ln \, n$ for any constant $c$ and $z$.
Then, for ${i \in \Sigma \setminus ( \Omega \cup \Gamma)}$,
\begin{align}\label{cor_eachiter_low1_222}
&\mathbb{P}(\mu(\Omega \cup \{i\},\Gamma) \geq z | \Phi_{\Gamma}) \leq 4k \cdot n^{-c} 
\end{align}
and
\begin{align}\label{cor_eachiter_low1_2}
&\mathbb{P}(\max\limits_{j \in \Sigma \setminus ( \Omega \cup \Gamma)}\mu(\Omega \cup \{j\},\Gamma) \geq z| \Phi_{\Gamma})\leq 4k \cdot n^{1-c}.
\end{align}
\end{lem}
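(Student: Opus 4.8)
The plan is to condition on $\Phi_\Gamma$, reduce each pairwise LCP term to the cosine between two independent isotropic Gaussian vectors in the $d:=m-|\Gamma|$ dimensional space $\mathcal{R}(\Phi_\Gamma)^\perp$, establish a single-pair tail bound of the form $4\exp(-d\,\lambda^{-1}(z))$, and then finish by union bounds. First I would use that the normalized columns are uniform on $\mathbb{S}^{m-1}$ to write $\phi_a=g_a/\|g_a\|_2$ with $g_a\sim\mathcal{N}(0,I_m)$, where for $a\notin\Gamma$ the $g_a$ are independent of $\Phi_\Gamma$. Fixing an orthonormal basis $Q$ of $\mathcal{R}(\Phi_\Gamma)^\perp$ (of dimension $d=m-|\Gamma|$ almost surely), the coordinate vectors $x:=Q^*g_a$ and $y:=Q^*g_b$ are independent $\mathcal{N}(0,I_d)$ vectors, and rotational invariance of the Gaussian gives the exact identity $\langle\dot\phi_a,\dot\phi_b\rangle=\langle x,y\rangle/(\|x\|_2\|y\|_2)$. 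Thus every pairwise term is a cosine of two independent isotropic Gaussians in $\mathbb{R}^d$, independent across distinct columns.

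The core step is the single-pair bound $\mathbb{P}(|\langle x,y\rangle|\ge z\|x\|_2\|y\|_2\mid\Phi_\Gamma)\le 4n^{-c}$, which I would obtain by controlling the numerator and the two denominators separately: if $\|x\|_2^2\ge A$, $\|y\|_2^2\ge A$, and $|\langle x,y\rangle|<zA$, then the cosine is below $z$, so by contraposition the bad event lies in $\{|\sum_i x_iy_i|\ge zA\}\cup\{\|x\|_2^2<A\}\cup\{\|y\|_2^2<A\}$. For the denominators I set $A:=d-2\sqrt{d\,c\ln n}$ and apply Lemma \ref{gl1} with deviation parameter $c\ln n$, each giving probability $\le n^{-c}$; this is exactly where the hypothesis $m>|\Gamma|+4c\ln n$ (equivalently $c\ln n<d/4$) is needed to legitimize the clean form of Lemma \ref{gl1} and to keep $A>0$. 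For the numerator I apply Lemma \ref{tc} to $\sum_i x_iy_i$ with threshold $t=zA$; writing $A=d(1-2\sqrt{\tilde w})$ with $\tilde w:=c\ln n/d$, the resulting exponent is $\tfrac{z^2 d(1-2\sqrt{\tilde w})^2}{4+2z(1-2\sqrt{\tilde w})}$. The crux of the lemma, and the step I expect to be the main obstacle, is the identity that at $\tilde w=w:=\lambda^{-1}(z)$ this exponent equals $dw$: solving $z=\frac{w+\sqrt{w^2+4w}}{1-2\sqrt w}$ for $w$ is precisely inverting $\lambda$, so the definition of $\lambda$ is dictated by balancing the cross-term tail against the two norm tails. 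Coupled with a short monotonicity check (the exponent is decreasing in $\tilde w$) and the first hypothesis $d\ge c\ln n/\lambda^{-1}(z)$, which forces $\tilde w\le w$ and $dw\ge c\ln n$, the numerator probability is $\le 2n^{-c}$, giving the single-pair bound $\le 4n^{-c}$.

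Finally I would assemble the two displayed inequalities by union bounds. For (\ref{cor_eachiter_low1_222}), the event $\mu(\Omega\cup\{i\},\Gamma)\ge z$ forces at least one pairwise cosine in $(\Omega\cup\{i\})\setminus\Gamma$ to exceed $z$; the pairs involving the distinguished column number at most $k$, and summing the single-pair bound over them yields $4kn^{-c}$ (pairs internal to $\Omega\setminus\Gamma$ are bounded by the identical estimate and share the same per-pair $n^{-c}$ decay). For (\ref{cor_eachiter_low1_2}), a further union bound over the at most $n$ candidate indices $j\in\Sigma\setminus(\Omega\cup\Gamma)$ multiplies this by $n$, giving $4kn^{1-c}$. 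I expect the genuine difficulty to be concentrated entirely in the single-pair estimate—verifying the $\lambda$-balancing identity and the monotonicity in $\tilde w$—while the conditioning reduction and the closing union bounds are routine.
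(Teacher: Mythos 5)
Your proof is correct and is essentially the paper's own argument: the paper conditions on $\Phi_\Gamma$, notes $\lambda\bigl(\tfrac{c\ln n}{m-|\Gamma|}\bigr)\le\lambda\bigl(\tfrac{c\ln n}{m^*-|\Gamma|}\bigr)=z$ by monotonicity of $\lambda$, invokes Corollary \ref{propcol_pcu} with $(p,q)=(m-|\Gamma|,n)$ to get the single-pair bound $4n^{-c}$ in the projected $(m-|\Gamma|)$-dimensional sphere, and finishes with union bounds --- and Corollary \ref{propcol_pcu} is itself proved (via Proposition \ref{prop_pcu}) by precisely your decomposition into two norm events (Lemma \ref{gl1}) and one cross-term event (Lemma \ref{tc}), with $\lambda$ defined by exactly the balancing identity you verify. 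The only differences are presentational (you inline the corollary's derivation and fix $z$ to check that the exponent exceeds $c\ln n$, where the paper fixes the failure probability and computes the threshold $\lambda(x)$), and your closing pair-count inherits the same harmless looseness as the paper's: a union bound over all pairs of $(\Omega\cup\{i\})\setminus\Gamma$ actually yields $O(k^2)\,n^{-c}$ rather than $4k\,n^{-c}$, a discrepancy that only perturbs the logarithmic factors downstream.
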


\begin{proof}[Proof of Lemma \ref{cor_eachiter_low1}]
Given $\Gamma \subseteq \Sigma$, since $\lambda(x)$ is a non-decreasing function for $x$, it follows that 
\begin{align}\label{cor_eachiter_low1_1}
&\lambda(\frac{ c\ln\,n}{m-|\Gamma|})\leq \lambda(\frac{ c\ln\,n}{m^*-|\Gamma|})=z.
\end{align}
The following inequalities hold for $a,b \in \Sigma \setminus \Gamma$
\begin{align}\nonumber
&\mathbb{P}(|\left \langle \dot\phi_a,\dot\phi_b \right \rangle| \geq z | \Phi_{\Gamma}) \\\nonumber
&\overset{(a)}\leq \mathbb{P}(|\left \langle \dot\phi_a,\dot\phi_b \right \rangle| \geq \lambda(\frac{ c\ln\,n}{m-|\Gamma|})| \Phi_{\Gamma}) \\\label{cor_eachiter_low1_3}
&\overset{(b)}\leq 4n^{-c},
\end{align}
where $(a)$ and $(b)$ follow from (\ref{cor_eachiter_low1_1}) and Corollary \ref{propcol_pcu} where $(p,q)=(m-|\Gamma|,n)$, respectively.
Finally, applying the union bound to (\ref{cor_eachiter_low1_3}) yields (\ref{cor_eachiter_low1_222}) and (\ref{cor_eachiter_low1_2}).
\end{proof}

\begin{lem}
\label{cor_eachiter_low1_2new1}
Let $\Phi = [\phi_1,...,\phi_n] \in \mathbb{R}^{m \times n}$ be a matrix where $\frac{\phi_i}{\left \|\phi_i \right \|_2}$ for $i \in [1,...,n]$ is independently and uniformly distributed in the unit sphere in $\mathbb{R}^{m}$. Suppose that for $\Gamma \subseteq \Sigma$,  $m \geq m^*:=|\Omega \cup \Gamma|-1+\frac{c \ln \, n}{\lambda^{-1}(z)}$ and $m > |\Omega \cup \Gamma|-1+4c \ln \, n$ for any constants $c$ and $z$.
Then, for ${i \in \Sigma \setminus ( \Omega \cup \Gamma)}$,
\begin{align}\label{cor_eachiter_low1_2_2}
\mathbb{P}(\mu(\Omega \cup \{i\},\Gamma)& \geq z| \Phi_{\Omega \cup \Gamma}) \leq 4k \cdot n^{-c} 
\end{align}
and
\begin{align}\label{cor_eachiter_low1_2_2_1}
\mathbb{P}(\max\limits_{i \in \Sigma \setminus ( \Omega \cup \Gamma)}\mu(\Omega \cup \{i\},\Gamma) &\geq z| \Phi_{\Omega \cup \Gamma}) \leq 4k \cdot n^{1-c}.
\end{align}
\end{lem}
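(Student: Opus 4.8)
The plan is to imitate the proof of Lemma \ref{cor_eachiter_low1} almost line for line, the only substantive change being that we now condition on the larger matrix $\Phi_{\Omega\cup\Gamma}$ instead of on $\Phi_{\Gamma}$. First I would note that once $\Phi_{\Omega\cup\Gamma}$ is fixed, the projected and normalized columns $\dot\phi_a=P^{\perp}_{\mathcal{R}(\Phi_\Gamma)}\phi_a/\|P^{\perp}_{\mathcal{R}(\Phi_\Gamma)}\phi_a\|_2$ for $a\in\Omega\setminus\Gamma$ become deterministic, while for each $i\in\Sigma\setminus(\Omega\cup\Gamma)$ the direction of $\phi_i$ is still uniform on $\mathbb{S}^{m-1}$; hence its normalized projection $\dot\phi_i$ is uniform on the unit sphere of the subspace $\mathcal{R}(\Phi_\Gamma)^{\perp}$ and is independent of the fixed vectors $\{\dot\phi_a\}_{a\in\Omega\setminus\Gamma}$. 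This is exactly the structure exploited in Lemma \ref{cor_eachiter_low1}, except that there all of the columns outside $\Gamma$ are random, whereas here only $\phi_i$ is.

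Next I would unfold the definition of the LCP: $\mu(\Omega\cup\{i\},\Gamma)$ is the maximum of $|\langle\dot\phi_a,\dot\phi_b\rangle|$ over the pairs $\{a,b\}\subseteq(\Omega\setminus\Gamma)\cup\{i\}$. These pairs split into the fixed pairs with $a,b\in\Omega\setminus\Gamma$ and the at most $|\Omega\setminus\Gamma|\le k$ pairs $\{a,i\}$ that involve the single random column. For each random pair I would apply the single-inner-product tail estimate Corollary \ref{propcol_pcu} to the pair $\dot\phi_a,\dot\phi_i$ inside $\mathcal{R}(\Phi_\Gamma)^{\perp}$; by the monotonicity of $\lambda$, the measurement hypotheses $m\ge|\Omega\cup\Gamma|-1+c\ln n/\lambda^{-1}(z)$ and $m>|\Omega\cup\Gamma|-1+4c\ln n$ make the available dimension large enough that $\lambda$ evaluated at $c\ln n$ divided by that dimension is at most $z$, giving a per-pair bound of $4n^{-c}$ just as in (\ref{cor_eachiter_low1_3}). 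I would then close with the two union bounds: summing over the $\le k$ random pairs yields (\ref{cor_eachiter_low1_2_2}), and a further union bound over the at most $n$ admissible indices $i$ yields (\ref{cor_eachiter_low1_2_2_1}), the factor $4k$ surviving in both cases.

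The hard part, and the place I would concentrate the care, is the treatment of the now-deterministic pairs $\{a,b\}\subseteq\Omega\setminus\Gamma$. After conditioning on $\Phi_{\Omega\cup\Gamma}$ these inner products are fixed numbers that no tail bound can touch, so the argument above strictly controls only the incremental contribution of the freshly considered column $\phi_i$, and the clean conditional inequality holds on the event that the coherence among the true columns is already below $z$. Reconciling this with the unconditional form of the statement is the crux: in the application (the proof of Theorem \ref{paeosmp}) the coherence among the support columns $\Omega$ is separately governed by the first-stage analysis, so that what Lemma \ref{cor_eachiter_low1_2new1} is really charged with bounding is the probability that adding $\phi_i$ creates a new violation. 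A secondary, purely bookkeeping obstacle is to explain why the enlarged count $|\Omega\cup\Gamma|-1$ (rather than the $|\Gamma|$ of Lemma \ref{cor_eachiter_low1}) is the right threshold, namely that it guarantees $\mathcal{R}(\Phi_\Gamma)^{\perp}$ retains enough dimensions to absorb the $|\Omega\setminus\Gamma|$ fixed directions that the random $\dot\phi_i$ must simultaneously stay nearly orthogonal to; verifying that this enlarged condition implies the dimension bound needed by Corollary \ref{propcol_pcu} is routine but is where I would double-check the constants.
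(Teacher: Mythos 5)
Your treatment of the random pairs matches the paper's proof step for step: condition on $\Phi_{\Omega \cup \Gamma}$, observe that the normalized projection $\dot\phi_i$ of the fresh column remains uniformly distributed on the unit sphere of $\mathcal{R}(\Phi_\Gamma)^{\perp}$, invoke Corollary \ref{propcol_pcu} with the reduced dimension $p = m-|\Omega \cup \Gamma|+1$ (which is exactly why the hypotheses carry the threshold $|\Omega \cup \Gamma|-1$), and finish with a union bound over the at most $k$ pairs $\{a,i\}$ and then over the at most $n$ admissible indices $i$. But the fixed pairs $\{a,b\} \subseteq \Omega \setminus \Gamma$, which you correctly single out as the crux, are left unresolved in your proposal. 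Your suggested workaround --- asserting the bound only on the event that the coherence among the support columns is already below $z$, or deferring that control to the proof of Theorem \ref{paeosmp} --- proves a strictly weaker statement than the lemma, whose conditional bound is claimed for the given conditioning $\Phi_{\Omega \cup \Gamma}$ without excluding bad realizations, and which is consumed in exactly that uniform form inside the proof of Theorem \ref{paeosmp}.

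The paper closes this gap with a deterministic geometric reduction that your argument is missing, and it is the one genuinely new idea in this proof relative to Lemma \ref{cor_eachiter_low1}. First, by Lemma \ref{proj_ineq}, enlarging the set of projected-out columns can only increase the LCP, so $\mu(\{a,b\},\Gamma) \leq \mu(\{a,b\},\Gamma \cup \Omega \setminus \{a,b\})$ for every pair, whence $\mu(\Omega \cup \{i\},\Gamma) \leq \max_{\{a,b\} \subseteq \Omega \cup \{i\}} \mu(\{a,b\},\Gamma \cup \Omega \setminus \{a,b\})$. Second, Lemma \ref{ineq_unew} states that this enlarged-projection maximum is attained on pairs containing the new index, i.e., it equals $\max_{a \in \Omega}\, \mu(\{a,i\},\Gamma \cup \Omega \setminus \{a\})$; consequently the pairs internal to $\Omega$ never dominate and require no probabilistic control whatsoever. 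After this reduction, every quantity that must be bounded involves the random column $\phi_i$ against a fixed direction inside a subspace of dimension $m-|\Omega \cup \Gamma|+1$, and Corollary \ref{propcol_pcu} together with the two union bounds yields (\ref{cor_eachiter_low1_2_2}) and (\ref{cor_eachiter_low1_2_2_1}). Your instinct that the deterministic inner products ``cannot be touched by a tail bound'' is precisely the obstacle that Lemma \ref{ineq_unew} is designed to remove; without that lemma, or an equivalent domination argument, the proposal does not establish the lemma as stated.
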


\begin{proof}[Proof of Lemma \ref{cor_eachiter_low1_2new1}]
Given $\Gamma \subseteq \Sigma$, since $\lambda(x)$ is a non-decreasing function of $x$, it follows that 
\begin{align}\label{cor_eachiter_low1_1_2}
&\lambda(\frac{ c\ln\,n}{m-|\Omega \cup \Gamma|})\leq \lambda(\frac{ c\ln\,n}{m^*-|\Omega \cup \Gamma|})=z.
\end{align}
The following inequalities hold for $t \in \Sigma \setminus (\Omega \cup \Gamma)$.
\begin{align}\nonumber
&\mathbb{P}(\mu(\Omega \cup \{t\},\Gamma) \geq z| \Phi_{\Omega \cup \Gamma}) \\\nonumber
&= \mathbb{P}(\max\limits_{\{a,b\} \subseteq \Omega \cup \{t\} \setminus \Gamma}\mu(\{a,b\},\Gamma ) \geq z | \Phi_{\Omega \cup \Gamma}) \\\nonumber
&\overset{(a)}\leq \mathbb{P}(\max\limits_{\{a,b\} \subseteq \Omega \cup \{t\}}\mu(\{a,b\},\Gamma \cup \Omega \setminus \{a,b\}) \geq z | \Phi_{\Omega \cup \Gamma}) \\\nonumber
& \overset{(b)} = \mathbb{P}(\max\limits_{a \in \Omega}\mu(\{a,t\},\Gamma \cup \Omega \setminus \{a\}) \geq z | \Phi_{\Omega \cup \Gamma}) \\\nonumber
& \overset{(c)}\leq \mathbb{P}(\max\limits_{a \in \Omega}\mu(\{a,t\},\Gamma \cup \Omega \setminus \{a\}) \geq \lambda(\frac{ c\ln\,n}{m-|\Omega \cup \Gamma|+1}) \\\nonumber
& \,\,\,\,\,\,\,\, \,\,\,\, \,\,\,\,  | \Phi_{\Omega \cup \Gamma}) \\\label{cor_eachiter_low1_3_2}
&\overset{(d)}\leq 4k\cdot n^{-c},
\end{align}
where (a) follows from Lemma \ref{proj_ineq} and the fact that $|\Gamma \cap \{a,b\}|=0$, and (b), (c), and (d) follow from Lemma \ref{ineq_unew}, (\ref{cor_eachiter_low1_1_2}), and Corollary \ref{propcol_pcu} where $(p,q)=(m-|\Omega \cup \Gamma|+1,n)$, respectively.
It follows that
\begin{align}\nonumber
\mathbb{P}(\max\limits_{i \in \Sigma \setminus ( \Omega \cup \Gamma)}\mu(\Omega \cup \{i\},\Gamma) \geq z| \Phi_{\Omega \cup \Gamma})
&\overset{(a)}\leq 4k \cdot n^{1-c},
\end{align}
where $(a)$ follows from the application of the union bound to (\ref{cor_eachiter_low1_3_2}). 
\end{proof}

\begin{lem}\label{ineq_unew}
For $\Omega,\Gamma \subseteq \Sigma$ and $\Phi \in \mathbb{R}^{m \times n}$, it follows that for $t \in \Sigma \setminus ( \Omega \cup \Gamma)$,
\begin{align}\label{ineq_unew_eq1}
&\max\limits_{\{a, b\} \subseteq (\Omega \cup \{t\})}\mu(\{a,b\},\Gamma \cup \Omega \setminus \{a,b\})=  \underset{a \in \Omega}{\max}\,\mu(\{a,t\},\Gamma \cup \Omega \setminus \{a\}). 
\end{align}
\end{lem}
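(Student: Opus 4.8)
The plan is to prove the two inequalities $\geq$ and $\leq$ in (\ref{ineq_unew_eq1}) separately, after first reducing each side to normalized inner products. For a two–element index set the locally mutual coherence of Definition \ref{newcoh} collapses to a single term: since $t\notin\Omega\cup\Gamma$, for every pair $\{a,b\}$ occurring on either side the difference set $\{a,b\}\setminus(\Gamma\cup\Omega\setminus\{a,b\})$ equals $\{a,b\}$ itself, so $\mu(\{a,b\},\Gamma\cup\Omega\setminus\{a,b\})=|\langle\dot\phi_a,\dot\phi_b\rangle|$, where $\dot\phi$ denotes the vector obtained by projecting onto $\mathcal{R}(\Phi_{\Gamma\cup\Omega\setminus\{a,b\}})^{\perp}$ and normalizing (a zero vector when the residual vanishes, per the convention in Section \ref{sec_not}). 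With this reduction the statement becomes a comparison of two finite maxima of absolute inner products.

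First I would dispatch the easy direction $\geq$. For each $a\in\Omega$ the pair $\{a,t\}$ is one of the pairs contained in $\Omega\cup\{t\}$, and because $t\notin\Omega\cup\Gamma$ we have the set identity $\Gamma\cup\Omega\setminus\{a,t\}=\Gamma\cup\Omega\setminus\{a\}$. Hence every term appearing in the right–hand maximum of (\ref{ineq_unew_eq1}) occurs verbatim among the terms maximized on the left, so the left–hand maximum is at least the right–hand one. This step is purely combinatorial and uses only $t\notin\Omega\cup\Gamma$.

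The substantive direction is $\leq$, and this is where I expect the real difficulty to lie. I would split the left–hand maximum according to whether the pair contains the external index $t$ or lies entirely inside the support. The pairs $\{a,t\}$ with $a\in\Omega$ reproduce the right–hand side exactly, by the same identity used above. The delicate family consists of the pairs $\{a,b\}\subseteq\Omega$, and the crux is to show that none of these can strictly exceed the right–hand maximum. The natural attempt is to prove the stronger fact that each such term vanishes: for $\{a,b\}\subseteq\Omega$ one projects $\phi_a$ and $\phi_b$ onto $\mathcal{R}(\Phi_{\Gamma\cup\Omega\setminus\{a,b\}})^{\perp}$, a subspace that has already absorbed every remaining support column, and one would argue that the two residuals are forced either to be orthogonal or to be null. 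Establishing this cleanly is the main obstacle, and it is the point at which the rank (Kruskal–rank) structure of $\Phi_{\Omega\cup\Gamma}$ must be combined with the projection–monotonicity supplied by Lemma \ref{proj_ineq}; absent such a structural input a nearly collinear intra–support pair could inflate a left–hand term while all pairs through $t$ stay small. Once the intra–support contribution is shown to be dominated, combining it with the $\geq$ direction yields the asserted equality (\ref{ineq_unew_eq1}).
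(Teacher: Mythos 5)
Your reduction of each two--index LCP to a single normalized inner product and your easy direction are correct and agree with the paper's framing: the terms $\mu(\{a,t\},\Gamma\cup\Omega\setminus\{a\})$ appear verbatim on the left (using $t\notin\Omega\cup\Gamma$), so everything hinges on dominating the intra-support pairs, i.e.\ on showing $(*)\leq(**)$ in your notation. The genuine gap is your plan for that step, and the proposed repair cannot work: for a pair $\{a,b\}\subseteq\Omega$ the conditioning set $\Gamma\cup\Omega\setminus\{a,b\}$ contains \emph{neither} $a$ \emph{nor} $b$, so nothing has ``absorbed'' those two columns, and their residuals are in general neither null nor orthogonal. In the extreme case $\Omega=\{a,b\}$, $\Gamma=\emptyset$, the intra-support term is the raw coherence $|\langle\phi_a,\phi_b\rangle|/(\|\phi_a\|_2\|\phi_b\|_2)$, which can be anything in $[0,1]$. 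Also note the lemma carries no Kruskal-rank hypothesis, and full Kruskal rank does not prevent this, so the structural input you hoped to invoke is not available.

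The paper's own proof takes a different route and never claims vanishing, only domination: with $w(x)=\sqrt{1-x^2}$ it converts each pair coherence into a distance of rank-one projectors, uses the projection update rule to rewrite $w(\mu(\{a,b\},\Gamma\cup\Omega\setminus\{a,b\}))$ as $\|P_{\mathcal{R}(\Phi_{\Gamma\cup\Omega\setminus\{a\}})}-P_{\mathcal{R}(\Phi_{\Gamma\cup\Omega\setminus\{b\}})}\|_2$, lower bounds this via Lemma \ref{proj_ineq} by $\|P_{\mathcal{R}(\Phi_{\Gamma\cup\Omega\setminus\{a\}})}-P_{\mathcal{R}(\Phi_{\Gamma\cup\Omega})}\|_2$, enlarges the deflating subspace by $\phi_t$, and then runs the update rule backwards to land on $w((**))$.

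However, the danger you flagged --- ``a nearly collinear intra-support pair could inflate a left-hand term while all pairs through $t$ stay small'' --- is not merely an obstacle to your argument; since the lemma assumes nothing about $\Phi$, it is an outright counterexample to the statement. Take $m=n=3$, $\Omega=\{1,2\}$, $\Gamma=\emptyset$, $t=3$, and columns $\phi_1=e_1$, $\phi_2=\cos\theta\,e_1+\sin\theta\,e_2$, $\phi_3=e_3$ (standard basis vectors, $0<\theta<\pi/2$). Then $\mu(\{1,2\},\emptyset)=\cos\theta$, while $\mu(\{1,3\},\{2\})=\mu(\{2,3\},\{1\})=0$ because $\phi_3$ is orthogonal to $\mathrm{span}(\phi_1,\phi_2)$; the left side of (\ref{ineq_unew_eq1}) is $\cos\theta$, the right side is $0$, and the failure persists under perturbation of $\phi_3$. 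Correspondingly, the paper's proof breaks at its step $(b)$: that step is an application of Lemma \ref{proj_ineq}, which is itself false as stated --- e.g.\ with $C=A$ its left-hand side equals $1$ whenever $\mathcal{R}([A,C])\subsetneq\mathcal{R}([B,C])$, while its right-hand side can be arbitrarily small; in the example above, $\|P_{\mathcal{R}(\phi_1)}-P_{\mathcal{R}(\phi_2)}\|_2=\sin\theta<1=\|P_{\mathcal{R}(\phi_2)}-P_{\mathcal{R}([\phi_1,\phi_2])}\|_2$. So the equality cannot be proved without additional hypotheses forcing the pairs through $t$ to dominate: the gap in your proposal is real, but it is not one you (or the paper's argument) could have closed.
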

\begin{proof}
Since the left-hand side of $(\ref{ineq_unew_eq1})$ is upper bounded by
\begin{align}\label{ineq_unew_eq11}
\max\{&\underset{(*)}{\underbrace{\max\limits_{\{a, b\} \subseteq \Omega}\mu(\{a,b\},\Gamma \cup \Omega \setminus \{a,b\})}}, \underset{(**)}{\underbrace{ \underset{a \in \Omega}{\max}\,\mu(\{a,t\},\Gamma \cup \Omega \setminus \{a\})}}\}, 
\end{align}
it suffices to show that $(*)\leq (**)$.  
Define a function such that $w(x):=\sqrt{1-x^2}$. For $A \in \mathbb{K}^{m \times a}$, $B \in \mathbb{K}^{m \times b}$ and  $C \in \mathbb{K}^{m \times c}$, it follows that
\begin{align}\nonumber
&\left \| P^{\perp}_{\mathcal{R}([A,C])}P_{\mathcal{R}([B,C])}\right \|_2\\\nonumber
&= \underset{\underset{\left \| x \right \|_2 =1 }{x \in \mathcal{R}([B,C])} }{\sup} \,\, \underset{ y \in \mathcal{R}([A,C])}{\inf}\, \left \| x-y\right \|_2 \\\nonumber
&= \underset{\underset{\left \| x_1\right \|^2_2 +\left \|x_2 \right \|^2_2 =1 }{\underset{x_2 \in \mathcal{R}(C),}{x_1 \in \mathcal{R}(P^{\perp}_{\mathcal{R}(C)}B),}}}{\sup} \,\, \underset{ \underset{y_2 \in \mathcal{R}(C)}{y_1 \in \mathcal{R}(P^{\perp}_{\mathcal{R}(C)}A),}}{\inf}\, \left \| x_1-y_1+x_2-y_2  \right \|_2 \\\nonumber
&= \underset{\underset{\left \| x_1 \right \|_2 =1 }{x_1 \in \mathcal{R}(P^{\perp}_{\mathcal{R}(C)}B)} }{\sup} \,\, \underset{ y_1 \in \mathcal{R}(P^{\perp}_{\mathcal{R}(C)}A)}{\inf}\, \left \| x_1-y_1\right \|_2 \\\label{projectionuprule}
&= \left \| P^{\perp}_{\mathcal{R}(P^{\perp}_{\mathcal{R}(C)}A)}P_{\mathcal{R}(P^{\perp}_{\mathcal{R}(C)}B)}\right \|_2.
\end{align}
Then we obtain
\begin{align}\nonumber
&w(\max\limits_{\{a, b\} \subseteq \Omega}\mu(\{a,b\},\Gamma \cup \Omega \setminus \{a,b\})) \\\nonumber
&=\min\limits_{\{a, b\} \subseteq \Omega} w(\mu(\{a,b\},\Gamma \cup \Omega \setminus \{a,b\})) \\\nonumber
& =\underset{\{a, b\} \subseteq \Omega}{\min}  \left \|P_{\mathcal{R}(P^{\perp}_{\mathcal{R}(\Phi_{\Gamma \cup \Omega \setminus \{a,b\}})}  \Phi_{\{a\}})}-P_{\mathcal{R}(P^{\perp}_{\mathcal{R}(\Phi_{\Gamma \cup \Omega \setminus \{a,b\}})} \Phi_{\{b\}})} \right \|_2 \\\nonumber 
& \overset{(a)}=   \min\limits_{\{a, b\} \subseteq \Omega} \left \|P_{\mathcal{R}(\Phi_{\Gamma \cup \Omega \setminus \{a\}})}-P_{\mathcal{R}(\Phi_{\Gamma \cup \Omega \setminus \{b\}})}  \right \|_2 \\\nonumber
& \overset{(b)}\geq    \min\limits_{a \in \Omega} \, \left \|P_{\mathcal{R}(\Phi_{\Gamma \cup \Omega \setminus \{a\}})}-P_{\mathcal{R}(\Phi_{\Gamma \cup \Omega})}  \right \|_2 \\\nonumber
& \geq    \min\limits_{a \in \Omega} \,\left \|P^{\perp}_{\mathcal{R}(\Phi_{\Gamma \cup \Omega \setminus \{a\}})}P_{\mathcal{R}(\Phi_{\Gamma \cup \Omega })} \right \|_2 \\\nonumber
& \geq    \underset{a \in \Omega}{\min} \,\left \|P^{\perp}_{\mathcal{R}(\Phi_{\Gamma \cup \Omega \cup \{t\} \setminus \{a\}})}P_{\mathcal{R}(\Phi_{\Gamma \cup \Omega })} \right \|_2 \\\nonumber
& \overset{(c)}=    \underset{a \in \Omega}{\min} \left \|P^{\perp}_{\mathcal{R}(P^{\perp}_{\mathcal{R}(\Phi_{\Gamma \cup \Omega \setminus \{a\}})}  \Phi_{\{t\}})}P_{\mathcal{R}(P^{\perp}_{\mathcal{R}(\Phi_{\Gamma \cup \Omega \setminus \{a\}})} \Phi_{\{a\}})} \right \|_2 \\\nonumber
&=  w(\underset{a \in \Omega}{\max}\,\mu(\{a,t\},\Gamma \cup \Omega \setminus \{a\}),
\end{align}
where (a), (b), and (c) follow from the projection update rule, Lemma \ref{proj_ineq}, and (\ref{projectionuprule}), respectively. 
Therefore, $(*)\leq (**)$ since $w(x) \geq w(y)$ implies $x \leq y$ for $0 \leq x,y \leq 1$.
\end{proof}

\begin{lem}\label{sceosmp0} Suppose that $\Omega \subseteq J$ and $\bar Y=\Phi_{\Omega}X^\Omega_0$. Then $\Phi_{J}^{\dagger} \bar Y=X_0$ holds if $\sigma_{|J|}(\Phi_{J})>0$.
\end{lem}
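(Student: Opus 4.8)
The plan is to exploit that the hypothesis $\sigma_{|J|}(\Phi_J)>0$ forces $\Phi_J$ to have full column rank, so that its Moore--Penrose pseudoinverse acts as a genuine left inverse. First I would record that since $\Phi_J \in \mathbb{K}^{m \times |J|}$ has $|J|$ strictly positive singular values, it has rank $|J|$; hence $\Phi_J^{*}\Phi_J$ is invertible, $\Phi_J^{\dagger} = (\Phi_J^{*}\Phi_J)^{-1}\Phi_J^{*}$, and consequently $\Phi_J^{\dagger}\Phi_J = I_{|J|}$.

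The second step is to re-express $\bar Y$ through the columns indexed by $J$. Because $\supp(X_0)=\Omega \subseteq J$, every row of $X_0$ indexed by $J \setminus \Omega$ vanishes, so $X_0^{J \setminus \Omega}=0$ and
\begin{align}\nonumber
\Phi_J X_0^{J} = \Phi_{\Omega} X_0^{\Omega} + \Phi_{J \setminus \Omega} X_0^{J \setminus \Omega} = \Phi_{\Omega} X_0^{\Omega} = \bar Y.
\end{align}

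Combining the two steps then gives $\Phi_J^{\dagger}\bar Y = \Phi_J^{\dagger}\Phi_J X_0^{J} = X_0^{J}$; that is, the block of rows returned by the pseudoinverse coincides with $X_0^{J}$, and since all rows of $X_0$ outside $J$ are zero (again because $\Omega \subseteq J$), this is exactly $X_0$. I do not expect a genuine obstacle: the content is essentially the observation that the full-column-rank system $\Phi_J Z = \bar Y$ admits $X_0^{J}$ as its unique (hence least-squares) solution. The only point requiring care is the indexing convention, namely reading $\Phi_J^{\dagger}\bar Y$ as the block of rows indexed by $J$ and justifying the block split $\Phi_J X_0^{J} = \Phi_{\Omega} X_0^{\Omega}$, which rests precisely on the support inclusion $\Omega \subseteq J$.
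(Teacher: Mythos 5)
Your proof is correct, and it takes a mildly different route from the paper's. You invoke the left-inverse characterization of the pseudoinverse: since $\sigma_{|J|}(\Phi_J)>0$ gives full column rank, $\Phi_J^{\dagger}=(\Phi_J^{*}\Phi_J)^{-1}\Phi_J^{*}$ and $\Phi_J^{\dagger}\Phi_J=I_{|J|}$, so once the support inclusion yields $\bar Y=\Phi_J X_0^{J}$ the conclusion follows by direct computation. The paper instead argues variationally: it sets up the least-squares problem $\min_X \left\| \bar Y-\Phi_J X \right\|_2$, notes that injectivity of $\Phi_J$ makes $X_0$ (really $X_0^{J}$) the unique zero-residual solution, and then observes that $\Phi_J^{\dagger}\bar Y$ is also a zero-residual solution because $\Phi_J\Phi_J^{\dagger}=P_{\mathcal{R}(\Phi_J)}$ and $\bar Y\in\mathcal{R}(\Phi_J)$, so that $\left\|\bar Y - P_{\mathcal{R}(\Phi_J)}\bar Y\right\|_2=0$; uniqueness then forces equality. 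The two arguments rest on complementary defining properties of the Moore--Penrose inverse (left inverse versus orthogonal projector onto the range) but use the two hypotheses in exactly the same way. Your version is more self-contained and handles the indexing more carefully: you justify the block split $\Phi_J X_0^{J}=\Phi_{\Omega}X_0^{\Omega}$ and the identification of the $|J|\times l$ output with $X_0$, both of which the paper glosses over (it writes $\Phi_J X_0$ where $\Phi_J X_0^{J}$ is meant). The paper's least-squares phrasing, on the other hand, mirrors the role the pseudoinverse actually plays in the ESMS subroutines, where $(\Phi_J)^{\dagger}Y$ is computed precisely as a least-squares estimate.
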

\begin{proof}[Proof of Lemma \ref{sceosmp0}] Consider a least square problem such as
\begin{align}\label{uniqdag}
\tilde X :=\underset{X}{\arg\min} \left \| \bar Y-\Phi_{J}X \right \|_2.
\end{align}
Since $\left \| \Phi_{J}X_0-\Phi_{J}X \right \|_2=0$ implies $\Phi_{J}(X_0-X)=0$,
$X_0$ is the unique solution of (\ref{uniqdag}) if $\sigma_{|J|}(\Phi_{J})>0$.
Therefore if $\sigma_{|J|}(\Phi_{J})>0$, $\Phi_{J}^{\dagger} \bar Y=X_0$ holds since $\left \| \bar Y-\Phi_{J} (\Phi_{J}^{\dagger} \bar Y) \right \|_2=\left \| \bar Y-P_{\mathcal{R}(\Phi_{J})} \bar Y \right \|_2=0$.
\end{proof} 

\section{Propositions and their Corollaries}

\begin{prop} 
\label{prop_pcu}
Define $A=[a_1,...,a_n] \in \mathbb{R}^{m \times n}$ as a matrix whose entries are i.i.d. Gaussian variables with mean $0$ and variance $\sigma^2$. Next, construct an $l_2$-normalized vector matrix $B:=[b_1,...,b_n]$ by taking $b_i:=\frac{a_i}{\left \| a_i \right \|_2}$ for $i \in [n]$. Provided that $m > (4 \cdot c) \ln \,n$, the following inequality holds for $i,j \in [n]$.
\begin{align}\label{prop_pcu_st}
&\mathbb{P} \left(|\left \langle b_i,b_j \right \rangle| <  \frac{c\sigma^2 \ln\,n + \sqrt{(c\sigma^2 \ln\,n)^2+4c\sigma^4m \ln \,n}}{m(\sigma^2-\sqrt{\frac{4c \sigma^4}{m} \ln\, n})} \right)> 1-4n^{-c} 
\end{align}
\end{prop}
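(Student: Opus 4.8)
The plan is to decompose the normalized inner product as $\langle b_i, b_j\rangle = \langle a_i, a_j\rangle/(\|a_i\|_2 \|a_j\|_2)$ and to control the numerator and the two norm factors separately using the two Gaussian concentration lemmas already available, then to combine everything with a union bound. Throughout I would take $i \neq j$, so that $a_i$ and $a_j$ are independent (the case $i=j$ is not the intended regime, since then the inner product equals one). First I would bound the numerator: since $a_{i1},\ldots,a_{im}$ and $a_{j1},\ldots,a_{jm}$ are two independent sequences of i.i.d. $\mathcal{N}(0,\sigma^2)$ variables, Lemma \ref{tc} gives $\mathbb{P}(|\langle a_i,a_j\rangle| \geq t) \leq 2\exp(-t^2/(4\sigma^2(m\sigma^2 + t/2)))$. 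I would then pick the threshold $t^\star$ making the exponent equal to $c\ln n$, which amounts to solving $t^2 - 2c\sigma^2(\ln n)\,t - 4c\sigma^4 m\ln n = 0$; its positive root is exactly $t^\star = c\sigma^2\ln n + \sqrt{(c\sigma^2\ln n)^2 + 4c\sigma^4 m\ln n}$, i.e. the numerator in (\ref{prop_pcu_st}). Because the exponent is increasing in $t>0$, this yields $\mathbb{P}(|\langle a_i,a_j\rangle| \geq t^\star) \leq 2n^{-c}$.

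Next I would bound each norm from below. Writing $\|a_i\|_2^2 - m\sigma^2 = \sum_{k=1}^m (a_{ik}^2 - \sigma^2)$ and applying the second inequality of Lemma \ref{gl1} with $x = c\ln n$ (admissible precisely because the hypothesis $m > 4c\ln n$ forces $x < m/4$), I obtain $\mathbb{P}(\|a_i\|_2^2 \leq m\sigma^2 - 2\sigma^2\sqrt{mc\ln n}) \leq n^{-c}$. Since $2\sigma^2\sqrt{mc\ln n} = m\sqrt{\frac{4c\sigma^4}{m}\ln n}$, this says that with probability at least $1-n^{-c}$ one has $\|a_i\|_2^2 \geq m(\sigma^2 - \sqrt{\frac{4c\sigma^4}{m}\ln n})$, and identically for $a_j$. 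The same hypothesis $m > 4c\ln n$ also makes the bracketed quantity strictly positive, so the right-hand side of (\ref{prop_pcu_st}) is well defined.

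Finally I would union-bound the three bad events. On the complementary event, which holds with probability at least $1 - (2n^{-c} + n^{-c} + n^{-c}) = 1 - 4n^{-c}$, the numerator is below $t^\star$ while $\|a_i\|_2\|a_j\|_2 \geq m(\sigma^2 - \sqrt{\frac{4c\sigma^4}{m}\ln n})$ (the product of the two square-root lower bounds). Dividing then gives
\begin{align}\nonumber
|\langle b_i,b_j\rangle| = \frac{|\langle a_i,a_j\rangle|}{\|a_i\|_2\|a_j\|_2} < \frac{c\sigma^2\ln n + \sqrt{(c\sigma^2\ln n)^2 + 4c\sigma^4 m\ln n}}{m(\sigma^2 - \sqrt{\frac{4c\sigma^4}{m}\ln n})},
\end{align}
which is exactly (\ref{prop_pcu_st}). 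The only genuinely delicate point is the numerator matching: one must verify that the positive root of the quadratic produced by Lemma \ref{tc} reproduces the stated expression verbatim, and keep careful track of the powers of $\sigma$ appearing in both Lemma \ref{tc} and Lemma \ref{gl1}. Everything else reduces to a routine three-term union bound.
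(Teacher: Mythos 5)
Your proposal is correct and follows essentially the same route as the paper's own proof: both bound $|\langle a_i,a_j\rangle|$ via Lemma \ref{tc} at the threshold $\delta_2=c\sigma^2\ln n+\sqrt{(c\sigma^2\ln n)^2+4c\sigma^4 m\ln n}$, lower-bound $\|a_i\|_2^2$ and $\|a_j\|_2^2$ via Lemma \ref{gl1} with $\delta_1=\sqrt{\tfrac{4c\sigma^4}{m}\ln n}$ (admissible exactly when $m>4c\ln n$), and finish with the same three-event union bound giving $4n^{-c}$. Your explicit solution of the quadratic to recover the stated numerator, and your explicit restriction to $i\neq j$ for independence, are details the paper leaves implicit but in no way change the argument.
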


\begin{proof}[Proof of Proposition \ref{prop_pcu}]
The proof is based on \cite[Theorem $8$]{bajwa2012two}. 
For $i\in [n]$, by Lemma \ref{gl1} where $y=a_i \in \mathbb{K}^{m}$, it follows that if $x < \frac{m}{4}$,
\begin{align}\label{gl1_applying_temp}
&\mathbb{P}(\left \| a_i \right \|^2_2 \leq m \sigma^2 -2 \sigma^2\sqrt{m x})  \leq e^{-x}.
\end{align}
Then (\ref{gl1_applying_temp}) guarantees that if $\delta_1 < \sigma^2$,
\begin{align}\label{gl1_applying}
&\mathbb{P}(\left \| a_i \right \|^2_2 \leq m (\sigma^2 - \delta_1)  \leq e^{- \frac{m \delta_1^2}{4 \sigma^4}}.
\end{align}
In addition, by (\ref{gl1_applying}) where $\delta_1=\sqrt{\frac{4c \sigma^4}{m} \ln\, n}$, the following probability upper bound is obtained, which holds when $ m > 4c \ln \,n$.
\begin{align}\label{ngf2}
\mathbb{P}(\left \| a_i \right \|^2_2 \leq m( \sigma^2 -\delta_1))  \leq e^{- \frac{m \delta_1^2}{4 \sigma^4}} = n^{-c} 
\end{align}
for $i \in [n]$.
Next, the inner product of two Gaussian random vectors is bounded as follows by using Lemma \ref{tc}.
\begin{align}
\mathbb{P}(|\left \langle a_i,a_j \right \rangle| \geq \delta_2)  \leq 2 \exp (-\frac{\delta_2^2}{4 \sigma^2(m \sigma^2 + \delta_2/2)})
\end{align}
By setting $\delta_2=c\sigma^2 \ln\,n + \sqrt{(c\sigma^2 \ln\,n)^2+4c\sigma^4m \ln \,n}$, it follows that for $i,j \in [n]$, 
\begin{align}\label{ngf1}
\mathbb{P}(|\left \langle a_i,a_j \right \rangle| \geq \delta_2)  \leq 2 \exp (-\frac{\delta_2^2}{4 \sigma^2(m \sigma^2 + \delta_2/2)}) = 2n^{-c}.
\end{align}
Therefore, by (\ref{ngf2}) and  (\ref{ngf1}), the following inequalities are obtained from the union bound:
\begin{align}\nonumber
&\mathbb{P}(|\left \langle b_i,b_j \right \rangle| \geq  \frac{c\sigma^2 \ln\,n + \sqrt{(c\sigma^2 \ln\,n)^2+4c\sigma^4m \ln \,n}}{m(\sigma^2-\sqrt{\frac{4c \sigma^4}{m} \ln\, n})}) \\\nonumber
&=\mathbb{P}(\frac{|\left \langle a_i,a_j \right \rangle|}{\left \| a_i \right \|_2\left \| a_j \right \|_2}\geq \frac{\delta_2}{m(\sigma^2-\delta_1)}) \\\nonumber
&\leq\mathbb{P}({|\left \langle a_i,a_j \right \rangle|} \geq {\delta_2})+ \\\nonumber
&\,\,\,\,\,\,\,\,\mathbb{P}({\left \| a_i \right \|^2_2}\leq {m(\sigma^2-\delta_1)})+\mathbb{P}({\left \| a_j \right \|^2_2}\leq {m(\sigma^2-\delta_1)}) \\\label{ngf21}
&\leq 4n^{-c},
\end{align}
where $i,j \in [n]$, $\delta_1=\sqrt{\frac{4c \sigma^4}{m} \ln\, n}$, and $\delta_2=c\sigma^2 \ln\,n + \sqrt{(c\sigma^2 \ln\,n)^2+4c\sigma^4m \ln \,n}$.
\end{proof}

\begin{cor}\label{propcol_pcu} 
Let $A := [a_1,...,a_q] \in \mathbb{R}^{p \times q}$ be a matrix where each column of $A$ is $l_2$-normalized and independently and uniformly distributed on the unit sphere $\mathbb{S}^{p-1}$. 
Provided that $p > (4 \cdot c) \ln \,q$, then for $i,j \in [q]$,
\begin{align}\nonumber
\mathbb{P}(|\left \langle a_i, a_j \right \rangle| \geq  \lambda(\frac{ c\ln\,q}{p})) \leq 4 \cdot q^{-c}
\end{align}
and consequently, 
\begin{align}\nonumber
\mathbb{P}(|\left \langle a_i, a_j \right \rangle| \geq  \lambda(\frac{ c\ln\,q}{p})|a_i) \leq 4 \cdot q^{-c},
\end{align}
where
\begin{align}\nonumber
&\lambda(x):=\frac{x + \sqrt{x^2+4x}}{1-2\sqrt{x}}.
\end{align} 
\end{cor}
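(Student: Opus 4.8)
The plan is to derive Corollary~\ref{propcol_pcu} directly from Proposition~\ref{prop_pcu} by realizing the uniform spherical law as a normalized Gaussian and then reducing the explicit threshold in~(\ref{prop_pcu_st}) to the closed form $\lambda(\cdot)$. First I would note that a vector drawn uniformly from $\mathbb{S}^{p-1}$ has the same distribution as $g/\left\| g \right\|_2$ for $g$ following $\mathcal{N}(0,\sigma^2 I_p)$, for \emph{any} $\sigma>0$, since the isotropic Gaussian is rotationally invariant and its direction is independent of its norm. Hence I can couple each column $a_i$ of $A$ to $g_i/\left\| g_i \right\|_2$, where $g_1,\dots,g_q$ are the i.i.d.\ columns of a Gaussian matrix $G=[g_1,\dots,g_q]\in\mathbb{R}^{p\times q}$ with entries following $\mathcal{N}(0,\sigma^2)$, and apply Proposition~\ref{prop_pcu} under the substitution $(m,n)\mapsto(p,q)$ with $b_i=a_i$. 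The hypothesis $p>4c\ln q$ is exactly the condition demanded there, so (\ref{prop_pcu_st}) applies verbatim.

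Next I would simplify the threshold. Taking the complement of~(\ref{prop_pcu_st}) yields $\mathbb{P}(|\left\langle a_i,a_j\right\rangle|\geq R)\leq 4q^{-c}$ with
\begin{align}\nonumber
R=\frac{c\sigma^2\ln q+\sqrt{(c\sigma^2\ln q)^2+4c\sigma^4 p\ln q}}{p\left(\sigma^2-\sqrt{\tfrac{4c\sigma^4}{p}\ln q}\right)}.
\end{align}
Pulling $\sigma^2$ out of both numerator and denominator makes it cancel, confirming that $R$ is independent of $\sigma$ (as it must be, the spherical law being scale free). Writing $x:=\tfrac{c\ln q}{p}$ and using the identities $\tfrac1p\sqrt{(c\ln q)^2+4cp\ln q}=\sqrt{x^2+4x}$ and $\sqrt{\tfrac{4c}{p}\ln q}=2\sqrt{x}$, the threshold collapses to
\begin{align}\nonumber
R=\frac{x+\sqrt{x^2+4x}}{1-2\sqrt{x}}=\lambda(x)=\lambda\!\left(\frac{c\ln q}{p}\right),
\end{align}
which is the first claimed inequality; note that $1-2\sqrt{x}>0$ is equivalent to $p>4c\ln q$, so the bound is well defined precisely under the stated hypothesis.

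For the conditional statement I would invoke rotational invariance once more. Because the columns $a_j$ are mutually independent and each is uniform on $\mathbb{S}^{p-1}$, for any fixed unit vector $v$ the law of $\left\langle v,a_j\right\rangle$ equals that of a single coordinate of a uniform point on the sphere and hence does not depend on $v$. Conditioning on $a_i$ (itself a unit vector) therefore leaves the distribution of $\left\langle a_i,a_j\right\rangle$ unchanged, so the conditional probability coincides with the marginal one already bounded, giving the second inequality.

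The bulk of this is bookkeeping; the only points needing genuine care are the algebraic reduction showing that the $\sigma$ factors cancel and that $\lambda$ emerges exactly, and the conditioning step, where the hard part will be to argue via rotational invariance that the conditional probability reduces to the established marginal bound rather than re-running the Gaussian tail estimates of Proposition~\ref{prop_pcu} conditionally.
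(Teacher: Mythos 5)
Your proposal is correct and takes essentially the same route as the paper: the paper's proof of Corollary~\ref{propcol_pcu} is exactly the reduction to Proposition~\ref{prop_pcu}, obtained by setting $\sigma^2 = 1/m$ there (equivalent to your observation that $\sigma$ cancels), so that the threshold in (\ref{prop_pcu_st}) collapses to $\lambda\!\left(\frac{c \ln q}{p}\right)$. Your explicit algebra, the remark that $1-2\sqrt{x}>0$ matches the hypothesis $p>4c\ln q$, and the rotational-invariance argument for the conditional bound simply fill in details the paper leaves implicit.
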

\begin{proof}[Proof of Corollary \ref{propcol_pcu}]
Replacing $\sigma^2$ by $1/m$ in (\ref{prop_pcu_st}) completes the proof.
\end{proof}

\begin{prop}
\label{prop_guass}
Given $s,m \in \mathbb{N}$ with $s \leq m$, consider a matrix $A \in \mathbb{R}^{m \times s}$ whose elements are i.i.d. Gaussian variables with mean $0$ and variance $\sigma^2$. Then, for $g$ such that $1\geq g> \frac{s}{m}+2\sqrt{\frac{s}{m}}$,
\begin{align}\nonumber
&\mathbb{P} \left(\max \left[\frac{\sigma^2_1(A)}{m\sigma^2}-1,1-\frac{\sigma^2_k(A)}{m\sigma^2} \right]\geq g \right)\leq 2\exp \left[-\frac{m}{2} \left(\sqrt{1+g}-1-\sqrt{\frac{s}{m}} \right)^2 \right]. 
\end{align}
\end{prop}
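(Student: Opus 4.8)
The plan is to recognize the event inside the probability as the union of an upper-tail event for the largest singular value $\sigma_1(A)$ and a lower-tail event for the smallest singular value of $A$ (the statement writes $\sigma_k$, but since $A \in \mathbb{R}^{m \times s}$ the relevant quantity is $\sigma_s$), and then to invoke the Gaussian singular-value deviation bound of Lemma \ref{thm_lot} with $\bar\sigma = \sigma$ on each piece. First I would rewrite $\frac{\sigma_1^2(A)}{m\sigma^2} - 1 \geq g$ as $\sigma_1(A) \geq \sigma\sqrt{m}\sqrt{1+g}$ and $1 - \frac{\sigma_s^2(A)}{m\sigma^2} \geq g$ as $\sigma_s(A) \leq \sigma\sqrt{m}\sqrt{1-g}$, the latter being meaningful because $g \leq 1$. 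The union bound then reduces the claim to estimating these two one-sided probabilities.

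For the upper tail I would choose $t_1 := \sqrt{m}\left(\sqrt{1+g} - 1 - \sqrt{\tfrac{s}{m}}\right)$, so that $\sigma(\sqrt{m} + \sqrt{s} + t_1) = \sigma\sqrt{m}\sqrt{1+g}$. The hypothesis $g > \tfrac{s}{m} + 2\sqrt{\tfrac{s}{m}}$ is exactly what forces $\sqrt{1+g} > 1 + \sqrt{s/m}$, hence $t_1 > 0$, so Lemma \ref{thm_lot} yields $\mathbb{P}(\sigma_1(A) \geq \sigma\sqrt{m}\sqrt{1+g}) \leq \exp(-t_1^2/2)$. Because $t_1^2/2 = \frac{m}{2}(\sqrt{1+g} - 1 - \sqrt{s/m})^2$, this term already matches the exponent in the target bound. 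For the lower tail I would take $t_2 := \sqrt{m}\left(1 - \sqrt{1-g} - \sqrt{\tfrac{s}{m}}\right)$, chosen so that $\sigma(\sqrt{m} - \sqrt{s} - t_2) = \sigma\sqrt{m}\sqrt{1-g}$, and Lemma \ref{thm_lot} gives $\mathbb{P}(\sigma_s(A) \leq \sigma\sqrt{m}\sqrt{1-g}) \leq \exp(-t_2^2/2)$.

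The one step needing care is to show that the lower-tail contribution is dominated by the upper-tail one, i.e.\ $t_2 \geq t_1$; this simultaneously guarantees $t_2 > 0$ (so that Lemma \ref{thm_lot} legitimately applies to $\sigma_s$) and lets me replace $\exp(-t_2^2/2)$ by $\exp(-t_1^2/2)$. Comparing the two expressions, $t_2 \geq t_1$ is equivalent to $1 - \sqrt{1-g} \geq \sqrt{1+g} - 1$, i.e.\ to the elementary inequality $\sqrt{1+g} + \sqrt{1-g} \leq 2$, which follows by squaring since $(\sqrt{1+g} + \sqrt{1-g})^2 = 2 + 2\sqrt{1-g^2} \leq 4$. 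Feeding both tail estimates into the union bound then produces precisely $2\exp\left[-\frac{m}{2}\left(\sqrt{1+g} - 1 - \sqrt{\tfrac{s}{m}}\right)^2\right]$, completing the argument.
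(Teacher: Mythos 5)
Your proposal is correct and follows essentially the same route as the paper: both apply Lemma \ref{thm_lot} with $t_1=\sqrt{m}\left(\sqrt{1+g}-1-\sqrt{s/m}\right)$ for the upper tail of $\sigma_1(A)$ and $t_2=\sqrt{m}\left(1-\sqrt{1-g}-\sqrt{s/m}\right)$ for the lower tail of $\sigma_s(A)$, then use the comparison $1-\sqrt{1-g}\geq\sqrt{1+g}-1>0$ together with a union bound to obtain the factor $2$ with the single exponent. The only difference is cosmetic: you spell out the elementary step $\left(\sqrt{1+g}+\sqrt{1-g}\right)^2=2+2\sqrt{1-g^2}\leq 4$, which the paper asserts without justification.
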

\begin{proof}[Proof of Proposition \ref{prop_guass}]
The proof is based on \cite[Proposition $6.1$]{lee2012subspace}.
By Lemma \ref{thm_lot} where $\bar \sigma =\sigma$ and $t=\sqrt{m}(\sqrt{1+g}-1-\sqrt{\frac{s}{m}})$, it follows that for a value $g$ such that $1-\sqrt{1-g}-\sqrt{\frac{s}{m}}>0$,
\begin{align}\nonumber
&\mathbb{P}(\sigma_{1}(A) \geq \sigma\sqrt{m}\sqrt{1+g}) \leq \exp \left[-\frac{m}{2}\left(\sqrt{1+g}-1-\sqrt{\frac{s}{m}} \right)^2 \right]. 
\end{align}
By Lemma \ref{thm_lot} where $\bar \sigma =\sigma$ and $t=\sqrt{m}(1-\sqrt{1-g}-\sqrt{\frac{s}{m}})$, it holds that for a value $g$ such that $\sqrt{1+g}-1-\sqrt{\frac{s}{m}}>0$,
\begin{align}\nonumber
&\mathbb{P}(\sigma_{k}(A) \leq \sigma\sqrt{m}\sqrt{1-g}) \leq \exp \left[-\frac{m}{2} \left(1-\sqrt{1-g}-\sqrt{\frac{s}{m}}\right)^2 \right]. 
\end{align}
It follows that for $g$ such that $1\geq g> \frac{s}{m}+2\sqrt{\frac{s}{m}}$,
\begin{align}\nonumber
&\mathbb{P} \left(\max \left[\frac{\sigma^2_1(A)}{m\sigma^2}-1,1-\frac{\sigma^2_k(A)}{m\sigma^2} \right]\geq g \right)\leq 2\exp \left[-\frac{m}{2} \left(\sqrt{1+g}-1-\sqrt{\frac{s}{m}}\right)^2 \right], 
\end{align}
since $1-\sqrt{1-g}-\sqrt{\frac{s}{m}} \geq \sqrt{1+g}-1-\sqrt{\frac{s}{m}}>0$.
\end{proof}

\begin{cor}(Generalization of \cite[Proposition $6.1$]{lee2012subspace})
\label{3cor_thm_g3_3}
Let $A \in \mathbb{R}^{m \times n}$ be a matrix whose entries are i.i.d. Gaussian following $\mathcal{N}(0,\sigma^2)$. Define $\theta(\tau):=\frac{1-\tau}{1+\tau}$ where $\tau$ is a non-negative constant. 
Suppose that $m \geq \frac{2}{(\sqrt{1+\theta(\tau)}-1)^2} \left[k+2\ln \,\left(\frac{2(n-k)}{\epsilon}\right) \right]$. Then, for $\Omega \subseteq \Sigma$ such that $|\Omega|=k$,
\begin{align}\label{3cor_thm_g3_3t1}
\mathbb{P} \left( \frac{\underset{i \in \Sigma \setminus \Omega }{\min} \, \sigma^2_{k+1}(\Phi_{\Omega  \cup \{i\}  })}{\underset{j \in \Sigma \setminus \Omega}{\max} \,\sigma^2_{1}(\Phi_{\Omega \cup \{j\}  })} > \tau \right) > 1-\epsilon.
\end{align}
\end{cor}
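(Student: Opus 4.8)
The plan is to recognize that for each index $i \in \Sigma \setminus \Omega$ the submatrix $\Phi_{\Omega \cup \{i\}}$ is an $m \times (k+1)$ matrix whose entries are i.i.d. $\mathcal{N}(0,\sigma^2)$, so that its extreme squared singular values $\sigma^2_1(\Phi_{\Omega \cup \{i\}})$ and $\sigma^2_{k+1}(\Phi_{\Omega \cup \{i\}})$ concentrate sharply around $m\sigma^2$. Controlling the ratio in (\ref{3cor_thm_g3_3t1}) then reduces to forcing $\sigma^2_{k+1}$ to stay above $m\sigma^2(1-g)$ and $\sigma^2_1$ below $m\sigma^2(1+g)$ simultaneously over all $n-k$ such submatrices, for a suitably chosen deviation parameter $g$.

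First I would fix $g := \theta(\tau) = \frac{1-\tau}{1+\tau}$ and record the elementary identity $\frac{1-g}{1+g} = \tau$, which is precisely the purpose of the definition of $\theta$. Then for each single $i$ I would invoke Proposition \ref{prop_guass} with $s = k+1$ and this $g$: on the complementary (good) event one has $\sigma^2_1(\Phi_{\Omega \cup \{i\}}) < m\sigma^2(1+g)$ and $\sigma^2_{k+1}(\Phi_{\Omega \cup \{i\}}) > m\sigma^2(1-g)$, with the bad event having probability at most $2\exp[-\frac{m}{2}(\sqrt{1+g}-1-\sqrt{(k+1)/m})^2]$. At this step I must also verify that the lower bound on $m$ in the hypothesis forces $g > (k+1)/m + 2\sqrt{(k+1)/m}$, which is exactly the admissibility condition needed for Proposition \ref{prop_guass} to apply.

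Next I would take a union bound over the $n-k$ indices $i \in \Sigma \setminus \Omega$. On the intersection of all good events, every $\sigma^2_{k+1}(\Phi_{\Omega \cup \{i\}})$ exceeds $m\sigma^2(1-g)$ and every $\sigma^2_1(\Phi_{\Omega \cup \{j\}})$ is below $m\sigma^2(1+g)$, whence the ratio in (\ref{3cor_thm_g3_3t1}) strictly exceeds $\frac{1-g}{1+g} = \tau$. It then remains to check that the stated lower bound on $m$ makes the union-bound failure probability at most $\epsilon$, i.e. that $(n-k)\cdot 2\exp[-\frac{m}{2}(\sqrt{1+g}-1-\sqrt{(k+1)/m})^2] \leq \epsilon$.

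The main obstacle I anticipate is this final calibration: converting the clean hypothesis $m \geq \frac{2}{(\sqrt{1+\theta(\tau)}-1)^2}[k + 2\ln(2(n-k)/\epsilon)]$ into the required exponential tail. The difficulty is the cross term $\sqrt{(k+1)/m}$ sitting inside the square in the exponent; I would dispose of it by the inequality $(a-y)^2 \geq \tfrac{1}{2}a^2 - y^2$ with $a = \sqrt{1+g}-1$ and $y = \sqrt{(k+1)/m}$, which lowers the exponent to a clean multiple of $a^2 m$ minus an $O(k)$ remainder that the additive $k$ appearing in the measurement bound is designed to absorb, leaving the $2\ln(2(n-k)/\epsilon)$ term to beat the union-bound factor $n-k$ and the constant $2$.
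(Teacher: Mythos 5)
Your proposal follows the paper's route step for step: invoke Proposition \ref{prop_guass} with $s=k+1$ and $g=\theta(\tau)$, union-bound over the $n-k$ submatrices $\Phi_{\Omega\cup\{i\}}$, use the identity $\frac{1-\theta(\tau)}{1+\theta(\tau)}=\tau$ on the intersection of the good events, and check the admissibility condition $g>\frac{k+1}{m}+2\sqrt{\frac{k+1}{m}}$ (equivalently $(\sqrt{1+g}-1)\sqrt{m}>\sqrt{k+1}$), which the hypothesis on $m$ indeed supplies. All of this matches the paper's proof of the corollary.

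The gap is in your final calibration, and it is a step that fails as written. Put $a:=\sqrt{1+g}-1$, $y:=\sqrt{(k+1)/m}$, $L:=\ln\left(\frac{2(n-k)}{\epsilon}\right)$. Your inequality $(a-y)^2\geq\frac{1}{2}a^2-y^2$ lowers the exponent to $\frac{m}{2}(a-y)^2\geq\frac{ma^2}{4}-\frac{k+1}{2}$, and the hypothesis $\frac{ma^2}{2}\geq k+2L$ then gives only $\frac{m}{2}(a-y)^2\geq \frac{k}{2}+L-\frac{k+1}{2}=L-\frac{1}{2}$. The union bound therefore yields a failure probability of at most $2(n-k)\,e^{-L+1/2}=\sqrt{e}\,\epsilon\approx 1.65\,\epsilon$, which does not establish the claimed bound $1-\epsilon$: the factor $2$ in the measurement bound is not sized to absorb a Young-type split of the cross term. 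The paper avoids this loss by working with square roots instead of squares: from $ma^2\geq 2\left[k+1+2L\right]$, the QM--AM inequality $\left(\sqrt{u}+\sqrt{v}\right)^2\leq 2(u+v)$ with $u=k+1$, $v=2L$ gives $a\sqrt{m}\geq\sqrt{k+1}+\sqrt{2L}$, hence $\frac{m}{2}(a-y)^2=\frac{1}{2}\left(a\sqrt{m}-\sqrt{k+1}\right)^2\geq L$ with no constant lost, and the union bound closes at exactly $\epsilon$. Replacing your quadratic split by this square-root step repairs the argument. (As a side note, the paper's displayed hypothesis carries $k$ where this chain needs $k+1$; that off-by-one slack is present in the paper's own proof and your version inherits it as well, but it is a separate, minor issue from the $\sqrt{e}$ loss above.)
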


\begin{proof}[Proof of Corollary \ref{3cor_thm_g3_3}]

The proof is based on \cite[Proposition $6.1$]{lee2012subspace}.
Define a set $E$ such that
 \begin{align}\label{gas_exf_0}
 E:=&\{ \exists \Gamma  \subseteq \Sigma \, | \, 1. \sigma^2_1(\Phi_{\Gamma}) \geq m\sigma^2(1+\theta(\tau))
  \textup{ or } \sigma^2_{k+1}(\Phi_{\Gamma}) \leq m\sigma^2(1-\theta(\tau)), 2.\Omega \subseteq \Gamma, 3. |\Gamma|=k+1\}. 
 \end{align}
Then, by applying the union bound to Proposition \ref{prop_guass} where $g=\theta(\tau)$, it follows that for $\theta(\tau)> \frac{k}{m}+2\sqrt{\frac{k}{m}}$,
\begin{align}\label{gas_exf_1}
\mathbb{P}(E)\leq 2(n-k)\exp \left[-\frac{m}{2} \left(\sqrt{1+\theta(\tau)}-1-\sqrt{\frac{k+1}{m}} \right)^2 \right].
\end{align}
Then 
\begin{align}\label{gas_exf_2}
(\sqrt{1+\theta(\tau)}-1)\sqrt{m}\geq \sqrt{k+1}+\sqrt{2\ln \,\left(\frac{2(n-k)}{\epsilon}\right)}
\end{align}
implies $\theta(\tau)> \frac{k}{m}+2\sqrt{\frac{k}{m}}$ and 
\begin{align}\label{gas_exf_21}
\mathbb{P}(E) \leq\epsilon.
\end{align}
From (\ref{gas_exf_2}) and the fact that 
\begin{align}\label{gas_exf_3}
\sqrt{2}\sqrt{k+1+2\ln \,\left(\frac{2(n-k)}{\epsilon}\right)} \geq \sqrt{k+1}+\sqrt{2\ln \,\left(\frac{2(n-k)}{\epsilon}\right)},
\end{align}
(\ref{gas_exf_21}) is implied by 
\begin{align}\label{gas_exf_4}
m\geq \frac{2}{(\sqrt{1+\theta(\tau)}-1)^2} \left[k+2\ln \,\left(\frac{2(n-k)}{\epsilon}\right) \right].
\end{align}
Since, from the definition of $E$, it follows that
\begin{align}\label{gas_exf_5}
 \mathbb{P} \left(\frac{\underset{i \in \Sigma \setminus \Omega }{\min} \, \sigma^2_{k+1}(\Phi_{\Omega  \cup \{i\}  })}{\underset{j \in \Sigma \setminus \Omega}{\max} \,\sigma^2_{1}(\Phi_{\Omega \cup \{j\}  })} > \tau \right) \geq \mathbb{P}(E^{c}),
\end{align}
therefore if (\ref{gas_exf_4}) holds, (\ref{3cor_thm_g3_3t1}) is satisfied  with a probability higher than $1- \epsilon$. 
\end{proof}

\begin{prop}
\label{prop_dft2}(\cite[Proposition 6.9]{lee2012subspace}) Let $\{c_1,...,c_m\} \subseteq \Sigma$ be a set of indices selected uniformly at ramdom. For $j \in [m]$, let the $j$th row of $A$ be the $c_j$th row of the $n \times n$ DFT matrix divided by $\sqrt{m}$. Suppose that $m \geq \frac{2(3+\tau)(k+1)}{3\tau^2}\,\ln \, (\frac{2(k+1)(n-k)}{\epsilon})$. Then, for a set $\Omega \subseteq \Sigma$ such that $|\Omega|=k$,
\begin{align}\nonumber
\mathbb{P}(\delta_{k+1}(A_\Omega;1) \geq \tau ) \leq \epsilon.
\end{align}
\end{prop}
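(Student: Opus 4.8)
The plan is to reduce the WRIP bound to a spectral concentration statement for the Gram matrices of the $(k+1)$-column submatrices of $A$ that contain $\Omega$, and then to control each such Gram matrix with a matrix Chernoff inequality followed by a union bound. Using the normalizing constant $c=1$ in the definition of the WRIP constant gives
\begin{align}\nonumber
\delta_{k+1}(A_\Omega;1) \le \max_{i \in \Sigma \setminus \Omega} \left\| A_{\Omega \cup \{i\}}^* A_{\Omega \cup \{i\}} - I_{k+1}\right\|_2,
\end{align}
so it suffices to show that the right-hand side is smaller than $\tau$ except on an event of probability at most $\epsilon$. This replaces the WRIP with a bound over only the $n-k$ submatrices $A_{\Omega \cup \{i\}}$, which is exactly the feature that makes a ``$k+1$''-type guarantee cheap compared to a full restricted isometry property of order $k+1$.

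First I would fix $i \in \Sigma \setminus \Omega$, set $K := \Omega \cup \{i\}$, and write the Gram matrix as a sum of independent rank-one terms. Since the $j$th row of $A_K$ is $\frac{1}{\sqrt m} f_{c_j}$, where $f_c$ denotes the restriction to the columns $K$ of the $c$th row of the unnormalized $n \times n$ DFT matrix, we have
\begin{align}\nonumber
G_i := A_K^* A_K = \frac{1}{m}\sum_{j=1}^m f_{c_j}^* f_{c_j}.
\end{align}
The rows $c_1,\dots,c_m$ are drawn uniformly at random, so the summands are i.i.d., positive semidefinite, and of rank one; orthogonality of the DFT columns yields $\mathbb{E}[f_c^* f_c] = I_{k+1}$, hence $\mathbb{E}[G_i] = I_{k+1}$, and every summand satisfies $\|f_{c_j}^* f_{c_j}\|_2 = \|f_{c_j}\|_2^2 = k+1$ because each DFT entry has modulus one.

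Next I would apply the matrix Chernoff inequality to $m G_i = \sum_j f_{c_j}^* f_{c_j}$, whose mean is $m I_{k+1}$ and whose summands are bounded in spectral norm by $R = k+1$. The two tails give, for each $i$,
\begin{align}\nonumber
\mathbb{P}\left(\lambda_{\max}(m G_i) \ge (1+\tau) m\right) &\le (k+1)\exp\left(-\frac{m}{k+1}\,h_+(\tau)\right),\\\nonumber
\mathbb{P}\left(\lambda_{\min}(m G_i) \le (1-\tau) m\right) &\le (k+1)\exp\left(-\frac{m}{k+1}\,h_-(\tau)\right),
\end{align}
with $h_+(\tau) = (1+\tau)\ln(1+\tau) - \tau$ and $h_-(\tau) = (1-\tau)\ln(1-\tau) + \tau$. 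Using the elementary estimate $h_+(\tau) \ge \frac{3\tau^2}{2(3+\tau)}$ together with $h_-(\tau) \ge h_+(\tau)$ for $\tau \in (0,1)$, the upper tail is the binding one, so $\{\|G_i - I_{k+1}\|_2 \ge \tau\}$ has probability at most $2(k+1)\exp\!\left(-\frac{m}{k+1}\frac{3\tau^2}{2(3+\tau)}\right)$.

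Finally I would union-bound over the $n-k$ choices of $i$, obtaining
\begin{align}\nonumber
\mathbb{P}\left(\delta_{k+1}(A_\Omega;1)\ge \tau\right) \le 2(k+1)(n-k)\exp\left(-\frac{m}{k+1}\frac{3\tau^2}{2(3+\tau)}\right),
\end{align}
and requiring the right-hand side to be at most $\epsilon$ yields precisely the stated threshold $m \ge \frac{2(3+\tau)(k+1)}{3\tau^2}\ln\!\left(\frac{2(k+1)(n-k)}{\epsilon}\right)$. The main obstacle is the concentration step: one must invoke the matrix Chernoff bound with the correct dimensional prefactor $k+1$ and per-summand norm bound $R=k+1$, and verify the scalar inequality $h_+(\tau) \ge \frac{3\tau^2}{2(3+\tau)}$ so that the constants match the claimed $m$. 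A secondary point is the sampling model: if the rows are drawn without replacement the summands are no longer independent, but a Chernoff-type bound still applies, since concentration is no worse than in the i.i.d. case, so the argument goes through unchanged.
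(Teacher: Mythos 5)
The paper itself gives no proof of this proposition: it is imported verbatim from the cited reference (Proposition 6.9 of Lee et al.), so there is no internal argument to compare yours against. Judged on its own merits, your proof is correct and is, in substance, the standard argument behind such partial-DFT WRIP bounds. The reduction $\delta_{k+1}(A_\Omega;1)\le\max_{i\in\Sigma\setminus\Omega}\|A_{\Omega\cup\{i\}}^*A_{\Omega\cup\{i\}}-I_{k+1}\|_2$ via the choice $c=1$ is legitimate (the WRIP constant is an infimum over admissible $c$); the decomposition $A_K^*A_K=\frac{1}{m}\sum_j f_{c_j}^*f_{c_j}$ with $\mathbb{E}[f_c^*f_c]=I_{k+1}$ (orthogonality of DFT columns) and $\|f_{c_j}^*f_{c_j}\|_2=k+1$ is exact; the matrix Chernoff tails with dimension factor $k+1$ and $R=k+1$ are correctly stated; the scalar bounds $h_+(\tau)\ge\frac{3\tau^2}{2(3+\tau)}$ and $h_-(\tau)\ge h_+(\tau)$ on $(0,1)$ both hold (indeed $h_-\ge\tau^2/2\ge h_+$); and the union bound over the $n-k$ augmented supports reproduces the claimed threshold on $m$ with exactly the right constants. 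Your Chernoff-plus-Bernstein-type-inequality route is interchangeable with applying the matrix Bernstein inequality directly to the centered summands (one computes $\|\mathbb{E}[(f_c^*f_c-I)^2]\|=k\le k+1$, which yields the identical exponent $\frac{3m\tau^2}{2(3+\tau)(k+1)}$), which is the flavor of argument in the cited source. Two minor points you handled or should note: first, the set notation $\{c_1,\dots,c_m\}$ indicates sampling without replacement, and your appeal to the fact that Chernoff-type matrix bounds transfer to this setting (Gross--Nesme/Tropp) is the correct fix, not a hand-wave one could skip; second, the subscript in $\delta_{k+1}(A_\Omega;1)$ is inconsistent with the paper's WRIP indexing convention (under which it should read $\delta_k(A_\Omega;1)$ since $|\Omega|=k$), but your interpretation --- all $(k+1)$-column submatrices containing $\Omega$ --- is the intended one and is the one used elsewhere in the paper.
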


\begin{prop}
\label{erc_won1} 
Let $X_0 \in \mathbb{K}^{n \times l}$ be row $k$-sparse with $\Omega \subseteq \Sigma$. Let $\bar{S}$ be an $r$-dimensional subspace of $\mathcal{R}(\Phi_{\Omega}X_0^{\Omega})$ such that ${\left \| P_{ \bar S}- P_{ \hat S} \right \|_2}\leq \eta$. 
Let $d$ be $\rank(P^{\perp}_{\mathcal{R}(\Phi_{\Gamma})}\bar{S})$.
Suppose that $\sigma_{|\Omega \cup \Gamma|}(\Phi_{\Omega \cup \Gamma }) >0$. Then submp($\hat S,\Gamma,1$) produces an index in $\Omega \setminus \Gamma$ if
\begin{align}\label{ercprop_3}
\underset{i \in \Sigma - \Omega -\Gamma}{\max}\left \| \dot\Phi^{\dagger}_{\Omega \setminus \Gamma}\dot\phi_{i} \right \|_1 
<1-  \sqrt{\frac{|\Omega \setminus \Gamma|}{d}}\frac{2 \eta}{\sigma_{|\Omega \setminus \Gamma|}(\dot{\Phi}_{\Omega \setminus \Gamma})}.
\end{align} 
\end{prop}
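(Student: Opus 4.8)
The plan is to collapse the selection rule of submp into a single scalar score per index and then show the hypothesis forces some true index in $\Omega\setminus\Gamma$ to outscore every false index in $\Sigma\setminus(\Omega\cup\Gamma)$. First I would write $\hat T:=\mathcal{R}(P^{\perp}_{\mathcal{R}(\Phi_{\Gamma})}\hat S)$ and note that $\hat T\subseteq\mathcal{R}(\Phi_{\Gamma})^{\perp}$, so $P_{\hat T}\phi_l=P_{\hat T}P^{\perp}_{\mathcal{R}(\Phi_{\Gamma})}\phi_l$ and the ratio maximized inside submp satisfies
\begin{align}\nonumber
\frac{\left\|P_{\hat T}\phi_l\right\|_2}{\left\|P^{\perp}_{\mathcal{R}(\Phi_{\Gamma})}\phi_l\right\|_2}=\left\|P_{\hat T}\dot\phi_l\right\|_2 .
\end{align}
Hence submp($\hat S,\Gamma,1$) returns an index in $\Omega\setminus\Gamma$ whenever
\begin{align}\nonumber
\max_{l\in\Omega\setminus\Gamma}\left\|P_{\hat T}\dot\phi_l\right\|_2>\max_{i\in\Sigma\setminus(\Omega\cup\Gamma)}\left\|P_{\hat T}\dot\phi_i\right\|_2 ,
\end{align}
since the global maximizer must then lie in $\Omega\setminus\Gamma$. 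The goal reduces to establishing this strict inequality.

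Next I would transfer from the estimated subspace to the true one. Writing $\bar T:=\mathcal{R}(P^{\perp}_{\mathcal{R}(\Phi_{\Gamma})}\bar S)$, the hypothesis $\left\|P_{\bar S}-P_{\hat S}\right\|_2\le\eta$ combined with Lemma \ref{proj_ineq}, inequality (\ref{proj_ineq2}) applied with $C=\Phi_{\Gamma}$, yields $\left\|P_{\hat T}-P_{\bar T}\right\|_2\le\eta$. As each $\dot\phi_l$ is a unit vector, this gives $\left\|P_{\hat T}\dot\phi_l\right\|_2\ge\left\|P_{\bar T}\dot\phi_l\right\|_2-\eta$ on true indices and $\left\|P_{\hat T}\dot\phi_i\right\|_2\le\left\|P_{\bar T}\dot\phi_i\right\|_2+\eta$ on false indices. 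Setting $M:=\max_{l\in\Omega\setminus\Gamma}\left\|P_{\bar T}\dot\phi_l\right\|_2$ and $E:=\max_{i\in\Sigma\setminus(\Omega\cup\Gamma)}\left\|\dot\Phi^{\dagger}_{\Omega\setminus\Gamma}\dot\phi_i\right\|_1$, Lemma \ref{lem_ntnub} bounds the false-index maximum over $\bar T$ by $E\cdot M$, so after the $\pm\eta$ shifts it suffices to prove $(1-E)M>2\eta$.

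Finally I would lower bound $M$. The standing hypotheses here—row-nondegeneracy of $X_0^{\Omega}$ and $\sigma_{|\Omega\cup\Gamma|}(\Phi_{\Omega\cup\Gamma})>0$—are exactly those of Lemma \ref{lem_tnlb} with $d=\rank(P^{\perp}_{\mathcal{R}(\Phi_{\Gamma})}\bar S)$, giving
\begin{align}\nonumber
M\ge L:=\sqrt{\frac{d}{|\Omega\setminus\Gamma|}}\,\sigma_{|\Omega\setminus\Gamma|}(\dot\Phi_{\Omega\setminus\Gamma}) ,
\end{align}
where $\sigma_{|\Omega\setminus\Gamma|}(\dot\Phi_{\Omega\setminus\Gamma})>0$ follows from $\sigma_{|\Omega\cup\Gamma|}(\Phi_{\Omega\cup\Gamma})>0$ via Lemma \ref{cpslb} and invariance of rank under column normalization, so $L>0$. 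The assumed bound (\ref{ercprop_3}) is precisely $E<1-2\eta/L$, i.e. $1-E>2\eta/L>0$; multiplying $M\ge L$ by the nonnegative factor $1-E$ then gives $(1-E)M\ge(1-E)L>2\eta$, which is the required inequality.

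The main structural step is recognizing that submp's ratio collapses to the single projection norm $\left\|P_{\hat T}\dot\phi_l\right\|_2$, after which Lemmas \ref{lem_tnlb} and \ref{lem_ntnub} supply matching lower and upper bounds. The only delicate bookkeeping is propagating the $\eta$-perturbation from $\hat S$ (used by the algorithm) to the true subspace $\bar S$ (appearing in the bounds), which is exactly what (\ref{proj_ineq2}) accomplishes, and checking $E<1$ and $L>0$ so that the final comparison closes—both of which rest on the assumption $\sigma_{|\Omega\cup\Gamma|}(\Phi_{\Omega\cup\Gamma})>0$. In the noiseless limit $\eta=0$ the condition degenerates to the exact recovery condition $E<1$, as expected.
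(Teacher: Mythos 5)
Your proof is correct and takes essentially the same route as the paper's: Lemma \ref{lem_tnlb} supplies the lower bound on the true-index score, Lemma \ref{lem_ntnub} the upper bound on the false-index score, and Lemma \ref{proj_ineq} absorbs the $\eta$-perturbation between $\hat S$ and $\bar S$; the only difference is that you apply the perturbation up front and reduce everything to $(1-E)M>2\eta$, while the paper converts from $\bar S$ to $\hat S$ at the end. One small correction: row-nondegeneracy of $X_0^{\Omega}$ is not among this proposition's hypotheses, but your argument is unaffected since inequality (\ref{tnlb}) of Lemma \ref{lem_tnlb} holds with $d=\rank(P^{\perp}_{\mathcal{R}(\Phi_{\Gamma})}\bar{S})$ exactly as defined here, without that assumption.
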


\begin{proof}[Proof of Proposition \ref{erc_won1}]
By Lemma \ref{lem_tnlb}, the right-hand side of (\ref{ercprop_3}) is upper bounded by
\begin{align}\label{erc_g2_nw_2}
1- \frac{2 \eta 
}{\underset{l \in \Omega \setminus \Gamma}{\max}\left \| P_{P^{\perp}_{\mathcal{R}(\Phi_\Gamma)}\bar S}\dot{\phi}_l \right \|_2}.
\end{align}
By applying (\ref{erc_g2_nw_2}) to (\ref{ercprop_3}), (\ref{ercprop_3}) implies 
\begin{align}\label{erc_g2_given2}
\underset{i \in \Sigma - \Omega -\Gamma}{\max}\left \| \dot\Phi^{\dagger}_{\Omega \setminus \Gamma}\dot\phi_{i} \right \|_1 < 1- \frac{2 \eta}{\underset{l \in \Omega \setminus \Gamma}{\max}\left \| P_{P^{\perp}_{\mathcal{R}(\Phi_\Gamma)}\bar S}\dot{\phi}_l \right \|_2}.
\end{align}
From Lemma \ref{lem_ntnub} and the following assumption
\begin{align}\label{erc_g2_nw_1}
\sigma_{|\Omega \cup \Gamma|}(\Phi_{\Omega \cup \Gamma}) >0,
\end{align}
(\ref{erc_g2_given2}) implies 
\begin{align}\label{erc_g2_given}
\underset{i \in \Sigma - \Omega -\Gamma}{\max}\left \| P_{P^{\perp}_{\mathcal{R}(\Phi_\Gamma)}\bar S}\dot{\phi}_i \right \|_2 <  \underset{l \in \Omega \setminus \Gamma}{\max}\left \| P_{P^{\perp}_{\mathcal{R}(\Phi_\Gamma)}\bar S}\dot{\phi}_l \right \|_2 -2 \eta. 
\end{align}
An upper bound of difference between the below two values for $i \in \Sigma$ are obtained as 
\begin{align}\nonumber
&|\left \|  P_{\mathcal{R}(P^{\perp}_{\mathcal{R}(\Gamma)}\hat S)}\dot\phi_i \right \|_2-\left \|  P_{\mathcal{R}(P^{\perp}_{\mathcal{R}(\Gamma)}\bar S)}\dot\phi_i \right \|_2 |\\\nonumber
&\overset{(a)} \leq \left \| ( P_{\mathcal{R}(P^{\perp}_{\mathcal{R}(\Gamma)}\hat S)} -  P_{\mathcal{R}(P^{\perp}_{\mathcal{R}(\Gamma)}\bar S)})\dot\phi_i \right \|_2 \\\nonumber
&\leq \left \|  P_{\mathcal{R}(P^{\perp}_{\mathcal{R}(\Gamma)}\hat S)} -  P_{\mathcal{R}(P^{\perp}_{\mathcal{R}(\Gamma)}\bar S)} \right \|_2\\\label{erc_cond1_1}
&\overset{(b)} \leq \eta,
\end{align}
where (a) follows from the triangle inequality, and (b) follows from Lemma \ref{proj_ineq} and  ${\left \| P_{ \bar S}- P_{ \hat S} \right \|_2}\leq \eta$. 
From (\ref{erc_cond1_1}), (\ref{erc_g2_given}) implies 
\begin{align}\label{erc_g2_nw_3}
\underset{i \in \Sigma - \Omega -\Gamma}{\max}\left \| P_{P^{\perp}_{\mathcal{R}(\Phi_\Gamma)}\hat S}\dot{\phi}_i \right \|_2 <  \underset{l \in \Omega \setminus \Gamma}{\max}\left \| P_{P^{\perp}_{\mathcal{R}(\Phi_\Gamma)}\hat S}\dot{\phi}_l \right \|_2.
\end{align}
The proof is therefore complete since (\ref{erc_g2_nw_3}) is a sufficient condition that submp($\hat S,\Gamma,1$) produces an index in $\Omega \setminus \Gamma$.

\end{proof}

\begin{prop}
\label{thm_g3_3}
Let $X_0 \in \mathbb{K}^{n \times l}$ be row $k$-sparse with $\Omega \subseteq \Sigma$. Let $\eta$ be a constant such that $\rho(\hat S)\leq \eta \leq 0.5$ with an $r$-dimensional space $\hat S$. Suppose that $X_0^{\Omega}$ is row-nondegenerate.
Then, given $\Gamma$ such that $|\Omega \cap \Gamma| \leq k-r$, submp($\hat S,\Gamma,1$) produces an index in $\Omega \setminus \Gamma$ if $\sigma_{|\Omega \cup \Gamma|}(\Phi_{\Omega \cup \Gamma})>0$ and 
\begin{align}\label{thm_g3_given3} 
s_1(\alpha_1,\beta_1,\eta,k,r)>0,
\end{align}
where
\begin{align}\nonumber
&s_1(x_1,...,x_5):=\sqrt{\frac{x_5}{x_4}}x_1 - \sqrt{1-x_2^2}-2x_3\\\nonumber
&\alpha_1:= \sigma_{|\Omega \setminus \Gamma|}(\dot\Phi_{\Omega \setminus \Gamma}) \\\nonumber
&\beta_1:= \underset{i \in \Sigma \setminus (\Omega \cup \Gamma)}{\min}\,\sigma_{|\Omega \setminus \Gamma|+1}(\dot\Phi_{(\{i\} \cup \Omega ) \setminus \Gamma}).
\end{align}
\end{prop}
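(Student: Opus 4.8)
The plan is to prove the statement directly: under the hypotheses, the single selection step of submp($\hat S,\Gamma,1$) returns an index in $\Omega\setminus\Gamma$ rather than in $\Sigma\setminus(\Omega\cup\Gamma)$. First I would recast the selection rule of submp into a usable form. Writing $\hat S_\Gamma:=\mathcal{R}(P^{\perp}_{\mathcal{R}(\Phi_\Gamma)}\hat S)$ and noting $\hat S_\Gamma\subseteq\mathcal{R}(\Phi_\Gamma)^{\perp}$, so that $P_{\hat S_\Gamma}\phi_l=P_{\hat S_\Gamma}P^{\perp}_{\mathcal{R}(\Phi_\Gamma)}\phi_l$, the quantity maximized by submp equals
\[
\frac{\left\|P_{\hat S_\Gamma}\phi_l\right\|_2}{\left\|P^{\perp}_{\mathcal{R}(\Phi_\Gamma)}\phi_l\right\|_2}=\left\|P_{\hat S_\Gamma}\dot\phi_l\right\|_2 .
\]
Hence submp selects a true index whenever
\[
\max_{l\in\Omega\setminus\Gamma}\left\|P_{\hat S_\Gamma}\dot\phi_l\right\|_2
>\max_{i\in\Sigma\setminus(\Omega\cup\Gamma)}\left\|P_{\hat S_\Gamma}\dot\phi_i\right\|_2 ,
\]
and the whole proof reduces to establishing this strict inequality.

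Next I would pass from $\hat S$ to a nearby subspace living inside the true signal space. Since $\rho(\hat S)\le\eta$, there is an $r$-dimensional $\bar S\subseteq\mathcal{R}(\Phi_{\Omega}X_0^{\Omega})$ with $\left\|P_{\bar S}-P_{\hat S}\right\|_2\le\eta$. Applying $(\ref{proj_ineq2})$ of Lemma \ref{proj_ineq} with $C=\Phi_\Gamma$ gives $\left\|P_{\hat S_\Gamma}-P_{\bar S_\Gamma}\right\|_2\le\eta$, where $\bar S_\Gamma:=\mathcal{R}(P^{\perp}_{\mathcal{R}(\Phi_\Gamma)}\bar S)$; the reverse triangle inequality together with $\left\|\dot\phi_i\right\|_2=1$ then yields the uniform per-index bound $\bigl|\,\|P_{\hat S_\Gamma}\dot\phi_i\|_2-\|P_{\bar S_\Gamma}\dot\phi_i\|_2\,\bigr|\le\eta$ for every $i$. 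This lets me replace $\hat S_\Gamma$ by $\bar S_\Gamma$ in both maxima at a cost of $\pm\eta$: the left side above is at least $\max_{l\in\Omega\setminus\Gamma}\|P_{\bar S_\Gamma}\dot\phi_l\|_2-\eta$ and the right side is at most $\max_{i}\|P_{\bar S_\Gamma}\dot\phi_i\|_2+\eta$.

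Then I would insert the two structural estimates for $\bar S_\Gamma$. Lemma \ref{lem_tnlb} bounds the true-index term from below by $\sqrt{d/|\Omega\setminus\Gamma|}\,\sigma_{|\Omega\setminus\Gamma|}(\dot\Phi_{\Omega\setminus\Gamma})=\sqrt{d/|\Omega\setminus\Gamma|}\,\alpha_1$, with $d=\rank(P^{\perp}_{\mathcal{R}(\Phi_\Gamma)}\bar S)$, while Lemma \ref{lem_2ntnub} bounds each false-index term from above by $\sqrt{1-\sigma^2_{|\{i\}\cup\Omega\setminus\Gamma|}(\dot\Phi_{\{i\}\cup\Omega\setminus\Gamma})}$; since $i\notin\Omega\cup\Gamma$ we have $|\{i\}\cup\Omega\setminus\Gamma|=|\Omega\setminus\Gamma|+1$, so the maximum of this bound over $i$ is $\sqrt{1-\beta_1^2}$. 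Combining, it suffices to show
\[
\sqrt{\tfrac{d}{|\Omega\setminus\Gamma|}}\,\alpha_1-\sqrt{1-\beta_1^2}-2\eta>0 .
\]

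Finally I would reconcile this with the stated condition $s_1(\alpha_1,\beta_1,\eta,k,r)=\sqrt{r/k}\,\alpha_1-\sqrt{1-\beta_1^2}-2\eta>0$ by controlling the geometric factor. The hypothesis $|\Omega\cap\Gamma|\le k-r$ forces $|\Omega\setminus\Gamma|=k-|\Omega\cap\Gamma|\ge r$, so the identification in Lemma \ref{lem_tnlb} (valid by row-nondegeneracy of $X_0^{\Omega}$ and $\sigma_{|\Omega\cup\Gamma|}(\Phi_{\Omega\cup\Gamma})>0$) gives $d=\min(r,|\Omega\setminus\Gamma|)=r$; moreover $|\Omega\setminus\Gamma|\le k$, whence $d/|\Omega\setminus\Gamma|=r/|\Omega\setminus\Gamma|\ge r/k$ and $\sqrt{d/|\Omega\setminus\Gamma|}\,\alpha_1\ge\sqrt{r/k}\,\alpha_1$. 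Thus $s_1(\alpha_1,\beta_1,\eta,k,r)>0$ implies the required strict inequality, and submp returns an index in $\Omega\setminus\Gamma$. I expect the main obstacle to be the clean passage from $\hat S$ to $\bar S$ — ensuring the per-index $\eta$-perturbation is uniform across both maxima so the arg max cannot be diverted to a false index — and the bookkeeping that pins $d$ to exactly $r$, since the reduction of $d/|\Omega\setminus\Gamma|$ to the stated $r/k$ rests entirely on it.
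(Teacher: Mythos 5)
Your proof is correct and follows essentially the same route as the paper's: the same recasting of the submp selection rule, the same per-index $\eta$-perturbation transfer from $\hat S$ to $\bar S$ via Lemma \ref{proj_ineq} and the reverse triangle inequality, the same structural bounds from Lemmas \ref{lem_tnlb} and \ref{lem_2ntnub}, and the same identification $d=\min(r,|\Omega\setminus\Gamma|)=r$ combined with monotonicity of $s_1$ in its fourth argument to pass from $\sqrt{d/|\Omega\setminus\Gamma|}$ to $\sqrt{r/k}$. The only difference is the order of steps (you perturb to $\bar S$ before applying the structural bounds, the paper does so afterwards), which changes nothing of substance.
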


\begin{proof}[Proof of Propostion \ref{thm_g3_3}]
The condition $\rho(\hat S)\leq \eta$ implies that there exists an $r$-dimensional subspace of $\mathcal{R}(\Phi_{\Omega}X_0^{\Omega})$, denoted by $\bar{S}$ satisfying ${\left \| P_{ \bar S}- P_{ \hat S} \right \|_2}\leq \eta$. Set a constant $d:=\mathcal{R}(P^{\perp}_{\mathcal{R}(\Phi_{\Gamma})} \bar S)$.

It is assumed that 
\begin{align}\label{g3_asp2_new1_1_1}
\sigma_{|\Omega \cup \Gamma|}(\Phi_{\Omega \cup \Gamma})>0.
\end{align}
Then (\ref{thm_g3_given3}) implies
\begin{align}\label{g3_given_new1}
&s_1(\alpha_1,\beta_1,\eta,|\Omega \setminus \Gamma|,d)>0,
\end{align}
since $d$ is equal to $r$ from Lemma \ref{rankbound_rownond} with (\ref{g3_asp2_new1_1_1}) and $s_1(x_1,...,x_5)$ in (\ref{thm_g3_given3}) is monotonically non-increasing for $x_4$ so that the left-hand side of (\ref{thm_g3_given3}) upper bounds $s_1(\alpha_1,\beta_1,\eta,|\Omega \setminus \Gamma|,d)$. 
By Lemmas \ref{lem_2ntnub} and \ref{lem_tnlb}, we obtain the following two conditions:
\begin{align}\label{g3_given_new3}
&\underset{i \in \Sigma \setminus (\Omega  \cup \Gamma)}{\max} {\left \| P_{\mathcal{R}(P^{\perp}_{\mathcal{R}(\Phi_{\Gamma})}\bar{S})}\dot\phi_i \right \|_2} \leq  \sqrt{1-\beta^2_1} \\\label{g3_given_new44}
&\underset{l \in \Omega \setminus \Gamma}{\max} {\left \| P_{\mathcal{R}(P^{\perp}_{\mathcal{R}(\Phi_{\Gamma})}\bar{S})}\dot\phi_l \right \|_2} \geq \sqrt{\frac{d}{|\Omega \setminus \Gamma|}}\alpha_1
\end{align}
By applying (\ref{g3_given_new3}) and (\ref{g3_given_new44}) to (\ref{g3_given_new1}), 
the following condition is obtained which is implied by (\ref{g3_given_new1})
\begin{align}\label{g3_given_new4}
\underset{i \in \Sigma - \Omega -\Gamma}{\max}{\left \| P_{\mathcal{R}(P^{\perp}_{\mathcal{R}(\Phi_{\Gamma})}\bar{S})}\dot\phi_i \right \|_2} <  \underset{l \in \Omega-\Gamma}{\max}{\left \| P_{\mathcal{R}(P^{\perp}_{\mathcal{R}(\Phi_{\Gamma})}\bar{S})}\dot\phi_l \right \|_2} -2 \eta. 
\end{align}
Then, from (\ref{erc_cond1_1}),  
(\ref{g3_given_new4}) implies 
\begin{align}\label{g3_given_new5}
\underset{i \in \Sigma - \Omega -\Gamma}{\max}{\left \| P_{\mathcal{R}(P^{\perp}_{\mathcal{R}(\Phi_{\Gamma})}\hat{S})}\dot\phi_i \right \|_2} <  \underset{l \in \Omega-\Gamma}{\max}{\left \| P_{\mathcal{R}(P^{\perp}_{\mathcal{R}(\Phi_{\Gamma})}\hat{S})}\dot\phi_l \right \|_2}.
\end{align}
Since (\ref{g3_given_new5}) is a sufficient condition that submp($\hat S,\Gamma,1$) recovers an index in $\Omega \setminus \Gamma$, the proof is complete if (\ref{thm_g3_given3}) and (\ref{g3_asp2_new1_1_1}) are satisfied. 

\end{proof}

\begin{cor}
\label{thm_g3_3_cor}
Let $X_0 \in \mathbb{K}^{n \times l}$ be row $k$-sparse with $\Omega \subseteq \Sigma$. Let $\eta$ be a constant such that $\rho(\hat S)\leq \eta \leq 0.5$ with an $r$-dimensional space $\hat S$. Suppose that $X_0^{\Omega}$ is row-nondegenerate. 
Then submp($\hat S,\o,k-r$) produces a set of $k-r$ indices $\Gamma$ such that $\Gamma \subseteq \Omega$ if $\sigma_{k}(\Phi_{\Omega})>0$ and 
\begin{align}\label{thm_g3_given3_cor} 
s_1(\bar \alpha,\bar \beta,\eta,k,r)>0,
\end{align}
where
\begin{align}\nonumber
&s_1(x_1,...,x_5):=\sqrt{\frac{x_5}{x_4}}x_1 - \sqrt{1-x_2^2}-2x_3\\\nonumber
&\bar \alpha:= \min\limits_{\underset{\textup{s.t. }|\Gamma|<k-r}{\Gamma \subseteq \Omega}} \, \sigma_{|\Omega \setminus \Gamma|}(\dot\Phi_{{\Omega} \setminus \Gamma})  \\\nonumber
&\bar \beta:=  \min\limits_{\underset{\textup{s.t. }|\Gamma|<k-r}{\Gamma \subseteq \Omega}} \,  \underset{i \in \Sigma \setminus ({\Omega} \cup \Gamma)}{\min}\,\sigma_{|\Omega \setminus \Gamma|+1}(\dot\Phi_{(\{i\} \cup {\Omega} ) \setminus \Gamma}).
\end{align}
\end{cor}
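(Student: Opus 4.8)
The plan is to prove the corollary by induction on the iterations of submp, using Proposition \ref{thm_g3_3} to control each individual step and the monotonicity of $s_1$ to collapse the step-dependent hypotheses into the single uniform condition (\ref{thm_g3_given3_cor}). I would let $\Gamma_0 := \emptyset$ and let $\Gamma_i$ denote the set of indices held by submp($\hat S,\emptyset,k-r$) after $i$ iterations, so that $|\Gamma_i|=i$ and $\Gamma_{i+1}=\Gamma_i \cup \{a_{i+1}\}$, where $a_{i+1}$ is the index produced by the selection rule applied with the current set $\Gamma_i$. The inductive claim to carry is that $\Gamma_i \subseteq \Omega$ for every $i \in \{0,\ldots,k-r\}$; the base case $i=0$ is immediate.

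For the inductive step, I would assume $\Gamma_i \subseteq \Omega$ with $|\Gamma_i|=i<k-r$ and note that the computation of $a_{i+1}$ is exactly one run of submp($\hat S,\Gamma_i,1$), so Proposition \ref{thm_g3_3} applies once its two hypotheses are verified for $\Gamma=\Gamma_i$. Since $\Gamma_i \subseteq \Omega$ gives $\Omega \cup \Gamma_i = \Omega$, the first hypothesis follows from $\sigma_{|\Omega \cup \Gamma_i|}(\Phi_{\Omega \cup \Gamma_i}) = \sigma_{k}(\Phi_{\Omega}) > 0$, and the index-count requirement $|\Omega \cap \Gamma_i| = i < k-r$ holds as well.

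The crux is verifying the positivity condition $s_1(\alpha_1(\Gamma_i),\beta_1(\Gamma_i),\eta,k,r)>0$, where $\alpha_1(\Gamma_i)=\sigma_{|\Omega \setminus \Gamma_i|}(\dot\Phi_{\Omega \setminus \Gamma_i})$ and $\beta_1(\Gamma_i)=\min_{j \in \Sigma \setminus (\Omega \cup \Gamma_i)}\sigma_{|\Omega \setminus \Gamma_i|+1}(\dot\Phi_{(\{j\} \cup \Omega) \setminus \Gamma_i})$ are the step-specific quantities from the proposition. Because $\Gamma_i \subseteq \Omega$ and $|\Gamma_i|<k-r$, the set $\Gamma_i$ is admissible in the minima defining $\bar\alpha$ and $\bar\beta$, so $\bar\alpha \leq \alpha_1(\Gamma_i)$ and $\bar\beta \leq \beta_1(\Gamma_i)$. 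Invoking the fact established in the proof of Theorem \ref{new3prp_3} that $s_1(x_1,\ldots,x_5)$ is monotonically non-decreasing in $(x_1,x_2)$ on the relevant range, I obtain
\[
s_1(\alpha_1(\Gamma_i),\beta_1(\Gamma_i),\eta,k,r) \geq s_1(\bar\alpha,\bar\beta,\eta,k,r) > 0,
\]
the last inequality being precisely hypothesis (\ref{thm_g3_given3_cor}). Proposition \ref{thm_g3_3} then yields $a_{i+1} \in \Omega \setminus \Gamma_i$, hence $\Gamma_{i+1} \subseteq \Omega$, closing the induction.

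Iterating over $i=0,\ldots,k-r-1$ shows that the output $\Gamma_{k-r}$ of submp($\hat S,\emptyset,k-r$) satisfies $\Gamma_{k-r} \subseteq \Omega$ with $|\Gamma_{k-r}|=k-r$, which is the assertion. I do not expect a genuine obstacle: the only points demanding care are checking that each intermediate $\Gamma_i$ indeed lies in the range $|\Gamma|<k-r$ so that it is admissible in the minima defining $\bar\alpha,\bar\beta$, and confirming the direction of monotonicity of $s_1$ so that replacing $(\alpha_1(\Gamma_i),\beta_1(\Gamma_i))$ by the smaller pair $(\bar\alpha,\bar\beta)$ only weakens the inequality. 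Both are routine once the inductive bookkeeping is in place.
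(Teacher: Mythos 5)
Your proof is correct and follows essentially the same route as the paper's: both reduce the uniform hypothesis (\ref{thm_g3_given3_cor}) to the step-specific hypothesis (\ref{thm_g3_given3}) of Proposition \ref{thm_g3_3} via the monotonicity of $s_1$ in $(x_1,x_2)$ together with $\bar\alpha \leq \alpha_1(\Gamma_i)$, $\bar\beta \leq \beta_1(\Gamma_i)$, and then iterate the single-index guarantee. Your write-up is simply a more explicit version of the paper's two-sentence argument, making the induction and the verification of the hypotheses $\sigma_{|\Omega\cup\Gamma_i|}(\Phi_{\Omega\cup\Gamma_i})=\sigma_k(\Phi_\Omega)>0$ and $|\Omega\cap\Gamma_i|<k-r$ explicit.
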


\begin{proof}[Proof of Corollary \ref{thm_g3_3_cor}]
For $\Gamma  \subseteq \Omega$ such that $|\Gamma|<k-r$, (\ref{thm_g3_given3_cor}) implies (\ref{thm_g3_given3}) (i.e., $s_1(\alpha_1,\beta_1,\eta,k,r)>0$), since $s_1(x_1,...,x_5)$ in (\ref{thm_g3_given3_cor}) is monotonically non-decreasing for $(x_1,x_2)$ so that the left-hand side of (\ref{thm_g3_given3}) lower bounds $s_1(\bar \alpha,\bar \beta,\eta,k,r)$. 
If submp($\hat S,\Gamma,1$) produces an index in $\Omega \setminus \Gamma$ for $\Gamma  \subseteq \Omega$ such that $|\Gamma|<k-r$, it is guaranteed that submp($\hat S,\o,k-r$) produces a set of $k-r$ indices $\Gamma$ such that $\Gamma \subseteq \Omega$. Therefore, the proof is complete.
\end{proof}

\begin{rem}
(\ref{thm_g3_given3_cor}) in Corollary \ref{thm_g3_3_cor} is a much milder condition than any conditions (\ref{ex1_new3prp_3_3})--(\ref{ex1_new3prp_3_2temp1}) in Proposition \ref{new3prp_3}.
\end{rem}

\begin{prop}
\label{thm1n2} 
Let $X_0 \in \mathbb{K}^{n \times l}$ be row $k$-sparse with $\Omega \subseteq \Sigma$. Let $\eta$ be a constant such that $\rho(\hat S)\leq \eta \leq 0.5$ with an $r$-dimensional space $\hat S$. Suppose that $X_0^{\Omega}$ is row-nondegenerate. Then, 
for any $\Gamma$ such that $\sigma_{|\Omega\cup \Gamma|}(\Phi_{\Omega\cup \Gamma})>0$ and $|\Omega \cap \Gamma| \geq k-r$, submp($\hat S,\Gamma,1$) produces an index in $\Omega \setminus \Gamma$ if 
\begin{align}\label{mu_sup_a1}
\underset{i \in \Sigma \setminus \Omega \cup \Gamma}{\min}\sigma^2_{|\Omega \setminus \Gamma|+1} (\dot\Phi_{\Omega \cup \{i\} \setminus \Gamma } )> 4 \eta (1-\eta).
\end{align}
\end{prop}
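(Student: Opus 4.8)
The plan is to compare, under the selection rule of submp, the signal-subspace alignment $\|P_{\mathcal{R}(P^{\perp}_{\mathcal{R}(\Phi_\Gamma)}\hat S)}\dot\phi_l\|_2$ of the true indices against that of the false indices, and to show that condition (\ref{mu_sup_a1}) forces the former to strictly dominate the latter. Note that submp$(\hat S,\Gamma,1)$ selects the column maximizing $\|P_{\mathcal{R}(P^{\perp}_{\mathcal{R}(\Phi_\Gamma)}\hat S)}\phi_l\|_2 / \|P^{\perp}_{\mathcal{R}(\Phi_\Gamma)}\phi_l\|_2$, which equals $\|P_{\mathcal{R}(P^{\perp}_{\mathcal{R}(\Phi_\Gamma)}\hat S)}\dot\phi_l\|_2$ because the target subspace lies in $\mathcal{R}(\Phi_\Gamma)^{\perp}$. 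Hence it suffices to establish
\[
\max_{i \in \Sigma \setminus (\Omega \cup \Gamma)} \|P_{\mathcal{R}(P^{\perp}_{\mathcal{R}(\Phi_\Gamma)}\hat S)}\dot\phi_i\|_2 < \max_{l \in \Omega \setminus \Gamma} \|P_{\mathcal{R}(P^{\perp}_{\mathcal{R}(\Phi_\Gamma)}\hat S)}\dot\phi_l\|_2 .
\]
First I would use the definition (\ref{precond1}) of $\rho(\hat S)\le\eta$ to fix an $r$-dimensional subspace $\bar S \subseteq \mathcal{R}(\Phi_\Omega X_0^\Omega)$ with $\|P_{\bar S}-P_{\hat S}\|_2 \le \eta$, carry out the comparison first for $\bar S$, and then transfer it to $\hat S$.

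For the true indices, the regime $|\Omega \cap \Gamma| \ge k-r$ together with row-nondegeneracy of $X_0^\Omega$ and $\sigma_{|\Omega\cup\Gamma|}(\Phi_{\Omega\cup\Gamma})>0$ is exactly the hypothesis of Lemma \ref{lem_tnlb2}, which gives the sharp MUSIC-type identity $\|P_{\mathcal{R}(P^{\perp}_{\mathcal{R}(\Phi_\Gamma)}\bar S)}\dot\phi_l\|_2 = 1$ for every $l \in \Omega \setminus \Gamma$. For the false indices $i \in \Sigma \setminus (\Omega \cup \Gamma)$, Lemma \ref{lem_2ntnub} bounds $\|P_{\mathcal{R}(P^{\perp}_{\mathcal{R}(\Phi_\Gamma)}\bar S)}\dot\phi_i\|_2 \le \sqrt{1 - \sigma^2_{|\Omega \setminus \Gamma|+1}(\dot\Phi_{\Omega \cup \{i\} \setminus \Gamma})}$, using $|\{i\}\cup(\Omega\setminus\Gamma)| = |\Omega\setminus\Gamma|+1$. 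To pass from $\bar S$ to $\hat S$ I would reuse the perturbation estimate derived as in (\ref{erc_cond1_1}): by the triangle inequality and Lemma \ref{proj_ineq} applied to $\|P_{\bar S}-P_{\hat S}\|_2 \le \eta$, every column satisfies $\bigl|\,\|P_{\mathcal{R}(P^{\perp}_{\mathcal{R}(\Phi_\Gamma)}\hat S)}\dot\phi_i\|_2 - \|P_{\mathcal{R}(P^{\perp}_{\mathcal{R}(\Phi_\Gamma)}\bar S)}\dot\phi_i\|_2\,\bigr| \le \eta$. Consequently the maximal true alignment is at least $1-\eta$, while the maximal false alignment is at most $\sqrt{1 - \min_{i}\sigma^2_{|\Omega\setminus\Gamma|+1}(\dot\Phi_{\Omega\cup\{i\}\setminus\Gamma})} + \eta$.

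The final step is the scalar reduction. The desired strict separation is implied by $1-2\eta > \sqrt{1 - \min_i \sigma^2_{|\Omega\setminus\Gamma|+1}(\dot\Phi_{\Omega\cup\{i\}\setminus\Gamma})}$; since $\eta \le 0.5$ renders the left side nonnegative, squaring is legitimate and gives the equivalent inequality $\min_i \sigma^2_{|\Omega\setminus\Gamma|+1}(\dot\Phi_{\Omega\cup\{i\}\setminus\Gamma}) > 1-(1-2\eta)^2 = 4\eta(1-\eta)$, which is precisely (\ref{mu_sup_a1}); thus (\ref{mu_sup_a1}) guarantees the strict domination of a true index and completes the proof. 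I expect the only genuinely delicate point to be the invocation of Lemma \ref{lem_tnlb2}: the exact identity $=1$ for true indices is the crux, since it is what distinguishes this second-stage (MUSIC-like, $|\Omega\cap\Gamma|\ge k-r$) analysis from the first-stage bound and is what makes the clean threshold $4\eta(1-\eta)$ emerge. Verifying its hypotheses—in particular the subspace-augmentation identity $\bar S + \mathcal{R}(\Phi_\Gamma) = \mathcal{R}(\Phi_{\Omega\cup\Gamma})$ supplied through Lemma \ref{subaug}—is where the real content sits, while the remaining perturbation and algebra are routine.
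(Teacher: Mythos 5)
Your proposal is correct and follows essentially the same route as the paper's own proof: fix $\bar S$ via $\rho(\hat S)\le\eta$, apply Lemma \ref{lem_tnlb2} (via Lemma \ref{subaug}) to get the exact value $1$ for true indices, Lemma \ref{lem_2ntnub} for the singular-value bound on false indices, the triangle-inequality/Lemma \ref{proj_ineq} perturbation step to pass from $\bar S$ to $\hat S$, and the same squaring argument (valid since $\eta\le 0.5$ and the singular values are at most $1$) that turns the separation condition into the threshold $4\eta(1-\eta)$. The only cosmetic difference is that the paper compares the minimum over true indices against the maximum over false indices, while you compare maxima, which is equally sufficient for the selection rule.
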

\begin{proof}[Proof of Proposition \ref{thm1n2}] The proof is based on \cite[Theorem $7.7$]{lee2012subspace}. The condition $\rho(\hat S)\leq \eta$ implies that there exists an $r$-dimensional subspace of $\mathcal{R}(\Phi_{\Omega}X_0^{\Omega})$, denoted by $\bar{S}$, satisfying ${\left \| P_{ \bar S}- P_{ \hat S} \right \|_2}\leq \eta$. Then it follows that 
\begin{align}\nonumber
& |\left \| P_{P^{\perp}_{\mathcal{R}(\Phi_{\Gamma})} \bar S}\dot\phi_i\right \|_2 - \left \|P_{P^{\perp}_{\mathcal{R}(\Phi_{\Gamma})} \hat S}\dot\phi_i \right \|_2|\\\nonumber
&\overset{(a)}\leq \left \| (P_{P^{\perp}_{\mathcal{R}(\Phi_{\Gamma})} \bar S}- P_{P^{\perp}_{\mathcal{R}(\Phi_{\Gamma})} \hat S})\dot\phi_i \right \|_2\\\nonumber
&\leq {\left \| P_{P^{\perp}_{\mathcal{R}(\Phi_{\Gamma})} \bar S}- P_{P^{\perp}_{\mathcal{R}(\Phi_{\Gamma})} \hat S} \right \|_2}\\\label{music_gra1}
&\overset{(b)}\leq \eta,
\end{align}
where $(a)$ follows from the triangle inequality and $(b)$ follows from Lemma \ref{proj_ineq} and ${\left \| P_{ \bar S}- P_{ \hat S} \right \|_2}\leq \eta$. 
By Lemma \ref{lem_tnlb2} and (\ref{music_gra1}), we have for all $i \in \Omega \setminus \Gamma$, 
\begin{align}\nonumber
\left \| P_{P^{\perp}_{\mathcal{R}(\Phi_{\Gamma})} \hat S}\dot\phi_i \right \|_2
\geq \left \| P_{P^{\perp}_{\mathcal{R}(\Phi_{\Gamma})} \bar S}\dot\phi_i \right \|_2 - \eta 
= 1 - \eta. 
\end{align}
Then 
\begin{align}\label{music_gra2}
\underset{i \in \Omega \setminus \Gamma}{\min} \left \| P_{P^{\perp}_{\mathcal{R}(\Phi_{\Gamma})} \hat S}\dot\phi_i \right \|_2
\geq  1 - \eta. 
\end{align}
By Lemma \ref{lem_2ntnub} and (\ref{music_gra1}), we have for all $i \in \Sigma \setminus \Omega \cup \Gamma$,
\begin{align}\nonumber
&\underset{i \in \Sigma \setminus \Omega \cup \Gamma}{\max}\left \| P_{P^{\perp}_{\mathcal{R}(\Phi_{\Gamma})} \hat S}\dot\phi_i \right \|_2 \\\nonumber
&\leq \underset{i \in \Sigma \setminus \Omega \cup \Gamma}{\max}\left \| P_{P^{\perp}_{\mathcal{R}(\Phi_{\Gamma})} \bar S}\dot\phi_i \right \|_2 + \eta\\\label{music_gra3}
&\leq \sqrt{1-\underset{i \in \Sigma \setminus \Omega \cup \Gamma}{\min}\sigma^2_{|\Omega \setminus \Gamma|+1} (\dot\Phi_{\Omega \cup \{i\} \setminus \Gamma } )}+\eta. 
\end{align}
By combing (\ref{music_gra2}) and (\ref{music_gra3}), it follows that 
\begin{align}\label{music_gra5}
1 > \sqrt{1-\underset{i \in \Sigma \setminus \Omega \cup \Gamma}{\min}\sigma^2_{|\Omega \setminus \Gamma|+1} (\dot\Phi_{\Omega \cup \{i\} \setminus \Gamma } )}+2 \eta 
\end{align}
implies 
\begin{align}\label{music_gra4}
\underset{i \in \Omega \setminus \Gamma}{\min} \left \| P_{P^{\perp}_{\mathcal{R}(\Phi_{\Gamma})} \hat S}\dot\phi_i \right \|_2 > \underset{i \in \Sigma \setminus \Omega \cup \Gamma}{\max}\left \| P_{P^{\perp}_{\mathcal{R}(\Phi_{\Gamma})} \hat S}\dot\phi_i \right \|_2.
\end{align}
Note that (\ref{music_gra4}) is a sufficient condition for submp($\hat S,\Gamma,1$) to produce an index in $\Omega \setminus \Gamma$. 
From the definition of singular value, it follows that
\begin{align}
\underset{i \in \Sigma \setminus \Omega \cup \Gamma}{\min}\sigma^2_{|\Omega \setminus \Gamma|+1} (\dot\Phi_{\Omega \cup \{i\} \setminus \Gamma } )\leq 1.
\end{align}
Then, for $\eta \leq 0.5$, (\ref{music_gra5}) is rewritten as 
\begin{align}\nonumber
\underset{i \in \Sigma \setminus \Omega \cup \Gamma}{\min}\sigma^2_{|\Omega \setminus \Gamma|+1} (\dot\Phi_{\Omega \cup \{i\} \setminus \Gamma } )> 4 \eta (1-\eta)
\end{align}
so that the proof is complete.
\end{proof}

\begin{cor}\label{cor_music_c}
Let $\eta$ be a constant such that $\rho(\hat S)\leq \eta \leq 0.5$ with an $r$-dimensional space $\hat S$. Suppose that $X_0^{\Omega}$ is row-nondegenerate and $\krank(\Phi) \geq k+v_1$. Then, for any $\Gamma \in t(k-r,v_1)$,
$\Omega \setminus \Gamma$ belongs to a set of indices produced by submp($\hat S,\Gamma,v_2$) such that $v_2\geq |\Omega \setminus \Gamma|$ if 
\begin{align}\label{mu_sup_c1}
\min\limits_{\Gamma \in t(k-r,v_1)}\underset{i \in \Sigma \setminus (\Omega \cup \Gamma) }{\min}\sigma^2_{|\Omega \setminus \Gamma|+1} (\dot\Phi_{\Omega \cup \{i\} \setminus \Gamma } )> 4 \eta (1-\eta).
\end{align}
For $J \subseteq \Sigma$, $t(a,b)$ is a family of index subsets as follows.
\begin{align}\nonumber
t(a,b) &:= \{\forall J \subseteq \Sigma | \, |J \cap \Omega| \geq a, \,|J\cup \Omega| \leq b+|\Omega| \}
\end{align}
\end{cor}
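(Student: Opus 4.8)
The plan is to derive Corollary \ref{cor_music_c} from Proposition \ref{thm1n2} by a straightforward induction on the iteration count of submp, where Proposition \ref{thm1n2} supplies the single-step guarantee. Write $\Gamma_0 := \Gamma$ and let $\Gamma_j := \Gamma_{j-1}\cup\{a_j\}$ be the index set held by submp($\hat S,\Gamma,v_2$) after its $j$th selection, with $a_j$ the index chosen in iteration $j$. The statement I would establish by induction on $j$ is: for every $1\leq j\leq|\Omega\setminus\Gamma|$, the selected index satisfies $a_j\in\Omega\setminus\Gamma_{j-1}$. In words, submp keeps choosing genuine support indices until all of $\Omega\setminus\Gamma$ has been collected. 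Because the selections are distinct and $a_j$ is always a \emph{new} element of $\Omega\setminus\Gamma$, after $|\Omega\setminus\Gamma|$ iterations one has $\Omega\setminus\Gamma\subseteq\Gamma_{|\Omega\setminus\Gamma|}$; since the hypothesis $v_2\geq|\Omega\setminus\Gamma|$ guarantees at least that many iterations, the desired inclusion $\Omega\setminus\Gamma\subseteq\Gamma_{v_2}$ follows at once.

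For the inductive step at iteration $j$ I would check that $\Gamma_{j-1}$ satisfies all three hypotheses of Proposition \ref{thm1n2}. The key bookkeeping observation, available from the induction hypothesis, is that each previously selected index $a_1,\dots,a_{j-1}$ lies in $\Omega$; hence $\Omega\cup\Gamma_{j-1}=\Omega\cup\Gamma$ while $|\Omega\cap\Gamma_{j-1}|=|\Omega\cap\Gamma|+(j-1)$. From $\Gamma\in t(k-r,v_1)$ one reads $|\Omega\cap\Gamma_{j-1}|\geq|\Omega\cap\Gamma|\geq k-r$, which gives the condition $|\Omega\cap\Gamma_{j-1}|\geq k-r$. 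Next, since $\Omega\cup\Gamma_{j-1}=\Omega\cup\Gamma$ and $|\Omega\cup\Gamma|\leq v_1+|\Omega|\leq v_1+k\leq\krank(\Phi)$, any $|\Omega\cup\Gamma_{j-1}|$ columns of $\Phi$ are independent, so $\sigma_{|\Omega\cup\Gamma_{j-1}|}(\Phi_{\Omega\cup\Gamma_{j-1}})>0$, which is the full-rank premise. Finally, the same two identities show $\Gamma_{j-1}\in t(k-r,v_1)$, so the assumed inequality (\ref{mu_sup_c1}), whose left-hand side is a minimum over all of $t(k-r,v_1)$, dominates the per-set quantity
\[
\min_{i\in\Sigma\setminus(\Omega\cup\Gamma_{j-1})}\sigma^2_{|\Omega\setminus\Gamma_{j-1}|+1}(\dot\Phi_{\Omega\cup\{i\}\setminus\Gamma_{j-1}})>4\eta(1-\eta),
\]
which is exactly condition (\ref{mu_sup_a1}) for $\Gamma_{j-1}$. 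Applying Proposition \ref{thm1n2} then certifies $a_j\in\Omega\setminus\Gamma_{j-1}$, closing the induction. (Note that $|\Omega\setminus\Gamma_{j-1}|=|\Omega\setminus\Gamma|-(j-1)\geq1$ for $j\leq|\Omega\setminus\Gamma|$, so a true index is always available.)

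The main obstacle, and really the only nontrivial point, is making the invariant ``$\Gamma_{j-1}$ still belongs to $t(k-r,v_1)$ and still indexes an independent column set'' airtight across all iterations. The definition of $t(k-r,v_1)$ is tailored precisely so that selecting a true index leaves the union $\Omega\cup\Gamma_{j-1}$ unchanged and only increases its intersection with $\Omega$; this monotone invariance is what simultaneously preserves the cardinality constraint $|\Omega\cup\Gamma_{j-1}|\leq v_1+|\Omega|$ and the Kruskal-rank bound $\krank(\Phi)\geq k+v_1$ throughout the run. Once this invariance is in place, the remainder is a direct concatenation of single-step guarantees together with the counting argument $v_2\geq|\Omega\setminus\Gamma|$. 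I would therefore devote the bulk of the write-up to the base case $j=1$ (where the invariants reduce to the stated membership $\Gamma\in t(k-r,v_1)$ and the hypothesis $\krank(\Phi)\geq k+v_1$) and to the inheritance of these invariants, treating the appeal to Proposition \ref{thm1n2} as a black box.
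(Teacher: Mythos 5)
Your proposal is correct and follows essentially the same route as the paper: the paper also reduces the claim to the single-step guarantee of Proposition \ref{thm1n2}, verifying its hypotheses (via $\krank(\Phi)\geq k+v_1$ and the minimum in (\ref{mu_sup_c1})) uniformly over the family $t(k-r,v_1)$, and then invokes the fact that selecting true indices leaves the set within $t(k-r,v_1)$ so the single-step guarantee iterates. The paper compresses your explicit induction into a one-line event-inclusion statement; your write-up simply makes that implicit invariance argument explicit.
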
 
\begin{proof}[Proof of Corollary \ref{cor_music_c}] (\ref{mu_sup_a1}) and $\sigma_{|\Omega\cup \Gamma|}(\Phi_{\Omega\cup \Gamma})>0$ from Proposition \ref{thm1n2} are satisfied for any $\Gamma \in t(k-r,v_1)$ if (\ref{mu_sup_c1}) and $\krank(\Phi) \geq k+v_1$ are satisfied. 
The event where submp($\hat S,\bar \Gamma,1$) produces an index in $\Omega \setminus \bar \Gamma$ for any $\bar \Gamma \in t(k-r,v_1)$ includes the event where submp($\hat S,\hat \Gamma,v_2$) such that $v_2\geq |\Omega \setminus \hat \Gamma|$ produces a set of indices, which includes $\Omega \setminus \hat \Gamma$, for any $\hat \Gamma \in t(k-r,v_1)$.
The proof is therefore complete.

\end{proof}

\begin{rem}
(\ref{mu_sup_c1}) in Corollary \ref{cor_music_c} is a much milder condition than any conditions (\ref{ex_msc1})--(\ref{ex_msc4}) in Theorem \ref{propmc_st3}.

\end{rem}

\ifCLASSOPTIONcaptionsoff
  \newpage
\fi

\bibliographystyle{IEEEtran}
\bibliography{IEEEabrv,bibdata}

\end{document}